%% file: main.tex
\documentclass{article}
\usepackage{graphicx} %

\input{head}

\title{$k$-Clustering via Iterative Randomized Rounding}

\newcommand{\authorspace}{5em}
\author{
  \hspace{\authorspace}
  Jaros{\l}aw Byrka\thanks{University of Wroc{\l}aw. \texttt{jby@cs.uni.wroc.pl}.}
  \and
  Yuhao Guo\thanks{Tsinghua University. \texttt{yh-guo23@mails.tsinghua.edu.cn}.}
  \and
  Yang Hu\thanks{Tsinghua University. \texttt{y-hu22@mails.tsinghua.edu.cn}.}
  \hspace{\authorspace}
  \and
  \hspace{\authorspace}
  Shi Li\thanks{Nanjing University. \texttt{shili@nju.edu.cn}.}
  \and
  Chengzhang Wan\thanks{Tsinghua University. \texttt{wcz23@mails.tsinghua.edu.cn}.}
  \and
  Zaixuan Wang\thanks{Nanjing University. \texttt{zaixuanwang@smail.nju.edu.cn}.}
  \hspace{\authorspace}
}
\date{}

\begin{document}

\maketitle

\input{abstract}

\input{intro}

\input{prelim}

\input{LMP}

\input{ProcessingLP}

\input{convertion}

\input{cost-proof-new}

\section*{Acknowledgment}
We would like to thank Aravind Srinivasan for valuable discussions in the early stage of this research project.

The work of Shi Li and Zaixuan Wang is supported by State Key Laboratory for Novel Software Technology, New Cornerstone Science Foundation, and Fundamental and Interdisciplinary Disciplines Breakthrough Plan of the Ministry of Education of China (No.JYB2025XDXM118). The work of Jaros{\l}aw Byrka is supported by Polish National Science Centre grant 2020/39/B/ST6/01641.

\bibliographystyle{plain} 
\bibliography{Iterative_UFL}

\input{pseudo-solution-to-solution}

\end{document}

%% file: head.tex
\usepackage[T1]{fontenc}
\usepackage[margin=1in]{geometry}
\usepackage{amsfonts,amsmath,amsthm,amssymb,mathtools}  %

\mathtoolsset{centercolon}
\usepackage{xfrac,nicefrac}
\usepackage{mathdots}
\usepackage{mleftright}  %
\let\left\mleft
\let\right\mright

\usepackage{algpseudocode}

\usepackage{xspace}
\xspaceaddexceptions{]\}}  %
\usepackage{regexpatch}

\usepackage{bm,bbm,dsfont}  %
\usepackage{caption}
\usepackage[normalem]{ulem}
\usepackage{enumitem}

\usepackage{graphicx}
\usepackage{float}
\usepackage{subcaption}  %
\usepackage{tcolorbox}
\usepackage{tikz}
\usetikzlibrary{decorations.pathreplacing}
\usetikzlibrary{calc}
\usetikzlibrary{positioning}
\usetikzlibrary{arrows.meta}
\usetikzlibrary{math}
\usetikzlibrary{patterns}

\usepackage[linesnumbered,boxed,ruled,vlined]{algorithm2e}

\SetCommentSty{mycommfont}

\newcommand{\R}{\mathbb{R}}

\allowdisplaybreaks

\usepackage{thmtools,thm-restate}
\usepackage[colorlinks,citecolor=blue,linkcolor=blue,urlcolor=red]{hyperref}

\theoremstyle{plain}

\newtheorem{theorem}{Theorem}[section]  %
\newtheorem{lemma}[theorem]{Lemma}

\newtheorem{corollary}[theorem]{Corollary}
\newtheorem{cor}[theorem]{Corollary}
\newtheorem{claim}[theorem]{Claim}

\theoremstyle{definition}  %
\newtheorem{definition}[theorem]{Definition}

\usepackage[capitalise]{cleveref}
\crefname{algocf}{Algorithm}{Algorithms}
\Crefname{algocf}{Algorithm}{Algorithms}
\crefname{claim}{Claim}{Claims}
\Crefname{claim}{Claim}{Claims}

\newfloat{Distribution}{htbp}{loa}
\crefname{Distribution}{Distribution}{Distributions}
\Crefname{Distribution}{Distribution}{Distributions}
\newfloat{Protocol}{htbp}{loa}
\crefname{Protocol}{Protocol}{Protocols}
\Crefname{Protocol}{Protocol}{Protocols}

\SetKwProg{Function}{Function}{:}{}

\SetKwProg{CodeBlock}{}{}{}
\SetKwProg{Repeat}{Repeat}{:}{}

\newcommand{\cI}{\mathcal{I}}

\DeclareMathOperator*{\E}{\mathbb{E}}
\newcommand{\Var}{\mathrm{Var}}
\newcommand{\Cov}{\mathrm{Cov}}

\usepackage{xcolor}         
\definecolor{AUXLblue}{RGB}{ 51,145,202}
\definecolor{AUXLlightblue}{RGB}{144,196,231} 

\definecolor{AUXLorange}{RGB}{246,174, 60}
\definecolor{AUXLlightorange}{RGB}{254,216,177}

\definecolor{AUXLpurple}{RGB}{135,119,175}
\definecolor{AUXLlightpurple}{RGB}{226,178,255}

\definecolor{AUXLyellow}{RGB}{255,221,  0}
\definecolor{AUXLlightyellow}{RGB}{255,238,127}

\definecolor{AUXLgreen}{RGB}{120,180,88}            
\definecolor{AUXLlightgreen}{RGB}{202,240,142} 

\definecolor{AUXLgray}{RGB}{105,106,107}             %
\definecolor{AUXLlightgray}{RGB}{208,209,211}  

\definecolor{AUXLred}{RGB}{219, 72, 72}              %
\definecolor{AUXLlightred}{RGB}{244,195,195}         %

\usepackage[colorinlistoftodos,prependcaption,textsize=tiny]{todonotes}
\usepackage{xargs} 

\newcommandx{\shi}[2][1=]{\todo[linecolor=AUXLgray,backgroundcolor=AUXLgray!25,bordercolor=AUXLgray,#1]{{\textbf{\underline{Shi}}:\\}~#2}}
\newcommandx{\chengzhang}[2][1=]{\todo[linecolor=AUXLpurple,backgroundcolor=AUXLpurple!25,bordercolor=AUXLpurple,#1]{{\textbf{\underline{Chengzhang}}:\\}~#2}}
\newcommandx{\yang}[2][1=]{\todo[linecolor=AUXLyellow,backgroundcolor=AUXLyellow!25,bordercolor=AUXLyellow,#1]{{\textbf{\underline{Yang}}:\\}~#2}}

\newcommandx{\jby}[1]
{{\color{green} #1}}

\newcommandx{\yuhao}[1]
{{\color{green} #1}}

\DeclarePairedDelimiter{\bk}{(}{)}

\newcommand{\av}{\mathrm{av}}
\newcommand{\ball}{\mathrm{ball}}

\newcommand\poly{\mathrm{poly}}

\newcommand{\cost}{\mathrm{cost}}

\newcommand{\imb}{\mathrm{imb}}

\newcommand{\force}{{\mathrm{force}}}

%% file: abstract.tex
\begin{abstract}
    In $k$-clustering problems we aim to partition points from the given metric space into $k$ clusters while minimizing a certain distance based objective function. We focus on centroid based functions that encode connection costs of points in a cluster to a facility being the center of the cluster. Popular functions include: the sum of distances to the center in the $k$-median setting, or the sum of squared distances to the center in the $k$-means setting.
    State-of-the-art approximation algorithms for these problems are obtained via sophisticated methods tuned for the specific cost function or metric space.

    In this work we propose a single rounding algorithm for the fractional solutions of the standard LP relaxation for $k$-clustering. 
   As a starting point, we obtain an iterative rounding $(\frac{3^p + 1}{2})$-Lagrangian Multiplier-Perserving (LMP) approximation for the $k$-clustering problem with the cost function being the $p$-th power of the distance. Such an algorithm outputs a random solution that opens $k$ facilities \emph{in expectation}, whose cost in expectation is at most $\frac{3^p + 1}{2}$ times the optimum cost. Thus, we recover the $2$-LMP approximation for $k$-median by Jain et al.~[JACM'03], which played a central role in deriving the current best $2$ approximation for $k$-median.  Unlike the result of Jain et al., our algorithm is based on LP rounding, and it can be easily adapted to the $L_p^p$-cost setting.  For the Euclidean $k$-means problem, the LMP factor we obtain is $\frac{11}{3}$, which is better than the $5$ approximation given by this framework for general metrics.

    Then, we show how to convert the LMP-approximation algorithms to a true-approximation, with only a $(1+\varepsilon)$ factor loss in the approximation ratio. 
    We obtain a ($\frac{3^p + 1}{2}+\varepsilon$)-approximation algorithm  for $k$-clustering with cost function being the $p$-th power of the distance, for $p \geq 1$. This reproduces the best known ($2+\varepsilon$)-approximation for $k$-median by Cohen-Addad et al. [STOC'25], and improves the approximation factor for metric $k$-means from 5.83 by Charikar at al. [FOCS'25] to $5+\varepsilon$ in our framework. Moreover, the same algorithm, but with a specialized analysis, attains ($4+\varepsilon$)-approximation for Euclidean $k$-means matching the recent result by Charikar et al. [STOC'26].

    Our algorithm not only is a single solution to multiple settings, it is also conceptually simpler than the previously best approximations. The main idea is to use a natural iterative randomized rounding procedure on a fractional solution to the standard LP-relaxation. A careful implementation of this procedure allows to produce solutions with $k+O(1)$ clusters. It remains to combine this result with the reduction by Li and Svensson [STOC'13] to obtain the desired result.
\end{abstract}

%% file: intro.tex
\section{Introduction}

In $k$-clustering we are typically given a finite set of demand points $C$ (sometimes called clients) and a set of possible locations of cluster centers $F$ (sometimes called facilities). The goal is to partition the demand points into $k$ clusters by selecting $k$ of the centers from $F$ and assigning each client in $C$ to one of the selected centers. The cost of a clustering is usually defined with respect to a distance-based cost function, hence it is commonly assumed that the demand points and centers come from a given metric space (i.e. distance $d(i,j)$ is defined for all $i,j \in C \cup F$).
In this work we will focus on the setting where the cost of the clustering is expressed as the sum of the assignment costs of clients to their cluster centers and the cost of a single client $j$ assigned to facility $i$ is computed as the $p$-th power of the $d(i,j)$ distance.

Clustering has been the topic of research in various computational contexts, ranging from Operations Research to Statistics~\cite{macqueen1967multivariate}. Particularly popular recently is the application of $k$-clustering algorithms in the context of classifying high dimensional data (see, e.g.,~\cite{jain2010data}). In the context of data classification the squared distance function and the Euclidean metric setting (called Euclidean $k$-means, or sometimes simply $k$-means) appear to be particularly relevant.

Computing optimal clustering (in most of the relevant settings) is generally NP-hard. In this work we focus on approximation algorithms that, in polynomial time, compute $\lambda$-approximate solutions for some constant parameter $\lambda$. Most of the existing literature on approximation algorithms for clustering problems is focused on a specific cost function and metric space.
We will first discuss the know results for the most relevant settings. Then we will discuss our contribution in Section~\ref{sec:our_results}, which will be followed by a brief survey of the other closely related work in Section~\ref{sec:further_rel_work}. Then, we will give an informal high-level description of our approach in Section~\ref{sec:overview_of_techniques}.

\subsection{Previous results} \label{sec:prev_res}

Many of the algorithms for $k$-clustering discussed below utilize the concept of Lagrangian relaxation replacing the hard constraint of creating at most $k$ clusters with a cost associated with the number of created clusters. By an LMP $\lambda$-approximation algorithm we mean an algorithm for the Lagrangian relaxation that has factor 1 for the cost associated to the number of clusters and factor $\lambda$ for the service cost. Intuitively, such an LMP algorithm can be used to produce randomized clustering with expected number of clusters equal $k$, and in some cases the randomized solution is just a combination of two deterministic solutions referred to as a \emph{bi-point solution}.

\paragraph{$k$-median}
  The first constant factor approximation algorithm for $k$-median was obtained by Charikar et al.~\cite{charikar1999constant}, the factor was $6\frac{2}{3}$ and it was obtained via LP-rounding. Then Jain and Vazirani~\cite{jain2001approximation} gave a 6 approximation via an LMP 3-approximation primal-dual algorithm and a factor 2 bi-point solution rounding. Jain et al.~\cite{jain2003greedy} obtained a greedy factor 2 LMP algorithm, which combined with bi-point rounding resulted in 4-approximation for $k$-median. Arya et al.~\cite{arya2001local} gave a ($3+\varepsilon$)-approximation algorithm based on local search.

  Further progress was made by improving the bi-point rounding algorithms. Li and Svensson~\cite{li2013approximating} proposed a factor 1.366 bi-point rounding algorithm that opens $k+O(1)$ facilities and also introduced a reduction allowing to use such a pseudoaproximation algorithm to obtain a true approximation algorithm opening $k$ facilities. We will utilize such reduction also in the this paper. The construction allowed to get 2.733-approximation for $k$-median. Further improvements in bi-point rounding opening $k+O(1)$ facilities were obtained in~\cite{byrka2017improved} and~\cite{gowda2023improved}.

  Recently, Cohen-Addad et al.~\cite{cohen20252+} managed to avoid the loss in the approximation ratio from bi-point rounding. They managed to adapt the JMS algorithm to prevent opening too many additional facilities while maintaining factor $2+\varepsilon$ approximation. The number of extra facilities in their approach is super-constant, which prevented simply using the reduction from~\cite{li2013approximating}. Instead, they developed a different algorithm for instances whose cost is highly sensitive to the number of open facilities.

\paragraph{$k$-means}
    Constant factor approximation algorithm for $k$-means can be obtained via local search. Gupta and Tangwongsan~\cite{gupta2008simpler} studied such algorithms and obtained ($25+\varepsilon$)-approximation for general metric and ($9+\varepsilon$)-approximation for the Euclidean metric. With a highly non-trivial extension of the primal-dual algorithm of Jain and Vazirani~\cite{jain2001approximation}, Ahmadian et al.~\cite{ahmadian2019better} obtained ($9+\varepsilon$)-approximation in general metric and 6.357-approximation in the Euclidean metric. They also discussed that, following~\cite{feldman2007ptas}, a $\rho$-approximation algorithm for the discrete variant (where facilities are selected from a finite set) translates into a $(\rho + \varepsilon)$-approximation algorithm for the continuous variant (where facilities are selected from the entire continuous space). The factor for the Euclidean case was later improved by Cohen-Addad~\cite{cohen2022improved} to 5.92. Squared metric version of $k$-clustering was challenging partly because the greedy JMS algorithm~\cite{jain2003greedy} is not an LMP algorithm for $k$-means.

    Very recently new results for both settings were obtained by Charikar at al. First, factor 5.83-approximation was obtained for general metrics~\cite{charikar2025improved}. Next, ($4+\varepsilon$)-approximation was obtained~\cite{charikar2026} for Euclidean $k$-means.
    Both these results combine an adaptation of the JMS algorithm with an algorithm for instances whose cost is highly sensitive to the number of open facilities, just like in~\cite{cohen20252+}.

\paragraph{Hardness of approximation.} Jain et al.\cite{jain2003greedy} showed that it is NP-hard to approximate $k$-clustering with cost being $p$-th power of the distance with a factor better than $1+\frac{3^p-1}{e}$.
In particular, the lower bound for $k$-median is $(1+2/e) \approx 1.735$, and the lower bound for $k$-means is $1+8/e \approx 3.943$. Notably, the lower bound also holds for LMP approximation algorithms to the relaxed problem.

\subsection{Our results} \label{sec:our_results}

We propose a new iterative randomized rounding algorithm for $k$-clustering and obtain

\begin{theorem}
    For any $p\geq 1$, any sufficiently small constant $\varepsilon > 0$ that depends on $p$, there exists an $n^{(1/\varepsilon)^{O(p^2)}}$-time $(1+\frac{3^p-1}{2} + \epsilon)$-approximation algorithm for $k$-clustering with assignment cost being the $p$-th power of the distance. 
\end{theorem}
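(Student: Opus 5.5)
The plan follows the three-stage pipeline outlined in the abstract: an LMP rounding, then a pseudo-approximation with $k+O(1)$ facilities, then the Li--Svensson reduction to exactly $k$ facilities. Note that $1+\frac{3^p-1}{2}=\frac{3^p+1}{2}$.

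\textbf{Stage 1: LMP rounding.} Solve the standard LP relaxation: variables $y_i$ for facility openings and $x_{ij}\le y_i$ for assignments, with $\sum_i x_{ij}=1$ for every client, $\sum_i y_i\le k$, and objective $\sum_{i,j}x_{ij}d(i,j)^p$. Preprocess it so that every client $j$ has a well-defined fractional neighborhood with average cost $\bar d_j^p$. Then run an iterative randomized rounding: in each round pick clients in a suitable order, and open a facility sampled from the current client's fractional ball with probability equal to the remaining fractional mass. The marginals are kept so that $\E[\#\text{open}]=\sum_i y_i\le k$. The key inequality is per client. With some probability $j$ has an open facility inside its own ball, at expected cost at most its LP cost. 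Otherwise it is served through a neighbor $j'$ whose ball intersects $j$'s ball and was opened, giving a 3-hop path $j\to i\to j'\to i'$. Using $(a+b+c)^p$ bounds together with an averaging argument over which event occurs, I would aim to show that the expected cost of $j$ is at most $\frac{3^p+1}{2}\sum_i x_{ij}d(i,j)^p$. Intuitively, with probability about one half the client is served directly, and otherwise at cost at most $3^p$ times.

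\textbf{Stage 2: from $k$ in expectation to $k+O(1)$ surely.} Make the rounding concentrated. Group the random opening decisions into independent or negatively correlated blocks, and replace the independent sampling by dependent rounding (pipage or swap rounding) inside each block, so that the number of open facilities deviates from its mean by only $O(1)$ while the per-client marginal bounds are preserved. Clients that depend on several blocks need care: either the dependencies are limited to $O(1)$ blocks per client, or a $(1+\varepsilon)$ loss is absorbed via a bucketing of distance scales. The $n^{(1/\varepsilon)^{O(p^2)}}$ running time suggests guessing $(1/\varepsilon)^{O(p^2)}$ facilities of the optimum, in the style of Li--Svensson sparsification, to make the LP ``sparse''. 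Sparsity ensures no small region carries large cost, so fixing $O(1)$ fractional components costs only $\varepsilon\cdot\mathrm{OPT}$.

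\textbf{Stage 3: Li--Svensson reduction.} Apply the reduction of Li and Svensson, which converts any $\alpha$-approximation opening $k+c$ facilities into an $(\alpha+\varepsilon)$-approximation opening $k$ facilities in time $n^{O(c/\varepsilon)}$. Combined with Stages 1 and 2 this gives the $\frac{3^p+1}{2}+\varepsilon$ guarantee, and the running time comes from combining the $(1/\varepsilon)^{O(p^2)}$ constants.

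\textbf{Main obstacle.} Stage 2 is the hard part. The LMP analysis in Stage 1 uses independence, or at least specific correlations between neighboring balls, to bound the probability that a client has no nearby open facility. Dependent rounding that forces the count to be at most $k+O(1)$ may destroy those correlation bounds. My first attempt would be to perform the rounding iteratively so that every step is a martingale for the facility count. I would then show the per-client cost bound only needs negative-correlation-type inequalities, which swap rounding preserves. Only at the end would I fix a constant number of leftover fractional variables, paying $\varepsilon\cdot\mathrm{OPT}$ thanks to the sparsity preprocessing.
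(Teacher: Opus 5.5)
\textbf{There is a genuine gap at the core: Stage~1 never produces the factor $\frac{3^p+1}{2}$.} The heuristic ``served directly with probability about one half, otherwise at $3^p$ times the cost'' is not backed by any step of your sketch. The client-centric clustering rounding you describe also does not support it. There, the 3-hop backup $d(j,i)+d(i,j')+d(j',i')$ is naturally bounded by ball radii ($d_{\max}$-type quantities, i.e.\ dual values), not by $j$'s own LP cost $\sum_i x_{ij}d^p(i,j)$. So it gives no bound of the form $\E[\mathrm{cost}(j)]\le\frac{3^p+1}{2}\sum_i x_{ij}d^p(i,j)$ together with $\E[\#\text{open}]\le k$.

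The missing idea is the paper's facility-side rounding. Build the neighborhood graph in which each fractional facility receives total in-weight $1$ from its nearest unit of fractional facilities. Repeatedly pick $i'$ with probability proportional to $y'_{i'}$, open it, and close each out-neighbor $i$ with probability $w_{i'i}/y'_{i'}$. Unit in-weights make $\E|y'|_1$ invariant. Then show by backward induction that $f(S,b)=\alpha\sum_{i\in S}y_id_i^p+(1-y(S))b^p$ upper bounds the expected final cost, where $S$ is the alive part of $F_j$ and $b$ the current backup distance. The induction step reduces exactly to $\sum_{i,i'\in T}y_iy_{i'}\max\{\alpha d_i^p,(d_i+d(i,i'))^p\}\le(2\alpha-1)\,y(T)\sum_{i\in T}y_id_i^p$. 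This follows from $(d_i+d(i,i'))^p\le(2d_i+d_{i'})^p\le 3^{p-1}(2d_i^p+d_{i'}^p)$ with $2\alpha-1=3^p$. That, not a one-half split, is where $\frac{3^p+1}{2}$ comes from.

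\textbf{Stage~2 is likewise only a wish.} A count martingale over many steps, or pipage/swap rounding inside blocks, does not give $O(1)$ deviation. Moreover, preserving marginals or negative correlation is not what the cost analysis uses. It uses the open-and-close-out-neighbors structure, and opening several facilities at once with overlapping out-neighborhoods breaks count preservation.

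The paper proceeds in three steps:
\begin{enumerate}
\item It first makes the entries of $y$ integer multiples of $\varepsilon^{12p^2}$: filtering, collapsing each representative's core ball to a single facility, then laminar pipage rounding. This rounding can lose $2^p$ for some clients. That loss is repaired by showing that a facility within $d(j,i_2)$ opens with high probability, hence the requirement $\alpha\ge 2^p$.
\item It then runs only $O_{\varepsilon,p}(1)$ batched iterations, each picking a random set $I$ with $\Pr[i\in I]\approx L/\Delta$. Among zero-imbalance facilities it picks conflict-free sets. On $F^+\cup F^-$ it samples independently, with the $F^-$ probabilities boosted by $1+\varepsilon^{c_5}$ to create negative drift. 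Before sampling it forcibly opens the $O_{\varepsilon,p}(1)$ facilities of large negative imbalance and adds fictitious facilities to restore in-degrees, so that McDiarmid's inequality applies.
\item The leftover fractional part is closed by a weighted $k$-center $3$-approximation. Its cost is negligible because each client is happily connected with high probability.
\end{enumerate}

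Sparsity plays no role in bounding that leftover cost. It enters only in the Li--Svensson reduction, which must itself be re-proved for $L_p^p$ and for a pseudo-solution valid only with probability $1-O(\varepsilon)$. The $n^{(1/\varepsilon)^{O(p^2)}}$ running time comes from running that reduction with $c=(1/\varepsilon)^{O(p^2)}$ extra facilities.
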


It immediately gives us
\begin{cor} \label{cor:k-median}
There exists a $(2+\varepsilon)$-approximation algorithm for $k$-median,    
\end{cor}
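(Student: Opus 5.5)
The plan is to specialize the main theorem to $p=1$. With assignment cost equal to the distance, the approximation factor $1+\frac{3^p-1}{2}+\varepsilon$ becomes $1+\frac{3-1}{2}+\varepsilon = 2+\varepsilon$. So for any sufficiently small constant $\varepsilon>0$, the theorem directly gives a $(2+\varepsilon)$-approximation for $k$-median. The phrase ``sufficiently small, depending on $p$'' is harmless here: $p=1$ is fixed, so the threshold on $\varepsilon$ is an absolute constant. For larger $\varepsilon$ we may simply run the algorithm with a smaller one.

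Next I check that the running time is polynomial. It is $n^{(1/\varepsilon)^{O(p^2)}} = n^{(1/\varepsilon)^{O(1)}}$, which is polynomial for every fixed constant $\varepsilon$.

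There is essentially no obstacle. The only point needing care is that $p=1$ falls within the range $p\ge 1$ covered by the theorem, and that $k$-clustering with $p=1$ is exactly $k$-median. Both hold by the definitions in the introduction.
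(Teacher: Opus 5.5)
Your proposal is correct and matches the paper, which also obtains this corollary by setting $p=1$ in the main theorem, so that $1+\frac{3^1-1}{2}+\varepsilon = 2+\varepsilon$. Your remarks on the threshold for $\varepsilon$ (run with a smaller $\varepsilon$ if needed) and on the running time $n^{(1/\varepsilon)^{O(1)}}$ being polynomial for fixed $\varepsilon$ are also right.
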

and
\begin{cor} \label{cor:k-means}
There exists a $(5+\varepsilon)$-approximation algorithm for (general metric) $k$-means.    
\end{cor}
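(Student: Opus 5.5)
The plan is to obtain this directly from the main theorem stated just above by taking $p=2$. General-metric $k$-means is exactly $k$-clustering in which the assignment cost of client $j$ to facility $i$ is $d(i,j)^2$, the $p$-th power of the distance with $p=2$. So the first step is to note that a $k$-means instance $(F,C,d,k)$ is literally an instance of the problem in the theorem with $p=2$, and that the objectives coincide. No change of metric is needed: the algorithm works with the original metric $d$ and only the exponent in the cost changes.

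The second step is to evaluate the approximation factor. For $p=2$,
\[
1+\frac{3^p-1}{2} = 1+\frac{9-1}{2} = 5 .
\]
Hence, for any sufficiently small constant $\varepsilon>0$, the theorem gives a $(5+\varepsilon)$-approximation. Since $p=2$ is fixed, the admissible range of $\varepsilon$ depends only on an absolute constant. A larger $\varepsilon$ is handled by running the algorithm with a smaller admissible value $\varepsilon' \le \varepsilon$, because a $(5+\varepsilon')$-approximation is also a $(5+\varepsilon)$-approximation.

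The third step checks polynomial running time. With $p=2$ the bound $n^{(1/\varepsilon)^{O(p^2)}}$ becomes $n^{(1/\varepsilon)^{O(1)}}$, which is polynomial in $n$ for every fixed $\varepsilon$.

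There is no real obstacle. The only points needing care are confirming that the theorem covers arbitrary finite metrics with squared distances (it does, since it is stated for general $k$-clustering with the $p$-th power cost and $p\ge 1$), and correctly reducing the exponent $O(p^2)$ to a constant. All the substance lies in the theorem itself: the $\frac{3^p+1}{2}$ LMP rounding, the conversion to $k+O(1)$ facilities, and the Li--Svensson reduction.
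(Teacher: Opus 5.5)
Your proposal is correct and matches the paper, which obtains this corollary immediately from the main theorem by setting $p=2$, so that $1+\frac{3^2-1}{2}=5$. Your remarks about shrinking $\varepsilon$ if necessary and about the running time becoming $n^{(1/\varepsilon)^{O(1)}}$ are fine additions.
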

Note that Corollary~\ref{cor:k-median} is an alternative way to obtain the main result from~\cite{cohen20252+}
and Corollary~\ref{cor:k-means} improves on the 5.83-approximation from~\cite{charikar2025improved}.

Under the assumption that the underlying metric space is Euclidean%
, we may further improve the analysis of our rounding procedure and obtain

\begin{theorem} \label{th:eulerian_k-means}
    For all $\varepsilon >0$ there exists a $(4+\varepsilon)$-approximation algorithm for $k$-means in the Euclidean metric
\end{theorem}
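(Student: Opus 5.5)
The proof of \Cref{th:eulerian_k-means} will follow the same three-stage pipeline as the general-metric theorem, and only the per-client cost analysis will change. First, we solve the standard LP relaxation for $k$-clustering with $p=2$, after the usual preprocessing: guess the $(1/\varepsilon)^{O(1)}$ most expensive facilities and clients, and sparsify so that no small set of facilities carries too much fractional cost. Second, we run the iterative randomized rounding procedure, which picks a random facility-cluster, opens one facility in it with probability proportional to the LP mass, and updates the residual fractional solution. Run in its LMP form, it yields a random solution with $k$ facilities in expectation and expected cost at most $\lambda\cdot\mathrm{LP}$. The careful implementation then turns this into a pseudo-solution with $k+O(1)$ facilities and cost at most $(\lambda+\varepsilon)\mathrm{LP}$. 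Third, the Li--Svensson reduction converts any $k+O(1)$ pseudo-approximation into a true approximation with the same factor up to $1+\varepsilon$. So it suffices to show that in Euclidean space the per-client expected connection cost is at most $(4+\varepsilon)$ times the client's fractional cost. This is the ``specialized analysis'' the abstract refers to; the LMP statement gives $11/3$, but the pseudo-solution stage is where the factor $4$ arises.

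For the per-client bound, fix a client $j$ with fractional neighbourhood $F_j$ and average squared distance $D_j$. If some facility of $F_j$ is opened, $j$ pays at most the squared distance to it, which averages to $D_j$. If none is opened, $j$ is routed through the facility-cluster that caused a neighbour $i\in F_j$ to be removed, via a neighbouring client $j'$ that was processed earlier (so $D_{j'}\le D_j$) and opened some facility $i'$.

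In a general metric, the triangle inequality together with $(a+b+c)^2$-convexity gives the $3^p$ term. In Euclidean space we instead expand squared norms exactly:
\[
\mathbb{E}\|j-i'\|^2=\|j-\mu\|^2+\mathbb{E}\|i'-\mu\|^2,
\]
where $\mu$ is the centroid of the distribution of $i'$. Averaging over $i$ drawn from $j$'s and $j'$'s fractional assignments, we then bound $\|j-\mu\|^2$ using the parallelogram identity rather than the triangle inequality. The target inequality for the backup connection is roughly $\mathbb{E}[\mathrm{cost}_j\mid\text{no facility of }F_j\text{ opened}]\le c\,D_j$ with a constant $c$ smaller than the general-metric constant $9$. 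Combined with the probability $\prod(1-y_i)\le e^{-1}$-type bound that no facility of $F_j$ opens, this should give a total factor at most $4$.

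The main obstacle will be making this Euclidean backup bound tight enough. The worst case couples where $j'$ lies relative to $j$ with how much of $j$'s mass was removed before $j$ got its own chance. I would parametrize by the fraction $x$ of $j$'s LP mass destroyed by earlier clusters and by the distance ratio $\|j-j'\|^2/D_j$, then optimize the resulting two-variable function. Since the iterative procedure may chain removals, an inductive or potential argument will likely be needed so that the expected backup cost is charged to $D_j$ only once. Finally, the error terms from sparsification and from the $O(1)$ extra facilities contribute only $\varepsilon$, exactly as in the general case, and the discrete-to-continuous reduction of \cite{feldman2007ptas} extends the result to continuous Euclidean $k$-means.
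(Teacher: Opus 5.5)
Your outer pipeline (LP, iterative rounding into a $k+O(1)$ pseudo-solution, Li--Svensson reduction) matches the paper. The gap is that the constant $4$ does not come from a sharper Euclidean backup computation. In the paper, the Euclidean ``specialized analysis'' only verifies inequality \eqref{equ:distance_sum_goal} for $p=2$ with $\alpha=4$. One shows $\mathrm{diff}(i,i')+\mathrm{diff}(i',i)\le -6\langle \vec v_i,\vec v_{i'}\rangle$ pairwise, and then uses $\sum_{i,i'}y_iy_{i'}\langle \vec v_i,\vec v_{i'}\rangle=|\sum_i y_i\vec v_i|^2\ge 0$; a finer version gives $11/3$. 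The $4$ is then forced by Theorem~\ref{thm:cost}, which needs $\alpha\ge 2^p$. Making $y$ an integer multiple of $\varepsilon^{12p^2}$ is what makes the $k+O(1)$ count work, and it can cost a factor $2$ in distance for type-3 clients, whose LP mass is $1-\poly(\varepsilon)$ near $j$ and $\poly(\varepsilon)$ far away. The paper caps their cost at $(d(j,i_2)/2)^p$ (Lemma~\ref{lemma:type-3-weaker-guarantee}). It uses $d(j,i_2)$ as a backup and $\alpha\ge 2^p$ to drop far facilities from $S$, and proves via Lemma~\ref{lemma:two-small-balls} that this backup is open with high probability. So the final ratio is $\max\{11/3,2^2\}=4$. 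Your plan never deals with these clients. Your claim that the careful implementation turns an LMP factor $\lambda$ into $\lambda+\varepsilon$ would give $11/3+\varepsilon$, contradicting your own remark. That conversion is not a black box; it is exactly where the loss occurs.

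Second, the per-client argument you sketch analyzes a different algorithm: a client-ordered clustering rounding with independent openings, in the style of Chudak--Shmoys for facility location. The paper's rounding is client-oblivious. It samples a facility with probability proportional to $y'$, opens it, and closes its out-neighbours in the nearest-one-fractional-facility graph, so there is no earlier-processed client $j'$ with $D_{j'}\le D_j$. More seriously, a bound $\Pr[\text{no facility of }F_j\text{ opens}]\le\prod_i(1-y_i)\le 1/e$ requires independence. That is precisely what is given up to keep $k$ facilities in expectation, or $k+O(1)$ always. The same template with your general-metric backup $9D_j$ would give $(1-1/e)+9/e=1+8/e$ for metric $k$-means. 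That matches the NP-hardness threshold and is far better than any known algorithm, which signals that the template cannot work here.

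The paper needs no such bound. It shows by backward induction over iterations that $f(S,b)=\alpha\sum_{i\in S}y_id_i^p+(1-y(S))b^p$ upper-bounds the expected final cost from state $(S,b)$, using the backup $b_i=d_i+\min\{d(i,i'):i'\in S,\ w_{i'i}<y_{i'}\}$. The one-step inequality then reduces exactly to \eqref{equ:distance_sum_goal}. The $k+O(1)$ version (Lemma~\ref{lemma:potential-1}) additionally uses the approximate marginal and pairwise probabilities of Claim~\ref{clm:rounding_alg_props}. Your centroid/parallelogram estimate and the two-parameter optimization are left at ``should give'', so even on your own route the key inequality is not established.
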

and
\begin{theorem} \label{th:eulerian_LMP_k-means}
    There exists a $(\frac{11}{3})$-LMP approximation algorithm for %
    $k$-means in the Euclidean metric, where candidate facilities are given explicitly. 
\end{theorem}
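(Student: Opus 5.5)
Recall what the statement requires. An LMP $\frac{11}{3}$-approximation for Euclidean $k$-means must output a random set $S\subseteq F$ with $\E[|S|]\le k$. Its expected connection cost $\E[\sum_{j\in C}d^2(j,S)]$ must be at most $\frac{11}{3}$ times the optimum of the standard LP. I would run the same iterative randomized rounding as for general $p$ with $p=2$. The input is an optimal fractional solution $(x,y)$ of the standard $k$-clustering LP, with $\sum_i y_i=k$, $x_{ij}\le y_i$ and $\sum_i x_{ij}=1$. Because the candidate facilities are given explicitly, the LP is finite and no discretization loss is incurred. The rounding opens each facility $i$ with marginal probability exactly $y_i$, giving $\E[|S|]=k$, so the only task is to bound the expected cost of each client $j$ against its fractional cost $\sum_i x_{ij}d^2(i,j)$.

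The analysis is per client. Fix $j$ and let its fractional facilities, sorted by distance, form a distribution $\mu_j$ of mass $1$. The rounding guarantees two cases. With some probability, a facility from $j$'s own support is opened; conditioned on this, the opened facility is distributed no worse than $\mu_j$, so the conditional expected cost is at most $\sum_i x_{ij}d^2(i,j)$. Otherwise, $j$ is charged to a nearby ``representative'' client $j'$ that was processed earlier and whose ball intersects $j$'s, and $j$ connects through $j'$'s open facility. In general metrics, the triangle inequality bounds this backup distance by a constant times the three distances involved. For $p=2$, convexity then gives the $3^2=9$ term, which averaged with the direct term $1$ yields $\frac{3^p+1}{2}=5$.

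For the Euclidean improvement I would replace the triangle-inequality step by a variance argument. The key identity is that for any point $q$ and any distribution $\nu$ on $\mathbb{R}^d$ with mean $\bar\nu$,
\[
\E_{i\sim\nu}\|q-i\|^2=\|q-\bar\nu\|^2+\E_{i\sim\nu}\|i-\bar\nu\|^2 .
\]
The backup facility for $j$ is a random facility from $j'$'s fractional neighborhood, so $j$'s backup cost is $\|j-\bar\nu_{j'}\|^2$ plus the variance of that neighborhood. I would bound $\|j-\bar\nu_{j'}\|$ using a common facility $i$ in both balls, together with the relation that $j'$ has the smaller radius, since the processing order is by radius. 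This expresses the backup cost in terms of $D_j=\sum_i x_{ij}d^2(i,j)$ and $D_{j'}\le D_j$. I expect the backup term to come out to at most $\frac{19}{3}D_j$ instead of $9D_j$, because $\frac{1+19/3}{2}=\frac{11}{3}$. I would then tune the split between the ``own ball'' and ``backup'' probabilities, or the radius threshold defining the balls, to balance the two terms.

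The main obstacle is the Euclidean backup bound. The mean $\bar\nu_{j'}$ can be far from the specific shared facility, so a naive bound loses. I would first try to parametrize by the fraction $\alpha$ of $j$'s mass lying in the intersection with $j'$'s ball. Writing both $j$'s and $j'$'s costs as a mean term plus a variance term, the bound reduces to a two- or three-variable quadratic optimization, which I would solve to certify the constant $\frac{11}{3}$. Equality should be approached when $j$, $j'$ and the shared mass are collinear.
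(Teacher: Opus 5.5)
There is a genuine gap, beginning with a mismatch between your algorithm and your analysis. You say you will run the paper's iterative rounding (Algorithm~\ref{alg:LMP}). But the analysis you sketch, with representatives processed by radius and a fallback in which $j$ connects through the open facility of a representative $j'$ whose ball meets $j$'s, belongs to a clustering-based rounding in the style of Chudak--Shmoys. Algorithm~\ref{alg:LMP} has no representatives.

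Even for such a clustering scheme, the two quantitative steps you rely on are not derived. First, nothing makes the fallback occur with probability $1/2$. So averaging the direct term $1$ with a backup term $3^p$ to get $\frac{3^p+1}{2}$ is numerology, not an argument. Second, the distance from $j$ to $j'$, and hence $\|j-\bar\nu_{j'}\|$, is controlled through a shared facility by the \emph{radii} of the two balls, not by $D_j=\sum_i x_{ij}d^2(i,j)$. You give no argument that the fallback probability times these squared radii is at most a fixed multiple of $D_j$; this is exactly where Chudak--Shmoys-type analyses need dual variables. Moreover, conditioned on $j$'s own support not opening, the facility opened by $j'$ is not distributed according to $\nu_{j'}$, so the bias--variance identity does not apply as stated. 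Finally, the constant $\frac{19}{3}$ is chosen so that $\frac{1+19/3}{2}=\frac{11}{3}$, and the quadratic optimization meant to certify it is never set up.

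The missing idea is the paper's reduction of the LMP guarantee to a single quadratic-form inequality over $j$'s own support. In Algorithm~\ref{alg:LMP}, a support facility $i$ of $j$ can only be closed by some $i''$ with $w_{i''i}>0$. Every other alive support facility $i'$ with $w_{i'i}<y_{i'}$ satisfies $d(i,i'')\le d(i,i')$, and $i''$ is integrally open afterwards. So the backup distance is at most $d_i+d(i,i')$, with no second client involved.

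With the potential $f(S,b)=\alpha\sum_{i\in S}y_id_i^p+(1-y(S))b^p$, an induction over iterations shows the expected final cost is at most the current potential, provided \eqref{equ:distance_sum_goal} holds:
$\sum_{i,i'\in T}y_iy_{i'}\max\{\alpha d_i^2,(d_i+d(i,i'))^2\}\le(2\alpha-1)\,y(T)\sum_{i\in T}y_id_i^2$.
This is where your $\frac{19}{3}=2\alpha-1$ actually lives; in general metrics, $\alpha=\frac{3^p+1}{2}$ is simply the solution of $2\alpha-1=3^p$.

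For Euclidean $k$-means, place $j$ at the origin and write $v_i$ for the position of $i$. The paper sets $\mathrm{diff}(i,i')=\max\{\alpha d_i^2,(d_i+d(i,i'))^2\}-(\alpha-\frac12)(d_i^2+d_{i'}^2)$. It proves $\mathrm{diff}(i,i')+\mathrm{diff}(i',i)\le-2(\alpha-1)\langle v_i,v_{i'}\rangle$ for $\alpha=\frac{11}{3}$, by three cases according to which term attains each maximum, each reducing to elementary quadratic checks after normalizing $d_id_{i'}=1$. It then sums using $\sum_{i,i'}y_iy_{i'}\langle v_i,v_{i'}\rangle=\|\sum_iy_iv_i\|^2\ge 0$. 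Your bias--variance identity is this same positive semidefiniteness in another form, so the instinct is sound. It has to be applied to the pairwise inequality over $j$'s own support, not to a representative's neighborhood.
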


The result in Theorem~\ref{th:eulerian_k-means} reproduces the ($4+\varepsilon$)-approximation from~\cite{charikar2026}. The LMP approximation from Theorem~\ref{th:eulerian_LMP_k-means} for the Euclidean metric $(\frac{11}{3}) \approx 3.67$ is strictly better than the lower bound on the approximation ratio for general metric $1+8/e \approx 3.943$ from~\cite{jain2003greedy}.

\subsection{Further related work} \label{sec:further_rel_work}

The literature on clustering is very broad and it is not possible to discuss all important results compactly. Below we briefly discuss a fragment related to approximation algorithms that appears most relevant.

\paragraph{Facility location.}
Closely related to $k$-clustering is facility location, where instead of a hard constraint to open exactly $k$ facilities (to form exactly $k$ clusters) we have a location-specific cost of opening a facility. Special case where opening a facility in each location costs the same can be seen as a Lagrangian relaxation of $k$-clustering.
Numerous variants of location problems were studied, here we will only discuss the most basic variant called Uncapacitated Facility Location (UFL). In UFL we are give a set of clients $C$, a set of facilities $F$, facility opening cost $f:F\rightarrow \mathcal{R}_{\geq 0}$, and a distance function $d$ on $ C\cup F$ ; we must select a subset of facilities to open and assign clients to opened facilities. The goal is to minimize the total cost of facility opening and client connection distances. UFL and $k$-median problems are closely related.

The first constant factor approximation algorithm for UFL was the LP-rounding by Shmoys et al.~\cite{shmoys1997approximation}. Important contributions include: greedy algorithms~\cite{guha1999greedy,jain2003greedy}, local-search~\cite{arya2001local}, primal-dual~\cite{jain2001approximation}, improved LP-rounding~\cite{chudak2003improved, byrka2010optimal}. The best know approximation ratio of 1.488 was obtained by Li~\cite{li20131} and it is very close to the lower bound of 1.463 shown by Guha and Khuler~\cite{guha1999greedy}. Approximating UFL appears slightly easier than approximating $k$-median, since in UFL an approximation algorithm may open more facilities than in the optimal solution.

\paragraph{Other clustering objective functions.}
In this work we focus on cost functions that sum the distances from each client to the center of its cluster. Other types of cost functions were also considered in the context of clustering problems. Examples include, min-sum-radii clustering~\cite{behsaz2015minimum} and Max-k-Diameter clustering~\cite{fleischmann2025inapproximability}. It is unclear at this point whether our iterative randomized rounding approach may lead to valuable algorithms in the context of such cost functions.

\paragraph{Approximation in FPT time.}
While the focus of this paper is on algorithms that are polynomial in the size of the instance, regardless of the requested number of clusters $k$, there are also works that allow the running time of the algorithm to be $\poly(n)\cdot f(k)$ for some function $f$. Such algorithms can be valuable for instances of the problem with smaller values of $k$ and are often referred to as running in time FPT(k).
Notably, the lower bound of $1+\frac{3^p-1}{e}$ from Jain et al.\cite{jain2003greedy} on approximability holds also for FPT(k) time algorithms, and the lower bound can be matched with an FPT time approximation algorithm, see~\cite{cohen2019tight}.

\subsection{Overview of the techniques} \label{sec:overview_of_techniques}

\paragraph{Iterative rounding for LMP approximation.} We give the main idea behind our iterative randomized rounding algorithm for the Lagrangian Multiplier Preserving (LMP) algorithm for $k$-clustering. The rounding procedure depends only on the facility opening vector $y$ and the metric restricted to the set $F$ of facilities. 

Given the fractional opening $y$, we construct a directed \textit{neighborhood graph} over $F$. In this graph, each facility $i$ receives incoming edges from its nearest 1 fractional facility. For simplicity, we assume no facility splitting is needed in this construction. Then every facility $i$ has a fractional in-degree of $1$, where an edge $i’i$ contributes a degree of $y_{i'}$.

Consider an iteration where a single facility $i$ is selected to be opened with probability proportional to its $y_i$-value. Upon opening $i$, we permanently close all of its out-neighbors, by changing their $y$-values to $0$. Because the weighted average out-degree over all facilities is also $1$, this operation maintains the expected number of open facilities. By repeating this procedure until $y$ becomes integral, we obtain a solution with $k$ open facilities in expectation.

While less trivial, a compact argument shows that the expected connection cost from a client to the nearest open facility is at most $\frac{3^p + 1}{2}$ times its fractional cost. This is proved using an inductive potential function argument, where we show that the expected value of the potential function does not increase in a single iteration.

\paragraph{Converting the algorithm into a true approximation.} As in \cite{charikar2025improved,cohen20252+,li2013approximating}, the main obstacle in obtaining a true approximation for $k$-clustering is to guarantee that our algorithm always opens $k+O_{\varepsilon, p}(1)$ facilities, not just in expectation, with only a $1+\varepsilon$ loss in the approximation ratio.  We slightly relax the requirement, so that the event occurs with probability at least $1 - O(\varepsilon)$. This is sufficient for our purpose. 

For our rounding algorithm to work, we need the fractional solution $(x, y)$ to be $\varepsilon^c$-integral for some constant $c$. For now, we assume the property holds and show how it can be achieved later.  The main idea behind our modified iterative algorithm is that each iteration selects a set $I$ of facilities, rather than a single facility, such that each facility is in $I$ with a large probability proportional to $y_i$. As in the LMP algorithm, we open facilities in $I$ and close the out-neighbors of $I$. As the probabilities of including facilities in $I$ are large, we can terminate the algorithm in $O_{\varepsilon, p}(1)$ iterations. 

The crucial requirement is to guarantee that, with large probability in each iteration, the fractional number of open facilities does not increase by too much. We partition the facilities into three sets: $F^+, F^0$ and $F^-$, containing facilities with positive, 0 and negative imbalances respectively, where the imbalance of a facility $i$ is $1$ minus its fractional out-degree in the neighborhood graph. This is the net change in the number of open facilities in the LMP algorithm if $i$ is selected. We choose facilities from sets $F^+ \cup F^-$ and $F^0$ using two different procedures: unbalanced-update and balanced-update. In an iteration, we call one of the two procedures, each with probability $1/2$. 

The balanced-update procedure chooses facilities in $F^0$. For a facility $i \in F^0$, choosing $i$ in the LMP algorithm will not change the number of open facilities. However, choosing a set $I$ of facilities in $F^0$ in our modified algorithm presents a challenge. If the out-neighbors of $I$ overlap, then we remove fewer facilities than required, causing an increase in the number of open facilities. To resolve this, we guarantee that $I$ is an independent set, meaning that they have disjoint sets of out-neighbors. We construct $I$ using a randomized greedy procedure, which guarantees that the probability any $i \in F^0$ included in $I$ is large and approximately proportional to its $y_i$ value. 

In the unbalanced-update procedure, we choose a set $I$ from $F^+ \cup F^-$. Unlike balanced-update, these facilities can be added to $I$ independently. We give slightly larger probabilities for facilities in $F^-$ than those in $F^+$. This addresses the issue caused by overlapping out-neighborhoods.  Moreover, if the absolute imbalance of each facility $i \in F^-$ is not too big, concentration bounds allow us to prove that the net increase in opening facilities in the iteration is small with high probability. Fortunately, there are only a few facilities with big negative imbalances. We force them to open and then partition their sets of out-neighbors into smaller subsets using ``fictitious'' facilities. This ensures that concentration bounds can be applied.

The $\varepsilon^c$-integrality property guarantees that the fractional values for each facility and edge are bounded away from zero. This is critical for three reasons. First, if two facilities in $F^0$ have an overlap between their out-neighbors, then the overlap is ``large'' relative to the whole sets. Then the conflict graph over $F^0$ has a small degree, which allows us to choose a large enough independent set $I$. Second, for a facility $i \in F^+ \cup F^-$, its absolute imbalance is not too small. By assigning slightly larger probabilities for $F^+$, we gain a sufficiently large benefit that can cover the loss caused by the overlapping out-neighbors and ensure the application of concentration bounds. Third, we force facilities with large negative imbalance to open during our rounding algorithm. As the facilities have $y$-values at least $\varepsilon^c$, the total number of forcibly open facilities can be bounded by $O_{\varepsilon, p}(1)$. 

We run the algorithm only for a fixed number (which is $O_{\varepsilon, p}(1)$) of iterations. The total number of facilities forced to open is $O_{\varepsilon, p}(1)$. Furthermore, concentration bounds ensure that the number of normally open facilities is at most $k + O_{\varepsilon, p}(1)$ with $1-O(\varepsilon)$ probability. The iterative procedure does not guarantee an integral solution in the end. We treat the remaining fractional part as a fractional solution to a weighted $k$-center instance, and round it using a $3$-approximation. As a client will be connected in the iterative procedure with high probability, and the fractional solution is $\varepsilon^c$-integral, the loss incurred by this final step is negligible. 

Finally, despite the modifications, the $\frac{3^p+1}{2}$-approximation for connection costs can still be proved using the potential function argument. However, the analysis need to be adjusted to handle the approximate selection probabilities of facilities included in $I$. Also, we need the property that the events two different facilities included in $I$ are approximately independent, which is guaranteed by the two procedures. 

The algorithm opens $k+O_{\varepsilon, p}(1)$ facilities and gives good expected connection cost for each client. It remains to combine such algorithm with the reduction (form pseudo-approximation to approximation) by Li and Svensson from~\cite{li2013approximating}. Although their reduction was originally tailored for $k$-median, it generalizes naturally to the $L_p^p$ objective for any $p \geq 1$.

\paragraph{Ensuring $\varepsilon^c$-integrality of fractional solution via pipage rounding.}
We now describe the process of making the fractional solution $\varepsilon^c$-integral. We apply a randomized pipage-rounding routine which approximately preserves the sum of opening on a family of disjoint balls, carefully selected using a filtering procedure. For most clients, concentration bounds ensure that the expected multiplicative loss is at most $1 + O(\varepsilon)$.  However, there are special clients $j$, for which a $1 - \poly(\varepsilon)$ fraction of its connection is near $j$, but the remaining $\poly(\varepsilon)$ fraction is much farther away. For such clients, the rounding procedure would lose a factor of $2$ in the connection distance, and thus a factor of $2^p$ in the cost. Fortunately, we manage to align this pipage-rounding step with our iterative rounding procedure, so that the overall approximation ratio for these special clients remains bounded by $\frac{3^p + 1}{2}$.

\paragraph{Results for Euclidean $k$-means.} Our framework is versatile; while it supports general metrics for any $p \geq 1$, it can also be tailored to accommodate special metric spaces. The approximation ratio $\alpha$ is determined by both the power $p$ and the specific properties of the metric space. For the Euclidean $k$-means problem, our framework gives a $\frac{11}{3}$-LMP approximation. However, due to the special clients mentioned above, we could only get a $(2^2 + \varepsilon = 4 + \varepsilon)$-approximation for the problem.

%% file: prelim.tex
\section{Preliminaries}

The $k$-clustering problem studied in this work is defined as follows: We are given a set of clients $C$ (also called the set of points to be clustered), a set of facilities $F$ (also called the set of potential cluster centers), a metric distance function $d:C \cup F \times C \cup F \rightarrow \mathcal{R}_+$, a positive integer $k$, and a parameter $p \geq 1$. Our task is to select a subset of at most $k$ facilities $F' \subseteq F$ and a mapping $\sigma: C \rightarrow F'$. The goal is to minimize the assignment cost $\sum_{j\in C} d^p(j, \sigma(j))$.

When $p=1$, the problem is known as the $k$-median problem. When $p=2$ the problem is known as the $k$-means problem. In the literature the $k$-means problem is often studied under the additional assumption that the distance function is Euclidean (i.e., there exists an embedding of the points $C \cup F$ into - possibly high-dimensional - Euclidean space). To avoid confusion, we will use the term Euclidean $k$-median when referring to the setting with the assumption that the metric is Euclidean, and use the term metric $k$-median referring to the setting without such assumption.

Clustering problems are sometimes also considered in continuous spaces allowing the cluster centers to be chosen from a continuous space. In this work we follow the standard approach %
to focus on the discrete setting where cluster centers are selected from a given discrete set. At least in the context of the $k$-means objective, the reduction from the work of Feldman et al.~\cite{feldman2007ptas} can be used to derive analogous approximation results for the continuous setting. 

By modeling facility opening with a vector $y$ and the assignment $\sigma$ with variables $x_{i,j}$ for $i \in F, j \in C $ we obtain the following natural LP relaxation of the $k$-clustering problem.

\begin{align*}
\min \quad & \textstyle  \sum_{i \in F} \sum_{j \in C} d^p(i,j) \, x_{i,j}   \\
\text{s.t.} \quad & \textstyle \sum_{i \in F} y_{i} \leq k \\
& \textstyle \sum_{i \in F} x_{i,j} = 1, && \forall j \in C,  \\
& x_{i,j} \le y_i, && \forall i \in F,\, j \in C,  \\
& x_{i,j} \ge 0,\, y_i \ge 0, && \forall i \in F,\, j \in C, 
\end{align*}

In this work we will study algorithms that take a fractional solution $(x,y)$ feasible for the above linear program and produce a
feasible integral solution $(\hat{x}, \hat{y})$
of not much larger cost. We say a randomized algorithm is an LMP $\lambda$-approximation algorithm for $k$-clustering if it produces solution with $\E[\sum \hat{y}] \leq k$ and $\E[\sum_{i \in F} \sum_{j \in C} d^p(i,j) \, \hat{x}_{i,j}] \leq \lambda \cdot \sum_{i \in F} \sum_{j \in C} d^p(i,j) \, x_{i,j}$. If instead of condition $\E[\sum \hat{y}] \leq k$ the algorithm always satisfies $\sum \hat{y} \leq k + O(1)$, we call it a pseudo-approximation algorithm.

Li and Svensson~\cite{li2013approximating} showed that for the $k$-median problem there exists a reduction allowing to use a factor $\lambda$ pseudo-approximation algorithm to obtain a $(\lambda + \varepsilon)$-approximation algorithm. In this work, we utilize a natural extension of this reduction to the $L_p^p$ metric setting, see Appendix~\ref{sec:pseudo_approximation_reduction}. Additionally, we observe that it is sufficient for the condition $\sum \hat{y} \leq k + O(1)$ to be satisfied with high probability. We will therefore focus our attention on obtaining an algorithm that opens $k + O(1)$ facilities w.h.p.

Let $(V, d)$ be a metric space that is clear from the context. For  a set $U \subseteq V$, a point $j \in V$, and a radius $r \geq 0$, we use $\ball^\circ_U(j, r) := \{j' \in U: d(j, j') < r\}$ and $\ball_U(j, r) := \{j' \in U: d(j, j') \leq r\}$ to respectively denote the sets of points in $U$ with distance less than $r$ and at most $r$ to $j$. Given a directed graph $G$ and a vertex $v$ in $G$, we use $\deg^+_G(v)$ and $\deg^-_G(v)$ to denote the out and in degrees of $v$. For a %
vector $g \in \R^{\mathbb{D}}$ and a set $S \subseteq \mathbb{D}$, we define $g(S):=\sum_{i \in S}g_i$ unless otherwise stated.

\paragraph{Organization.} The rest of the paper is organized as follows. In Section~\ref{sec:LMP}, we give our LMP-approximation algorithm for the $k$-clustering problem, which achieves $\frac{3^p+1}{2}$-approximation for the problem with $L_p^p$ objective in general metrics, and $\frac{11}3$-approximation for discrete Euclidean $k$-means. Sections~\ref{sec:preprocessing}-\ref{sec:cost-new} describes and analyzes the true approximation algorithm. Section~\ref{sec:preprocessing} shows how to preprocess the LP solution $(x, y)$ into an $\varepsilon^c$-integral solution $(x'', y'')$, Section~\ref{sec:rounding_alg} describes the modified iterative rounding algorithm that opens $k+O_\epsilon(1)$ facilities with large probability, and Section~\ref{sec:cost-new} analyzes the cost incurred by the algorithm.  The reduction form pseudo-approximation to true approximation by Li and Svensson \cite{li2013approximating} was only given for the $k$-median problem. We show how it can be generalized to the $L_p^p$ objective for any $p \geq 1$ in Appendix~\ref{sec:pseudo_approximation_reduction}.

%% file: LMP.tex
\section{Generic LMP algorithm for $k$-clustering} 
\label{sec:LMP}

In this section, we describe the LMP algorithm for the $k$-clustering problem under the $L_p^p$ objective for any $p \geq 1$. The metric space may be general or restricted. The main theorem we prove is the following:
\begin{theorem}
    \label{thm:LMP}
    Let $\alpha \geq 1$ be a constant such that the following holds for any input instance. For every client $j \in C$, every set $T \subseteq F$, and every vector $y \in \R_{\geq 0}^T$, we have
    \begin{align}
        \sum_{i\in T,i'\in T}y_iy_{i'}\cdot\max\{\alpha \cdot d^p(i,j),(d(i,j)+d(i,i'))^p\}\le (2\alpha-1)\cdot y(T)\cdot\sum_{i\in T}y_id^p(i,j). \label{equ:distance_sum_goal}
    \end{align}
    Then there is a polynomial time $\alpha$-LMP algorithm for the problem. 
\end{theorem}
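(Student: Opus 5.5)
The plan is to give an explicit iterative rounding algorithm, show that it opens $k$ facilities in expectation because the expected number of open facilities is a martingale, and bound each client's expected cost by a per-client potential that is a supermartingale, with \eqref{equ:distance_sum_goal} as the inequality that makes one step work.

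\textbf{Preprocessing.} Start from an optimal LP solution $(x,y)$ with $y(F)=k$. Split facilities (copies at distance $0$) so that every client's support is exactly its nearest unit of fractional opening; write $x_{i,j}\in\{0,y_i\}$ and $B_j$ for this support. Split further so that each facility $i$ has a set $N^-(i)$ of facilities forming exactly its nearest unit of $y$-mass. Then $i\in N^-(i)$, and $d(i,i')\le d(i,i'')$ whenever $i'\in N^-(i)$ and $i''\notin N^-(i)$.

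This defines the neighborhood graph with edges $i'\to i$ for $i'\in N^-(i)$. Every fractional in-degree is $1$, and the out-degree is $\deg^+(i')=\sum_{i:\,i'\in N^-(i)}y_i$.

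\textbf{Algorithm.} Maintain a set $A$ of active (fractional) facilities and a set $O$ of open ones. In each step choose $i'\in A$ with probability $y_{i'}/y(A)$, add $i'$ to $O$, and delete all its active out-neighbors (which include $i'$ itself) from $A$. Stop when $A=\emptyset$.

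\textbf{Opening count.} The quantity $|O|+y(A)$ changes by $1-\sum_{i\in N^+(i')\cap A}y_i$ in one step. Its expectation is
\[
1-\frac{1}{y(A)}\sum_{i\in A}y_i\, y\bigl(N^-(i)\cap A\bigr)\ \ge\ 1-\frac{y(A)}{y(A)}=0 .
\]
This has the wrong sign, so I would restore equality by keeping the in-neighborhoods of active facilities \emph{full} throughout. When a neighbor of $i$ is removed but $i$ survives, I would re-split and refill $N^-(i)$ from the next nearest active facilities. Because $y(A)$ keeps every in-degree equal to $1$, the drift is then exactly $0$, so $\E[|O|]=y(F)\le k$.

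\textbf{Cost.} For a client $j$, let $T=B_j\cap A$ be its surviving support. Let $b$ be a backup bound, meaning a value of $d^p$ from $j$ to some already open facility, initialised to $+\infty$ but only used with weight $1-y(T)$. I would use the potential
\[
\Phi_j=\alpha\sum_{i\in T}y_i\,d^p(i,j)+\bigl(1-y(T)\bigr)\,b ,
\]
and the goal is $\E[\Phi_j^{\mathrm{new}}]\le\Phi_j$. Then $\Phi_j$ at the start is $\alpha$ times $j$'s LP cost, and at the end it upper-bounds $j$'s true connection cost.

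When the chosen $i'$ closes some $i\in T$, facility $i'$ lies in the nearest unit of $i$. If $i'\in T$, the client gets a facility at cost $d^p(i',j)\le\alpha d^p(i',j)$. Otherwise $j$ can use $i'$ at cost at most $(d(i,j)+d(i,i'))^p$. I would bound the latter by $(d(i,j)+d(i,i''))^p$ for any $i''\in T$ not in $N^-(i)$, which is where summing over all pairs $i,i''\in T$ comes from. Averaging the new potential over $i'$ with weights $y_{i'}$ and the closed $i$ with weights $y_i$ should produce exactly the left side of \eqref{equ:distance_sum_goal}, divided by $y(A)$. The decrease in the active part contributes the right side. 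The factor $2\alpha-1$ arises as $\alpha$ (from removing mass $y_i$ from $T$) plus $\alpha-1$ slack (from the change in the $b$-term). The condition $y(T)\le 1$ is used when comparing the two sides.

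\textbf{Main obstacle.} The hard step is this one-step inequality, which must simultaneously handle the cases where $b$ is already small versus where it gets replaced, and where $i'\in T$ versus $i'\notin T$. I would first try to prove the cleaner statement in which $b$ is replaced by the expectation of the minimum over all closing events so far, and then reduce to \eqref{equ:distance_sum_goal} via a convexity/averaging step. Any mismatch between the pair sum over $T\times T$ and the actual edges would be absorbed by the $\max$ with $\alpha d^p(i,j)$.
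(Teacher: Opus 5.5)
Your algorithm and potential are essentially the paper's: the nearest-unit neighborhood graph, sampling proportional to $y$, closing out-neighbors, the potential $\alpha\sum_{i\in T}y_id^p(i,j)+(1-y(T))b$, and \eqref{equ:distance_sum_goal} applied to the surviving support. However, the counting step has a genuine gap.

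You drop opened facilities from the pool and refill in-neighborhoods only from active facilities. Your claim that every in-degree stays $1$ therefore fails as soon as $y(A)$ falls into $(0,1)$, which happens in general. At that point the drift of $|O|+y(A)$ is $1-y(A)>0$, and it is never negative before that, so $\E[|O|]>y(F)$.

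For example, take two facilities $a,b$ with $d(a,b)=1$ and $y_a=y_b=3/4$. After splitting, with probability $2/3$ the first pick is the mass-$1/2$ piece of $a$ or of $b$. That pick closes only its own location, leaves active mass $3/4$, and forces a second opening. Hence $\E[|O|]=5/3>3/2$. Adding a far-away facility with $y=1/2$ gives an instance with $y(F)=k=2$ and $\E[|O|]>k$.

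The missing idea is to keep integrally open facilities in the process, as the paper does:
\begin{itemize}
\item they stay in the neighborhood graph with $y'_i=1$, so their in-neighborhood is a self-loop;
\item $i'$ is drawn proportionally to $y'$ over \emph{all} facilities with $y'_{i'}>0$, normalizing by $|y'|_1=|O|+y(A)$ rather than $y(A)$;
\item a re-drawn open facility closes each out-neighbor $i$ with probability $w_{i'i}$ without increasing the count.
\end{itemize}
Then every facility with $y'_i>0$ has in-degree exactly $1$ at all times, and $|y'|_1$ is an exact martingale; on the example above this gives $\E[|O|]=3/2$. Your dynamic re-splitting is an acceptable substitute for the paper's removal with probability $w_{i'i}/y'_{i'}$, but only once open facilities are part of the pool.

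Second, the one-step cost inequality, which is the heart of the theorem, is only announced. Your proposed detour (an expected minimum plus convexity) is unnecessary. Let $b$ be a distance as in the paper, $S$ the surviving support, $d_i=d(i,j)$, $V=\sum_{i\in S}y_id_i^p$, $X$ the $y'$-mass outside $S$ (open facilities included), and $w(S,i)=\sum_{i'\in S}w_{i'i}$.

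The case split you worry about is handled by the deterministic bound $b'\le\min\{b,b_i\}$ for every removed $i$, where $b_i=d_i+\min\{d(i,i'):i'\in S,\ w_{i'i}<y_{i'}\}$. This gives $(1-y(S'))\min\{b^p,b^p_{S\setminus S'}\}\le(1-y(S))b^p+\sum_{i\in S\setminus S'}y_i\min\{b,b_i\}^p$.

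In-degree $1$ says that, among draws not happily connecting $j$:
\begin{itemize}
\item $i$ survives with total weight $X-1+w(S,i)$;
\item $i$ is removed with weight $1-w(S,i)=(1-y(S))+\sum_{i'\in S}(y_{i'}-w_{i'i})$.
\end{itemize}
The first part of the removal weight is charged to $b$. For the second part, pair $w_{i'i}\alpha d_i^p+(y_{i'}-w_{i'i})(d_i+d(i,i'))^p\le y_{i'}\max\{\alpha d_i^p,(d_i+d(i,i'))^p\}$. Summing $y_i$ times this over $i,i'\in S$ gives the left side of \eqref{equ:distance_sum_goal} with $T=S$, hence at most $(2\alpha-1)y(S)V$.

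Now compare with $(X+y(S))f(S,b)$. The $b^p$ terms balance exactly, so there is no slack there, contrary to your heuristic. For $V$ one needs $1+\alpha(X-1)+(2\alpha-1)y(S)\le\alpha(X+y(S))$, i.e.\ $(1-\alpha)(1-y(S))\le 0$. The slack $\alpha-1$ comes from each $i\in S$ losing its $\alpha$-weighted term with total weight exactly $1$, while only cost $1$ is paid on happy connections.
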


In Section~\ref{subsec:LMP-general-metric}, we shall show that for the $L_p^p$ objective in general metrics, we can set $\alpha = \frac{1 + 3^p}{2}$.

\subsection{Iterative rounding algorithm}

In this section, we give our iterative rounding algorithm for Theorem~\ref{thm:LMP}.

\begin{definition}[Neighborhood Graph]
    Given a vector $y'\in [0, 1]^F$, the neighborhood graph for $y'$ is a directed graph $(F, E, w)$ with possibly self-loops, and positive edge-weights $w \in (0, 1]^E$, defined using the following procedure. For every $i \in F$ with $y'_i > 0$, let $(w_{i'i})_{i' \in F} \in [0, 1]^F$ be the vector with the minimum $\sum_{i' \in F}w_{i'i} d(i', i)$ satisfying $\sum_{i' \in F}w_{i'i} = 1$ and $w_{i'i} \in [0, y'_{i'}]$ for every $i' \in F$. For any $i \in F$ with $y'_i = 0$, we let $w_{i'i} = 0$ for every $i' \in F$. Then we define $E$ to be the support of $w$, and restrict the domain of $w$ to $E$.

\end{definition}

So, for every $i \in F$ with $y'_i > 0$, the vector $(w_{i'i})_{i' \in F}$ is the 1 fractional nearest facility to $i$.  Treating $w_{i', i}$ as the fraction of the edge $(i', i)$, every $i \in F$ with $y'_i > 0$ has fractional in-degree $1$.  Notice that we always have $w_{i, i} = y'_i$ when $y'_i > 0$, as $i$ is the closest facility to itself. 

The randomized rounding algorithm is given in Algorithm~\ref{alg:LMP}. A notable feature of the algorithm is that it is independent of the set $C$ of clients and thus the fractional connection vector $x$. It only depends on $y$ and the metric over $F$. 
\begin{algorithm}[h]
    \caption{Iterative Rounding}
    \label{alg:LMP}
        $y' \gets y$\; 
        \While{$y'$ is not integral}
        {
            create the neighborhood graph $G = (F, E, w)$ for $y'$\;
            choose a facility $i' \in F$ randomly, with probabilities proportional to $y'_{i'}$ values\;
            for every $i \in \delta_G^+(i')$ do: with probability $\frac{w_{i'i}}{y'_{i'}}$, let $y'_i \gets 0$\;
            let $y'_{i'} \gets 1$\;
        }
        \Return $\{i \in F: y'_i = 1\}$
\end{algorithm}

To get some intuition on an iteration of the while loop, let us assume $y'_i > 0$ for every $i \in F$ and $w_{i'i} = y'_{i'}$ for every edge $i'i$. Then, in every iteration, we randomly choose a facility $i' \in F$ based on $y'$ values. We remove all out-neighbors of $i'$ by changing their values to $0$, and integrally open $i$ by changing its $y'$ value to $1$. When $w_{i'i} < y'_{i'}$, we remove $i'$ with probability $\frac{w_{i'i}}{y'_{i'}}$. Notice that our rounding algorithm is completely independent of the clients.

We say an iteration of the while loop is useful, when the $y'$-vector is changed. Clearly, if an iteration is useful, then for at least one facility $i \in F$, the value of $y'_i$ changes from fractional to integral. An iteration is non-useful when the facility $i'$ we choose already has $y'_i  = 1$, and we fail to change the value of any $y'_i$ to $0$. Using conditional distributions, we can avoid all the non-useful iterations and thus the algorithm always finishes in finite time.  However, it is instructive to analyze the version of the algorithm with non-useful iterations.

\subsection{Analysis of the number of open facilities} %

The analysis of the expected number of open facilities is easy. Focus on an iteration of the algorithm, let $F'$ be the set of facilities with $y'_i > 0$. When we choose $i' \in F'$ in the iteration, the expected decrement to $|y'|_1$ in Step 5 is $\sum_{i \in \delta^+(i')}\frac{w_{i'i}}{y'_{i'}}\cdot y'_i$, and the increase in Step 6 is $1$ (notice that we must have changed $y'_{i'}$ to $0$ in Step 5). Therefore, conditioned on choosing $i'$ in the iteration, the expected net increase in $|y'|_1$ is $\sum_{i \in \delta_G^+(i')}\frac{w_{i'i}}{y'_{i'}}\cdot y'_i - 1$.  The expected net increase in $|y'|_1$ over all choices of $i'$ is 
\begin{align*}
    \frac{1}{|y'|_1} \cdot \sum_{i' \in F'}y'_{i'}\left(\sum_{i \in \delta_G^+(i')}\frac{w_{i'i}}{y'_{i'}}\cdot y'_i - 1\right) = \frac{1}{|y'|_1} \left(\sum_{i'i \in E} w_{i'i}y'_i - |y'|_1 \right) = 0.
\end{align*}
The last equality used that $\sum_{i' \in \delta_G^-(i)}w_{i'i} = 1$ for every $i \in F'$, and thus $\sum_{i'i \in E} w_{i'i}y'_i = \sum_{i \in F'}y'_i = |y'|_1$. As the probability that the algorithm runs for infinite number of iterations is $0$, the expected number of open facilities at the end of the algorithm is $\E[|y'|_1] \leq k$.

\subsection{Analysis of connection cost}

In this section, we fix a client $j$ and analyze its expected connection cost. By splitting facilities at the beginning, we can assume $x_{ij} \in \{0, y_i\}$ for every $i \in F$. Then, we define $F_j$ to be the set of facilities $i$ with $x_{ij} = y_i > 0$.  So we have $y(F_j) = 1$.  We define $d_i = d(i, j)$ for every $i \in F_j$.

Let $\Phi_j$ be the random variable denoting the cost at the end of the algorithm. The main theorem we prove is the following:
\begin{theorem}
    $\E[\Phi_j] \le \alpha \sum_{i \in F_j}y_i d_i^p$. \label{thm:cost-analysis}
\end{theorem}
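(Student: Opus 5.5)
The plan is a potential-function argument: define a quantity $\Psi$ depending on the current state of Algorithm~\ref{alg:LMP}, show that its expectation never increases in one iteration, and check that it starts at most $\alpha\sum_{i\in F_j}y_id_i^p$ and ends at least the final connection cost $\Phi_j$. The potential is built to track what $j$ pays if it connects to a still-fractional facility of $F_j$, and otherwise what it pays through an already guaranteed ``backup''.

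\textbf{Step 1 (the potential).} At any moment let $T\subseteq F_j$ be the facilities of $F_j$ with $y'_i\in(0,1)$. Let $D$ be a current backup bound, meaning a random variable such that some open facility is guaranteed to lie within distance $D^{1/p}$ of $j$ (initially $D=\infty$, handled by truncation). I would take
\[
\Psi \;=\; \min\Bigl\{D,\ \alpha\sum_{i\in T}y'_id_i^p+\bigl(1-y'(T)\bigr)D\Bigr\},
\]
with the convention that once a facility of $F_j$ is opened, $D$ drops to its distance.
\begin{itemize}
\item At the start, $y'(T)=1$, so $\Psi\le\alpha\sum_{i\in F_j}y_id_i^p$.
\item At the end, $T=\emptyset$, so $\Psi=D\ge\Phi_j$.
\end{itemize}
Since a non-useful iteration leaves $\Psi$ unchanged, it suffices to handle a single iteration with $|y'|_1$ fixed.

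\textbf{Step 2 (one iteration).} Condition on the current $y'$ and the graph $G$. When $i'$ is chosen (probability $y'_{i'}/|y'|_1$), two things happen.
\begin{itemize}
\item \emph{Opening.} If $i'\in T$, the facility $i'$ becomes open. Its mass $y'_{i'}$ leaves $T$, and $D$ becomes at most $d_{i'}^p\le\alpha d_{i'}^p$.
\item \emph{Closing.} Each out-neighbour $i\in T$ of $i'$ is closed with probability $w_{i'i}/y'_{i'}$. Because $i'$ is now open, $D$ becomes at most $(d_i+d(i,i'))^p$ by the triangle inequality.
\end{itemize}
Averaging over $i'$ cancels the factor $y'_{i'}$ in the closing probabilities, leaving edge weights $w_{i'i}\le y'_{i'}$. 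Summing over pairs, the expected change of $\Psi$ is bounded by an expression of the form
\[
\sum_{i,i'\in T}y'_iy'_{i'}\max\{\alpha d_i^p,(d_i+d(i,i'))^p\}-(2\alpha-1)\,y'(T)\sum_{i\in T}y'_id_i^p,
\]
up to the normalising factor $1/|y'|_1$ and terms coming from $D$. This difference is at most $0$ by hypothesis \eqref{equ:distance_sum_goal} of Theorem~\ref{thm:LMP}, applied with $T$ and the vector $y'|_T$. The $\max$ is what lets a single bound cover both the ``$i$ opened itself'' scenario and the ``$i$ closed by $i'$'' scenario. I still need to verify that the coefficient $2\alpha-1$ comes out exactly from combining the $\alpha$ term with the $(1-y'(T))D$ term.

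\textbf{Step 3 (main obstacle: closings from outside $T$).} The delicate case is when $i\in T$ is closed by a chosen $i'\notin T$: a facility outside $F_j$, or one already open. In that case the triangle bound $(d_i+d(i,i'))^p$ is not directly paired with a term of \eqref{equ:distance_sum_goal}. I would first try to exploit that $w_{\cdot i}$ is the cheapest unit of fractional mass around $i$. Since $w_{ii}=y'_i$ and $w_{i''i}\le y'_{i''}$, if $i'$ receives positive weight while some $i''\in T$ has unused capacity $w_{i''i}<y'_{i''}$, then $d(i,i')\le d(i,i'')$. So such an $i'$ can be charged against the members of $T$ it displaced, with its weight redistributed onto pairs $(i,i'')$, $i''\in T$. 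The part of the unit in-degree of $i$ not explained by $T$ corresponds to mass $1-y'(T)$, which is exactly the slack carried by the $(1-y'(T))D$ term.

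If this charging does not close, my fallback is to redefine $T$ to include all currently open facilities at their distances. This replaces $D$ by a fractional-plus-integral weighted sum, so that every closing event has both endpoints inside the set to which \eqref{equ:distance_sum_goal} is applied.
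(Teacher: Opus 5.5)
Your potential is the paper's $f(S,b)=\alpha\sum_{i\in S}y_id_i^p+(1-y(S))b^p$: your $T$ is its $S$ and your $D$ is its $b^p$. The outer $\min\{D,\cdot\}$ is a valid substitute for the paper's device of stopping as soon as some $i\in F_j$ is chosen (``happily connected'' at cost exactly $d_i^p$). Your Step~3 observation is also exactly the paper's key trick. If an outsider $i''$ closes $i\in T$ while some $i'\in T$ has $w_{i'i}<y'_{i'}$, then $d(i,i'')\le d(i,i')$, so the new backup is at most $b_i:=d_i+\min_{i'\in T:\,w_{i'i}<y'_{i'}}d(i,i')$.

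\textbf{The gap.} The one-iteration inequality, which is the entire substance of the proof, is never derived. Moreover, Step~2's description of how it assembles is wrong in three places.
\begin{itemize}
\item The pair terms of \eqref{equ:distance_sum_goal} do not come from a chosen $i'\in T$ closing $i$. That event is fully paid for by $\Psi\le d_{i'}^p$ and needs nothing further.
\item The weights are the reverse of what you describe. The weight $w_{i'i}$ belongs on the $\alpha d_i^p$ side, which accounts for $i$ \emph{surviving}, not for $i$ being opened (opening costs only $d_i^p$). The complementary weight $y'_{i'}-w_{i'i}$ belongs on the $(d_i+d(i,i'))^p$ side, via your Step~3 charging.
\item The coefficient $2\alpha-1$ does not come from the $D$-term.
\end{itemize}

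\textbf{The bookkeeping that closes the step.} Write $y=y'(T)$, $X=y'(F\setminus T)$, $V=\sum_{i\in T}y'_id_i^p$ and $w(T,i)=\sum_{i'\in T}w_{i'i}$, and multiply everything by $|y'|_1=X+y$.
\begin{itemize}
\item Choosing some $i'\in T$ contributes at most $V$.
\item For $i\in T$, the in-weights sum to $1$. So, over outside choices, $i$ survives with weight $X-1+w(T,i)$ and is closed with weight $1-w(T,i)=(1-y)+\sum_{i'\in T}(y'_{i'}-w_{i'i})$.
\item Bound each new backup by $\min\{b^p,b_i^p\}$, not $b_i^p$. The $(1-y)$ part is then charged to $b^p$. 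Together with the $X(1-y)b^p$ coming from $(1-y'(S'))D$, it gives exactly $(X+y)(1-y)b^p$, so the backup term balances on its own.
\item The remaining pieces pair up as
$w_{i'i}\,\alpha d_i^p+(y'_{i'}-w_{i'i})(d_i+d(i,i'))^p\le y'_{i'}\max\{\alpha d_i^p,(d_i+d(i,i'))^p\}$. By \eqref{equ:distance_sum_goal}, their $y'_i$-weighted sum is at most $(2\alpha-1)yV$.
\item You are left needing $1+\alpha(X-1)+(2\alpha-1)y\le\alpha(X+y)$, i.e.\ $(\alpha-1)(1-y)\ge 0$.
\end{itemize}
So the slack that pays for $2\alpha-1$ is the $\alpha-1$ saved when $j$ connects to a chosen member of $T$ at cost $d_{i'}^p$ rather than $\alpha d_{i'}^p$, together with $y'(T)\le 1$.

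With this accounting your fallback of putting open facilities into $T$ is unnecessary. It would also break $y'(T)\le 1$, which the last step uses.
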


When some facility $i \in F_j$ is chosen in an iteration, we directly let $d_i^p$ be the cost of $j$, without looking at the facilities that may be open in the future. We say $j$ is happily connected in this case. When $j$ is not happily connected, we define its state to be $(S, b)$, where $S \subseteq F_j$ is the set of alive facilities in $F_j$, and $b$ is the distance from $j$ to its nearest integrally open facility.  We say a facility $i \in F_j$ is alive if we have $y'_i > 0$. Notice that if $i \in F_j$ is alive and $j$ is not happily connected yet, then we have $y'_i = y_i$.

For a state $(S \subseteq F_j, b \in [0, \infty])$, we define $$f(S, b) := \alpha\sum_{i \in S} y_i d_i^p + (1 - y(S))b^p.$$
We assume $0 \cdot \infty^p = 0$.

Since Algorithm~\ref{alg:LMP} as described does not terminate in finite number of iterations with probability 1, we choose a parameter $T$ as the base case for our inductive proof; we shall let $T$ tend to $\infty$ later. We define an artificial cost $\Phi'_j$ of $j$ as follows. If $j$ is happily connected by the end of iteration $T$, then $\Phi'_j$ is the cost incurred when this happens. Otherwise, let $(S, b)$ be the state of $j$ at the end of iteration $T$, and we define $\Phi'_j := f(S, b)$. 

\begin{lemma}
\label{lemma:potential}
    Let $t \in [0, T]$. Suppose at the end of the $t$-th iteration, $j$ is not happily connected and its state is $(S, b)$. Conditioned on this event, we have $\E[\Phi'_j] \leq f(S, b)$.
\end{lemma}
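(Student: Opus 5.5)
The proof is a backward induction on $t$, from $t = T$ down to $t = 0$. For $t = T$ the claim holds with equality, since by definition $\Phi'_j = f(S,b)$ whenever $j$ is not happily connected at the end of iteration $T$. For $t < T$, condition on the state $(S,b)$ and on the whole current vector $y'$ at the end of iteration $t$. Iteration $t+1$ then has two kinds of outcomes. Either $j$ becomes happily connected, and $\Phi'_j$ is determined. Or $j$ moves to a new state $(S',b')$, and by the induction hypothesis the conditional expectation of $\Phi'_j$ is at most $f(S',b')$. So it suffices to prove the one-step inequality
\[
\E\bigl[\text{happy cost, or } f(S',b')\bigr] \le f(S,b).
\]
Non-useful iterations leave the state unchanged and contribute exactly $f(S,b)$, so they cause no trouble.

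To compute the left side, write $Y = |y'|_1$; the facility $i'$ is chosen with probability $y'_{i'}/Y$. Alive facilities in $F_j$ satisfy $y'_i = y_i$.
\begin{itemize}
\item If $i' \in S$, then $j$ is happily connected at cost $d_{i'}^p$.
\item If $i' \notin S$, then $i'$ is opened, so $b' = \min\{b, d(i',j)\} \le \min\{b,\, d_i + d(i,i')\}$ for every $i$. In addition, each $i \in S \cap \delta^+(i')$ is removed with probability $w_{i'i}/y'_{i'}$. Removing $i$ changes $f$ by $-y_i(\alpha d_i^p - b'^p)$, and lowering $b$ to $b'$ only decreases $f$.
\end{itemize}
Multiplying through by $Y$, and using that $y'_{i'}$ times the removal probability equals $w_{i'i}$, it suffices to show
\[
\sum_{i\in S} y_i\bigl(d_i^p - f(S,b)\bigr) \;\le\; \sum_{i \in S} y_i \sum_{i' \notin S} w_{i'i}\Bigl(\alpha d_i^p - \min\{b,\, d_i + d(i,i')\}^p\Bigr).
\]

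The remaining task is to verify this inequality using only the structure of the neighborhood graph and the hypothesis \eqref{equ:distance_sum_goal} of Theorem~\ref{thm:LMP}. Each $i \in S$ has in-weight $1$, and $w_{i'i} \le y'_{i'}$. Hence the in-weight from $S$ is at most $y(S)$, and the in-weight $W_i$ coming from outside $S$ is at least $1 - y(S)$.

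The in-weights form the nearest-$1$-fraction: any $i' \notin S$ with $w_{i'i} > 0$ is at least as close to $i$ as every facility $i'' \in S$ with $w_{i''i} < y_{i''}$. I would use this to replace each distance $d(i,i')$ with $i' \notin S$ by an appropriate average of the distances $d(i,i'')$ over $i'' \in S$. The outside terms then take the form
\[
\frac{1}{y(S)}\sum_{i''\in S} y_{i''}\max\bigl\{\alpha d_i^p,\,(d_i + d(i,i''))^p\bigr\}\cdot(\dots),
\]
and the $\max$ absorbs the case where $b$ is the smaller term in the $\min$. The excess weight $W_i - (1 - y(S))$ should be charged against the $-b^p$ and $-\alpha d_i^p$ parts of $f(S,b)$ on the left-hand side. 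After this rearrangement, the inequality should reduce to \eqref{equ:distance_sum_goal} applied with $T = S$ and the vector $y|_S$.

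I expect this last reduction to be the main obstacle. The difficulty is handling the mixture of the $b$-branch and the $d_i + d(i,i')$-branch of the $\min$, together with the in-weight split between $S$ and the outside, tightly enough that exactly the factor $2\alpha - 1$ in \eqref{equ:distance_sum_goal} appears. My first attempt would be to treat the worst case in which the outside in-weight of each $i$ equals exactly $1 - y(S)$ and is located at the closest possible distance. That case is linear in these quantities, which suggests it is extremal.
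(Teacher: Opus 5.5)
Your reduction to the one-step inequality is correct, and it is the same setup the paper uses: backward induction from $t=T$, a happy connection when $i'\in S$, the bound $b'\le\min\{b,d_i+d(i,i')\}$, and the observation that any outside remover of $i$ is no farther from $i$ than any $i''\in S$ with $w_{i''i}<y_{i''}$. The gap is that the displayed inequality, which is the entire content of the lemma, is never proved, and the route you sketch for it would not work as described.

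\begin{itemize}
\item Replacing $d(i,i')$ by a $y_{i''}/y(S)$-weighted average over all $i''\in S$ is not justified. The nearest-fraction property says nothing about those $i''\in S$ that are full in-neighbors of $i$ ($w_{i''i}=y_{i''}$), and an outside remover may be much farther from $i$ than they are.
\item Your proposed extremal case is the easy one, not the hard one. If the outside in-weight is exactly $1-y(S)$, then $w(S,i):=\sum_{i''\in S}w_{i''i}=y(S)$ and the removers cost at most $(1-y(S))b^p$. With $V:=\sum_{i\in S}y_id_i^p$, the inequality then collapses to $(\alpha-1)V\ge 0$. The hard case is $W_i>1-y(S)$.
\item Your charging is reversed. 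It is the base part $1-y(S)$, not the excess, that is absorbed by the $b^p$ term. The excess has to be paid for through \eqref{equ:distance_sum_goal}.
\end{itemize}

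The missing idea is as follows. Every outside remover of $i$ costs at most $\min\{b,d_i+d(i,i'')\}^p$ simultaneously for every $i''\in S$ with $w_{i''i}<y_{i''}$. So, splitting $W_i=(1-y(S))+\sum_{i''\in S}(y_{i''}-w_{i''i})$, the total outside cost for $i$ is at most
\[
(1-y(S))\,b^p+\sum_{i''\in S}(y_{i''}-w_{i''i})(d_i+d(i,i''))^p .
\]
This is legitimate because $y_{i''}-w_{i''i}>0$ exactly for the non-full $i''$. Summed against $y_i$, the first part gives $y(S)(1-y(S))b^p$, which cancels the matching piece of $y(S)f(S,b)$ on your left-hand side.

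For the second part, write $\alpha W_id_i^p=\alpha d_i^p-\alpha w(S,i)d_i^p$ on your right-hand side and move $\alpha w(S,i)d_i^p$ across. Each pair then combines as
\[
w_{i''i}\,\alpha d_i^p+(y_{i''}-w_{i''i})(d_i+d(i,i''))^p\le y_{i''}\max\bigl\{\alpha d_i^p,(d_i+d(i,i''))^p\bigr\}.
\]
It therefore suffices to show
\[
\sum_{i,i''\in S}y_iy_{i''}\max\bigl\{\alpha d_i^p,(d_i+d(i,i''))^p\bigr\}\le(\alpha-1)V+\alpha y(S)V .
\]
The hypothesis bounds the left side by $(2\alpha-1)y(S)V$, and the leftover is $(\alpha-1)(1-y(S))V\ge 0$. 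This is the chain \eqref{inequ:1}--\eqref{inequ:4} in the paper. Without it, your proposal is an outline rather than a proof.
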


Before we prove the lemma, we give some intuition behind the potential function $f(S, b)$. In the ideal case, we open at most one facility in $S$, and the probability we open $i \in S$ is $y'_i = y_i$. With the remaining probability of $1 - y(S)$, no facilities in $S$ is open and we use the backup connection distance $b$. In this case, the expected connection cost for $j$ would be $\sum_{i \in S} y_i d_i^p + (1 - y(S))b^p$. In the definition of $f(S, b)$, we lose a factor of $\alpha$ on $y_i d_i^p$ terms, but not on the $(1 - y(S))b^p$ term. This is reasonable as we always have the backup distance $b$.

\begin{proof}[Proof of Lemma~\ref{lemma:potential}]
    For convenience, we shall define $y := y(S)$, and $V: =\sum_{i \in S} y_i d_i^p$. So $f(S, b)$ is simply $\alpha V + (1 - y)b^p$. 

    We prove the lemma using induction over $t$ from $T$ down to $0$. The lemma clearly holds when $t = T$ by our definition of artificial cost. %
    Now we fix $t \in [1, T]$. We assume the lemma holds for the iteration $t$ and we prove that it holds for iteration $t - 1$. So, at the end of iteration $t-1$, which is the beginning of iteration $t$, the state of $j$ is $(S, b)$. In particular, $j$ is not happily connected yet.  The neighborhood graph $G = (F, E, w)$ and $y'$ are as defined at the beginning of iteration $t$. For convenience, we omit the subscript $G$ in $\delta^+$ and $\delta^-$. For any $i'i \notin E$, we let $w_{i'i} = 0$. 

    When we choose some facility $i' \in S \subseteq F_j$ in iteration $t$, $j$ will be happily connected during the iteration. Otherwise, let $(S' \subseteq S, b' \leq b)$ be the state at the end of the iteration. Let
    \begin{align*}
        x_{S'} &:= \sum_{i' \in F \setminus S}y'_{i'}\cdot \Pr[\text{the state becomes $(S', b')$ for some $b'$}|i'\text{ is chosen}],\\
        X &:= \sum_{S' \subseteq S} x_{S'} = \sum_{i' \in F \setminus S} y'_{i'},\\
        b_i&:= d_i + \min_{i' \in S: w_{i'i} < y_{i'}}d(i, i'), \forall i \in S, \quad \text{and}\quad  b_P:=\min_{i \in P}b_i,  \forall P \subseteq S.  
    \end{align*}
    \label{sec:connection_cost}

    Notice that when $i \notin S'$ (which implies $S'$ is well-defined and thus $j$ is not happily connected), then we have $b' \leq b_i$.
    This holds as for every $i' \in S$ with $w_{i'i} < y_{i'}$, and every $i'' \in F$ with $w_{i''i} > 0$, we have $d(i, i'') \leq d(i, i')$, and thus $d(j, i'') \leq d_i + d(i, i'') \leq d_i + d(i, i')$. So, we have $b' \leq \min\{b, b_{S \setminus S'}\}$.
    
    By the induction hypothesis, under the event of the lemma, we have 
    \begin{align}
        \E[\Phi'_j] \leq \frac{V + \sum_{S' \subseteq S} x_{S'} \cdot f\big(S', \min\{b, b_{S \setminus S'}\}\big)}{y + X}. \label{equ:cost-upper-bound}
    \end{align}
    The remaining goal of the proof is to upper bound \eqref{equ:cost-upper-bound} by $f(S, b)$.

We use $w(S, i)$ to denote the total $w$ value of the edges from $S$ to $i$, for every $i \in S$.  The numerator of \eqref{equ:cost-upper-bound} is
\begin{align}
&\quad V + \sum_{S'} x_{S'} \left(\alpha \sum_{i \in S'} y_i d_i^p + (1 - y(S'))\min\{b^p, b^p_{S \setminus S'}\}\right) \nonumber\\
&\leq V + X(1 - y)b^p + \sum_{S'}x_{S'}\left(\alpha \sum_{i \in S'} y_i d_i^p + \sum_{i \in S\setminus S'} y_i b_i^p\right) \label{inequ:1}\\
&= V + X(1 - y) b^p + \sum_{i \in S}y_i \left(\alpha d_i^p\sum_{S' \ni i}x_{S'} + b_i^p \sum_{S' \not\ni i} x_{S'}\right) \nonumber\\
&=V + X(1 - y) b^p + \sum_{i \in S}y_i \big(\alpha (X - 1 + w(S, i))d_i^p  + (1 - w(S, i))b_i^p\big) \label{inequ:2}\\
&\le V +X(1 - y) b^p  + \alpha(X-1)V  + y(1 - y)b^p+ \sum_{i \in S}y_i \big(\alpha w(S, i)d_i^p  + (y - w(S, i))b_i^p \big) \nonumber\\
&\leq V + (X+y)(1 - y) b^p + \alpha(X-1)V +\sum_{i\in S}y_i\Bigg(\sum_{i' \in S} w_{i'i} \alpha d_i^p + \sum_{i' \in S} (y_{i'} - w_{i'i})(d_i + d(i,i'))^p \Bigg)\label{inequ:3}\\
&\le V + (X+y)(1 - y) b^p +  \alpha(X-1)V+ \sum_{i\in S}\sum_{i'\in S}y_iy_{i'}\max\{\alpha d_i^p, (d_i + d(i,i'))^p\}\nonumber\\
&\le V + (X+y)(1 - y) b^p + \alpha(X-1)V + (2\alpha-1)yV \label{inequ:apply-alpha}\\
&= (X+y)(1 - y) b^p + \left(1+\alpha(X-1) + (2\alpha-1)y'\right) V \nonumber \\
&\leq (X+y)(1 - y)b^p + (\alpha X + \alpha y)V \quad =\quad (X+y)f(S, b). \label{inequ:4}
\end{align}

To see \eqref{inequ:1}, notice that for every $S' \subseteq S$ we have $(1 - y(S'))\min\{b^p, b^p_{S \setminus S'}\} \leq (1 - y)b^p + (y - y(S')) b^p_{S \setminus S'} \leq (1 - y)b^p + \sum_{i \in S \setminus S'} y_i b_i^p$. For \eqref{inequ:2}, notice that $\sum_{S' \not\ni i} x_{S'} = 1 - w(S, i)$, and $\sum_{S' \ni i} x_{S'} = X - 1 + w(S, i)$. \eqref{inequ:3} used that $b_i \leq d_i + d(i, i')$ for every $i' \in S$ with $w_{i'i} < y_{i'}$. \eqref{inequ:apply-alpha} used \eqref{equ:distance_sum_goal}. Finally, \eqref{inequ:4} used that $1 - \alpha + (\alpha-1)y = (1-\alpha)(1 - y) \leq 0$.
\end{proof}

Applying Lemma~\ref{lemma:potential} to the end of iteration $0$, when the state of $j$ is $(F_j, \infty)$,  we obtain that $\E[\Phi'_j] \leq f(F_j, \infty) = \alpha \sum_{i \in F_j} y_i d_i^p$.  Notice that when $T$ tends to $\infty$, we have $\E[\Phi'_j]$ tends to $\E[\Phi_j]$, as the probability that $j$ is not happily connected with $T$ iterations tends to $0$. This proves Theorem~\ref{thm:cost-analysis}.

\subsection{$L_p^p$ objective in general metrics}
\label{subsec:LMP-general-metric}

\begin{lemma}
    \label{lemma:distance_sum_Lp}
    In any metric space and for any $p\ge 1$ and $\alpha=\frac{3^p+1}{2}$, we have that \eqref{equ:distance_sum_goal} holds.
\end{lemma}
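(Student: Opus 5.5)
The plan is to reduce \eqref{equ:distance_sum_goal} to a two-variable inequality for each unordered pair $\{i,i'\}$. Here $2\alpha-1=3^p$. Write $d_i=d(i,j)$. Expanding the right-hand side gives $3^p\sum_{i,i'\in T}y_iy_{i'}d_i^p$, which is symmetric under swapping $i,i'$. So it suffices to prove two things:
\begin{itemize}
\item for $i=i'$, the diagonal term satisfies $\max\{\alpha d_i^p, d_i^p\}=\alpha d_i^p\le 3^p d_i^p$, which holds since $\alpha\le 3^p$;
\item for every unordered pair $i\ne i'$, with $a=d_i$ and $b=d_{i'}$,
\[
\max\{\alpha a^p,(a+d(i,i'))^p\}+\max\{\alpha b^p,(b+d(i,i'))^p\}\le 3^p(a^p+b^p).
\]
\end{itemize}
By the triangle inequality $d(i,i')\le a+b$, and the left-hand side is monotone in $d(i,i')$. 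So I replace $d(i,i')$ by $a+b$, which leaves the terms $(2a+b)^p$ and $(a+2b)^p$.

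Next I normalize using homogeneity and symmetry. Assume $a\le b$; the case $b=0$ is trivial. Set $b=1$ and $a=t\in[0,1]$. The key tool is convexity of $x\mapsto x^p$ for $p\ge1$. Since $\frac{2t+1}{3}=\frac23 t+\frac13\cdot 1$, we get $(2t+1)^p\le 3^{p-1}(2t^p+1)$. Similarly $\frac{t+2}{3}=\frac13 t+\frac23\cdot1$ gives $(t+2)^p\le 3^{p-1}(t^p+2)$.

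For the first max: $t\le 1$ implies $2t+1\ge 3t$, so $(2t+1)^p\ge 3^pt^p\ge\alpha t^p$. Hence the first max equals $(2t+1)^p$. The second max is split into two cases.
\begin{itemize}
\item \emph{Case $(t+2)^p\ge\alpha$.} Adding the two convexity bounds gives $(2t+1)^p+(t+2)^p\le 3^{p-1}(3t^p+3)=3^p(1+t^p)$, as required.
\item \emph{Case $\alpha>(t+2)^p$.} We need $(2t+1)^p+\alpha\le 3^p(1+t^p)$. The convexity bound gives the left side at most $2\cdot 3^{p-1}t^p+3^{p-1}+\alpha$. It therefore suffices that $3^{p-1}+\frac{3^p+1}{2}\le 3^p$, since the $t^p$ coefficient $2\cdot 3^{p-1}\le 3^p$ is fine. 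This condition is equivalent to $3^p+1\le 4\cdot 3^{p-1}$, i.e. $1\le 3^{p-1}$, which holds for $p\ge1$.
\end{itemize}
Summing the pairwise inequalities weighted by $y_iy_{i'}$, together with the diagonal terms, yields \eqref{equ:distance_sum_goal}.

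The main obstacle is choosing the right reduction: symmetrizing ordered pairs into unordered pairs so that the $3^p$ budget on $d_i^p$ can be shared between the two terms. I also need to verify that the case where the max is attained by $\alpha b^p$ is exactly where $\alpha=\frac{3^p+1}{2}$ is tight, at $t=0$ with $p=1$. The rest is a routine convexity estimate.
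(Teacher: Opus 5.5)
Your proof is correct and rests on the same three facts as the paper's: $d(i,i')\le d_i+d_{i'}$, the convexity bound $(2d_i+d_{i'})^p\le 3^{p-1}(2d_i^p+d_{i'}^p)$, and $\alpha=\frac{3^p+1}{2}\le 2\cdot 3^{p-1}$ (exactly your Case~2 condition). The symmetrization you call the main obstacle, along with the normalization and case split, is not needed: since $\alpha d_i^p\le 2\cdot 3^{p-1}d_i^p$, each ordered term (diagonal included) is already at most $2\cdot 3^{p-1}d_i^p+3^{p-1}d_{i'}^p$, and summing over ordered pairs gives $3^p\, y(T)\sum_{i\in T}y_id_i^p$ directly, which is the paper's one-line argument.
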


\begin{proof}
    We have that
    \begin{align*}
        &\sum_{i\in T,i'\in T}y_iy_{i'}\cdot \max\{\alpha d^p(i,j),(d(i,j)+d(i,i'))^p\} \\
        \le{}&\sum_{i\in T,i'\in T}y_iy_{i'}\cdot \max\{\alpha d^p(i,j),(d(i,j)+d(i,j)+d(i',j))^p\} \\
        \le{}&\sum_{i\in T,i'\in T}y_iy_{i'}\cdot \max\{\alpha d^p(i,j),3^{p-1}(d^p(i,j)+d^p(i,j)+d^p(i',j))\} \tag{convexity of $x^p$}\\
        \le{}&\sum_{i\in T,i'\in T}y_iy_{i'}\cdot (2\cdot 3^{p-1}\cdot d^p(i,j)+3^{p-1}\cdot d^p(i',j)) \\
        ={}&3^py(T)\cdot\sum_{i\in T}y_id^p(i,j). \qedhere
    \end{align*}
\end{proof}

\subsection{Euclidean $k$-means}

\begin{theorem}
    \label{thm:distance_sum_euclid_kmeans}
    Suppose that the metric space is  Euclidean, $p = 2$ and $\alpha=4$, then \eqref{equ:distance_sum_goal} holds.
\end{theorem}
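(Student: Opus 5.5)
The plan is to reduce \eqref{equ:distance_sum_goal} to a pointwise quadratic inequality, and then use the Euclidean variance identity to control the cross distances $d(i,i')$. With $\alpha=4$ and $p=2$, the right-hand side of \eqref{equ:distance_sum_goal} is $7\,y(T)\sum_{i\in T}y_i d^2(i,j)$.

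\textbf{Setup.} Embed everything in Euclidean space and translate so that the client $j$ sits at the origin. Write $u_i$ for the position of facility $i$ and $a_i := \|u_i\| = d(i,j)$. Note that $d(i,i') = \|u_i-u_{i'}\|$, and set $Y := y(T)$ and $V := \sum_{i\in T} y_i a_i^2$.

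\textbf{Step 1 (pointwise bound).} I claim that for all reals $a,D\ge 0$,
\[
\max\{4a^2,(a+D)^2\} \le 4a^2 + \tfrac32 D^2 .
\]
The term $4a^2$ is trivially dominated. For the other term, $(a+D)^2 \le 4a^2+\frac32 D^2$ is equivalent to $\frac12 D^2 - 2aD + 3a^2 \ge 0$. This quadratic in $D$ has discriminant $4a^2-6a^2<0$, so it is always nonnegative. The constants $(4,\tfrac32)$ are chosen so that $c_1 + 2c_2 = 7$, which is exactly the budget the next step needs.

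\textbf{Step 2 (summing).} Apply Step 1 with $a=a_i$ and $D=\|u_i-u_{i'}\|$, multiply by $y_iy_{i'}$, and sum over $i,i'\in T$. The first term contributes $4\sum_{i,i'} y_iy_{i'}a_i^2 = 4YV$. For the second term, expand the square:
\[
\sum_{i,i'} y_iy_{i'}\|u_i-u_{i'}\|^2 = 2YV - 2\Big\|\sum_i y_iu_i\Big\|^2 \le 2YV .
\]
This is the only place where the Euclidean structure is used. In a general metric we would only have $\|u_i-u_{i'}\|\le a_i+a_{i'}$, which is too lossy. Hence the second term contributes at most $\frac32\cdot 2YV = 3YV$, and the total is at most $7YV = (2\alpha-1)\,y(T)\sum_{i\in T} y_i d^2(i,j)$, as required.

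\textbf{Main obstacle.} The crux is finding the right linear split $c_1a^2 + c_2D^2$ of the maximum. One must keep $c_2\le \frac32$ so that the regime $D\le a$ (where the maximum equals $4a^2$) costs nothing beyond $c_1=4$. At the same time, $c_2$ must be large enough that $(a+D)^2$ is covered for $D>a$. Once this pointwise inequality is checked, the rest follows directly from the variance identity.
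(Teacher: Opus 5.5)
Your proof is correct. It reaches the paper's key inequality by a cleaner route.

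The paper symmetrizes over the ordered pairs $(i,i')$ and $(i',i)$. It first relaxes via $(a+b)^2\le 2a^2+2b^2$, then does a three-case analysis on which branch of each maximum is active to prove \cref{clm:linear_bound_euclid}, namely $\mathrm{diff}(i,i')+\mathrm{diff}(i',i)\le -6I_{i,i'}$. Only then does it use $\sum_{i,i'}y_iy_{i'}I_{i,i'}=|\sum_i y_i\vec v_i|^2\ge 0$.

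Your single pointwise bound $\max\{4a^2,(a+D)^2\}\le 4a^2+\frac32D^2$ needs neither the relaxation nor the cases. Substituting $D^2=a_i^2+a_{i'}^2-2I_{i,i'}$ gives $\mathrm{diff}(i,i')\le 2a_i^2-2a_{i'}^2-3I_{i,i'}$, and symmetrizing this is exactly the paper's claim. So your argument is a streamlined proof of the same key inequality, followed by the same use of the variance identity.

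Your template is less flexible than the paper's, though. For general $\alpha$, the diagonal terms ($D=0$) force $c_1\ge\alpha$, and within the budget $c_1+2c_2\le 2\alpha-1$ the best choice is $c_1=\alpha$, $c_2=\frac{\alpha-1}{2}$. The discriminant condition then becomes $(\alpha-3)(\alpha-1)\ge 2$, i.e.\ $\alpha\ge 2+\sqrt3\approx 3.73$. That is better than $4$ but short of $\frac{11}{3}$. The paper gets $\frac{11}{3}$ in \cref{clm:linear_bound_euclid_better} by bounding the two maxima of a pair jointly and keeping the square roots exact, rather than bounding each maximum separately by a fixed quadratic form.
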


For every facility $i\in T$, let $\vec{v}_i$ denote the vector corresponding to $i$ in the Euclidean space, and assume that the client $j$ is at $\vec{0}$. For notational simplicity, let $d_i=|\vec{v}_i|$ and let $I_{i,i'}=\langle \vec{v}_i,\vec{v}_{i'}\rangle$.  Then, \eqref{equ:distance_sum_goal} can be written as
\begin{align*}
    \sum_{i\in T,i'\in T}y_iy_{i'}\cdot \max\left\{\alpha d_i^2,\left(d_i+\sqrt{d_i^2+d_{i'}^2-2I_{i,i'}}\right)^2\right\}\le (2\alpha-1)y(T)\sum_{i\in T}y_id_i^2.
\end{align*}
We move the terms in the RHS into the summation of the LHS:
\begin{align}
    \label{equ:distance_sum_goal_euclid_kmeans}
    \sum_{i\in T,i'\in T}y_iy_{i'}\cdot \bk*{\max\left\{\alpha d_i^2,\left(d_i+\sqrt{d_i^2+d_{i'}^2-2I_{i,i'}}\right)^2\right\}-(\alpha-1/2)(d_i^2+d_{i'}^2)}\le 0.
\end{align}
Next, we study the difference
\begin{align*}
    \text{diff}(i,i')=\max\left\{\alpha d_i^2,\left(d_i+\sqrt{d_i^2+d_{i'}^2-2I_{i,i'}}\right)^2\right\}-(\alpha-1/2)(d_i^2+d_{i'}^2).
\end{align*}
More precisely, we study a pair of differences $\text{diff}(i,i')+\text{diff}(i',i)$. We will show that
\begin{claim}
    \label{clm:linear_bound_euclid}
    For any $i,i'\in T$, we have that $\text{diff}(i,i')+\text{diff}(i',i)\le -6\cdot I_{i,i'}=-3\cdot I_{i,i'}-3\cdot I_{i',i}$.
\end{claim}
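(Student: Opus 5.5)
The plan is to turn the claim into an inequality involving only three nonnegative reals, and then bound each of the two maxima separately by one and the same elementary estimate. No case analysis on the ordering of $d_i,d_{i'}$ should be needed.

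\textbf{Reduction.} Write $a=d_i$, $b=d_{i'}$, $I=I_{i,i'}$, and $D=\sqrt{a^2+b^2-2I}=|\vec v_i-\vec v_{i'}|$, which is the Euclidean distance between $i$ and $i'$. Since $I_{i,i'}=I_{i',i}$, both $\text{diff}(i,i')$ and $\text{diff}(i',i)$ involve the same $D$. With $\alpha=4$ we have $\alpha-1/2=7/2$, so the left-hand side of the claim equals
\[
\max\{4a^2,(a+D)^2\}+\max\{4b^2,(b+D)^2\}-7(a^2+b^2).
\]
Using $-6I=3(D^2-a^2-b^2)$, the claim is equivalent to
\[
\max\{4a^2,(a+D)^2\}+\max\{4b^2,(b+D)^2\}\le 4(a^2+b^2)+3D^2 .
\]

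\textbf{Key step.} I will show that for every $x,D\ge 0$,
\[
\max\{4x^2,(x+D)^2\}\le 4x^2+\tfrac32 D^2 .
\]
Applying this once with $x=a$ and once with $x=b$ and adding gives exactly the reduced inequality. The bound by $4x^2$ is trivial, so only the $(x+D)^2$ branch needs checking:
\[
(x+D)^2\le 4x^2+\tfrac32 D^2 \iff 3x^2-2xD+\tfrac12 D^2\ge 0 .
\]
This quadratic form in $(x,D)$ is positive semidefinite, because its leading coefficient is $3>0$ and its discriminant is $4-4\cdot 3\cdot\tfrac12=-2<0$. Equivalently, $3x^2-2xD+\tfrac12D^2=3\bigl(x-\tfrac D3\bigr)^2+\tfrac16D^2\ge 0$.

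\textbf{Where the difficulty lies.} The only real work is the reduction step. One must substitute $D^2=a^2+b^2-2I$ correctly, so that the $-6I$ on the right becomes $3D^2-3a^2-3b^2$. One must also notice that the constant $\alpha-\tfrac12=\tfrac72$ leaves exactly $\tfrac32 D^2$ of slack per term. Once the per-term bound $(x+D)^2\le 4x^2+\tfrac32D^2$ is identified, the rest is a one-line discriminant check.

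If a sharper split were needed, I would fall back on the three cases $D\le\min(a,b)$, $D$ lying between $a$ and $b$, and $D\ge\max(a,b)$. I do not expect this to be necessary. Finally, summing the claim over all ordered pairs with weights $y_iy_{i'}$ yields \eqref{equ:distance_sum_goal_euclid_kmeans}, because $\sum_{i,i'}y_iy_{i'}I_{i,i'}=\bigl|\sum_i y_i\vec v_i\bigr|^2\ge 0$.
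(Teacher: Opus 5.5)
Your proof is correct, and it takes a different route from the paper's. The paper first relaxes $(d_i+D)^2\le 2d_i^2+2D^2$. It then splits into three cases according to which branch attains each maximum, and closes each case with $d_i^2+d_{i'}^2\ge 2d_id_{i'}$ and the Cauchy--Schwarz bounds $-d_id_{i'}\le I_{i,i'}\le d_id_{i'}$. You instead rewrite the target $-6I_{i,i'}$ as $3(D^2-d_i^2-d_{i'}^2)$. This decouples the pair inequality into two copies of the single estimate $\max\{4x^2,(x+D)^2\}\le 4x^2+\tfrac32D^2$, settled by one completed square, with no relaxation, no case analysis and no Cauchy--Schwarz.

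Besides being shorter, your route shows that the per-pair bound $\text{diff}(i,i')+\text{diff}(i',i)\le 3(D^2-d_i^2-d_{i'}^2)$ holds for arbitrary nonnegative reals $d_i,d_{i'},D$. Euclidean structure is therefore used only in the final summation $\sum_{i,i'}y_iy_{i'}I_{i,i'}=\bigl|\sum_i y_i\vec v_i\bigr|^2\ge 0$.

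Running the same decoupling for general $\alpha$ with target $-2(\alpha-1)I_{i,i'}$ requires $(\alpha-1)x^2-2xD+\tfrac{\alpha-3}{2}D^2\ge 0$. This holds exactly when $\alpha\ge 2+\sqrt3\approx 3.73$, and no other choice of target does better within this scheme. So the decoupling improves on $4$ at no extra cost, but it cannot reach the paper's $11/3$.

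The paper's case split is the template for that sharper analysis. When both maxima take the second branch, the two terms must be coupled through $I_{i,i'}\ge -d_id_{i'}$, i.e.\ $D\le d_i+d_{i'}$. The extremal point of your decoupled bound, $d_i=d_{i'}=D/(\alpha-1)$, violates $D\le d_i+d_{i'}$ once $\alpha>3$.
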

Note that this holds even when $i=i'$. This implies \cref{thm:distance_sum_euclid_kmeans}.

\begin{proof}[Proof of \cref{thm:distance_sum_euclid_kmeans} using \cref{clm:linear_bound_euclid}]
    
    Combining \cref{clm:linear_bound_euclid} with the fact that
    \begin{align*}
        \sum_{i\in T,i'\in T}y_iy_{i'}I_{i,i'}=|\sum_{i\in T}y_i\vec{v}_i|^2\ge 0,
    \end{align*}
    we can bound the LHS of \eqref{equ:distance_sum_goal_euclid_kmeans} as
    \begin{align*}
        \text{LHS of }\eqref{equ:distance_sum_goal_euclid_kmeans}&\le \sum_{i\in T,i'\in T}y_iy_{i'}\cdot(-3I_{i,i'})\le 0.\qedhere
    \end{align*}
\end{proof}

It remains to prove \cref{clm:linear_bound_euclid}.
\begin{proof}[Proof of \cref{clm:linear_bound_euclid}]
    We first relax $\text{diff}(i,i')$ as %
    \begin{align*}
        \text{diff}(i,i')\le \max\{\alpha d_i^2,2d^2_i+2\bk*{d_i^2+d_{i'}^2-2I_{i,i'}}\}-(\alpha-1/2)(d_i^2+d_{i'}^2).\tag{$(a+b)^2\le 2a^2+2b^2$}
    \end{align*}
    To upper bound $\text{diff}(i,i')+\text{diff}(i',i)$, we consider the following cases, based on which terms in the maximums are chosen:
    \begin{enumerate}
        \item Both maximums choose the first term: In this case, we need to show that
        \begin{align*}
            \alpha d_i^2+\alpha d_{i'}^2-(2\alpha-1)(d_i^2+d_{i'}^2)\le -6I_{i,i'}.
        \end{align*}
        When $\alpha=4$, the LHS is
        \begin{align*}
            -3(d_i^2+d_{i'}^2)\le -3\cdot (2d_id_{i'})\le -6I_{i,i'}.
        \end{align*}
        \item Exactly one of maximums choose the first term (w.l.o.g. assume that the maximum in $\text{diff}(i,i')$ chooses the first term): In this case, we need to show that
        \begin{align*}
            2d_i^2+2(d_i^2+d_{i'}^2-2I_{i,i'})+\alpha d_{i'}^2-(2\alpha-1)(d_i^2+d_{i'}^2)\le -6I_{i,i'}.
        \end{align*}
        When $\alpha=4$, the LHS is
        \begin{align*}
            &-3d_i^2-d_{i'}^2-4I_{i,i'} \\
            \le{}&-d_i^2-d_{i'}^2-4I_{i,i'} \\
            \le{}&-2d_id_{i'}-4I_{i,i'}\le -6I_{i,i'}.\tag{$I_{i,i'}\le d_id_{i'}$}
        \end{align*}
        \item Both maximums choose the second term: In this case, we need to show that
        \begin{align*}
            2d_i^2+2(d_i^2+d_{i'}^2-2I_{i,i'})+2d_{i'}^2+2(d_i^2+d_{i'}^2-2I_{i,i'})-(2\alpha-1)(d_i^2+d_{i'}^2)\le -6I_{i,i'}.
        \end{align*}
        When $\alpha=4$, the LHS is
        \begin{align*}
            &-8I_{i,i'}-d_i^2-d_{i'}^2 \\
            \le{}& -8I_{i,i'}-2d_id_{i'}\le -6I_{i,i'}.\tag{$I_{i,i'}\ge -d_id_{i'}$}
        \end{align*}
    \end{enumerate}
    This concludes the proof of the claim.
\end{proof}

\subsection{Better analysis for Euclidean $k$-means}

By removing the relaxation in the proof of \cref{clm:linear_bound_euclid}, we can obtain a better bound (with a more complicated proof):

\begin{theorem}
    \label{thm:distance_sum_euclid_kmeans_better}
 Suppose that the metric space is Euclidean, $p = 2$ and $\alpha=11/3$, then \eqref{equ:distance_sum_goal} holds.
\end{theorem}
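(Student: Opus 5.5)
We follow the same template as the proof of \cref{thm:distance_sum_euclid_kmeans}, but without the relaxation $(a+b)^2\le 2a^2+2b^2$. Write $D_{i,i'}=\sqrt{d_i^2+d_{i'}^2-2I_{i,i'}}=|\vec v_i-\vec v_{i'}|$. With $\alpha=11/3$ the target \eqref{equ:distance_sum_goal_euclid_kmeans} reads
\[
\sum_{i,i'\in T}y_iy_{i'}\,\text{diff}(i,i')\le 0,
\qquad
\text{diff}(i,i')=\max\{\alpha d_i^2,(d_i+D_{i,i'})^2\}-(\alpha-\tfrac12)(d_i^2+d_{i'}^2).
\]
Since $\sum_{i,i'}y_iy_{i'}I_{i,i'}=|\sum_i y_i\vec v_i|^2\ge0$, it suffices, as before, to prove a pairwise linear bound
\[
\text{diff}(i,i')+\text{diff}(i',i)\le -c\,(I_{i,i'}+I_{i',i})=-2c\,I_{i,i'}
\]
for some constant $c\ge 0$ and all $i,i'$ (including $i=i'$). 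The constant $c$ is a free parameter, to be tuned at the end.

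The pair inequality is homogeneous of degree $2$ and depends only on $a=d_i$, $b=d_{i'}$ and $I=ab\cos\theta$. So we normalize $a=1$, $b=t\ge 0$, $\cos\theta=s\in[-1,1]$, and reduce to a two- or three-variable inequality. Write $D=\sqrt{1+t^2-2ts}$, and split into cases according to which term attains each maximum, as in \cref{clm:linear_bound_euclid}.

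\emph{Case 1 (both first terms).} We need
\[
-(\alpha-1)(1+t^2)\le -2c\,ts,
\]
which holds whenever $c\le \alpha-1=8/3$, by AM--GM.

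\emph{Case 3 (both second terms).} We need
\[
(1+D)^2+(t+D)^2-(2\alpha-1)(1+t^2)+2cts\le0 .
\]
Substitute $ts=(1+t^2-D^2)/2$, so that the expression is a quadratic in $D$ with $D$ ranging over $[|1-t|,\,1+t]$. Because the coefficient of $D^2$ is $2-c$, the expression is convex in $D$ when $c\le 2$, and then we only need to check the endpoints $s=\pm1$. These are one-variable polynomial inequalities in $t$. Concavity for $c>2$ would instead force an interior critical-point check.

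\emph{Case 2 (mixed).} We need
\[
\alpha t^2+(1+D)^2-(2\alpha-1)(1+t^2)+2cts\le 0 ,
\]
together with the side condition $(1+D)^2\ge \alpha$. We handle it the same way, maximizing over $D$ subject to $D\ge\sqrt\alpha-1$ and $D\in[|1-t|,1+t]$. Since the true maximum is at least either term, we can simply bound each max by whichever term is larger, and it is cleanest to treat the inequality as $\max\{\cdot\}\le$ RHS over the full domain.

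The main obstacle is choosing $c$ so that all cases close simultaneously at $\alpha=11/3$, because this value is presumably tight. I would first locate the tight configurations numerically, most likely collinear ones ($s=\pm1$) with a specific ratio $t$ where $(1+D)^2=\alpha$, and read $c$ off from the tangency condition there. A guess is $c=2$, which makes Case 3 linear in $D^2$ and simplifies it. If the linear bound turns out to be too weak, the fallback is a bound of the form $-c I_{i,i'}+\beta(d_i^2+d_{i'}^2)$, handled again via $\sum y_iy_{i'}$ symmetrization. Alternatively, one can use a positive semidefinite kernel argument: any bound $g(\vec v_i,\vec v_{i'})$ whose double sum against $y\otimes y$ is nonpositive suffices. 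Once $c$ is fixed, each case is a routine polynomial inequality in $(t,D)$ that can be verified by checking boundaries and critical points.
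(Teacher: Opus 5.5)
\textbf{There is a genuine gap.} Your reduction is the same template the paper uses: a pairwise bound $\text{diff}(i,i')+\text{diff}(i',i)\le -2c\,I_{i,i'}$, combined with $\sum y_iy_{i'}I_{i,i'}\ge 0$, and checked in three cases. But the proposal stops where the substance of the theorem begins. You never fix $c$, you verify none of the case inequalities, and the concrete choices you suggest fail.

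In fact $c$ is forced. The diagonal pair $i=i'$ ($t=1$, $s=1$, $D=0$, Case 1) gives $-2(\alpha-1)+2c\le 0$, so $c\le \alpha-1$ is necessary, not merely sufficient. The antipodal equal-norm pair ($t=1$, $s=-1$, $D=2$, Case 3 since $9>\alpha$) makes your Case 3 expression equal to $20-4\alpha-2c$, so $c\ge 10-2\alpha$. At $\alpha=11/3$ both bounds equal $8/3$, so $c=\alpha-1=8/3$ is the only admissible constant; this is the paper's choice.

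This has three consequences for your proposal:
\begin{itemize}
\item Your guess $c=2$ fails: at the antipodal pair the Case 3 expression equals $4/3>0$. With $c=2$ you would need $\alpha\ge 4$, which only reproduces the earlier $\alpha=4$ result.
\item Your convexity shortcut for Case 3 covers only $c\le 2$, which is exactly the range where the pairwise bound is false.
\item The tight configuration is not a collinear point on the boundary $(1+D)^2=\alpha$, as you predicted. The fallback term $\beta(d_i^2+d_{i'}^2)$ with $\beta>0$ would in any case destroy the final sign.
\end{itemize}

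Once $c=\alpha-1$ is fixed, your substitution $2ts=1+t^2-D^2$ closes every case, and no side conditions are needed:
\begin{itemize}
\item Case 1 becomes $-(\alpha-1)D^2\le 0$.
\item Case 2 becomes $-\tfrac53D^2+2D-\tfrac83$, which is negative for all $D$ (negative discriminant). The other mixed case becomes $-\tfrac83t^2+2tD-\tfrac53D^2$, a negative definite form.
\item Case 3 becomes $-\tfrac23D^2+2(1+t)D-\tfrac83(1+t^2)$. This is concave, but its vertex $D=\tfrac32(1+t)$ lies beyond the feasible range $D\le 1+t$. So the maximum is at $D=1+t$, where the value is $-\tfrac43(1-t)^2\le 0$, with equality exactly at the antipodal configuration.
\end{itemize}

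The paper reaches the same inequalities differently. It normalizes $d_id_{i'}=1$ and handles Case 2 via $w=|\vec v_i-\vec v_{i'}|$, reducing it to $5w^2-6w/x+8/x^2\ge 0$. It handles Case 3 via $v=x^2+x^{-2}$: it squares $\sqrt{(v+2)(v-2I)}\le \tfrac53v-\tfrac23I$ and uses monotonicity in $v$ to reduce to $\tfrac49(I+1)(I+7)\ge 0$. Once the right $c$ is known, your parametrization gives a somewhat shorter Case 3. As written, however, your proposal is a plan rather than a proof: the missing steps are identifying $c=\alpha-1$ and carrying out these verifications.
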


We only need to prove a better bound for $\text{diff}(i,i')+\text{diff}(i',i)$:

\begin{claim}
    \label{clm:linear_bound_euclid_better}
    When $\alpha=11/3$, for any $i,i'\in T$, we have that $\text{diff}(i,i')+\text{diff}(i',i)\le -2(\alpha-1)I_{i,i'}$.
\end{claim}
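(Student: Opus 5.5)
The plan is to eliminate the inner product and reduce the claim to a two-variable inequality in three distances. Write $a=d_i$, $b=d_{i'}$ and $c=d(i,i')=\sqrt{a^2+b^2-2I_{i,i'}}$, so that $I_{i,i'}=(a^2+b^2-c^2)/2$. The only constraint on $(a,b,c)$ is the triangle inequality, $|a-b|\le c\le a+b$. By definition,
\begin{align*}
\text{diff}(i,i')+\text{diff}(i',i)=\max\{\alpha a^2,(a+c)^2\}+\max\{\alpha b^2,(b+c)^2\}-(2\alpha-1)(a^2+b^2),
\end{align*}
and the target right-hand side is $-2(\alpha-1)I_{i,i'}=-(\alpha-1)(a^2+b^2)+(\alpha-1)c^2$. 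So the claim is equivalent to
\begin{align*}
\max\{\alpha a^2,(a+c)^2\}+\max\{\alpha b^2,(b+c)^2\}\le \alpha(a^2+b^2)+(\alpha-1)c^2 .
\end{align*}
The case $i=i'$ (so $c=0$, $a=b$) is covered by this inequality as well.

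A sum of two maxima is at most the right-hand side exactly when each of the four choices of terms is. As in the proof of \cref{clm:linear_bound_euclid}, I will check the four combinations separately, but this time without the relaxation $(a+c)^2\le 2a^2+2c^2$.

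\textbf{Both first terms.} The inequality reads $0\le(\alpha-1)c^2$, which is trivial.

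\textbf{One first term, one second term.} By symmetry take $\alpha a^2+(b+c)^2$. It suffices to show $(\alpha-1)b^2-2bc+(\alpha-2)c^2\ge 0$. This quadratic form is positive semidefinite because $1\le(\alpha-1)(\alpha-2)=\frac{8}{3}\cdot\frac{5}{3}$. No triangle inequality is needed here.

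\textbf{Both second terms.} This is the case I expect to be the crux, and the one where $\alpha=11/3$ is tight. After expanding, the condition becomes
\begin{align*}
(\alpha-1)(a^2+b^2)+(\alpha-3)c^2-2c(a+b)\ge 0 .
\end{align*}
Let $s=a+b$ and use $a^2+b^2\ge s^2/2$. With $\alpha=11/3$ it then suffices that $\frac43 s^2-2cs+\frac23 c^2\ge 0$. This quadratic in $s$ factors as $\frac43(s-c)(s-c/2)$, which is not nonnegative everywhere. It is nonnegative for $s\ge c$, and $s\ge c$ is precisely the triangle inequality $d(i,i')\le d_i+d_{i'}$.

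Equality holds at $a=b$, $c=a+b$, i.e.\ when $j$ is the midpoint of $i$ and $i'$. This confirms that $11/3$ is the best constant obtainable along this route.

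Once the claim is established, \cref{thm:distance_sum_euclid_kmeans_better} follows exactly as \cref{thm:distance_sum_euclid_kmeans} did. Summing the claim over all pairs with weights $y_iy_{i'}$ bounds the left-hand side of \eqref{equ:distance_sum_goal_euclid_kmeans} by $-(\alpha-1)\big|\sum_{i\in T}y_i\vec v_i\big|^2\le 0$.
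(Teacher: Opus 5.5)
Your proof is correct. It keeps the paper's skeleton: the same three-way split according to which term of each maximum is active, and in the mixed case the same quadratic form. Your $(\alpha-1)b^2-2bc+(\alpha-2)c^2$ is one third of the paper's $5w^2-6(1/x)w+8(1/x)^2$, with $w=c$ and $1/x=b$.

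Where you differ is in the coordinates and in the crux case. The paper normalizes $d_id_{i'}=1$ and keeps the inner product $I\in[-1,1]$ as the free variable. It writes case~3 as $\sqrt{v+2}\sqrt{v-2I}\le\frac53 v-\frac23 I$ with $v=x^2+1/x^2\ge 2$. It then squares, shows the resulting polynomial is increasing in $v$, and finishes at $v=2$ via $(I+1)(I+7)\ge 0$. You instead eliminate $I$ through $I_{i,i'}=(a^2+b^2-c^2)/2$. After that, case~3 needs only $a^2+b^2\ge (a+b)^2/2$ and the factorization $\frac43(s-c)(s-\frac c2)\ge 0$. These two facts play the roles of the paper's $v\ge 2$ and $I\ge -1$, and the tight configuration ($a=b$, $c=2a$) is the same.

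Your route buys three things:
\begin{itemize}
\item It needs no squaring or monotonicity argument.
\item It covers the degenerate case $d_id_{i'}=0$, which the normalization $d_id_{i'}=1$ does not literally handle.
\item It shows that the pairwise claim uses only the triangle inequality $d(i,i')\le d_i+d_{i'}$. The Euclidean assumption then enters the proof of the theorem only through $\sum_{i,i'}y_iy_{i'}I_{i,i'}=|\sum_i y_i\vec v_i|^2\ge 0$.
\end{itemize}
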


\begin{proof}
    Without loss of generality, we only prove the claim in the case where $d_id_{i'}=1$, in which case $I_{i,i'}\in [-1,1]$. For notational simplicity, we use $x$ to denote $d_i$ (in which case $d_{i'}=1/x$), and use $I$ to denote $I_{i,i'}$.
    
    Similar to \cref{clm:linear_bound_euclid}, we prove the claim by considering three cases, based on which term in the maximum is chosen.

    \begin{enumerate}
        \item Both maximums choose the first term: In this case, we need to show that
        \begin{align*}
            \alpha x^2+\alpha (1/x)^2-(2\alpha-1)(x^2+(1/x)^2)\le -2(\alpha-1)I,
        \end{align*}
        which holds for any $\alpha\ge 1$ since $2I\le (x^2+(1/x)^2)$.
        \item One of the maximums choose the first term: In this case, we need to prove the following claim.

        \begin{claim}
            For any $x>0$ and any $I\in [-1,1]$, we have that
            \begin{align}
                \label{equ:diff_goal_1}
                \alpha x^2+\bk*{(1/x)+\sqrt{x^2+(1/x)^2-2I}}^2-(2\alpha-1)(x^2+(1/x)^2)\le -2(\alpha-1)I.
            \end{align}
        \end{claim}
        \begin{proof}
            Let $w=\sqrt{x^2+(1/x)^2-2I}$. We can rewrite \eqref{equ:diff_goal_1} as
            \begin{align*}
                \alpha x^2+((1/x)+w)^2-(2\alpha-1)(x^2+(1/x)^2)\le (\alpha-1)w^2-(\alpha-1)\cdot (x^2+(1/x)^2).
            \end{align*}
            After rearranging the terms, this is equivalent to showing that
            \begin{align*}
                (\alpha-2)w^2-2(1/x)\cdot w+(\alpha-1)(1/x)^2\ge 0.
            \end{align*}
            Plugging in $\alpha=11/3$, we need to show that
            \begin{align*}
                5w^2-6(1/x)\cdot w+8(1/x)^2\ge 0.
            \end{align*}
            This holds because
            \begin{align*}
                &5w^2-6(1/x)\cdot w+8(1/x)^2 \\
                ={}&5(w-(3/5)(1/x))^2+(31/5)(1/x)^2\ge 0.\qedhere
            \end{align*}
        \end{proof}
        \item Both maximums choose the second term: In this case, we need to prove the following claim.
        \begin{claim}
            For any $x>0$ and any $I\in [-1,1]$, we have that
            \begin{align}
                \label{equ:diff_goal_2}
                &\bk*{x+\sqrt{x^2+(1/x)^2-2I}}^2& \nonumber\\
                +&\bk*{(1/x)+\sqrt{x^2+(1/x)^2-2I}}^2& \nonumber\\
                -&(2\alpha-1)(x^2+(1/x)^2)&\le -2(\alpha-1)I.
            \end{align}
        \end{claim}
        \begin{proof}
            After expanding the terms, the LHS is equal to
            \begin{align*}
                x^2+(1/x)^2+2(x+(1/x))\sqrt{x^2+(1/x)^2-2I}+2(x^2+(1/x)^2-2I)-(2\alpha-1)(x^2+(1/x)^2).
            \end{align*}
            Letting $v=x^2+1/x^2$ (note that when $x>0$, we have $v\ge 2$), this can be simplified to
            \begin{align*}
                2\sqrt{v+2}\cdot\sqrt{v-2I}+(4-2\alpha)v-4I.
            \end{align*}
            It remains to show that
            \begin{align*}
                2\sqrt{v+2}\cdot\sqrt{v-2I}+(4-2\alpha)v-4I\le -2(\alpha-1)I.
            \end{align*}
            Plugging in $\alpha=11/3$, we need to show that
            \begin{align*}
                \sqrt{v+2}\cdot \sqrt{v-2I}\le -(2/3)I+(5/3)v.
            \end{align*}
            Note that since $v\ge 2$ and $I\in [-1,1]$, both sides are positive. Therefore, we can take squares of both sides:
            \begin{align*}
                v^2+(2-2I)v-4I\le (25/9)v^2-(20/9)vI+(4/9)I^2.
            \end{align*}
            Rearranging the terms gives
            \begin{align*}
                (16/9)v^2-(2/9)vI-2v+(4/9)I^2+4I\ge 0.
            \end{align*}
            When $I$ is fixed, the derivative of the LHS w.r.t. $v$ is
            \begin{align*}
                (32/9)v-(2/9)I-2,
            \end{align*}
            which is always positive when $v\ge 2$ and $I\in [-1,1]$. Thus, we only have to consider the case where $v=2$. In this case, we need to show that
            \begin{align*}
                &(64/9)-(4/9)I-4+(4/9)I^2+4I \\
                ={}&(4/9)\cdot (I^2+8I+7)\ge 0,
            \end{align*}
            which holds when $I\in [-1,1]$.
        \end{proof}
    \end{enumerate}
    This concludes the proof of \cref{clm:linear_bound_euclid_better}.
\end{proof}

%% file: ProcessingLP.tex
\section{Making the $y$ values in the LP solution integer multiples of $\varepsilon^{12p^2}$}
\label{sec:preprocessing}

From this section to Section~\ref{sec:cost-new}, we shall describe and analyze our iterative rounding algorithm for the $k$-clustering problem, that opens $k + O_{\varepsilon, p}(1)$ facilities with large probability.  We assume $\varepsilon<\frac{1}{3p^4}$ and $\frac1\varepsilon$ is an integer.  For the sake of notational convenience, we allow us to lose an additive factor of $2^{O(p)} \varepsilon$ in the approximation ratio. To convert this loss to $\varepsilon$, we can scale $\varepsilon$ down by $2^{O(p)}$ at the beginning.

In this section, we show how to preprocess the solution $(x, y)$ obtained from solving the LP to another LP solution $(x'', y'')$ whose coordinates are integer multiples of $\varepsilon^{12p^2}$, with a small loss in the cost of the solution.\footnote{$\varepsilon^{12p^2}$ should be treated as $\varepsilon^{\lceil12p^2\rceil}$ in case $p$ is not an integer.}

For every point $j \in F\cup C$, we define $F_j$ to be the set of facilities closest to $j$ with total fractional value equal 1.
By splitting facilities, we assume for every $i \in F$, the whole $y_i$ fraction of $i$ is either completely inside $F_j$, or completely outside. That is, we have $y(F_j) = 1$ and $\max_{i \in F_j}d(j, i) \leq \min_{i \in F \setminus F_j} d(j, i)$ for every $j \in C$. %

Overall, our algorithm contains the following steps:
\begin{enumerate}[topsep=3pt, itemsep=0pt]
    \item We first use a filtering step to create a set $C^*$ of representatives, each $j \in C^*$ with a ball $B_j \subseteq F_j$ of facilities centered at $j$, so that $B_j$'s for $j \in C^*$ are disjoint. 
    \item For every $j \in C^*$, we define a small ``core'' $B'_j \subseteq B_j$ around the center $j$. We only keep one fractional facility in $B'_j$, chosen with probabilities proportional to $y_i$ values. Let $y'$ be the new fractional opening vector. 
    \item Finally, we randomly round each $y'_i$ value up or down to the nearest integer multiple of $\varepsilon^{12}$, while maintaining the sum of $y'_i$ values for each ball $B_j$, and the whole set $F$.  This gives our final fractional opening vector $y''$.
\end{enumerate}

Till the end of the paper, for a given fractional opening vector $\bar y \in [0, 1]^F$ and a client $j \in C$, we shall use $\cost_{\bar y}(j)$ to denote the cost of $j$ in the solution $\bar y$, obtained by connecting $j$ to the nearest 1 fractional facility. Formally, it is the minimum of $\sum_{i \in F} \bar x_{ij} d^p(i, j)$ subject to $\bar x_{ij} \in [0, \bar y_i]$ for every $i \in F$, and $\sum_{i \in F} \bar x_{ij} = 1$. Accordingly, for $y'$ and $y''$, let $x'_{ij}$ and $x''_{ij}$ respectively denote the optimal fractional connection variables that achieve $\cost_{y'}(j)$ and $\cost_{y''}(j)$ for each client $j \in C$.

\subsection{Filtering to choose a set $C^*$ of representatives}

We define $d_\av(j)$ and $d_{\max}(j)$ for every point $j \in C$ as follows:
\begin{align*}
	d_\av(j) := \sum_{i \in F_j} y_i \cdot d(i, j) \qquad \text{and}\qquad d_{\max}(j):=\max_{i \in F_j} d(i, j).
\end{align*}

\begin{algorithm}[h]
    \caption{Filtering}
    \label{alg:filtering}
    $C' \gets C, C^* \gets \emptyset$\\
    \While{$C' \neq \emptyset$}{ \label{step:filtering-loop}
        $j^* \gets$ client in $C'$ with the smallest $d_\av(j^*)$\\
        $C^* \gets C^* \cup \{j^*\}$\\
        remove from $C'$ all clients $j$ with $d(j, j^*) \leq \left(\frac{1}{(\varepsilon/p)^{4p}}-2\right)d_\av(j)$ (including $j^*$ itself)
    }
    \Return $C^*$\;
\end{algorithm}

We use Algorithm~\ref{alg:filtering} to obtain a set $C^*$ of representatives. For every $j \in C^*$, if $d(j, C^* \setminus \{j\}) / 2 < d_{\max}(j)$, then we define $B_j = \ball^\circ_F(j, d(j, C^* \setminus \{j\}) / 2 ) \subsetneq F_j$; otherwise we define $B_j = F_j$.  Clearly, the balls $B_j$ for $j \in C^*$ are disjoint.

\begin{claim}
    \label{claim:Bj-volume}
	For $\varepsilon < \frac{1}{2}$ we have 
    $ \frac{1}{2-\varepsilon}  < y(B_j) \leq 1$ for every $j \in C^*$. 
\end{claim}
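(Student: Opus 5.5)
The plan is to handle the two cases in the definition of $B_j$ separately. The upper bound is immediate in both: $B_j \subseteq F_j$ and $y(F_j)=1$. If $B_j = F_j$, then $y(B_j)=1 > \frac{1}{2-\varepsilon}$ and we are done. So assume $R := d(j, C^*\setminus\{j\})/2 < d_{\max}(j)$, and write $B_j = \ball^\circ_F(j, R)$. Then every facility of $F_j$ outside $B_j$ lies at distance at least $R$ from $j$. Markov's inequality gives
\begin{align*}
  R\cdot\bigl(1 - y(B_j)\bigr) \le \sum_{i \in F_j\setminus B_j} y_i\, d(i,j) \le d_\av(j),
\end{align*}
so $y(B_j) \ge 1 - d_\av(j)/R$. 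It therefore suffices to show that $R$ is large compared with $d_\av(j)$, specifically $R > \frac{2-\varepsilon}{1-\varepsilon}\, d_\av(j)$. This yields $1 - d_\av(j)/R > 1 - \frac{1-\varepsilon}{2-\varepsilon} = \frac{1}{2-\varepsilon}$.

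The key step is to lower bound $d(j,j')$ for any other representative $j' \in C^*$ using the filtering rule of Algorithm~\ref{alg:filtering}. Let $M := \frac{1}{(\varepsilon/p)^{4p}} - 2$. Take $j \ne j'$ in $C^*$, and suppose without loss of generality that $j$ was selected first. Then $j'$ was not removed when $j$ was picked, so $d(j, j') > M\, d_\av(j')$. Since $j$ was chosen first while $j'$ was still in $C'$, it had the minimum $d_\av$ at that time, so $d_\av(j) \le d_\av(j')$. Hence $d(j,j') > M\, d_\av(j)$.

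In the other order ($j'$ selected first), $j$ survived the removal, so $d(j,j') > M\, d_\av(j)$ directly. In both cases, $d(j,j') > M\, d_\av(j)$ for every $j' \in C^*\setminus\{j\}$. Therefore $R > \frac{M}{2} d_\av(j)$.

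It remains to check that $\frac{M}{2} \ge \frac{2-\varepsilon}{1-\varepsilon}$. This holds easily, since $\varepsilon < 1/2$ and $p \ge 1$ give $(\varepsilon/p)^{-4p} \ge 2^{4} = 16$, so $M \ge 14$ and $M/2 \ge 7 > 3 \ge \frac{2-\varepsilon}{1-\varepsilon}$.

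One degenerate case must be handled. If $d_\av(j) = 0$, the Markov bound gives $y(B_j) = 1$ directly, provided $R > 0$. And $R > 0$ holds because distinct representatives are at distance $> M d_\av \ge 0$; strictness comes from the removal rule using $\le$. The main obstacle, mild as it is, is this ordering argument, which shows the separation bound is in terms of $d_\av(j)$ for both endpoints rather than only for the later-chosen one. The rest is the Markov estimate and arithmetic.
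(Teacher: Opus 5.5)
Your proof is correct and follows the paper's argument: the filtering rule gives $d(j, C^*\setminus\{j\}) > \bigl(\frac{1}{(\varepsilon/p)^{4p}}-2\bigr)d_\av(j)$, and Markov's inequality on $F_j$ then places more than a $\frac{1}{2-\varepsilon}$ fraction of the opening inside the open ball of half that radius. The paper only recalls the separation bound, whereas your ordering argument proves it, and you also handle the strict inequality and the case $d_\av(j)=0$ correctly.
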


To see the above claim, recall that $d(j, C^* \setminus \{j\}) > \left(\frac{1}{(\varepsilon/p)^{4p}}-2\right)d_\av(j)$. Then, by Markov inequality, there is at least $\frac{1}{2- \varepsilon}$ of fractional facility opening within distance $\frac{1}{1-\frac{1}{2-\varepsilon}}\cdot d_\av(j) < \frac{\frac{2-\varepsilon}{1-\varepsilon}\cdot d(j, C^* \setminus \{j\})}{\frac{1}{(\varepsilon/p)^{4p}}-2} < \frac{ d(j, C^* \setminus \{j\})}{2}$.

\begin{definition}
	For $j \in C$, we define its representative as the $j^*$ we chose in the iteration of the loop \ref{step:filtering-loop} of \cref{alg:filtering} where $j$ is removed from $C'$. 
\end{definition}
Clearly, for the representative $j'$ of $j$, we have $d_\av(j') \leq d_\av(j)$ and $d(j, j') \leq \left(\frac{1}{(\varepsilon/p)^{4p}}-2\right)d_\av(j)$. 
Notice that it is possible the representative of $j$ is itself, in case $j \in C^*$.  We say a client $j$ is of
\begin{itemize}
	\item type-1  if $d_{\max}(j) \leq \frac{1}{(\varepsilon/p)^{4p}} \cdot d_\av(j)$,
	\item type-2 if it is not of type-1, and its representative $j' \in C^*$ has $B_{j'} = F_{j'}$, and
	\item type-3 otherwise.
\end{itemize}

\subsection{Rounding $y$ to $y'$}

For every $j \in C^*$, we define $B'_{j}:=\ball_{F}(j, \varepsilon d_{\max}(j))$. We prove in Lemma~\ref{lem:disjoint_balls} that $B'_{j} \subseteq B_{j}$; so the balls $B'_{j}$ for $j \in C^*$ are also disjoint. 

We round $y$ to $y'$ as follows. For every $j' \in C^*$, we randomly choose a facility $i^* \in B'_{j'}$ with probabilities proportional to $y_{i^*}$'s. We set $y'_{i^*} = y(B'_{j'})$; for all facilities $i \in B'_{j'} \setminus \{i^*\}$, we set $y'_{i} = 0$. For all the facilities $i$ not in any ball $B'_{j'}$ for any $j'\in C^*$, we set $y'_i = y_i$. We remark that this step is needed when we analyze the cost of type-3 clients later.

\begin{claim}
    \label{claim:y-to-y-expectation}
    For every facility $i \in F$, we have $\E[y'_i] = y_i$.
\end{claim}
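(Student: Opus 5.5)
The claim follows from a case split on whether $i$ lies in one of the cores $B'_{j'}$. The one structural fact needed is that the cores $B'_{j'}$ for $j' \in C^*$ are pairwise disjoint. This holds because $B'_{j'} \subseteq B_{j'}$ (Lemma~\ref{lem:disjoint_balls}) and the balls $B_{j'}$ are disjoint. Hence every facility $i$ lies in at most one core, and its new value $y'_i$ is determined by the random choice made for that core alone.

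\emph{Case 1: $i$ lies in no core $B'_{j'}$.} The rounding sets $y'_i = y_i$ deterministically, so $\E[y'_i] = y_i$ trivially.

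\emph{Case 2: $i \in B'_{j'}$ for a (unique) $j' \in C^*$.*} The procedure picks $i^* \in B'_{j'}$ with $\Pr[i^* = i] = y_i / y(B'_{j'})$. Note that $y(B'_{j'}) > 0$ whenever $i$ carries positive mass; if $y_i = 0$ then $y'_i = 0$ in every outcome and the claim is immediate. Since $y'_i = y(B'_{j'})$ if $i^* = i$ and $y'_i = 0$ otherwise,
\[
\E[y'_i] = \frac{y_i}{y(B'_{j'})}\cdot y(B'_{j'}) + \Bigl(1 - \frac{y_i}{y(B'_{j'})}\Bigr)\cdot 0 = y_i .
\]

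There is no real obstacle in this argument. The only points needing care are (i) invoking disjointness of the cores, so that $y'_i$ is well defined by a single choice, and (ii) the degenerate case $y(B'_{j'}) = 0$, which is handled by the remark above. One subtlety concerns facility splitting: if splitting produces copies of a facility lying both inside and outside a core, the claim still applies copy by copy, since each copy is treated as a separate facility.
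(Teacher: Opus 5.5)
Your proof is correct and matches the paper, which states this claim without proof because it is immediate from the construction: facilities outside every core keep $y'_i = y_i$, and for $i \in B'_{j'}$ one gets $\E[y'_i] = \frac{y_i}{y(B'_{j'})}\cdot y(B'_{j'}) = y_i$, with disjointness of the cores (via Lemma~\ref{lem:disjoint_balls}) making $y'_i$ well defined. Your closing remark about splitting is harmless but unnecessary: copies of a facility are co-located, so a distance ball $B'_{j'}$ contains either all of them or none.
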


\begin{lemma}
\label{lem:disjoint_balls}
 For every $j' \in C^*$, we have $B'_{j'} \subseteq B_{j'}$.
\end{lemma}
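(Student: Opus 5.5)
Fix $j' \in C^*$. The plan is to compare the two definitions of $B_{j'}$ directly.

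\emph{Case $B_{j'} = F_{j'}$.} Every $i \in B'_{j'}$ has $d(j',i) \le \varepsilon d_{\max}(j') \le d_{\max}(j')$. The splitting convention gives $\max_{i\in F_{j'}} d(j',i) \le \min_{i \notin F_{j'}} d(j',i)$, so such an $i$ belongs to $F_{j'}$. There is one subtlety: a facility outside $F_{j'}$ can sit at distance exactly $d_{\max}(j')$. Because $\varepsilon<1$, this is only a concern when $d_{\max}(j')=0$. In that degenerate case all of $F_{j'}$ is co-located with $j'$, and we may assume ties are broken so that the zero-distance facilities form $F_{j'}$. Alternatively, we can treat the degenerate case as trivial, since $j'$ then has cost $0$.

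\emph{Case $B_{j'} = \ball^\circ_F(j', D/2)$ with $D := d(j', C^*\setminus\{j'\}) < 2d_{\max}(j')$.} It suffices to show $\varepsilon d_{\max}(j') < D/2$; then every $i$ with $d(j',i)\le \varepsilon d_{\max}(j')$ lies in the open ball.

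\emph{Step 1: lower-bound $D$.} Let $j''\in C^*$ be the nearest other representative. Whichever of $j',j''$ was selected first did not remove the other. Say $j'$ was selected first and $j''$ survived; then $D > \left(\frac{1}{(\varepsilon/p)^{4p}}-2\right) d_\av(j'')$. Since $j''$ was chosen later, $d_\av(j'') \ge d_\av(j')$. If instead $j''$ was chosen first, the removal rule applied to $j'$ directly gives $D > \left(\frac{1}{(\varepsilon/p)^{4p}}-2\right) d_\av(j')$. In either case
\[
D > \left(\tfrac{1}{(\varepsilon/p)^{4p}}-2\right) d_\av(j').
\]

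\emph{Step 2: upper-bound $d_{\max}(j')$ by $d_\av(j')$.} This is the main obstacle, because in general $d_{\max}$ can be much larger than $d_\av$ (type-2 and type-3 clients). We have no such upper bound, so Step 1 alone cannot close the argument, and I would not try to derive one.

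Instead I would use a different route. Since $D<2d_{\max}(j')$ in this case, we want to show that the facilities within distance $\varepsilon d_{\max}(j')$ lie strictly inside $D/2$. If $\varepsilon d_{\max}(j') < D/2$, we are done. Otherwise $D \le 2\varepsilon d_{\max}(j')$.

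So the remaining task is to rule out $D \le 2\varepsilon d_{\max}(j')$. I would attempt this using Claim~\ref{claim:Bj-volume} and the structure of $F_{j'}$, but I do not currently see how to make it work. As a fallback, I would check whether the definition of $B'_{j'}$ is meant with $d_\av$ rather than $d_{\max}$ in the non-$F_{j'}$ case. With $\varepsilon d_\av(j')$ in place of $\varepsilon d_{\max}(j')$, Step 1 gives the inequality immediately, since $\left(\frac{1}{(\varepsilon/p)^{4p}}-2\right)/2 > \varepsilon$. I expect reconciling the $d_{\max}$ in the definition with the $d_\av$-based separation of Step 1 to be the crux.
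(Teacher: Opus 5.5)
Your Case~1 and Step~1 are fine, including the care about the degenerate case $d_{\max}(j')=0$. However, the proposal does not prove Case~2. You correctly reduce it to showing $\varepsilon d_{\max}(j') < D/2$, but then you stop. The fallback of replacing $d_{\max}$ by $d_\av$ in the definition of $B'_{j'}$ changes the statement rather than proving it, and the radius $\varepsilon d_{\max}$ is what later arguments such as Lemma~\ref{lemma:max-j-2-big} rely on. The misstep is in Step~2. You try to bound $d_{\max}(j')$ by $d_\av(j')$, which is indeed impossible. What is actually needed is a bound on $d_{\max}(j')$ in terms of $D$, and a very weak one: $d_{\max}(j') < D/(2\varepsilon)$.

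The missing idea comes from the splitting convention: $F_{j'}$ is the set of nearest facilities of total volume exactly $1$. So to conclude $d_{\max}(j')\le r$, it suffices to exhibit $y$-volume at least $1$ within distance $r$ of $j'$, and the neighbor $j''$ supplies half of that volume.

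The paper does this with Markov's inequality. The balls $\ball_F(j', 2d_\av(j'))$ and $\ball_F(j'', 2d_\av(j''))$ each carry volume at least $1/2$. They are disjoint because your own Step~1, read carefully, gives $D > \bigl(\frac{1}{(\varepsilon/p)^{4p}}-2\bigr)\max\{d_\av(j'), d_\av(j'')\} > 2\bigl(d_\av(j')+d_\av(j'')\bigr)$. Hence $d_{\max}(j') \le D + 2d_\av(j'') \le \bigl(1+O((\varepsilon/p)^{4p})\bigr)D$, and $\varepsilon d_{\max}(j') < D/2$ follows.

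You could equally have closed the gap with the tool you named. By Claim~\ref{claim:Bj-volume}, the disjoint sets $B_{j'}$ and $B_{j''}$ carry total volume more than $1$. Since $B_{j'}\subseteq \ball^\circ_F(j',D/2)$ and $B_{j''}\subseteq \ball_F(j'',D/2)$, both lie within distance $3D/2$ of $j'$. This gives $d_{\max}(j')\le 3D/2$, and therefore $\varepsilon d_{\max}(j') \le \frac{3\varepsilon}{2}D < D/2$, using $\varepsilon<1/3$.
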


\begin{proof}
    The lemma holds trivially if $B_{j'} = F_{j'} \supseteq \ball^\circ_F(j', d_{\max}(j'))$. So we assume $B_{j'} \neq F_{j'}$. Let $j''$ be the nearest neighbor of $j'$ in $C^*$. So we have $B_{j'} = \ball^\circ_F(j, d(j', j'')/2)$, and it remains to prove $\varepsilon d_{\max}(j') \leq d(j', j'')/2$. 

Define $S_{j'} := \ball_F(j', 2d_{\av}(j'))$ and $S_{j''} = \ball_F(j'', 2d_{\av}(j''))$. By Markov's inequality, balls $S_{j'}$ and $S_{j''}$ each contain at least $1/2$ fractional value. By the filtering condition, we have $d(j', j'') > (\frac{1}{(\varepsilon/p)^{4p}} - 2) \max(d_{\av}(j'), d_{\av}(j'')) > 2(d_{\av}(j') + d_{\av}(j''))$, which ensures $S_{j'}$ and $S_{j''}$ are disjoint and implies $y(S_{j'}\cup S_{j''}) \ge 1$. Therefore, we have $d_{\max}(j') \le \max\{2d_\av(j'), d(j', j'') + 2d_{\av}(j'')\}$. In case $d_{\max}(j') \leq 2d_\av(j')$, we would have $B_{j'} = F_{j'}$. So, $d_{\max}(j') \leq d(j', j'') + 2d_{\av}(j'')$. Using the filtering condition, we have
\begin{align*}
d_{\max}(j') \le d(j', j'') + 2d_{\av}(j'') < d(j', j'') + \frac{2(\varepsilon/p)^{4p}}{1-2(\varepsilon/p)^{4p}} d(j', j'') = \left(1 + \frac{2(\varepsilon/p)^{4p}}{1-2(\varepsilon/p)^{4p}}\right) d(j', j''),
\end{align*}
which implies $\varepsilon d_{\max}(j') < \varepsilon\left(1 + \frac{2(\varepsilon/p)^{4p}}{1-2(\varepsilon/p)^{4p}}\right) d(j',j'') < d(j',j'')/2$.
\end{proof}

\begin{lemma}
    \label{lemma:y-to-y'}
    For every $j \in C$, we have 
    $\E[\cost_{y'}(j)] \leq (1 + O(p\varepsilon)) \cost_{y}(j)$.
\end{lemma}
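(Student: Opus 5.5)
Fix a client $j \in C$. The plan is to exhibit, for every outcome of the rounding, a feasible fractional connection $\bar x$ for $y'$, and to bound its expected cost by $(1+O(p\varepsilon))\cost_y(j)$. Let $x$ be the optimal connection for $y$, so $x_{ij}\le y_i$ and $\sum_i x_{ij}=1$. For facilities not in any core ball $B'_{j'}$ we keep $\bar x_{ij}=x_{ij}$, since $y'_i=y_i$ there. For a core ball $B'_{j'}$ with selected facility $i^*$, the connection is rerouted as
\[
\bar x_{i^*j}=\sum_{i\in B'_{j'}} x_{ij}\le y(B'_{j'})=y'_{i^*}.
\]
This is feasible and uses no other facility of $B'_{j'}$. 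The resulting cost is
\[
\sum_{i\notin\cup B'}x_{ij}d^p(i,j)+\sum_{j'}x(B'_{j'})\,d^p(i^*_{j'},j),
\]
and in expectation each ball contributes $x(B'_{j'})\sum_{i\in B'_{j'}}\frac{y_i}{y(B'_{j'})}d^p(i,j)$.

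The key step is to compare, for each core ball, the average $\frac{1}{y(B')}\sum_{i\in B'} y_i d^p(i,j)$ with $\frac{1}{x(B')}\sum_{i\in B'}x_{ij}d^p(i,j)$. All facilities in $B'_{j'}$ lie within distance $\varepsilon d_{\max}(j')$ of $j'$, so any two are within $2\varepsilon d_{\max}(j')$ of each other. Let $D:=\min_{i\in B'_{j'}} d(i,j)$. Then every $i\in B'_{j'}$ satisfies $D\le d(i,j)\le D+2\varepsilon d_{\max}(j')$. It therefore suffices to show $\varepsilon d_{\max}(j')=O(\varepsilon/p)\cdot\max\{D,\ \cdot\}$ in a suitable sense. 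By convexity, $(D+\delta)^p\le(1+O(p\varepsilon))D^p$ whenever $\delta\le (\varepsilon/p)\cdot O(D)$ \textemdash{} roughly $(1+\delta/D)^p\le e^{p\delta/D}$.

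There are two cases, according to where $j$ is.

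\textbf{Case 1: $j$ far from the ball}, i.e.\ $D\ge d_{\max}(j')/p$ or similar. Then the multiplicative distortion is $(1+2p\varepsilon)^p$-ish. To get $1+O(p\varepsilon)$ I would instead use the threshold $D\ge d_{\max}(j')$ for a loss of $(1+2\varepsilon)^p=1+O(p\varepsilon)$, valid for $\varepsilon<1/(3p^4)$.

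\textbf{Case 2: $j$ close to $j'$}, i.e.\ $D<d_{\max}(j')$. This is the main obstacle, because here all of $B'_{j'}$ is within distance $O(d_{\max}(j'))$ of $j$, and the ratio bound fails. The idea is to charge the additive error $x(B'_{j'})\cdot O(p\varepsilon)\, d^p_{\max}(j')$ (more precisely the difference of $p$-th powers, bounded by $p(3d_{\max})^{p-1}\cdot 2\varepsilon d_{\max}$) against the cost of $j$ itself. If $j$ is near $j'$, then $F_j$ and $F_{j'}$ nearly coincide in scale. Since at least a constant fraction (by Markov, as in Claim~\ref{claim:Bj-volume}) of $F_{j'}$ lies beyond distance comparable to $d_{\max}(j')$ from $j'$\ldots{} this is not automatic; $d_{\max}$ is a maximum, not an average.

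So the refined plan for Case 2 is as follows. Use $d(i,j)\le d(j,j')+\varepsilon d_{\max}(j')$, and note that the facility at distance $d_{\max}(j')$ from $j'$ carries positive $y$-mass. The danger is that $j$'s connection might be entirely inside $B'_{j'}$ with tiny distances. In that situation I would exploit that every $i\in B'_{j'}$ is within $2\varepsilon d_{\max}(j')$ of the others. Connecting instead to the expected position only moves the cost by mixing distances within the ball, and the key inequality
\[
\E_{i^*\sim y}[d^p(i^*,j)]\le\frac{1}{x(B')}\sum x_{ij}d^p(i,j)+\text{err}
\]
needs $x_{ij}=y_i$ on most of $B'$. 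That holds when $B'_{j'}\subseteq F_j$, in which case $x$ restricted to $B'$ is proportional to $y$ and there is \emph{no loss at all}.

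Hence the final case split is: either $B'_{j'}\subseteq F_j$, giving an exact equality in expectation; or $B'_{j'}$ is not fully in $F_j$. In the latter case $d_{\max}(j)\ge D$, and the relevant part $x(B')$ is the boundary portion of $F_j$ at distance about $d_{\max}(j)\ge D$. I would then argue that $D\ge$ const$\cdot d_{\max}(j')$: otherwise $F_j$ would contain the ball of radius $\sim d_{\max}(j')$ around $j$, which contains $F_{j'}$, giving mass at least $1$ strictly inside. This places us back in Case 1, with loss $(1+O(\varepsilon))^p=1+O(p\varepsilon)$.
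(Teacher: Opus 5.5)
Your final structure is the paper's argument, reorganized. The paper splits on whether $d(j,j')\le d_{\max}(j')/3$. In that case it shows $B'_{j'}\subseteq F_j$, so rerouting to $i^*$ is exact in expectation, as you observe. Otherwise it uses that all of $B'_{j'}$ is within a $1+O(\varepsilon)$ factor in distance from $j$. You split directly on whether $B'_{j'}\subseteq F_j$, so you need the contrapositive: $B'_{j'}\not\subseteq F_j$ forces $D=\Omega(d_{\max}(j'))$.

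That step is the only nontrivial one, and your write-up does not justify it correctly. In your final paragraph the containments run the wrong way: a ball of radius $\sim d_{\max}(j')$ around $j$ need not contain $F_{j'}$. More importantly, your contradiction never uses the case hypothesis $B'_{j'}\not\subseteq F_j$. The only consequence of it you record is $d_{\max}(j)\ge D$, which is the direction that does not help. Since $D$ can be arbitrarily small compared to $d_{\max}(j')$ (e.g.\ $j=j'$), no argument that ignores this hypothesis can work.

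The missing ingredient is the comparison you set aside as ``not automatic'': $d_{\max}(j')\le d_{\max}(j)+d(j,j')$. It holds because $F_j\subseteq\ball_F(j',d_{\max}(j)+d(j,j'))$ and $F_j$ has $y$-mass $1$. With it the step goes through:
\begin{itemize}
\item If some $i\in B'_{j'}$ lies outside $F_j$, the splitting convention gives $d_{\max}(j)\le d(i,j)\le D+2\varepsilon d_{\max}(j')$.
\item Also $d(j,j')\le D+\varepsilon d_{\max}(j')$.
\item Hence $d_{\max}(j')\le 2D+3\varepsilon d_{\max}(j')$, i.e.\ $D\ge\frac{1-3\varepsilon}{2}\,d_{\max}(j')$.
\item Then every $i\in B'_{j'}$ has $d(i,j)\le\bigl(1+\frac{4\varepsilon}{1-3\varepsilon}\bigr)D$.
\item So rerouting the mass $x(B'_{j'})$ to $i^*$ loses at most a factor $\bigl(1+\frac{4\varepsilon}{1-3\varepsilon}\bigr)^p=1+O(p\varepsilon)$, deterministically.
\end{itemize}
With this repair your proof is complete and matches the paper's.
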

\begin{proof}
    Consider a representative $j'$. We consider how the rounding for $B'_{j'}$ affects the cost of $j$. If $d(j, j') \leq d_{\max}(j')/3$, then $d_{\max}(j) \geq d_{\max}(j') - d(j, j') \geq 2d_{\max}(j')/3$. The distance between $j$ and any facility in $B'_{j'}$ is at most $d(j, j') + \varepsilon d_{\max}(j') \leq (\frac{1}{3} + \varepsilon) d_{\max}(j')$. Therefore, the ball $B'_{j'}$ is completely inside $F_j$. Since $i^*$ is sampled exactly with probability $\frac{y_{i^*}}{y(B'_{j'})}$, its expected distance is:
    \begin{align*}
        \mathbb{E}[y'(B'_{j'})d^p(j,i^*)] = \sum_{i\in B'_{j'}} y(B'_{j'}) \cdot \frac{y_i}{y(B'_{j'})} d^p(j, i) = \sum_{i \in B'_{j'}} y_i d^p(j, i).
    \end{align*}
    The expected connection cost is exactly equal to the original cost of $B'_{j'}$, without any loss.

    Consider the other case $d(j, j') > d_{\max}(j')/3$. The ratio between the distances of any two facilities in $B'_{j'}$ to $j$ is at most $\frac{1/3 + \varepsilon}{1/3 - \varepsilon} = 1 + O(\varepsilon)$. 
    Therefore, the rounding for $B'_{j'}$ only incur an $(1 + O(p\varepsilon))$ factor in the cost associated with $B_{j'}$.
    
    Combining the two cases proves the lemma.
\end{proof}

\subsection{Rounding $y'$ to $y''$}

We define a laminar family $\mathcal{S} := \{\{i\}: i \in F\} \cup  \{B_j: j \in C^*\} \cup \{\{F\}\}$.  Then, we perform randomized pipage rounding (also known as dependent rounding) from~ \cite{srinivasan2001distributions} guided by family $\mathcal{S}$, on $y'/\varepsilon^{12p^2}$ and return the scaled back vector $y''$. The algorithm can be implemented to approximately preserve sums of entries within subsets from $\mathcal{S}$, see e.g.~\cite{byrka2010fault}. 

\begin{lemma} \label{lem:dep_rounding}
The randomized fractional solution $y'' \in [0, 1]^F$ obtained from $y' \in [0, 1]^F$ satisfies:
\begin{enumerate}
\item
\label{y-double-prime-expectation}
$\E[y''_i] = y'_i,\quad \forall i \in F$;
\item \label{approx_sum_preservation}	$
    y''(S)/\varepsilon^{12p^2} \in \{ \lfloor y'(S)/\varepsilon^{12p^2} \rfloor, \lceil y'(S)/\varepsilon^{12p^2} \rceil \}, \quad \forall S \in \mathcal{S};
    $
\item \label{total_variance} $\Var[y''(F_j)]\leq \sum\limits_{i\in F_j}\Var[y''_i], \quad \forall j \in C$.
\end{enumerate}    
\end{lemma}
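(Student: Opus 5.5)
We prove Lemma~\ref{lem:dep_rounding} by running the standard randomized pipage rounding of \cite{srinivasan2001distributions} on the scaled vector $z := y'/\varepsilon^{12p^2}$, guided bottom-up by the laminar family $\mathcal{S}$. We then verify the three properties from the structure of a single pipage step.

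\textbf{The procedure.} First process each ball $B_j$, $j \in C^*$, in turn. While $B_j$ contains two coordinates $a,b$ with fractional $z_a,z_b$, pick such a pair and perform one pipage step. The step sets $(z_a,z_b) \gets (z_a+\delta_1, z_b-\delta_1)$ with probability $\frac{\delta_2}{\delta_1+\delta_2}$, and $(z_a,z_b)\gets(z_a-\delta_2, z_b+\delta_2)$ otherwise. Here $\delta_1,\delta_2>0$ are the largest shifts keeping both coordinates in the box; each outcome makes at least one of the two coordinates integral. After this, each ball has at most one fractional coordinate. Next, do the same on $F$ with the remaining fractional coordinates. This leaves at most one fractional coordinate in all of $F$. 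Finally, round that last coordinate $z_a$ to $\lceil z_a\rceil$ with probability $z_a-\lfloor z_a\rfloor$, and to $\lfloor z_a \rfloor$ otherwise. Set $y'' := \varepsilon^{12p^2} z$. The box constraint $y''_i\le 1$ holds because we use the box $[0,1/\varepsilon^{12p^2}]$ and $1/\varepsilon$ is an integer. Since each step makes a coordinate integral, the procedure terminates in at most $|F|$ steps.

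\textbf{Property~\ref{y-double-prime-expectation}.} Each pipage step is a martingale step: the expected change of $z_a$ is $\frac{\delta_2\delta_1-\delta_1\delta_2}{\delta_1+\delta_2}=0$, and likewise for $z_b$. The final single-coordinate rounding is also unbiased. By the tower rule, $\E[y''_i]=y'_i$.

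\textbf{Property~\ref{approx_sum_preservation}.} Singletons are immediate, since every coordinate ends integral in $z$-scale. A pipage step inside $B_j$ keeps $z(B_j)$ fixed. A step on $F$ touches at most one coordinate of $B_j$, namely the last fractional one; the others are already integral. So throughout, $z(B_j)$ equals an integer plus a single coordinate's value. Once this coordinate becomes integral, it is integral with value $\lfloor\cdot\rfloor$ or $\lceil\cdot\rceil$ of the original sum, because that coordinate moves within the unit interval between two consecutive integers containing its fractional value. The same argument applies to $F$: it is preserved exactly until the final rounding step, which moves it by less than $1$ to a neighbouring integer.

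\textbf{Property~\ref{total_variance}.} This is the main obstacle, because $F_j$ is an arbitrary set that need not belong to $\mathcal{S}$. The approach is to use the negative correlation of pipage rounding: for all $a\ne b$, $\E[z_az_b]\le \E[z_a]\E[z_b]$, i.e.\ $\Cov(y''_a,y''_b)\le 0$. This follows from a one-step computation with a martingale argument. In a step on the pair $(a,b)$, the product satisfies $\E[z_az_b \mid \text{past}] = z_az_b-\frac{\delta_1\delta_2(\delta_1+\delta_2)}{\delta_1+\delta_2}\cdot$ (a nonnegative term) $\le z_az_b$. For a pair where only one coordinate moves, the product is a martingale. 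Hence $\E[z_az_b]$ is nonincreasing over time, which gives the inequality. With this, $\Var[y''(F_j)]=\sum_{i\in F_j}\Var[y''_i]+\sum_{a\ne b}\Cov(y''_a,y''_b)\le\sum_{i\in F_j}\Var[y''_i]$. The subtle point is to check that laminarity-guided pair choices never produce a step in which two coordinates move in the same direction; in the above procedure every step moves exactly two coordinates in opposite directions, or moves one coordinate alone (the final rounding).
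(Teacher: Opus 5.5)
Your route is the paper's: dependent (pipage) rounding of $y'/\varepsilon^{12p^2}$, pairing coordinates inside each $B_j$ first and then across $F$. The paper just invokes Srinivasan's marginal, sum-preservation and negative-correlation properties; you verify them directly. Property 1 is a martingale argument and Property 2 is laminar bookkeeping. For Property 3 you combine the one-step drift $\E[z_az_b\mid\text{past}]=z_az_b-\delta_1\delta_2$ with the martingale property for pairs in which only one coordinate moves. Those computations are correct and self-contained, and they show that nonpositive covariance holds for every pair of coordinates, so the arbitrary set $F_j$ causes no trouble.

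There is one concrete error in how you specify the step. You define $\delta_1,\delta_2$ as the largest shifts keeping both coordinates in ``the box'', and the only box you name is $[0,1/\varepsilon^{12p^2}]$. With that box the procedure violates Property 2. Suppose that after the within-ball phase $B_j$ has a single fractional coordinate $z_a=3.5$, and it is paired with an outside coordinate $z_b=0.7$ (note $1/\varepsilon^{12p^2}\gg 4$). Then $\delta_1=0.7$ and $\delta_2=3.5$, and with probability $1/6$ the outcome is $(z_a,z_b)=(0,4.2)$, so $z(B_j)$ drops by $3.5$. Likewise, two coordinates with values $0.5$ and $0.7$ can become $(1.2,0)$, and the first can end at $2\notin\{0,1\}$. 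Your singleton argument (``every coordinate ends integral'') does not rescue this, because being integral is not the same as being rounded to an adjacent integer. Your ball argument tacitly assumes each coordinate stays in its own unit interval, which contradicts the stated box.

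The repair is to round only fractional parts. Take the box $\prod_i[\lfloor z_i\rfloor,\lceil z_i\rceil]$ of the initial values, i.e.\ $\delta_1=\min\{\lceil z_a\rceil-z_a,\,z_b-\lfloor z_b\rfloor\}$ and $\delta_2=\min\{z_a-\lfloor z_a\rfloor,\,\lceil z_b\rceil-z_b\}$. Then singletons are immediate, your ball and $F$ arguments become valid, and all three of your computations go through unchanged. Also, $y''_i\le 1$ now holds because $\lceil z_i\rceil\le 1/\varepsilon^{12p^2}$, which is an integer.
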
 

\begin{proof}
    The properties of the rounding procedure described in~\cite{srinivasan2001distributions} hold for an arbitrary order in which fractional entries are paired for rounding. Given a laminar family of subsets we fix the order of rounding so that entries from within a subset are paired up together for rounding until there only exists one fractional element that can be rounded together with an element outside of the set. It remains to observe that family $\{\{i\}: i \in F\} \cup  \{B_j: j \in C^*\} \cup \{F\}$ is laminar to see that this way we obtain Property~\ref{approx_sum_preservation}.

    Obtaining tail bounds from negative correlation is common in the literature, see e.g.~\cite{panconesi1997randomized}. In this argument we choose to directly argue about the variance:
    \[
    \Var[y''(F_j)] = \sum\limits_{i,i'\in F_j}\Cov[y''_i,y''_{i'}] = \sum\limits_{i\in F_j}\Var[y''_i] + \sum\limits_{i,i'\in F_j, i\neq i'}\Cov[y''_i,y''_{i'}].
    \]
    To see that Property~\ref{total_variance} holds it suffices to see that Property A3 from~\cite{srinivasan2001distributions} implies $\Cov[y''_i,y''_{i'}] \leq 0$ for all $i,i'\in F_j, i\neq i'$.
\end{proof}

\begin{lemma}
    \label{lemma:type-1-cost}
	For a type-1 client $j \in C$, we have 
	\begin{align*}
		\E[\cost_{y''}(j)] \leq \left(1 + O\left(\varepsilon^{p^2}\right)\right) \cost_{y}(j).
	\end{align*}
\end{lemma}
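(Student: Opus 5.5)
Throughout, write $D:=d_{\max}(j)$ and $\delta:=\varepsilon^{12p^2}$; $D \le (\varepsilon/p)^{-4p} d_\av(j)$ because $j$ is type-1. Two facts drive the bound.

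\emph{Fact 1 (mass near $j$).} $y''(\ball_F(j,D))$ is at least $1$ minus a tiny amount.

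\emph{Fact 2 (backup for the deficit).} Any deficit can be covered by a facility at distance $O(D)$, at cost $O(2^p D^p)$ per unit of deficit.

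Combining them: since $D^p \le (\varepsilon/p)^{-4p^2} d_\av(j)^p \le (\varepsilon/p)^{-4p^2}\cost_y(j)$ (Jensen), an expected deficit of order $\varepsilon^{6p^2}$ contributes only $O(\varepsilon^{p^2})\cost_y(j)$.

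\emph{Step 1 (decompose $F_j$).} The sets $F_j \cap B_{j'}$ over $j' \in C^*$ together with the singletons partition $F_j$. I would first handle the ``core'' step. By Claim~\ref{claim:y-to-y-expectation} and the argument of Lemma~\ref{lemma:y-to-y'}, $\E[\cost_{y'}(j)] \le (1+O(p\varepsilon))\cost_y(j)$. That loss is too large here, so I would use the type-1 structure to sharpen it: any core $B'_{j'}$ meeting $F_j$ is either fully inside $\ball(j,D)$ or has distance ratio $1+O(\varepsilon)$.

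Actually the cleaner route is to compare $y''$ against $y$ directly. I would take the fractional assignment that connects $j$ to $y''$ restricted to $F_j \cup \{$cores intersecting $F_j\}$. Its expected cost on the mass actually present is at most $\sum_{i} \E[y''_i]\,d^p(i,j)$, plus boundary terms. Here $\E[y''_i]=y_i$ by Claim~\ref{claim:y-to-y-expectation} and Lemma~\ref{lem:dep_rounding}(\ref{y-double-prime-expectation}), except that cores straddling the boundary of $F_j$ cost at most a $(1+O(\varepsilon))^p$ factor on mass at distance $\Theta(D)$.

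\emph{Step 2 (shortfall).} The remaining task is to bound the shortfall $\E[(1-y''(\mathcal{F}))^+]$, where $\mathcal{F}$ is the set of facilities within distance $D' = D(1+O(\varepsilon))$ of $j$. Extend $\mathcal{F}$ to a union of whole balls $B_{j'}$ and singletons so that the sum-preserving property helps.

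\emph{Step 3 (variance and Chebyshev).} By Lemma~\ref{lem:dep_rounding}(\ref{total_variance}), $\Var[y''(\mathcal{F})] \le \sum_i \Var[y''_i]$. Each $\Var[y''_i] \le \delta\, y'_i$, since $y''_i$ lies between consecutive multiples of $\delta$ with mean $y'_i$. The sum is at most $\delta\cdot |y'(\mathcal{F})|$, so the variance is $O(\delta)$. Chebyshev, or simply $\E[(1-y'')^+] \le \sqrt{\Var}$, gives an expected deficit of $O(\sqrt\delta)=O(\varepsilon^{6p^2})$.

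\emph{Step 4 (backup cost).} The deficit is served by facilities outside $\mathcal{F}$. For this I would use $j$'s representative $j'$, with $d(j,j') \le (\varepsilon/p)^{-4p} d_\av(j)$: the ball $B_{j'}$ keeps mass at least roughly $1/(2-\varepsilon)$ after rounding, by Claim~\ref{claim:Bj-volume} and Lemma~\ref{lem:dep_rounding}(\ref{approx_sum_preservation}). This puts backup within distance $O((\varepsilon/p)^{-4p}) d_\av(j)$ of $j$, at cost $O(2^p(\varepsilon/p)^{-4p^2}d_\av^p)$. Multiplying by the deficit $O(\varepsilon^{6p^2})$ gives $O(\varepsilon^{p^2})\cost_y(j)$, using $p^{4p^2}\varepsilon^{2p^2} \le \varepsilon^{p^2}$ for small $\varepsilon$.

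\emph{Main obstacle.} I expect Step 1 to be the hardest, because of the multiplicative $(1+O(p\varepsilon))$ loss from cores that straddle at distance $\Theta(D)$. A loss of $O(\varepsilon)$ is not $O(\varepsilon^{p^2})$, so it must be avoided. I would try to show that for a type-1 client no core $B'_{j'}$ straddles the boundary of $F_j$ in a harmful way. The idea is that a core's distances vary by only $2\varepsilon d_{\max}(j')$, and its contribution can be charged exactly by sampling proportional to $y$ via the identity in Lemma~\ref{lemma:y-to-y'}. Then the mean-preserving identities $\E[y''_i]=y_i$ give exact expected cost for the mass inside $F_j$. If that fails, I would allow ties by treating the straddling mass as part of the shortfall analysis as well.
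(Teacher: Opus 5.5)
You have the paper's argument in Steps 2--4, but Step 1, the open obstacle you flagged yourself, cannot be closed: for $p\ge 2$ the statement as written is false, and the paper's proof gets past this point only through an error.

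Your Steps 2--4 are essentially the paper's entire proof. It uses a deterministic unit of $y''$-mass within $O((p/\varepsilon)^{4p})d_\av(j)$ of $j$. It bounds $\E[(1-y''(F_j))_+]\le\sqrt{\Var[y''(F_j)]}$ using negative correlation and $\Var[y''_i]\le\varepsilon^{12p^2}y'_i$. It finishes with $\varepsilon^{6p^2}(p/\varepsilon)^{4p^2}\le\varepsilon^{p^2}$.

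Step 4 needs one repair. $B_{j'}$ alone only carries mass $>1/(2-\varepsilon)$, not a full unit. The paper uses $y''(B_{j'})=1$ when $B_{j'}=F_{j'}$, and otherwise $y''(B_{j'}\cup B_{j''})>1$ for the nearest other representative $j''$.

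Here is a configuration where the straddling-core loss is real. Work on the real line with $1/\varepsilon$ an even integer.
\begin{itemize}
\item Put client $j$ and facility $a$ at $0$, facilities $u,v$ at $1\mp\varepsilon/2$, and client $j'$ at $1$.
\item Set $y_a=\frac{1+\varepsilon}{2}$, $y_u=y_v=\frac{1-\varepsilon}{2}$, and $k=2$.
\item Then $F_j=\{a,u\}$ and $d_{\max}(j)/d_\av(j)=2/(1-\varepsilon)$, so $j$ is type-1.
\item Since $d_\av(j')<2\varepsilon$, filtering picks $j'$ first, and $j'$ represents $j$.
\item We get $B_{j'}=F_{j'}$, $d_{\max}(j')=1$, and core $B'_{j'}=\{u,v\}$.
\item All $y'$-values are multiples of $\varepsilon^{12p^2}$, so $y''=y'$.
\end{itemize}
Therefore
\[
\E[\cost_{y''}(j)]=\tfrac12\Big(1+\big(\tfrac{1+\varepsilon/2}{1-\varepsilon/2}\big)^p\Big)\cost_y(j)=\big(1+\Theta(p\varepsilon)\big)\cost_y(j),
\]
which is not $(1+O(\varepsilon^{p^2}))\cost_y(j)$ once $p\ge 2$. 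Neither your argument nor the paper's uses LP-optimality of $(x,y)$. A variant in a general metric, padded with a far-away gadget so that $k$ is integral, is even LP-optimal.

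So both of your fallbacks fail. Harmful straddling does occur. Folding the straddling mass into the shortfall makes the shortfall carry the constant variance of the core sampling; here $\E[(1-y''(F_j))_+]=\frac{1-\varepsilon}{4}$. Serving that mass from the other point of the core instead costs exactly the $(1+\Theta(\varepsilon))^p$ factor you wanted to avoid.

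The paper's proof does not get around this; it simply has no Step 1. It takes $\E[y''(F_j)]=1$ over both rounding stages. But it bounds $\Var[y''_i]$ by $\Var[z''_i]\le z'_i\varepsilon^{12p^2}$, which is the pipage variance for a fixed $y'$. The omitted $\Var[y'_i]$ terms cancel inside $F_j$ only when every core lies entirely inside or outside $F_j$, i.e.\ when $y'(F_j)=1$ deterministically.

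What your Steps 2--4 do prove comes from running them conditionally on $y'$, against $y'$'s own nearest unit of facilities. Put weights $\theta_i\in[0,1]$ on a split facility. Negative correlation still gives $\Var[\sum_i\theta_iy''_i\mid y']\le\varepsilon^{12p^2}$, and the backup unit is deterministic. This yields $\E[\cost_{y''}(j)\mid y']\le\cost_{y'}(j)+O(\varepsilon^{p^2})\cost_y(j)$. Combined with \cref{lemma:y-to-y'}, it gives $\E[\cost_{y''}(j)]\le(1+O(p\varepsilon))\cost_y(j)$.

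That is the correct form of the statement. It coincides with the claim for $p=1$ and mirrors \cref{lem:type-2-cost}. It is also all that \cref{thm:cost} needs, since that theorem tolerates a $2^{O(p)}\varepsilon$ loss.
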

\begin{proof} %
	We first show that $y''\left(\ball_F\left(j,  O\big(\frac1{(\varepsilon/p)^{4p}}\big) d_\av(j)\right)\right) \geq 1$ happens with probability 1, for some large enough hidden constant in $O\big(\frac1{(\varepsilon/p)^{4p}}\big)$.  Consider the representative $j'$ of $j$. Notice that $d_{\max}(j') \leq d(j, j') + d_{\max}(j) \leq O\big(\frac1{(\varepsilon/p)^{4p}}\big)\cdot d_\av(j)$. If $B_{j'} = F_{j'}$, then we have $y''(B_{j'}) = 1$ and so the statement holds. Otherwise, let $j''$ be the nearest neighbor of $j'$ in $C^* \setminus \{j'\}$. Then $d(j', j'') < 2 d_{\max}(j')$, $B_{j'} = \ball^\circ_F(j', d(j', j'')/2)$, and $B_{j''} \subseteq \ball^\circ_F(j'', d(j', j'')/2)$.  Therefore, we have $y''(B_{j'} \cup B_{j''}) > 1$ by \cref{claim:Bj-volume}.  Every facility in $B_{j'} \cup B_{j''}$ has distance at most $d(j, j') + d(j', j'') + d(j', j'')/2 \leq d(j, j') + 3d_{\max}(j') \leq O\big(\frac1{(\varepsilon/p)^{4p}}\big) d_\av(j)$ from $j$.  \medskip
	
	Then, 
	\begin{align}
		\E[\cost_{y''}(j)] &\leq \E\left[\sum_{i \in F_j} y''_i d^p(j, i) + (1 - y''(F_j))_+ \cdot \left(O\left(\frac1{(\varepsilon/p)^{4p}}\right) \cdot d_\av(j)\right)^p\right] && \nonumber\\
        &= \cost_{y}(j) + \E[(1 - y''(F_j))_+] \cdot O\left(\frac1{(\varepsilon/p)^{4p^2}}\right) \cdot d_\av^p(j) && \label{ineq:type1-1}\\
        &\le \cost_{y}(j) + \E[(1 - y''(F_j))_+] \cdot O\left(\frac1{(\varepsilon/p)^{4p^2}}\right) \cdot \cost_{y}(j).\label{ineq:type1-2}
	\end{align}

    \eqref{ineq:type1-1} follows from the Property~\ref{y-double-prime-expectation} of Lemma~\ref{lem:dep_rounding} and Claim~\ref{claim:y-to-y-expectation}, which ensure $\E[y''_i] = y'_i$ and $\E[y'_i] = y_i$. For \eqref{ineq:type1-2}, notice that $d_\av^p(j)=\left(\sum_{i\in F_j}y_i d(i,j)\right)^p\le \sum_{i\in F_j}y_id^p(i,j)= \cost_{y}(j)$.
    
	It remains to bound $\E[(1 - y''(F_j))_+]$. Applying Cauchy-Schwarz inequality, we obtain
    
	$$
    \E[(1 - y''(F_j))_+]\le \E[|1-y''(F_j)|]\le\sqrt{\E[(1-y''(F_j))^2]}.
    $$

    We can show that $\E[y''(F_j)]=\sum_{i\in F_j}\E[y''_i]=\sum_{i\in F_j}\E[y'_i]=1$, so $\E[1-y''(F_j)]=0$, and we have

    \begin{align*}
    \E[(1-y''(F_j))^2]= \Var[1-y''(F_j)]
    = \Var[y''(F_j)]
    \leq \sum_{i\in F_j}\Var[y''_i].
    \end{align*}

    For every $i\in F_j$, let $z'_i=y'_i-\lfloor \frac{y'_i}{\varepsilon^{12p^2}}\rfloor \varepsilon^{12p^2}$, and $z''_i=\varepsilon^{12p^2}$ if $y''_i/\varepsilon^{12p^2}=\lfloor y'_i/\varepsilon^{12p^2}\rfloor+1$, $z''_i=0$ if $y''_i/\varepsilon^{12p^2}=\lfloor y'_i/\varepsilon^{12p^2}\rfloor$, then we have
    \begin{align*}
    \Var[y''_i]&=\Var[z''_i]
    =\E[{z''_i}^2]-\E[z''_i]^2
    =\left(\frac{z_i'}{\varepsilon^{12p^2}}\right)(\varepsilon^{12p^2})^2-z_i'^2
    \leq z'_i\varepsilon^{12p^2}.
    \end{align*}

    Summing this over all $i \in F_j$ yields
    
    \begin{align*}
    \E[(1-y''(F_j))^2]\le \sum_{i\in F_j}\Var[y''_i]
    \le \sum_{i\in F_j}z'_i\varepsilon^{12p^2}
    \le \sum_{i\in F_j}y'_i\varepsilon^{12p^2}
    = \varepsilon^{12p^2}.
    \end{align*}

    So we can obtain that $\E[(1 - y''(F_j))_+]\le \sqrt{\E[(1-y''(F_j))^2]}\le \sqrt{\varepsilon^{12p^2}}=\varepsilon^{6p^2}$.

    Therefore,
	we have 
	\begin{flalign*}
		\E[\cost_{y''}(j)] \leq \cost_{y}(j) + \E[(1 - y''(F_j))_+] \cdot O\left(\frac1{(\varepsilon/p)^{4p^2}}\right) \cdot \cost_{y}(j) = \left(1 + O\left(\varepsilon^{p^2}\right)\right) \cdot \cost_{y}(j).
	\end{flalign*}

    We use $\varepsilon<\frac{1}{3p^4}$ in the last step.
\end{proof}

\begin{lemma} \label{lem:not-type1-distance}
	If $j \in C$ is not of type-1, and $j'$ is the representative of $j$, then $d(j, j') \leq 4d_\av(j)$. 
\end{lemma}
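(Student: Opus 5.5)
The plan is to argue by contradiction, using the fractional balls around $j$ and its representative $j'$ together with the greedy order of Algorithm~\ref{alg:filtering}. Since $j$ is not of type-1, we have $d_{\max}(j) > \frac{1}{(\varepsilon/p)^{4p}} d_\av(j)$. We also know two facts about $j'$: it was selected while $j$ was still in $C'$, so $d_\av(j') \le d_\av(j)$, and $j$ was removed in that iteration, so $d(j,j') \le \big(\frac{1}{(\varepsilon/p)^{4p}}-2\big) d_\av(j)$.

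The key step is a Markov-type comparison of the two balls. Set $r := d(j,j')$ and suppose, for contradiction, that $r > 4 d_\av(j)$. By Markov's inequality, $\ball_F(j, 2d_\av(j))$ contains at least $1/2$ fractional facility opening. Likewise $\ball_F(j', 2d_\av(j'))$, and hence $\ball_F(j', 2d_\av(j))$, contains at least $1/2$. Both radii are $2d_\av(j) < r/2$, so the two balls are disjoint, and their union has total $y$-value at least $1$. Every facility in the union is within distance $r + 2d_\av(j)$ of $j$. Since $F_j$ consists of the nearest unit of fractional opening to $j$, this gives $d_{\max}(j) \le r + 2 d_\av(j) \le \frac{1}{(\varepsilon/p)^{4p}} d_\av(j)$, by the removal condition. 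This contradicts $j$ not being of type-1.

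The subtle point is that Markov's inequality gives "at least $1/2$" only with a non-strict inequality. Exactly $1/2$ could sit on the boundary while the union is exactly $1$, so the bound $d_{\max}(j) \le r + 2d_\av(j)$ must be handled carefully. It nonetheless holds with $\le$, because a total of at least $1$ within that distance suffices for the nearest unit. The contradiction then needs a strict inequality in the type-1 test: $d_{\max}(j) > \frac{1}{(\varepsilon/p)^{4p}} d_\av(j)$ is strict by the definition of not type-1, so we are fine. Degenerate cases where $d_\av(j)=0$ are absorbed, since then the same argument forces $d_{\max}(j)\le r$ and $r \le 0$. In that case $j$ is type-1, because its whole unit of opening sits at distance $0$.

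This proves $r \le 4 d_\av(j)$. The main obstacle is only bookkeeping the disjointness of the balls, which requires $r > 4d_\av(j) \ge 2d_\av(j) + 2d_\av(j')$; this is exactly where the constant $4$ arises.
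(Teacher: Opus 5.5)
Your proof is correct and follows essentially the same route as the paper: assume $d(j,j')>4d_\av(j)$, place two disjoint Markov balls of mass at least $1/2$ around $j$ and $j'$, observe they lie within distance $d(j,j')+2d_\av(j)\le \frac{1}{(\varepsilon/p)^{4p}}d_\av(j)$ of $j$, and contradict $j$ not being type-1. Your strict-versus-non-strict worry is harmless, since Markov in fact gives strictly more than $1/2$ on the closed ball, which is what the paper uses. Your phrasing via $d_{\max}(j)\le d(j,j')+2d_\av(j)$ is, if anything, slightly more careful than the paper's ``both balls lie inside $\ball_F(j,d_{\max}(j))$''.
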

\begin{proof}
	As $j$ is not of type-1, we have  $d_{\max}(j) > \frac{1}{(\varepsilon/p)^{4p}} \cdot d_\av(j)$. Assume towards the contradiction that $d(j, j') > 4 d_\av(j)$.  By the properties of representative, we have $d(j, j') \leq \left(\frac{1}{(\varepsilon/p)^{4p}} - 2\right)\cdot d_\av(j)$. 
	
	Notice that $y(\ball_F(j, 2d_\av(j))) > 1/2$ and  $y(\ball_F(j', 2d_\av(j'))) > 1/2$.  The two balls are disjoint, and are both inside $\ball_F(j, d_{\max}(j))$. This contradicts with the definition of $F_j$ and $d_{\max}(j)$. 
\end{proof}

\begin{lemma} \label{lem:type-2-cost}
For a type-2 client $j \in C$, we have $\mathbb{E}[\cost_{y''}(j)] \le (1 + O(\varepsilon))\E[\cost_{y'}(j)]$.
\end{lemma}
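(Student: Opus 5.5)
Fix a type-2 client $j$ and its representative $j' \in C^*$, so that $B_{j'} = F_{j'}$. The plan has three steps: show $F_{j'}$ lies close to $j$, use the pipage rounding guarantee on $B_{j'}$ to get a deterministic backup of one unit of $y''$-opening near $j$, and then argue as in Lemma~\ref{lemma:type-1-cost} with a much smaller backup distance. We condition on $y'$ throughout and compare $\E[\cost_{y''}(j)\mid y']$ with $\cost_{y'}(j)$.

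\emph{Step 1: the backup ball.} Since $j$ is not of type-1, Lemma~\ref{lem:not-type1-distance} gives $d(j,j') \le 4d_\av(j)$. We also have $d_\av(j') \le d_\av(j)$. The ball $\ball_F(j,2d_\av(j))$ has more than $1/2$ fractional value and lies inside $F_j$. The ball $\ball_F(j',2d_\av(j'))$ also has more than $1/2$ value, and it lies within distance $6d_\av(j)$ of $j$. The union of these two balls therefore carries at least one unit of fractional value within distance $6d_\av(j)$ of $j$. This gives $d_{\max}(j) \le 6d_\av(j)$ unless the two balls intersect; the case needs care, since being non-type-1 means $d_{\max}(j)$ is large.

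I would not use that route. I would instead use $B_{j'} = F_{j'}$ together with property~\ref{approx_sum_preservation} of Lemma~\ref{lem:dep_rounding}. Since $y'(B_{j'}) = y(B_{j'}) = 1$ (the core rounding preserves ball sums), it follows that $y''(F_{j'}) = 1$ deterministically. Every $i \in F_{j'}$ satisfies $d(j,i) \le d(j,j') + d_{\max}(j')$. So the backup distance is $D := 4d_\av(j) + d_{\max}(j')$, and the key is to bound $d_{\max}(j')$ in terms of the cost of $j$.

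\emph{Step 2: bounding the loss.} Let $x'$ be the optimal connection of $j$ under $y'$, and set $z := \sum_i \min(x'_{ij}, y''_i)$. Then
\[
\cost_{y''}(j) \le \sum_i \min(x'_{ij},y''_i)\, d^p(i,j) + (1-z)\,D^p,
\]
and by negative correlation $\E[1-z] \le \varepsilon^{6p^2}$-type bounds as in Lemma~\ref{lemma:type-1-cost}. Since $y''_i$ is a rounding of $y'_i$ to adjacent grid points, $\E[(x'_{ij}-y''_i)_+]$ is at most the rounding deficit. What must be shown is $\E[1-z]\cdot D^p \le O(\varepsilon)\cost_{y'}(j)$.

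\emph{Step 3: the main obstacle.} The hard part is controlling $D^p$, because $d_{\max}(j')$ may be as large as about $d_{\max}(j)$, and for a non-type-1 client that is much larger than $d_\av(j)$. I expect to avoid bounding $D$ globally. Instead I would charge facilities of $F_{j'}$ by their own distances. Facilities of $F_j$ within distance about $d(j,j')/\varepsilon$ lie near the ball $F_{j'}$. For far facilities $i$, the cost $x'_{ij}d^p(i,j)$ is comparable to the distance to the mass shifted inside $F_{j'}$, up to a factor $(1+O(\varepsilon))$: the nearest-$1$ mass of $j'$ and that of $j$ differ only by distances $O(d_\av(j)) \le \varepsilon \cdot(\text{far distance})$.

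Concretely, I would couple the deficit within $F_{j'}$ (exactly zero in total) so that any mass of $j$ lost at facility $i$ is replaced by mass in $F_{j'}$ at distance at most $d(i,j) + O(d_\av(j))$ from $j$. I would then bound the near part using the tiny granularity $\varepsilon^{12p^2}$, as in Lemma~\ref{lemma:type-1-cost}, so that $\E[\text{deficit}]\cdot (d_\av/\varepsilon)^p$ is negligible. Summing the near and far contributions gives the claimed $(1+O(\varepsilon))$ factor.
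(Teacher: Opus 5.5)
Your Steps 1--2 correctly establish the fact the lemma rests on: $y'(B_{j'})=y(B_{j'})=1$, so $y''(F_{j'})=1$ deterministically by property~2 of Lemma~\ref{lem:dep_rounding}. You also correctly see that a type-1 style bound $\E[1-z]\cdot D^p$ cannot be controlled. But Step~3 does not close the argument, because the key idea is missing. Since $y''(F_{j'})=1$, connecting $j$ to all of $F_{j'}$ is feasible, so
\[
\cost_{y''}(j)\;\le\;\sum_{i\in F_{j'}} y''_i\, d^p(j,i).
\]
This upper bound is \emph{linear} in $y''$, so by property~1 of Lemma~\ref{lem:dep_rounding} its expectation is exactly $\sum_{i\in F_{j'}}y'_i\,d^p(j,i)$. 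No deficit, variance, or coupling analysis of the pipage rounding is needed.

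What remains is a deterministic exchange. This quantity differs from $\cost_{y'}(j)$ only in that the $x'$-mass outside $F_{j'}$, at distance $>d_{\max}(j')-d(j,j')$ from $j$, is traded for an equal amount of $y'$-mass inside $F_{j'}$, at distance $\le d_{\max}(j')+d(j,j')$. Using $d_{\max}(j')\ge d_{\max}(j)-d(j,j')$ and $d(j,j')\le 4d_\av(j)<4(\varepsilon/p)^{4p}d_{\max}(j)$, the distance ratio is $1+O(\varepsilon)$. Your remark that the nearest-unit masses of $j$ and $j'$ differ only by $O(d_\av(j))$ is this exchange. However, you apply it to rounding losses, where it does not hold.

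Your pointwise coupling, in which each unit lost at $i$ is replaced within distance $d(i,j)+O(d_\av(j))$, is false. The laminar family only fixes the sum over $B_{j'}=F_{j'}$ as a whole. So mass taken from the core facility $i^*$, the only facility near $j$ with positive $y'$, can reappear anywhere in $F_{j'}$.

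For example, take a representative $j=j'$ with $B_j=F_j$. Put core mass $1-\eta$ at essentially zero distance, with $\eta\ll\varepsilon^{12p^2}$. Put essentially all the remaining mass $\eta$ on one facility at distance $r$, where $\varepsilon d_{\max}(j)<r\le d_{\max}(j)$; note $r\gg d_\av(j)/\varepsilon$.
\begin{itemize}
\item Then $\cost_{y'}(j)\approx\eta r^p$.
\item The facility $i^*$ is rounded down with probability $\eta/\varepsilon^{12p^2}$, losing about $\varepsilon^{12p^2}$, and that mass reappears at distance about $r$.
\item So your ``near part'' contributes about $\eta r^p$, the same order as the whole cost. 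It is not bounded by $\E[\text{deficit}]\cdot(d_\av/\varepsilon)^p$.
\item It is offset only because the far facility's $y''$-mass equals its $y'$-mass in expectation, which is exactly the linearity you did not use.
\item Charging it at $D^p$ instead, as in Step~2, loses a factor up to $\varepsilon^{-p}$ when $r$ is near $\varepsilon d_{\max}(j)$.
\end{itemize}

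Separately, the claim $\E[1-z]\le\varepsilon^{6p^2}$ in Step~2 does not follow from negative correlation. Negative correlation controls $\Var[y''(S)]$, not $\sum_i\E[(x'_{ij}-y''_i)_+]$. The latter can be as large as the total mass on facilities whose $y'_i$ is below the grid size.
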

\begin{proof}
We have $d_{\max}(j) > \frac{1}{(\varepsilon/p)^{4p}} \cdot d_\av(j)$ and the representative $j'$ of $j$ has $B_{j'} = F_{j'}$. Notice that we always have $y''(B_{j'}) = 1$. So,

\begin{align*}
\E\left[\cost_{y''}(j)\right]&\le \mathbb{E}\left[\sum_{i \in F_{j'}} y''_i d^p(j, i)\right] = \E\left[\sum_{i \in F_{j'}} y'_i d^p(j, i)\right] \\
&= \E[\cost_{y'}(j)] + \E\left[\sum_{i \in F_{j'}} (y'_i - x'_{ij}) d^p(j, i) - \sum_{i \notin F_{j'}} x'_{ij} d^p(j, i)\right].
\end{align*}
For any $i \in F_{j'}$, we have $d(j, i) \le d(j, j') + d_{\max}(j')$. For any $i \notin F_{j'}$, we have $d(j', i) > d_{\max}(j')$, which implies $d(j, i) > d_{\max}(j') - d(j, j')$. Since $\sum_{i \in F_{j'}} (y'_i - x'_{ij}) = \sum_{i \notin F_{j'}} x'_{ij}$, the expected cost is bounded by
\begin{align*}
& \E[\cost_{y'}(j)] + \left(\left(\frac{d_{\max}(j') + d(j, j')}{d_{\max}(j') - d(j, j')}\right)^p - 1\right) \cdot \E\left[\sum_{i \notin F_{j'}} x'_{ij} d^p(j, i)\right] \\
&\le \E[\cost_{y'}(j)] + \left(\left(\frac{d_{\max}(j)}{d_{\max}(j) - 2d(j, j')}\right)^p - 1\right) \cdot \E[\cost_{y'}(j)] \\
&\le \E[\cost_{y'}(j)] + \left(\left(\frac{1}{1 - 8(\varepsilon/p)^{4p}}\right)^p - 1\right) \cdot\E[\cost_{y'}(j)] \\
&=(1+O(\varepsilon))\cdot \E[\cost_{y'}(j)].
\end{align*}

This finishes the proof of the lemma.
\end{proof}

It remains to consider a type-3 client $j \in C$.  We can not guarantee a $1  + O_p(\varepsilon)$ loss for $j$; in the worst case, it may lose a factor of $2^p$ when converting $y'$ to $y''$. Let $j_1$ be its representative and $j_2$ be the nearest neighbor of $j_1$ in $C^* \setminus \{j_1\}$. Let $i_1, i_2$ be the unique facility in $B'_{j_1}, B'_{j_2}$ with positive $y'$ value respectively; they are also the unique facility in the two balls with positive $y''$ value. The following lemma says if we give an artificial upper bound $\frac{d(j, i_2)}2$ on the connection distance of $j$, then we can upper bound the expected cost of $j$ in the solution $y''$. In Section~\ref{sec:cost-new}, we show that the algorithm will open a facility with distance at $d(j, i_2)$ away with large enough probability. So, the weaker lemma suffices for our purpose.
\begin{lemma}
    \label{lemma:type-3-weaker-guarantee}
    \begin{align*}
        \E\Bigg[\sum_{i \in F}x''_{ij}\cdot\min\Big\{d^p(j, i), \left(\frac{d(j, i_2)}2\right)^p\Big\}\Bigg] \leq (1+O(p\varepsilon))\cost_{y'}(j).
    \end{align*}
\end{lemma}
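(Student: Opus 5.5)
The plan is to reduce the statement to a single \emph{exchange} between the mass of $y''$ inside the ball $B_{j_1}$ and the mass outside it. The pipage rounding of Lemma~\ref{lem:dep_rounding} preserves $y''(B_{j_1})$ up to floor and ceiling, and it preserves expectations coordinatewise. For the truncated cost we only need to know how much mass lies in $B_{j_1}$ and where it sits, because every facility outside $B_{j_1}$ may be charged the cap $(d(j,i_2)/2)^p$.

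Write $D := d(j_1, j_2)$ and $B := B_{j_1} = \ball^\circ_F(j_1, D/2)$. The first step is to collect the geometry of a type-3 client.
\begin{itemize}
\item By Lemma~\ref{lem:not-type1-distance}, $d(j, j_1) \le 4 d_\av(j)$.
\item Since $j$ is not type-1, $d_{\max}(j) > d_\av(j)/(\varepsilon/p)^{4p}$.
\item The proof of Lemma~\ref{lem:disjoint_balls} gives $d_{\max}(j_1) \le (1+O(\varepsilon^{4p}))D$. Combined with $d_{\max}(j) \le d(j,j_1) + d_{\max}(j_1)$, this yields $d_{\max}(j) \le (1+O(\varepsilon))D$.
\item Hence $d_\av(j) \le O(\varepsilon)\,D$, so $d(j, j_1) = O(\varepsilon) D$.
\end{itemize}
Two consequences follow. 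Every facility outside $B$ is at distance at least $(1/2 - O(\varepsilon))D$ from $j$, and every facility in $B$ is at distance at most $(1/2+O(\varepsilon))D$.

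Next I bound the cap. Since $i_2 \in B'_{j_2}$, we have $d(j_2, i_2) \le \varepsilon d_{\max}(j_2)$. If $B_{j_2} = F_{j_2}$, then $d_{\max}(j_2) \le D/2$, because $B_{j_2}$ lies in the ball of radius $d(j_2, C^*\setminus\{j_2\})/2 \le D/2$. Otherwise, applying the argument of Lemma~\ref{lem:disjoint_balls} to $j_2$ gives $d_{\max}(j_2) \le (1+o(1))\, d(j_2, C^*\setminus\{j_2\}) \le (1+o(1))D$. In either case $d(j, i_2) \le d(j,j_1) + D + \varepsilon d_{\max}(j_2) = (1+O(\varepsilon))D$. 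So the cap satisfies $(d(j,i_2)/2)^p \le (1+O(p\varepsilon)) \min_{i \notin B} d^p(j,i)$.

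Now for the main estimate. For any realization of $y''$, the truncated cost of $j$ is at most
\[ \sum_{i \in B} y''_i \min\{d^p(j,i), (d(j,i_2)/2)^p\} + (1 - y''(B)) \cdot (d(j,i_2)/2)^p . \]
This holds because the optimal connection $x''$ can be replaced by one that fills $B$ first and then takes the remaining mass elsewhere, each unit truncated at the cap. It also uses $y''(B) \le 1$, which follows from $y'(B) = y(B) \le 1$ and the fact that the ceiling of a number at most $1$ on the $\varepsilon^{12p^2}$-grid is still at most $1$. The key observation is that the coefficient $1 - y''(B)$ is never negative, so no positive part appears. The whole expression is therefore linear in $y''$, and by Property~\ref{y-double-prime-expectation} its expectation equals the same expression with $y'$ in place of $y''$.

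Finally I compare this $y'$-expression with $\cost_{y'}(j)$. The optimal $x'_{\cdot j}$ uses at most all of the mass $y'(B)$ inside $B$, and it must place at least $1-y'(B)$ mass outside $B$, where every unit pays at least $\min_{i\notin B} d^p(j,i) \ge (1 - O(p\varepsilon))\,\text{cap}$. Moving mass between an inside facility and an outside one changes the cost per unit by at most a factor $\bigl(\frac{1/2+O(\varepsilon)}{1/2-O(\varepsilon)}\bigr)^p = 1+O(p\varepsilon)$. Therefore the expression is at most $(1+O(p\varepsilon))\cost_{y'}(j)$, for each fixed $y'$; taking expectation over $y'$ gives the lemma.

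The main obstacle is the geometric bookkeeping, in particular bounding $d_{\max}(j_2)$ so that the cap sits at essentially $D/2$ rather than $3D/4$. I expect this to require rerunning the Markov-type argument of Lemma~\ref{lem:disjoint_balls} for $j_2$ with its own nearest neighbour in $C^*$.
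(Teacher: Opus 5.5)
Your proposal is correct and follows the paper's proof. Both bound the truncated cost by filling $B_{j_1}$ and paying the cap $(d(j,i_2)/2)^p$ on the remaining mass $1-y''(B_{j_1})\ge 0$, then pass to $y'$ by linearity and $\E[y''_i]=y'_i$, and finally exchange against $\cost_{y'}(j)$, using that all relevant distances are within a $1+O(\varepsilon)$ factor of $d(j_1,j_2)/2$. You explicitly justify $y''(B_{j_1})\le 1$, which the paper uses silently. Your geometric bookkeeping differs slightly: the paper gets $d_{\max}(j)\le d(j,j_1)+\tfrac32 d(j_1,j_2)$ and $d_{\max}(j_2)\le \tfrac32 d(j_1,j_2)$ in one step from the disjoint balls $B_{j_1},B_{j_2}$ of mass $>1/2$ each, so your case analysis on $B_{j_2}$ is unnecessary, though also valid.
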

\begin{proof}
    Similar to the proof of Lemma \ref{lem:type-2-cost}, we have

    \begin{align*}
    \mathrm{LHS}&\le \E\left[\left(\sum_{i\in B_{j_1}}y_i''d^p(j,i)\right)+(1-y''(B_{j_1}))_+\cdot \left(\frac{d(j, i_2)}2\right)^p\right] \\
    &\le \cost_{y'}(j)+(1-y'(B_{j_1}))\cdot \left(\frac{d(j, i_2)}2\right)^p+\sum_{i\in B_{j_1}}(y_i' - x'_{ij})d^p(j,i) -\sum_{i\notin B_{j_1}}x'_{ij}d^p(j,i).
    \end{align*}

    Denote $r=\frac{d(j_1,j_2)}{2}$ as the radius of $B_{j_1}$, $D_1=r-d(j,j_1)$ as a lower bound of $\min_{i\notin B_{j_1}}d(j,i)$, $D_2=\max\left(\frac{d(j, i_2)}2,r+d(j,j_1)\right)$ as an upper bound of $\max\left(\frac{d(j, i_2)}2,\max_{i\in B_{j_1}}d(j,i)\right)$. Since $\sum_{i\in B_{j_1}}(y'_i - x'_{ij}) + (1-y'(B_{j_1})) = \sum_{i \notin B_{j_1}}x'_{ij}$, we have

    \begin{align*}
        \mathrm{LHS} &\leq \cost_{y'}(j) + \left(\left(\frac{D_2}{D_1}\right)^p - 1\right)\sum_{i \notin B_{j_1}}x'_{ij}D_1^p \\
        &\le \cost_{y'}(j) + \left(\left(\frac{D_2}{D_1}\right)^p - 1\right)\sum_{i \notin B_{j_1}}x'_{ij}d^p(j,i) \\
        &\le \left(\frac{D_2}{D_1}\right)^p \cdot \cost_{y'}(j).
    \end{align*}

    It suffices to prove that $D_2/D_1=1+O(\varepsilon)$.

    Note that $B_{j_1}$ and $B_{j_2}$ are disjoint and each has $y$ value at least $\frac{1}{2}$. Thus, they can not be both inside $\mathrm{ball}^\circ_F(j,d_{\max}(j))$, which means $d(j,j_1)+d(j_1,j_2)+r\ge d_{\max}(j)$, as the radius of $B_{j_1}$ or $B_{j_2}$ is upper bounded by $r=\frac{d(j_1,j_2)}{2}$. Combined with $d(j,j_1)\leq 4d_\av(j)$ and $d_{\av}(j)<(\varepsilon/p)^{4p}d_{\max}(j)<\frac{1}{16}d_{\max}(j)$, we know that
    
    $$r\ge \frac{1}{4}d_{\max}(j)>\frac{1}{4(\varepsilon/p)^{4p}}d_{\av}(j)\ge\frac{1}{16(\varepsilon/p)^{4p}}d(j,j_1).$$

    Next we bound $d(j,i_2)$. Note that $d(j_2,i_2)\le \varepsilon d_{\max}(j_2)$ and $d_{\max}(j_2)\le d(j_1,j_2)+r=3r$, we have

    \begin{align*}
        d(j,i_2)&\le d(j,j_1)+d(j_1,j_2)+d(j_2,i_2)\\
        &\le 4d_{\av}(j)+2r+3\varepsilon r\\
        &\le (2+3\varepsilon+16(\varepsilon/p)^{4p})\cdot r.
    \end{align*}

    Thus,
    \begin{align*}
        \frac{D_2}{D_1}&\le\max\left(\frac{1+16(\varepsilon/p)^{4p}}{1-16(\varepsilon/p)^{4p}},\frac{1+3\varepsilon/2+8(\varepsilon/p)^{4p}}{1-16(\varepsilon/p)^{4p}}\right)
        =1+O(\varepsilon). \qedhere
    \end{align*}
\end{proof}

The following helper lemma will be needed in the analysis:
\begin{lemma}
    \label{lemma:max-j-2-big}
    If $d_{\max}(j_2) \ge \varepsilon d_{\max}(j_1)$, then $y'(B'_{j_2}) = y(B'_{j_2})\ge 1-4\varepsilon^{p+1}$.
\end{lemma}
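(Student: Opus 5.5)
The plan is a Markov-inequality argument around $j_2$, using the filtering condition to show that $d_\av(j_2)$ is tiny compared with $\varepsilon d_{\max}(j_2)$. Here $j$ is a type-3 client, $j_1$ is its representative, and $j_2$ is the nearest neighbor of $j_1$ in $C^*\setminus\{j_1\}$.

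First, the equality $y'(B'_{j_2}) = y(B'_{j_2})$ holds by construction. When rounding $y$ to $y'$, the single sampled facility $i^*\in B'_{j_2}$ receives $y'_{i^*} = y(B'_{j_2})$, and every other facility in $B'_{j_2}$ gets $0$. So it remains to lower bound $y(B'_{j_2})$.

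Second, I will upper bound $d_\av(j_2)$ in terms of $d_{\max}(j_1)$.
\begin{itemize}
\item Since $j$ is type-3, $B_{j_1}\neq F_{j_1}$. By the definition of $B_{j_1}$ this means $d(j_1,j_2)/2 < d_{\max}(j_1)$.
\item By the filtering condition (as used in the proof of Lemma~\ref{lem:disjoint_balls}), $d(j_1,j_2) > \left(\frac{1}{(\varepsilon/p)^{4p}}-2\right) d_\av(j_2)$.
\item Since $\frac{1}{(\varepsilon/p)^{4p}}-2 \ge \frac{1}{2(\varepsilon/p)^{4p}}$ for our small $\varepsilon$, the two facts combine to give $d_\av(j_2) < 2(\varepsilon/p)^{4p} d(j_1,j_2) < 4(\varepsilon/p)^{4p} d_{\max}(j_1)$.
\end{itemize}
The hypothesis $d_{\max}(j_1)\le d_{\max}(j_2)/\varepsilon$ then turns this into
\[
d_\av(j_2) < \frac{4(\varepsilon/p)^{4p}}{\varepsilon}\, d_{\max}(j_2).
\]

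Third, I apply Markov's inequality to the distribution $y$ restricted to $F_{j_2}$, which has total mass $1$ and mean distance $d_\av(j_2)$ from $j_2$. The mass of $F_{j_2}$ at distance strictly greater than $R:=\varepsilon d_{\max}(j_2)$ from $j_2$ is at most $d_\av(j_2)/R$. Every facility of $F_{j_2}$ within distance $R$ lies in the closed ball $B'_{j_2}=\ball_F(j_2,\varepsilon d_{\max}(j_2))$. Hence
\[
y(B'_{j_2}) \ge 1-\frac{d_\av(j_2)}{\varepsilon d_{\max}(j_2)} \ge 1-4\,\frac{(\varepsilon/p)^{4p}}{\varepsilon^{2}} \ge 1-4\varepsilon^{4p-2}.
\]

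Finally, since $p\ge1$ and $\varepsilon<1$, we have $4p-2\ge p+1$, so $\varepsilon^{4p-2}\le\varepsilon^{p+1}$, which gives the claimed bound $1-4\varepsilon^{p+1}$.

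I expect the main care point to be the first step of the second stage: extracting $d(j_1,j_2) < 2d_{\max}(j_1)$ from the type-3 assumption. This needs the definition of $B_{j_1}$ being the strict ball in the case $d(j_1, C^*\setminus\{j_1\})/2 < d_{\max}(j_1)$, and $j_2$ realizing that minimum distance. The rest is bookkeeping with the constants.
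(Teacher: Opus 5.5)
Your proposal is correct and follows the paper's proof essentially step for step. You bound $d_\av(j_2) \le 2(\varepsilon/p)^{4p} d(j_1,j_2)$ via the filtering condition, use $d(j_1,j_2) < 2d_{\max}(j_1) \le 2d_{\max}(j_2)/\varepsilon$, and apply Markov's inequality on $F_{j_2}$ at radius $\varepsilon d_{\max}(j_2)$ to reach $1-4\varepsilon^{4p-2}/p^{4p} \ge 1-4\varepsilon^{p+1}$; your explicit derivation of $d(j_1,j_2)<2d_{\max}(j_1)$ from $B_{j_1}\neq F_{j_1}$ is a step the paper uses without comment.
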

\begin{proof}
    By the filtering condition, we have

    $$d_{\av}(j_2)\le\frac{1}{\left(1/(\varepsilon/p)^{4p}\right)-2}\cdot d(j_1,j_2)\le 2(\varepsilon/p)^{4p}\cdot d(j_1,j_2).$$

    We also know that

    $$\varepsilon d_{\max}(j_2) \ge \varepsilon^2 d_{\max}(j_1) \ge \frac{\varepsilon^2}2 \cdot d(j_1,j_2) \ge \frac{\varepsilon^2}2 \cdot \frac{1}{2(\varepsilon/p)^{4p}} d_{\av}(j_2) = \frac{p^{4p}}{4\varepsilon^{4p-2}}\cdot d_{\av}(j_2).$$

    By Markov inequality, we have $y(B'_{j_2})\ge1-\frac{d_{\av}(j_2)}{\varepsilon d_{\max}(j_2)}\ge 1-\frac{4\varepsilon^{4p-2}}{p^{4p}}\ge 1-4\varepsilon^{p+1}$.
\end{proof}

%% file: convertion.tex
\section{Iterative rounding algorithm with $k + O_{\varepsilon, p}(1)$ open facilities}
\label{sec:rounding_alg}

In this section, we describe the iterative algorithm that always opens $k + O_{\varepsilon, p}(1)$ facilities with probability $1 - O(\varepsilon)$. We analyze the number of open facilities in this section, while deferring the analysis of the connection cost to Section~\ref{sec:cost-new}.

Let $(x'', y'')$ be the solution obtained from the preprocessing step.  We define the global constants as follows. Let $c_1 = 12p^2$ so the solution $(x'', y'')$ is $\varepsilon^{c_1}$-integral. Let $c_5 = c_1 + 1 = 12p^2+1, c_2 = 48p^2+3 > 2(c_1 + c_5), c_3 = 132p^2+9 = 2(c_2+c_5) + c_1 + 1$ and $c_4 = 2c_1 + c_2 + c_5 +1= 84p^2+5$.

Let $\Delta := 1/\varepsilon^{c_1}$. It would be convenient for us to make copies of facilities in $F$, such that every copy corresponds to $\frac1\Delta = \varepsilon^{c_1}$ fraction of the facility. Therefore, we define $F^*$ to be the set containing $\Delta y''_i$ copies of facility $i$ for every $i \in F$.  Another global parameter we use throughout this and the next section is $L := \varepsilon^{c_2}$.  With $F^*$ defined, we do not need to use $x''$ and $y''$ most of the time from now on.

As in Section~\ref{sec:LMP}, we define a neighborhood graph: 
\begin{definition}[Neighborhood Graph]
    Given a set $F'$ containing copies of facilities in $F$ with $|F'| \geq \Delta$, the neighborhood graph $G = (F', E)$ is defined using the following procedure. Let $E \gets \emptyset$ initially. For every $i \in F'$, we take the $\Delta$ nearest facilities of $i$ in $F'$ (including $i$ itself), and add edges from these facilities to $i$ to $E$.
\end{definition}

Therefore, for the neighbor graph $G$ of $F'$, every facility $i \in F'$ has $\deg_G^-(i) = \Delta$. %

\begin{algorithm}[h]
    \caption{Iterative Rounding Algorithm with $k + O_{\varepsilon, p}(1)$ Open Facilities}
    \label{alg:iter-k}
    $F' \gets F^*$, $F_{\force} \gets \emptyset$\;
    \For{$t \gets 1$ to $T$, for some $T= \Theta\left(\frac{\log(1/\varepsilon)}{\varepsilon^{c_1+c_2}}\right)$}
    {
        create neighborhood graph $G = (F', E)$ for $F'$, define $F^+, F^-$ and $F^0$ as in text\; \label{line:iter-k-F+-0}
        run either unbalanced-update$()$ or balanced-update$()$, each with probability $1/2$, to update $F'$ and $F_\force$ \label{line:iter-k-update}
    }
    for every $i \in F$ do: let $\bar y_i \gets \frac{1}{\Delta}(\text{number of copies of $i$ in $F'$})$\; \label{line:iter-k-bary-F'}
    for every $i \in F_{\force} \subseteq F$ do: let $\bar y_i \gets 1$\; \label{line:iter-k-bary-F-Force} 
    treating $\bar y$ as a fractional solution to a weighted $k$-center problem, and use the $3$-approximation to round it to an integral solution (more details described in text); \Return the integral solution. 
    \label{line:iter-k-O(1)-approx}
\end{algorithm}

The iterative rounding algorithm is given in Algorithm~\ref{alg:iter-k}. Each facility in $F'$ corresponds to $\frac1\Delta$ fraction of a facility, while $F_\force \subseteq F$ is the set of facilities which we force to open integrally.

For a given directed graph $G' = (F'', E')$ (which is not necessarily a neighborhood graph), and $i \in F''$,
we define the \emph{imbalance} of any $i \in F''$ as $\imb_{G'}(i) := \frac{\deg_{G'}^-(i) - \deg_{G'}^+(i)}{\Delta}$. Indeed, the definition is relevant only when $\deg_{G'}^-(i) = \Delta$, in which case we have $\imb_{G'}(i) = 1 - \frac{\deg_{G'}^+(i)}{\Delta}$.

In Line~\ref{line:iter-k-F+-0}, we define $F^+$, $F^0$ and $F^-$ to be the sets of facilities in $i \in F'$ with positive, zero, and negative $\imb_{G'}(i)$ respectively. Thus, $F^+, F^0$ and $F^-$ form a partition of $F'$.  Then in Line~\ref{line:iter-k-update}, we run either unbalanced-update (Algorithm~\ref{alg:update-unbalanced}) or balanced-update (Algorithm~\ref{alg:update-balanced}), each with probability 1/2. In both procedures, we shall choose a set $I$ of facilities; this is as opposed to the LMP algorithm, which only chooses one facility in each iteration. As the names suggest, unbalanced-update chooses facilities in $F^+ \cup F^-$, while balanced-update chooses facilities in $F^0$. %
We now describe the two procedures separately.

\subsection{Unbalanced update (Algorithm~\ref{alg:update-unbalanced})}

\begin{algorithm}
    \caption{unbalanced-update()}
    \label{alg:update-unbalanced}
    let $A := \frac1\Delta\sum_{i \in F^+} \imb_G(i) = \frac1\Delta\sum_{i \in F^-}  |\imb_G(i)|$\; \label{line:update-unbalanced-A}
    \If{$A = O(1 / \varepsilon^{c_3})$ for some large enough hidden constant in $O(\cdot)$ notation \label{line:update-unbalanced-check-A}} 
    {
        for every $i^\circ$ with at least one copy in $F^+ \cup F^-$ do: $F_\force \gets F_\force \cup \{i^\circ\}$\; \label{line:update-unbalanced-force}
        $F' \gets F' \setminus (F^+ \cup F^-)$\;
        \Return\;
    }
    $R \gets \{i \in F^-: |\imb_G(i)| \geq A\varepsilon^{c_3}\}$\;
    add $\{i \in F:R \text{ contains at least 1 copy of $i$}\}$ to $F_\force$\; \label{line:update-unbalanced-force-1}
    remove $R$ from $F'$ and $G$\; \label{line:update-unbalanced-remove}
    create a new $G' := (F' \biguplus F_\mathrm{fict}, E \biguplus E_\mathrm{fict})$ as follows. Start from $G' = G = (F', E)$, $F_\mathrm{fict} = \emptyset$ and $E_\mathrm{fict} = \emptyset$. For every facility $i \in F$ such that there is at least one copy of $i$ in $F'$ and the in-degree of these copies in $G$ is $a < \Delta$, we add $\Delta - a$ \emph{fictitious facilities} to $F_\mathrm{fict}$, and \emph{fictitious edges} to $E_\mathrm{fict}$ from each of these facilities to each copy of $i$\;
    \label{line:update-unbalanced-fict}
    $I \gets \emptyset$\;
    for every $i \in F^+$, do: add $i$ to $I$ with probability $2L/\Delta$\; \label{line:update-unbalanced-choose-F+}
    for every $i \in (F^- \setminus R)\cup F_{\mathrm{fict}}$, do: add $i$ to $I$ with probability $2L \cdot (1 + \varepsilon^{c_5})/\Delta$\; \label{line:update-unbalanced-choose-F-}
    remove all out-neighbors of $I$ in $G'$ from $F'$\; \label{line:remove-out-neighbors}
    for every $i^\circ \in F$ with at least one copy in $I$: add $\Delta$ copies of $i^\circ$ to $F'$ \label{line:update-unbalanced-add-new-copies}
\end{algorithm}

Now we discuss unbalanced-update, which is defined in Algorithm~\ref{alg:update-unbalanced}. %
We need to explain Line~\ref{line:update-unbalanced-fict}. In Line~\ref{line:update-unbalanced-remove}, we remove $R$ from $F'$ and $G$. So some facilities in $F'$ may have in-degree less than $\Delta$. To fix the issue, we add a set $F_{\mathrm{fict}}$ of facilities to $G$, each with no incoming edges and 1 out-going edge leading to a facility $i \in F'$ with $\deg_G^-(i) < \Delta$. We add the facilities and edges so that in resulting graph $G'$, every $i \in F'$ has $\deg_{G'}^-(i) = \Delta$. We do not need to define distances for fictitious facilities, as they will never be open.

 In Line~\ref{line:update-unbalanced-choose-F+} and \ref{line:update-unbalanced-choose-F-}, we add facilities in $F'$ to $I$, with facilities in $F^+$ and $F^-$ added  with different probabilities. Notice that $F^+$ and $F^-$ were defined in Line~\ref{line:iter-k-F+-0} of Algorithm~\ref{alg:iter-k}, and they do not change as $F'$ does.

\medskip

In the rest of this section, we give a concentration bound in the net increase in $|F'|$ due to Line~\ref{line:remove-out-neighbors} and \ref{line:update-unbalanced-add-new-copies}. Let $X = (X_i)_{i \in (F'\setminus F^0)\cup F_\mathrm{fict}}$ be the indicator vector for $I$, where $F'$ and $F_\mathrm{fict}$ are w.r.t to the moment after we run Line~\ref{line:update-unbalanced-fict}. Let $Z(X)$ denote $1/\Delta$ times the increment of $|F'|$ after we update $F'$ in Line~\ref{line:remove-out-neighbors} and Line~\ref{line:update-unbalanced-add-new-copies}.

\begin{lemma}\label{lem:Lipschitz}
    The function $Z(X)$ satisfies the bounded differences property with bounds $\kappa_i$:
    \begin{align*}
        \kappa_i = 
        \begin{cases} 
            \max\{|\imb_{G'}(i)|, 1\} & \text{if } i \in F' \setminus F^0, \\
            \deg_{G'}^+(i)/\Delta & \text{if } i \in F_\mathrm{fict}.
        \end{cases}
    \end{align*}  
\end{lemma}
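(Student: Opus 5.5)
To prove the bounded differences property, I will first write $Z(X)$ explicitly as a function of the indicator vector and then bound the effect of flipping a single coordinate. Let $N(X)$ be the set of facilities in $F'$ that have at least one in-neighbor (in $G'$) lying in $I$. Let $O(X)$ be the set of $i^\circ \in F$ having at least one copy in $I$. Line~\ref{line:remove-out-neighbors} removes exactly $N(X)$, and Line~\ref{line:update-unbalanced-add-new-copies} adds $\Delta$ copies for each $i^\circ \in O(X)$. Hence
\[
Z(X) = |O(X)| - \frac{|N(X)|}{\Delta}.
\]
Fictitious facilities are never in $O(X)$, since they are not copies of real facilities. Moreover, $X$ is indexed only by facilities in $(F'\setminus F^0)\cup F_{\mathrm{fict}}$.

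Now fix a coordinate $i$ and compare two vectors $X, X'$ differing only at $i$; say $X_i=1$ and $X'_i = 0$. The difference splits into two monotone parts.

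\emph{The $O$-part.} The set $O$ can change by at most one element, namely the facility $i^\circ$ of which $i$ is a copy, and only when $i$ is real. So $|O(X)| - |O(X')| \in \{0,1\}$ for $i \in F'\setminus F^0$, and it equals $0$ for $i\in F_{\mathrm{fict}}$.

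\emph{The $N$-part.} $N(X) \supseteq N(X')$, and $N(X)\setminus N(X') \subseteq \delta^+_{G'}(i)$. Hence $0 \le (|N(X)|-|N(X')|)/\Delta \le \deg^+_{G'}(i)/\Delta$.

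Since both parts are nonnegative and enter $Z$ with opposite signs, I obtain
\[
|Z(X)-Z(X')| \le \max\Bigl\{\,[\,i \text{ real}\,],\ \deg^+_{G'}(i)/\Delta\Bigr\},
\]
rather than their sum. For $i \in F_{\mathrm{fict}}$ this is exactly $\deg^+_{G'}(i)/\Delta$, as claimed.

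For $i \in F'\setminus F^0$, it remains to bound $\max\{1, \deg^+_{G'}(i)/\Delta\}$ by $\max\{|\imb_{G'}(i)|,1\}$. Out-edges of a real facility in $G'$ are exactly its $G$-edges to facilities still in $F'$, because fictitious edges only leave fictitious facilities. In-degrees of surviving real facilities are restored to $\Delta$. Thus $\imb_{G'}(i) = 1 - \deg^+_{G'}(i)/\Delta$.

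If $\deg^+_{G'}(i) \le \Delta$, the bound is $1$. Otherwise $\deg^+_{G'}(i)/\Delta = 1 + |\imb_{G'}(i)|$, so it is $1+|\imb_{G'}(i)| \le 2\max\{|\imb_{G'}(i)|,1\}$. This exceeds the stated bound by a factor of up to $2$, so a finer argument is needed here.

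The finer argument for this case is where I expect the only real subtlety. It uses the fact that $i$ itself always lies in $\delta^+_{G'}(i)$: the self-loop is present because $i$ is among its own $\Delta$ nearest facilities, and $i$ has not been removed. So in the event where $O$ gains the element $i^\circ$, at least one copy counted in $N$ is $i$ itself, or another copy of $i^\circ$, which is removed and then replaced. The copies of $i^\circ$ removed by $i$ are re-added among the $\Delta$ new copies.

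Carefully, the net change is then
\[
1 - \frac{|N(X)\setminus N(X')|}{\Delta} \in \Bigl[\,1-\frac{\deg^+_{G'}(i)}{\Delta},\ 1\Bigr] = \bigl[\imb_{G'}(i),\, 1\bigr].
\]
Its absolute value is therefore at most $\max\{|\imb_{G'}(i)|,1\}$. The key point is that the $O$-gain and the $N$-loss occur together when flipping a real $i$, so their signed sum lies in this interval. I would present the proof in this form: write $Z(X)-Z(X')$ as one signed expression, show that $O$ gains exactly $0$ or $1$ element, and show that when it gains $0$ the loss is only partial.

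The one remaining check is the case where $i^\circ$ was already in $O(X')$. There the difference is $-|N(X)\setminus N(X')|/\Delta \in [-\deg^+_{G'}(i)/\Delta, 0]$. Since $i \in N(X')$ already in this case, the self-loop contributes nothing new, and the magnitude is at most $(\deg^+_{G'}(i)-1)/\Delta$. I must verify that this is at most $\max\{|\imb_{G'}(i)|,1\}$; it holds because $\deg^+_{G'}(i)/\Delta \le 1+|\imb_{G'}(i)|$ can be sharpened to the needed bound by the same interval argument.
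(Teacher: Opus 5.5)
Your decomposition $Z(X)=|O(X)|-|N(X)|/\Delta$ is correct, and so are the fictitious case and the case where flipping $X_i$ adds the original facility $i^\circ$ to $O$ (change in $[\imb_{G'}(i),1]$).

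\textbf{The gap.} It is in the remaining case, where some other copy of $i^\circ$ is already in $I$. There is then no $+1$ to offset the removals. Even granting $i\in N(X')$, your bound for $i\in F^-$ is
$(\deg^+_{G'}(i)-1)/\Delta=1+|\imb_{G'}(i)|-1/\Delta$.
This exceeds $\max\{|\imb_{G'}(i)|,1\}$ as soon as $\deg^+_{G'}(i)\ge\Delta+2$, and the ``same interval argument'' has nothing to work with here.

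\textbf{The case cannot be rescued in general.} Copies of one facility share their in-neighbors but not necessarily their out-neighbors: a facility whose $\Delta$ nearest facilities cut through the copies of $i^\circ$ takes only some of them. Here is a concrete instance.
\begin{itemize}
\item Let $a$ have two copies $a_1,a_2$ in $F'$.
\item Let $v,w$ have $\Delta-1$ copies each, with $d(a,v)=d(a,w)=1$ and $d(v,w)=2$; put everything else far away.
\item Let every copy of $v$ and of $w$ break its tie in favor of $a_1$.
\end{itemize}
Then $\deg^+_{G'}(a_1)=2\Delta$, so $\imb_{G'}(a_1)=-1$. Also $a_1\notin R$, since $A\varepsilon^{c_3}>1$ once the check on $A$ fails. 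Meanwhile $\delta^+_{G'}(a_2)=\{a_1,a_2\}$ and $a_2\in F^+$. Set $X_{a_2}=1$ and all other coordinates to $0$. Flipping $X_{a_1}$ changes $Z$ from $1-2/\Delta$ to $-1$. That is a jump of $2-2/\Delta>1=\max\{|\imb_{G'}(a_1)|,1\}$.

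\textbf{What does hold.} The stated constant is not attainable in this generality. For every real coordinate, what holds is exactly your first two-part bound:
$\kappa_i=\max\{1,\deg^+_{G'}(i)/\Delta\}=\max\{1,1-\imb_{G'}(i)\}\le 2\max\{|\imb_{G'}(i)|,1\}$.
This is also all the paper's one-line proof establishes: at most $+1$ from Line~\ref{line:update-unbalanced-add-new-copies}, and at most $-\deg^+_{G'}(i)/\Delta$ from Line~\ref{line:remove-out-neighbors}. Its final appeal to the definition of imbalance silently drops the same factor $2$ you spotted for $i\in F^-$.

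The loss is harmless. It multiplies the denominator of the McDiarmid bound in the proof of Lemma~\ref{lem:probability-bound} by at most $4$, which is absorbed into $\Theta(1/\varepsilon)$. You should stop after the two-part bound and carry the factor $2$, rather than try to prove the sharper constant.
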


\begin{proof}
    Adding one facility $i$ to $I$ can result in $\Delta$ increment in $|F'|$ due to Line~\ref{line:update-unbalanced-add-new-copies}, and leads to at most $\deg_{G'}^+(i)$ decrement in $|F'|$ due to Line~\ref{line:remove-out-neighbors}. For fictitious facility $i\in F_{\mathrm{fict}}$, it can lead to at most $\deg_{G'}^+(i)$ decrement due to Line~\ref{line:remove-out-neighbors}. The lemma follows by the definition of $\imb_{G'}(i)$. 
\end{proof}

We aim to ensure that the net increment from the randomized choices does not exceed $O_{\varepsilon, p}(1)$ with exponentially high probability.

\begin{lemma}\label{lem:probability-bound}
    For a sufficiently small $\varepsilon$, we have:
    \begin{align*}
        \Pr[Z(X)\ge 1/\varepsilon^{c_4}] \le \exp(-\Theta(1/\varepsilon^{})).
    \end{align*}
\end{lemma}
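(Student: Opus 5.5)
We plan to apply McDiarmid's bounded differences inequality to $Z(X)$, with the Lipschitz constants $\kappa_i$ from Lemma~\ref{lem:Lipschitz}. This requires two separate estimates: an upper bound on $\E[Z(X)]$, and an upper bound on $\sum_i \kappa_i^2$. We only treat the nontrivial branch, where $A$ exceeds the threshold in Line~\ref{line:update-unbalanced-check-A}; in the other branch $Z$ is nonpositive.

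\textbf{Expectation.} Write $Z(X) = |I\text{'s distinct facilities}| - \frac1\Delta|\text{removed out-neighbors}|$, working in the graph $G'$. We compare with the linearized quantity $\sum_{i} X_i \imb_{G'}(i)$, where fictitious facilities have in-degree $0$ and imbalance $-\deg^+/\Delta$.

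The overlaps are corrections of second order in $L$. An out-neighbor $v$ is removed with probability $1-\prod_{u\in\delta^-(v)}(1-q_u)$, where $q_u$ is the selection probability. Since every $v\in F'$ has in-degree exactly $\Delta$ in $G'$, this probability is at least $\sum_u q_u - \binom{\Delta}{2}\max q^2 \ge \sum_u q_u - O(L^2)$. Also, several copies of one $i^\circ$ in $I$ add only $\Delta$ copies, which only helps.

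Summing, the expected decrement in $|F'|/\Delta$ is at least $\frac1\Delta\sum_v(\sum_{u\to v}q_u) - O(L^2|F'|/\Delta)$. The expected increment is at most $\sum_u q_u$. The linear terms then give
\[
\frac{2L}{\Delta}\Big(\sum_{F^+}\imb - (1+\varepsilon^{c_5})\sum_{F^-\setminus R,\,F_{\mathrm{fict}}}|\imb|\Big).
\]
Since $R$ plus the fictitious facilities exactly account for the negative imbalance mass $\Delta A$, this is about $-2L\varepsilon^{c_5}A$.

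The quadratic loss $O(L^2|F'|/\Delta)$ must be charged against $2L\varepsilon^{c_5}A$. This is the main obstacle. Every facility in $F^+\cup F^-$ has $|\imb|\ge 1/\Delta$, so $A$ is at least $\frac{1}{\Delta^2}$ times the number of such facilities. However, $F^0$-facilities that are out-neighbors also contribute to the loss. I would first try to bound the loss only over out-neighbors of $F^+\cup F^-\cup F_{\mathrm{fict}}$. Each selected facility has out-degree at most $O(\Delta\max(1,|\imb|))$, and $|\imb|< A\varepsilon^{c_3}$ outside $R$. This gives a loss of $O(L^2\Delta^{O(1)}A)$, which is at most $L\varepsilon^{c_5}A$ because $L=\varepsilon^{c_2}$ with $c_2>2(c_1+c_5)$. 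Hence $\E[Z]\le 0$, or at most $O(1)$.

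\textbf{Variance proxy.} We need $\sum_i \kappa_i^2$. For a non-fictitious facility $i\notin R$, $\kappa_i\le\max(1,A\varepsilon^{c_3})$. For a fictitious facility, the out-degree is $1$, so $\kappa_i=1/\Delta$. The number of relevant facilities is $O(\Delta^2 A)$, plus $|F_{\mathrm{fict}}|\le \Delta\cdot|R|\cdot\Delta$, which is also $O(\Delta^2A)$. Therefore
\[
\sum\kappa_i^2\le O(\Delta^2A)\cdot A^2\varepsilon^{2c_3} = O(A^3\varepsilon^{2c_3-2c_1}).
\]
A sharper and better route is to use the weighted form $\sum_i\kappa_i\cdot|\imb_i|\Delta\le \Delta^2 A\max\kappa$, which gives $O(\Delta^2A^2\varepsilon^{c_3})$.

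\textbf{Conclusion.} McDiarmid, applied only to the random choices (Bernstein-type would be tighter since probabilities are $O(L/\Delta)$; I would use the Bernstein/Freedman form with variance $\sum q_i\kappa_i^2$), yields a deviation $t=1/\varepsilon^{c_4}$. The variance is then $\sum q_i\kappa_i^2\le (L/\Delta)\Delta^2A\cdot\max(1,A\varepsilon^{c_3})$. The choice of constants $c_3=2(c_2+c_5)+c_1+1$ and $c_4=2c_1+c_2+c_5+1$ should make the expectation drift $-L\varepsilon^{c_5}A$ dominate the standard deviation in the large-$A$ regime. In the regime $A=\Theta(1/\varepsilon^{c_3})$ with $t=\varepsilon^{-c_4}$, the exponent becomes $\Omega(1/\varepsilon)$, giving $\Pr[Z\ge\varepsilon^{-c_4}]\le\exp(-\Theta(1/\varepsilon))$.
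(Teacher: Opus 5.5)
Your final plan is correct, but it takes a different route from the paper at the concentration step. The switch you make at the end, from McDiarmid to Bernstein/Freedman, is essential rather than a refinement.

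\textbf{How the paper handles concentration.} The paper stays with plain McDiarmid. It charges each facility's $\kappa_i^2$ to that facility's own share $|\imb(i)|$ of the drift. If $|\imb(i)|<1$, then $\kappa_i^2\le 1\le \Delta|\imb(i)|$, and this is paid by the cross term of the $\varepsilon^{-c_4}$ slack with the drift; this is where $c_4=2c_1+c_2+c_5+1$ comes from. If $|\imb(i)|\ge 1$, then $i\in F^-\setminus R$ and $\kappa_i^2< A\varepsilon^{c_3}|\imb(i)|$, and this is paid by the square of the drift; this is where $c_3=2(c_2+c_5)+c_1+1$ comes from. In effect $\sum_i\kappa_i^2\lesssim \Delta^2A+\Delta A^2\varepsilon^{c_3}$, and the exponent comes out as exactly $\Theta(1/\varepsilon)$.

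\textbf{Why your McDiarmid version fails.} Your proxy $O(\Delta^2A^2\varepsilon^{c_3})$ multiplies the number of all candidates by the largest $\kappa_i$. For large $A$ it is a factor $\Delta$ too weak. With plain McDiarmid and $t\approx\varepsilon^{c_2+c_5}A$, it gives exponent $\varepsilon^{2c_1+2c_2+2c_5-c_3}=\varepsilon^{c_1-1}\to 0$.

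\textbf{Why your Freedman version works.} Each Doob increment has the form $(X_i-q_i)\delta_i$ with $|\delta_i|\le\kappa_i$. So increments are bounded by $M=\max_i\kappa_i=O(A\varepsilon^{c_3})$ in the branch where $A\varepsilon^{c_3}$ exceeds a large constant, and the conditional variances are at most $q_i\kappa_i^2$. This gives $V=O(L/\Delta)\cdot\max_i\kappa_i\cdot\sum_i\kappa_i=O(L\Delta A^2\varepsilon^{c_3})$. Using only the drift $t\ge\Omega(\varepsilon^{c_5}LA)$, you get $t^2/V=\Omega(\varepsilon^{-(c_2+1)})$ and $t/M=\Omega(\varepsilon^{-(c_1+c_2+c_5+1)})$, uniformly in $A$.

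\textbf{What each route buys.} The paper's route uses an off-the-shelf inequality but needs exactly tuned per-facility bookkeeping. Yours uses a martingale Bernstein bound, gaining a factor $\max_i q_i=O(L/\Delta)$ in the variance. That tolerates your crude variance estimate, needs no $\varepsilon^{-c_4}$ slack, and gives the much stronger tail $\exp(-\varepsilon^{-\Omega(c_2)})$. You should write this arithmetic out rather than say the constants ``should'' work. Note that it is the negative drift, not merely $\E[Z]\le O(1)$, that controls large $A$, since $V$ grows like $A^2$.

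\textbf{Expectation step.} Bound the quadratic loss by a sum, not a maximum. At most $\sum_u\deg^+_{G'}(u)$ out-neighbours are at risk, each losing $O(L^2)$. Moreover $\frac1\Delta\sum_u\deg^+_{G'}(u)\le\sum_{i\in F^+\cup F^-}(1+|\imb(i)|)=O(\Delta^2A)$. Hence the loss is $O(L^2\Delta^2A)=o(\varepsilon^{c_5}LA)$, because $c_2>2c_1+c_5$. If you instead bound each out-degree by $\Delta\cdot A\varepsilon^{c_3}$, as your remark about $|\imb|<A\varepsilon^{c_3}$ suggests, you get an $A^2$ term the drift cannot absorb. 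The paper instead uses $1-\prod_u(1-q_u)\ge q(\delta^-(v))(1-\frac12 q(\delta^-(v)))$, a multiplicative $1-O(L)$ loss that needs only $c_2>c_1+c_5$.
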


\begin{proof}
    To begin, we bound the expectation of $Z(X)$. Let $h(X)$ denote $1/\Delta$ times the number of facilities removed in Line~\ref{line:remove-out-neighbors} and $g(X)$ denote $1/\Delta$ times the number of facilities added in Line~\ref{line:update-unbalanced-add-new-copies}. By definition, we have $Z(X)=g(X)-h(X)$.

    Let $q_i$ denote the probability that facility $i$ is added to $I$, and let $q(S) = \sum_{i\in S} q_i$. For simplicity, for each removed facility $i \in R$, we define $q_i := 2L\cdot (1+\varepsilon^{c_5})/\Delta$ and $\imb_{G'}(i) := 1 - \frac{\deg_G^+(i)}{\Delta}$. This definition gives $A = \frac{1}{\Delta}\sum_{i \in F^+} \imb_{G'}(i) = \frac{1}{\Delta}\sum_{i \in F^-} |\imb_{G'}(i)|$ and $\deg_G^+(i) = \Delta(1+|\imb_{G'}(i)|)$.
    
    For notational convenience, we define $F'' := (F^+\cup F^-)\setminus R$. We have $\mathbb{E}[g(X)] = (1/\Delta)\cdot \sum_{i\in F''} q_i \Delta = \sum_{i\in F''}q_i$.
    
    A facility will be removed in Line~\ref{line:remove-out-neighbors} if at least one of its in-neighbors is added to $I$. We have
    \begin{align*}
        \mathbb{E}[h(X)] &= \sum_{i\in F'} (1/\Delta) \cdot \left(1-\prod_{j\in \delta_{G'}^-(i)}(1-q_j)\right) \\
        &\ge \sum_{i\in F'} (1/\Delta)\cdot  \left( q(\delta^-_{G'}(i)) - \frac{1}{2}(q(\delta_{G'}^-(i)))^2 \right) \\
        &\ge \sum_{i\in F'}(1/\Delta)\cdot  q\left(\delta_{G'}^-(i)\right) \left(1-\Theta(\varepsilon^{c_2})\right) \tag{$q\left(\delta_{G'}^-(i)\right) \le 2L\cdot (1+\varepsilon^{c_5})$}\\
        &= \left(1-\Theta(\varepsilon^{c_2})\right) \sum_{i\in F''\cup F_{\mathrm{fict}}} (\deg_{G'}^+(i)/\Delta) \cdot q_i.
    \end{align*}
    
    By definition, the expected net increment is bounded by 
    \begin{align}
        \quad& \mathbb{E}[Z(X)] = \mathbb{E}[g(X)]-\mathbb{E}[h(X)] \nonumber\\
        &= \sum_{i\in F''} q_i -\left(1-\Theta(\varepsilon^{c_2})\right) \sum_{i\in F''\cup F_{\mathrm{fict}}} (\deg^+_{G'}(i)/\Delta) \cdot q_i \nonumber\\
        &= \sum_{i\in F''}q _i - \sum_{i\in F''\cup F_{\mathrm{fict}}} (\deg_{G'}^+(i)/\Delta) \cdot q_i +\Theta(\varepsilon^{c_2})\sum_{i\in F''\cup F_{\mathrm{fict}}}(\deg_{G'}^+(i)/\Delta)\cdot q_i \nonumber \\
        &=\sum_{i\in F^+} \imb_{G'}(i)q_i - \sum_{i\in F^-\setminus R} |\imb_{G'}(i)|q_i \nonumber\\
        &\quad - \sum_{i\in F_\mathrm{fict}} (\deg_{G'}^+(i)/\Delta)\cdot q_i + \Theta(\varepsilon^{c_2})\sum_{i\in F''\cup F_{\mathrm{fict}}}(\deg_{G'}^+(i)/\Delta)\cdot q_i \nonumber\\
        &= \sum_{i\in F^+} \imb_{G'}(i)q_i - \sum_{i\in F^-} |\imb_{G'}(i)|q_i +\sum_{i\in R}|\imb_{G'}(i)|q_i \nonumber\\
        &\quad - \sum_{i\in F_\mathrm{fict}} (\deg_{G'}^+(i)/\Delta)\cdot q_i + \Theta(\varepsilon^{c_2})\sum_{i\in F''\cup F_{\mathrm{fict}}}(\deg_{G'}^+(i)/\Delta)\cdot q_i \nonumber\\
        &= -2\varepsilon^{c_5}A\cdot L + \sum_{i\in R}|\imb_{G'}(i)|q_i \nonumber\\
        &\quad - \sum_{i\in F_\mathrm{fict}} (\deg_{G'}^+(i)/\Delta)\cdot q_i + \Theta(\varepsilon^{c_2})\sum_{i\in F''\cup F_{\mathrm{fict}}}(\deg_{G'}^+(i)/\Delta)\cdot q_i \label{ex-ineq5}\\
        &\le -2\varepsilon^{c_5}A\cdot L + \Theta(\varepsilon^{c_2})\sum_{i\in F''\cup F_{\mathrm{fict}}}(\deg_{G'}^+(i)/\Delta)\cdot q_i \label{ex-ineq2}\\
        &= -2\varepsilon^{c_5}A\cdot L + \Theta(\varepsilon^{c_2})\left(\sum_{i\in F''} (1-\imb_{G'}(i))q_i + \sum_{i\in R} (1-\imb_{G'}(i))q_i \right) \label{ex-ineq3}\\
        &\le -2\varepsilon^{c_5}A\cdot L + \Theta(\varepsilon^{c_2})\sum_{i\in F^+\cup F^-} (1+|\imb_{G'}(i)|)q_i \nonumber\\
        &\le -2\varepsilon^{c_5}A\cdot L + \Theta(\varepsilon^{c_2})\sum_{i\in F^+\cup F^-} \left(1/\varepsilon^{c_1}|\imb_{G'}(i)| + |\imb_{G'}(i)|\right)q_i \label{ex-ineq4}\\
        &\le -2\varepsilon^{c_5}A\cdot L + \Theta(\varepsilon^{c_2-c_1})\sum_{i\in F^+\cup F^-} |\imb_{G'}(i)|q_i \nonumber\\
        &= -2\varepsilon^{c_5}A\cdot L + \Theta(\varepsilon^{c_2-c_1})A\cdot L \nonumber\\
        &= -2\varepsilon^{c_5}A\cdot L + o(\varepsilon^{c_5})A\cdot L \label{ex-ineq6}\\
        &= -\Theta(\varepsilon^{c_5}) A\cdot L. \nonumber
    \end{align}

    For \eqref{ex-ineq5}, substituting $q_i = 2L/\Delta$ for $i \in F^+$ and $q_i = 2L\cdot (1+\varepsilon^{c_5})/\Delta$ for $i \in F^-$ yields $\sum_{i\in F^+} \imb_{G'}(i)q_i - \sum_{i\in F^-} |\imb_{G'}(i)|q_i = 2A\cdot L - 2(1+\varepsilon^{c_5})A\cdot L = -2\varepsilon^{c_5}A \cdot L$. To see \eqref{ex-ineq2}, notice that fictitious facilities exactly restore the missing incoming edges for the out-neighbors of facilities in $R$. Therefore, we have $\sum_{i\in R}(\deg_{G}^+(i)/\Delta) \cdot q_i= \sum_{i\in F_{\mathrm{fict}}}(\deg_{G'}^+(i)/\Delta)\cdot q_i$, which implies $\sum_{i\in R}|\imb_{G'}(i)|q_i- \sum_{i\in F_\mathrm{fict}} (\deg_{G'}^+(i)/\Delta)\cdot q_i\le 0$. 
    For \eqref{ex-ineq3}, we rewrite the term by mapping $F_{\mathrm{fict}}$ back to $R$. Since $\sum_{i\in F_{\mathrm{fict}}} \deg_{G'}^+(i)=\sum_{i\in R} \deg_G^+(i) = \sum_{i\in R} \Delta(1-\imb_{G'}(i))$ and the probabilities $q_i$ are identically defined as $2L(1+\varepsilon^{c_5})/\Delta$ for all facilities in $R \cup F_{\mathrm{fict}}$,  we can replace $F_\mathrm{fict}$ by $R$.
    \eqref{ex-ineq4} used the fact that all facilities $i\in F^+ \cup F^-$ satisfy $|\imb_{G'}(i)| \ge \varepsilon^{c_1}$. Finally, \eqref{ex-ineq6} holds for $c_2 > c_1+c_5$.

    Lemma~\ref{lem:Lipschitz} establishes that the function $Z(X)$ satisfies the bounded differences property. Applying McDiarmid’s Inequality, we have

    \begin{align}
        \Pr\left[Z(X)\ge 1/\varepsilon^{c_4}\right] &=  \Pr\left[Z(X)-\E[Z(X)] \ge 1/\varepsilon^{c_4} -\E[Z(X)]\right] && \nonumber\\
        &\le\Pr[Z(X)-\E[Z(X)]\ge 1/\varepsilon^{c_4}+\Theta(\varepsilon^{c_5})A\cdot L]&& \nonumber \\
        &\le \exp\left(-\frac{2(1/\varepsilon^{c_4}+\Theta(\varepsilon^{c_5})A\cdot L)^2}{\sum_{i\in F''}\max\{|\imb_{G'}(i)|,1\}^2+\sum_{i\in F_\mathrm{fict}}\kappa_i^2}\right) && \nonumber \\
        &\le \exp\left(-\frac{2(1/\varepsilon^{c_4}+\Theta(\varepsilon^{c_5})A\cdot L)^2}{\sum_{i\in F''}\max\{|\imb_{G'}(i)|,1\}^2+\sum_{i\in R} 1+|\imb_{G'}(i)|}\right) && \label{inequ:fict-to-R} \\
        &=\exp\left(-\Theta(1/\varepsilon)\cdot \frac{\left(1/\varepsilon^{c_4+1} + \sum_{i \in F^-\cup F^+} \varepsilon^{c_1+c_2+c_5-1}|\imb_{G'}(i)|\right)^2}{\sum_{i\in F''} 1/\varepsilon^{3} \max\{|\imb_{G'}(i)|,1\}^2+\sum_{i\in R}  1/\varepsilon^3|\imb_{G'}(i)|}  \right).&& \label{inequ:bab-probability}
    \end{align}

    For \eqref{inequ:fict-to-R}, since each fictitious facility $i\in F_{\mathrm{fict}}$ satisfies $\deg^+_{G'}(i) \le \Delta$, which implies $\kappa_i^2 = (\deg_{G'}^+(i)/\Delta)^2 \le \deg_{G'}^+(i)/\Delta$. Thus, we have $\sum_{i\in F_\mathrm{fict}} \kappa_i^2 \le \sum _{i\in F_\mathrm{fict}} \deg^+_{G'}(i)/\Delta = \sum_{i\in R}\deg_G^+(i)/\Delta = \sum_{i\in R}1+|\imb_{G'}(i)|$. 
    
    Focus on the numerator of the second term in \eqref{inequ:bab-probability}, it is bounded below by

    \begin{align*}
        \sum_{i \in F^-\cup F^+} (1/\varepsilon^{c_4+1} + A\varepsilon^{c_2+c_5-1}) \times \varepsilon^{c_1+c_2+c_5-1}|\imb_{G'}(i)|.
    \end{align*}

    Next, we show the second term in \eqref{inequ:bab-probability} is not less than $1$ by comparing the $i$-th term with the corresponding term in the denominator.
    
    For $i\in F''$, there are two cases:
    \begin{itemize}

    \item $|\imb_{G'}(i)|\ge 1$: Since $|\imb_{G'}(i)|$ is less than $A\varepsilon^{c_3}$, we have 

    \begin{align*}
        (1/\varepsilon^{c_4+1} + A\varepsilon^{c_2+c_5-1}) \times \varepsilon^{c_1+c_2+c_5-1}|\imb_{G'}(i)| &\ge A\varepsilon^{2(c_2+c_5-1)+c_1}\times |\imb_{G'}(i)|\\
        &\ge A\varepsilon^{c_3-3}\times |\imb_{G'}(i)| \tag{$2(c_2+c_5)+c_1+1\le c_3$}\\
        &\ge 1/\varepsilon^3 \cdot |\imb_{G'}(i)|^2.
    \end{align*}

    \item $|\imb_{G'}(i)|< 1$: Since all facilities $i\in F''$ satisfy $|\imb_{G'}(i)|\ge \varepsilon^{c_1}$, we have 
    \begin{align*}
        (1/\varepsilon^{c_4+1} + A\varepsilon^{c_2+c_5-1}) \times \varepsilon^{c_1+c_2+c_5-1}|\imb_{G'}(i)|&\ge 1/\varepsilon^{c_4+1}\cdot \varepsilon^{c_1+c_2+c_5-1}\cdot \varepsilon^{c_1} \\
        &= 1/\varepsilon^{2+c_4-c_2-c_5-2c_1}\\
        & \ge 1/\varepsilon^{3}.  \tag{$2c_1+c_2+c_5 +1\le c_4$}
    \end{align*}
    \end{itemize}

    For $i\in R$, we have
    \begin{align*}
        (1/\varepsilon^{c_4+1} + A\varepsilon^{c_2+c_5-1}) \times \varepsilon^{c_1+c_2+c_5-1}|\imb_{G'}(i)|&\ge \varepsilon^{c_1+c_2+c_5-c_4-2}|\imb_{G'}(i)|\\
        &\ge 1/\varepsilon^3\cdot|\imb_{G'}(i)| \tag{$c_1+c_2+c_5 +1\le c_4$}
    \end{align*}
    Therefore, the second term in \eqref{inequ:bab-probability} is not less than 1. We obtain

    \begin{align*}
        \Pr\left[Z(X)\ge 1/\varepsilon^{c_4}\right] &\le  \exp\left(-\Theta(1/\varepsilon)\cdot \frac{\left(1/\varepsilon^{c_4+1} + \sum_{i \in F^-\cup F^+} \varepsilon^{c_1+c_2+c_5-1}|\imb_{G'}(i)|\right)^2}{\sum_{i\in F''} 1/\varepsilon^{3} \max\{|\imb_{G'}(i)|,1\}^2+\sum_{i\in R}  1/\varepsilon^3|\imb_{G'}(i)|}  \right)\le \exp(-\Theta(1/\varepsilon)).
    \end{align*}
 which proves the lemma.
\end{proof}

\subsection{Balanced update (Algorithm~\ref{alg:update-balanced})}

\begin{algorithm}
    \caption{balanced-update()}
    \label{alg:update-balanced}
    $I \gets \emptyset$\;
    \For{every $i \in F^0$, using a random order of $F^0$}
    {
        if $\delta^+_G(i)$ is disjoint from $\delta^+_G(i')$ for every $i' \in I$, then add $i$ to $I$ with probability $\frac{2(1+\varepsilon^{c_5})L}{\Delta}$
    }
    remove all out-neighbors of $I$ in $G$ from $F'$\;
    for every $i^\circ \in F$ with at least one copy in $I$: add $\Delta$ copies of $i^\circ$ to $F'$ \label{line:update-balanced-add-new-copies}
\end{algorithm}

Now we move to the balanced-update procedure (Algorithm~\ref{alg:update-balanced}), which is relatively simpler than the unbalanced-update procedure.  We say two facilities $i$ and $i'$ in $F^0$ conflict with each other, if $\delta^+_G(i) \cap \delta^+_G(i') \neq \emptyset$. So, the set $I$ of facilities chosen is conflict-free. Moreover, as every $i \in I$ has $\imb_G(i) = 0$, we have the following claim: 
\begin{claim}
    The procedure balanced-update does not change $|F'|$ . %
\end{claim}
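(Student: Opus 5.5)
The claim follows from a direct count: I will compute exactly how many copies the procedure removes and how many it adds, and show the two numbers coincide. The counting has two parts.

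\emph{Removals.} Since the greedy loop only adds $i$ to $I$ when $\delta^+_G(i)$ is disjoint from $\delta^+_G(i')$ for every $i'$ already in $I$, the sets $\delta^+_G(i)$ for $i \in I$ are pairwise disjoint. Therefore the set of out-neighbors of $I$ removed from $F'$ is the disjoint union $\biguplus_{i \in I}\delta^+_G(i)$, and exactly $\sum_{i \in I}\deg^+_G(i)$ copies are removed. For every $i \in I \subseteq F^0$ we have $\deg^-_G(i) = \Delta$ by the neighborhood-graph construction and $\imb_G(i) = 0$, so $\deg^+_G(i) = \Delta$. Hence exactly $\Delta |I|$ copies are removed.

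\emph{Additions.} Line~\ref{line:update-balanced-add-new-copies} adds $\Delta$ copies for each distinct original facility $i^\circ \in F$ that has a copy in $I$. To get $\Delta|I|$ additions, I need to check that no two members of $I$ are copies of the same $i^\circ$. This step is the only subtle point of the proof.

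I expect to handle it as follows. Suppose $i, i' \in F'$ are two copies of the same facility. Both are at distance $0$ from each other and from all other copies of that facility. If ties in the neighborhood-graph construction are broken consistently (e.g.\ by a fixed global order on $F^*$), then each of $i, i'$ lies among the $\Delta$ nearest facilities of the other. In particular, $i$ and $i'$ are both in-neighbors of $i$, i.e.\ both $i$ and $i'$ have $i$ in their out-neighborhoods. So $\delta^+_G(i) \cap \delta^+_G(i') \ni i$, the two copies conflict, and at most one of them can enter $I$.

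I would verify that the tie-breaking indeed places copies of the same point together. This holds because distance $0$ is minimal and, after preprocessing, a facility has at most $\Delta$ copies. If the tie-breaking were not consistent, I would instead interpret Line~\ref{line:update-balanced-add-new-copies} as adding $\Delta$ copies per element of $I$. The conclusion is unaffected either way.

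Combining the two parts, $|F'|$ decreases by $\Delta|I|$ and increases by $\Delta|I|$, so it is unchanged. Note that the claim is deterministic: it holds for every outcome of the random order and coin flips.
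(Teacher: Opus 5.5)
Your count is the paper's argument. Conflict-freeness of $I$ makes the sets $\delta^+_G(i)$, $i\in I$, pairwise disjoint, and $\imb_G(i)=0$ with $\deg^-_G(i)=\Delta$ gives $\deg^+_G(i)=\Delta$, so $\Delta|I|$ copies leave and $\Delta|I|$ enter. The paper never raises your extra point about two copies of one facility both entering $I$. It tacitly uses the convention, stated in the proof of Lemma~\ref{lemma:two-small-balls}, that copies of a facility always have the same in-neighbors. Under that convention $i\in\delta^-_G(i)=\delta^-_G(i')$ gives $i'\in\delta^+_G(i)\cap\delta^+_G(i')$ immediately.

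Your own justification of that step has a hole. The splitting in Section~\ref{sec:preprocessing} creates distinct facilities of $F$ at distance $0$ from each other, so more than $\Delta$ elements of $F'$ can sit at one point. If ties are broken by a fixed global order on $F^*$, the $\Delta-1$ facilities chosen for $i$ besides $i$ itself then need not include $i'$. The bound ``at most $\Delta$ copies of each facility'' does not help, and you also do not show that it survives the addition of new copies.

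A repair that needs neither fact goes as follows. Let $\delta^-_G(v)$ consist of $v$ together with the first $\Delta-1$ facilities of $F'\setminus\{v\}$, ordered by distance to $v$ and then by a fixed global order, and suppose $i$ precedes $i'$.
\begin{itemize}
\item Every $v\notin\{i,i'\}$ has $d(v,i)=d(v,i')$, so $i'\in\delta^-_G(v)$ implies $i\in\delta^-_G(v)$.
\item Since $i'\in F^0$ has $\deg^+_G(i')=\Delta\ge 2$, it has an out-neighbor $v\neq i'$.
\item If $v=i$, then $i$ is a common out-neighbor of $i$ and $i'$, using the self-loop at $i$. Otherwise $v$ itself is a common out-neighbor.
\end{itemize}
So any two co-located facilities in $F^0$ conflict, and at most one copy of each facility lies in $I$.

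Finally, your fallback of reading the last line as ``$\Delta$ copies per element of $I$'' changes the procedure rather than proving the claim about it.
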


The following lemma says that the probability that any facility $i\in F^0$ being included in $I$ is close to $2L\cdot (1+\varepsilon^{c_5})y_i$.

\begin{lemma}
    For any facility $i\in F^0$, $\Pr[i\in I] = 2(1- o(\varepsilon^{c_5}))L\cdot (1+\varepsilon^{c_5})\cdot\frac{1}{\Delta}$.
\end{lemma}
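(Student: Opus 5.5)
Write $r := \frac{2(1+\varepsilon^{c_5})L}{\Delta}$ for the per-facility selection probability. A facility $i\in F^0$ lands in $I$ exactly when two things happen: when $i$ is processed in the random order, none of its conflicting facilities already in $I$ blocks it, and the independent $r$-coin for $i$ comes up heads. So
\[
\Pr[i\in I] = r\cdot \Pr[\text{no } i' \in I \text{ conflicting with } i \text{ precedes } i].
\]
The upper bound $\Pr[i\in I]\le r$ is immediate, so everything reduces to showing that the blocking probability is $o(\varepsilon^{c_5})$.

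\textbf{Step 1: bound the number of conflicting facilities.} Let $N(i)\subseteq F^0$ be the set of facilities conflicting with $i$, i.e.\ those $i'$ with $\delta^+_G(i)\cap\delta^+_G(i')\neq\emptyset$. Since $i\in F^0$, we have $\deg^+_G(i)=\Delta$. Each out-neighbor $u$ of $i$ has $\deg^-_G(u)=\Delta$ by the construction of the neighborhood graph. Hence the number of in-neighbors of some out-neighbor of $i$ is at most $\Delta^2$, giving $|N(i)|\le \Delta^2$. Note that $i'$ may be a copy of the same facility as $i$; such copies count as conflicting as well.

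\textbf{Step 2: union bound.} A conflicting $i'$ can block $i$ only if $i'$ itself was added to $I$, which requires its own coin to succeed. This has probability at most $r$, regardless of the order. By the union bound,
\[
\Pr[\text{$i$ is blocked}] \le |N(i)|\cdot r \le \Delta^2\cdot\frac{2(1+\varepsilon^{c_5})L}{\Delta} = O(\Delta L) = O(\varepsilon^{c_2-c_1}).
\]
Since $c_2 > 2(c_1+c_5) > c_1+c_5$, this is $o(\varepsilon^{c_5})$. Combining with the factorization above,
\[
\Pr[i\in I] = r\,(1-o(\varepsilon^{c_5})) = 2(1-o(\varepsilon^{c_5}))L(1+\varepsilon^{c_5})\frac{1}{\Delta},
\]
which is the claimed statement.

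\textbf{Main obstacle.} The delicate point is that the blocking event and the coin of $i$ must be handled independently. This holds because the coin of $i$ is fresh when $i$ is processed. Likewise, the union bound over $N(i)$ is legitimate because each blocker needs its own independent coin. One should also check that the degree bound in Step 1 uses only the exact in-degree $\Delta$ guaranteed by the neighborhood graph, which holds since $G$ is recomputed at the start of the iteration. The random order is not actually needed for this estimate; it matters only for later approximate-independence arguments.
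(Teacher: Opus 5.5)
Your proof is correct and follows essentially the same route as the paper. Both use the factorization $\Pr[i\in I] = r\cdot\bigl(1-\Pr[\text{$i$ is blocked}]\bigr)$, the bound $|N(i)|\le\Delta^2$ (from out-degree $\Delta$ and in-degree $\Delta$), and a union bound giving blocking probability $O(\varepsilon^{c_2-c_1}) = o(\varepsilon^{c_5})$; you also state the upper bound $\Pr[i\in I]\le r$ explicitly, which the paper leaves implicit.
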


\begin{proof}
    Two facilities $i_1, i_2 \in F^0$ have no conflict if and only if their out-neighbors have an empty intersection, i.e., $\delta^+_G(i_1) \cap \delta^+_G(i_2) = \emptyset$. 
    Since $\imb_G(i) = 0$ for any $i \in F^0$, its out-degree is exactly $\deg^+_G(i) = \Delta$. Because every facility in $G$ has an in-degree of exactly $\Delta$, the number of facilities sharing at least one out-neighbor with $i$ is bounded by $\Delta \times \Delta = \Delta^2$.
    
    Focus on a facility $i\in F^0$, analyze its probability of being included in $I$. Let $N(i)$ denote the set of the facilities that has conflict with $i$ and $B_{v}$ denote the event that $v$ is processed before $i$ and is added to $I$. We obtain:
    \begin{align*}
        \Pr[i\in I] = \left(1-\Pr\left[\bigvee_{v\in N(i)}B_v\right]\right)\frac{2(1+\varepsilon^{c_5})L}{\Delta} \ge \left(1-\sum_{v\in N(i)}\Pr[B_v]\right)\frac{2(1+\varepsilon^{c_5})L}{\Delta}.
    \end{align*}
    By definition, $\Pr[B_v] \le \Pr[v \in I] \le 2L(1+\varepsilon^{c_5})/\Delta = \Theta(\varepsilon^{c_1+c_2})$. Based on the previous analysis, we have $|N(i)| \le \Delta^2 = 1/\varepsilon^{2c_1}$. Thus, we can conclude $\sum_{v \in N(i)} \Pr[B_v] \le \Theta(\varepsilon^{c_1+c_2}) / \varepsilon^{2c_1} = \Theta(\varepsilon^{c_2-c_1})$. Since $c_2 > 2c_1+2c_5$, this upper bound is strictly $o(\varepsilon^{c_5})$, which proves the lemma.
\end{proof}

\subsection{Handling unconnected clients using a $3$-approximation for weighted $k$-center}
Now, we go back to Algorithm~\ref{alg:iter-k}, the iterative rounding algorithm. We only run for loop for $T = \Theta\left(\frac{\log(1/\varepsilon)}{\varepsilon^{c_1+c_2}}\right)$ iterations, instead of running it until the solution become integral.  We define $\bar y$ in Line~\ref{line:iter-k-bary-F'} and \ref{line:iter-k-bary-F-Force} using $F'$ and $F_\force$: every $i \in F'$ corresponds to $\frac1\Delta$ fractional opening, and every $i \in F_\force$ is integrally open.  

We then describe Line~\ref{line:iter-k-O(1)-approx}. For every client $j \in C$, we define $\bar d_{\max}(j) \geq 0$ to be the minimum real such that $\bar y(\ball_F(j, \bar d_{\max}(j))) \geq 1$. Then, $\bar y$ can be viewed as a fractional solution to the weighted $k$-center problem, where every $j$ has an individual connection requirement $\bar d_{\max}(j)$. Then, we  can round $\bar y$ to an integral solution $\tilde y$ that satisfies the following properties. First, $|\tilde y|_1 \leq \lceil |\bar y|_1\rceil$. Second, if $\bar y_i = 1$ for any $i \in F$, then $\tilde y_i = 1$. Finally, the connection distance of any $j \in C$ in the solution $\tilde y$ is at most $3 \bar d_{\max}(j)$.  We return the solution $\tilde y$.

\subsection{Counting number of open facilities}
We analyze the number of open facilities given by Algorithm~\ref{alg:iter-k}. %
In each iteration, the algorithm calls either unbalanced-update procedure (Algorithm~\ref{alg:update-unbalanced}) or balanced-update procedure (Algorithm~\ref{alg:update-balanced}). In the balanced-update procedure, no facilities are forcibly opened, and the net increment of fractional facilities in $F'$ is exactly zero.

In the unbalanced-update procedure, the algorithm deterministically adds at most $O(1/\varepsilon^{2c_1+c_3})$ facilities to $F_{\force}$ in Line~\ref{line:update-unbalanced-force} or Line~\ref{line:update-unbalanced-force-1}. By lemma~\ref{lem:probability-bound}, the probability that the net increment incurred in Lines~\ref{line:remove-out-neighbors} and \ref{line:update-unbalanced-add-new-copies} exceeds $1/\varepsilon^{c_4}$ is exponentially small, i.e., $\Pr[Z(X) \ge 1/\varepsilon^{c_4}] \le \exp(-\Theta(1/\varepsilon))$. The algorithm runs for loop for $T = \Theta\big(\frac{\log(1/\varepsilon)}{\varepsilon^{c_1+c_2}}\big)$ iterations. By applying the union bound over all $T$ iterations, the probability that $Z(X) \ge 1/\varepsilon^{c_4}$ in any of the iterations is at most:
\begin{align*}
    T \cdot \exp(-\Theta(1/\varepsilon)) = \Theta\left(\frac{\log(1/\varepsilon)}{\varepsilon^{c_1+c_2}}\right) \cdot \exp(-\Theta(1/\varepsilon)) = \exp(-\Theta(1/\varepsilon)).
\end{align*}

Therefore, with probability at least $1 - \exp(-\Theta(1/\varepsilon)) \geq 1 - O(\varepsilon)$, every iteration opens at most $1/\varepsilon^{2c_1+c_3} + 1/\varepsilon^{c_4}$ extra facilities, which guarantees that the algorithm ultimately opens at most $k + \Theta\big((1/\varepsilon^{2c_1+c_3} + 1/\varepsilon^{c_4})\cdot\frac{\log(1/\varepsilon)}{\varepsilon^{c_1+c_2}}\big)$ facilities. %

%% file: cost-proof-new.tex
\section{Analysis of the connection cost}

\label{sec:cost-new}
In this section, we show that the iterative rounding algorithm given in Sections~\ref{sec:preprocessing} and \ref{sec:rounding_alg} gives $(\alpha+\varepsilon)$-approximation for the problem, where $\alpha$ is the parameter defined in Theorem~\ref{thm:LMP}. However, due to the clients in $C'$, our approximation is also lower bounded by $2^p$. For general metrics, we have $\alpha = \frac{3^p + 1}{2} \geq 2^p$.  However, for Euclidean $k$-means, we can set our $\alpha$ to be $\frac{11}{3}$, but $2^p = 4$. We only get a $(4+\varepsilon)$-approximation for Euclidean $k$-means. The main theorem we prove in this section is the following:

\begin{theorem}
    \label{thm:cost}
    Let $\alpha \geq 2^p$ be a constant satisfying the property of Theorem~\ref{thm:LMP}.  Then, for every $j \in C$, the expected connection cost of $j$ is at most $(\alpha + 2^{O(p)}\varepsilon) \sum_{i \in F}x_{ij} d^p(i, j)$, over the randomness in the preprocessing procedure in Section~\ref{sec:preprocessing} and the iterative rounding algorithm (Algorithm~\ref{alg:iter-k}). 
\end{theorem}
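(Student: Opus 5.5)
The plan is to condition on the outcome of the preprocessing, analyze the iterative rounding for a fixed client $j$ through a potential function argument modeled on Lemma~\ref{lemma:potential}, and then take expectations over the preprocessing. Write $\cost(j)$ for the final connection cost. By Lemma~\ref{lemma:y-to-y'} it suffices to compare against $\cost_{y'}(j)$. For type-1 and type-2 clients we compare against $\cost_{y''}(j)$ and then invoke Lemma~\ref{lemma:type-1-cost} or Lemma~\ref{lem:type-2-cost}, losing only a $1+O(\varepsilon)$ factor. Type-3 clients are treated separately at the end. Concretely, the goal is to show that for every fixed $y''$,
\begin{align*}
\E[\cost(j)\mid y''] \le (\alpha + 2^{O(p)}\varepsilon)\,\cost_{y''}(j) + (\text{small error}),
\end{align*}
where the error term is charged to $\cost_y(j)$.

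\textbf{Step 1 (iterative phase).} Fix $j$ and let $F_j$ now denote its nearest $\Delta$ copies in $F^*$, so that $\cost_{y''}(j)=\frac1\Delta\sum_{i\in F_j}d^p(i,j)$. We reuse the state $(S,b)$ and the potential $f(S,b)=\alpha V+(1-y(S))b^p$, with $y_i=1/\Delta$ per copy.
\begin{itemize}
\item Call $j$ happily connected when some copy in $F_j$ lands in $I$, or when a facility of $F_j$ is forced open. In the latter case the cost is still at most $d_i^p$.
\item The key step is a one-iteration inequality $\E[f(\text{new state})]\le (1+O(\varepsilon^{c_5}))\,f(S,b)$, up to normalization. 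We prove it by rerunning the chain \eqref{inequ:1}--\eqref{inequ:4} with the selection probabilities $q_i$ in place of $y'_i/|y'|_1$.
\item In both updates each $i$ is chosen with probability $\frac{2L}{\Delta}(1\pm O(\varepsilon^{c_5}))$. For an unbalanced update this is exact. For a balanced update it follows from the lemma on $\Pr[i\in I]$.
\item The events for distinct $i$ are independent in the unbalanced case. In the balanced case they are independent up to factor $1\pm o(\varepsilon^{c_5})$, since conflicts involve at most $\Delta^2$ facilities.
\item Consequently the probability that two selected copies both touch $S$ is $O(L^2)$ relative to $O(L)$, and it contributes only $O(L)\cdot$(current potential).
\item The facts used in \eqref{inequ:2} and \eqref{inequ:3} survive: the in-degree of each alive copy is exactly $\Delta$ (fictitious in-neighbors are never opened, so they only remove copies with the correct probability), and any removed $i\in S$ has an opened in-neighbor within distance $d(i,i')$ of the $\Delta$-th nearest.
\item Removal via fictitious edges and forced opening of $R$ only help: forced facilities give a backup $b'\le b_i$, exactly as in the original argument.
\item Condition \eqref{equ:distance_sum_goal} then yields the bound. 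Multiplying the $1+O(\varepsilon^{c_5})$ per-iteration losses over only those iterations that touch $S$ gives $1+O(\varepsilon)$ overall.
\end{itemize}

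\textbf{Step 2 (leftover clients and the $k$-center step).} Each copy in $F_j$ is selected with probability $\Theta(L/\Delta)$ per iteration. With $T=\Theta(\log(1/\varepsilon)/\varepsilon^{c_1+c_2})$, the probability that $j$ is never happily connected and still has $|S|\ge1$ is $\poly(\varepsilon)$.
\begin{itemize}
\item In that event the $k$-center rounding gives distance $3\bar d_{\max}(j)$.
\item We bound $\bar d_{\max}(j)\le\min\{b,\ \max_{i\in S}d_i\}$, using $y$-mass at least 1 within the ball, together with the filtering bounds on how far the mass of $F_j$ can be. This gives cost at most $3^p\,d_{\max}^p(j)$ on a $\poly(\varepsilon)$-probability event.
\item For type-1 clients, $d_{\max}(j)\le(\varepsilon/p)^{-4p}d_\av(j)$. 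Choosing the exponent in $T$ large makes this $O(\varepsilon)\cost_y(j)$.
\item For type-2 and type-3 clients, $d_{\max}$ is large relative to $d_\av$, so we instead use a representative-based ball. It has $y''$-mass 1 within distance $(1+O(\varepsilon))$ times the relevant radius, which bounds $\bar d_{\max}$ directly.
\end{itemize}

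\textbf{Step 3 (type-3 clients; the main obstacle).} Here $\cost_{y''}(j)$ may be up to $2^p\cost_{y'}(j)$. So the plan is to run the potential argument of Step 1 against the capped quantity of Lemma~\ref{lemma:type-3-weaker-guarantee}. Set the initial backup to $b_0=d(j,i_2)$ rather than $\infty$, and note that $f$ charges the tail mass at $b_0^p=2^p(d(j,i_2)/2)^p\le\alpha(d(j,i_2)/2)^p$, using $\alpha\ge2^p$. Comparing with that lemma then gives $(\alpha+O(\varepsilon))\cost_{y'}(j)$.

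Justifying the initial backup $b_0$ is the hard part, since $i_2$ is only fractionally open. I would show that with probability $1-O(\varepsilon)$, some facility within distance $d(j,i_2)$ becomes open before it matters.
\begin{itemize}
\item If $d_{\max}(j_2)\ge\varepsilon d_{\max}(j_1)$, then Lemma~\ref{lemma:max-j-2-big} makes $i_2$ almost integral, with $y''_{i_2}\ge 1-4\varepsilon^{p+1}$. Its $\approx\Delta$ copies have only each other and nearby copies as in-neighbors, so within $T$ iterations some such copy is opened with probability $1-\poly(\varepsilon)$.
\item Otherwise, $F_{j_2}$ is tiny relative to $d(j_1,j_2)$, so all of $B_{j_2}$ lies at distance at most $d(j,i_2)(1+O(\varepsilon))$. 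Whenever a copy in $B_{j_2}$ is removed, it is by an in-neighbor at least as close. This keeps a backup of $(1+O(\varepsilon))d(j,i_2)$, and that factor is absorbed into the $2^{O(p)}\varepsilon$ slack.
\item The failure event has probability $O(\varepsilon)$. On it we fall back to the Step 2 bound, costing $O(\varepsilon)\cdot 3^p$ times a cost comparable to $\cost_{y'}(j)$, which is again absorbed.
\end{itemize}
I expect this coupling between the preprocessing structure (the balls $B'_{j_2}$) and the iterative process to be the most delicate step. Taking expectations over $y'$ and $y''$ finally gives the stated $(\alpha+2^{O(p)}\varepsilon)\sum_i x_{ij}d^p(i,j)$.
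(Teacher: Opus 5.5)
Your overall architecture matches the paper: a potential argument conditioned on $y''$, and for type-3 clients a backup capped at $d(j,i_2)$, with $S$ restricted to copies within $d(j,i_2)/2$, $\alpha\ge 2^p$, and Lemma~\ref{lemma:type-3-weaker-guarantee}. However, your justification of the cap in Step~3 has a genuine gap, in the case $d_{\max}(j_2)\ge\varepsilon d_{\max}(j_1)$.

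On the failure event you fall back to a distance of order $d''_{\max}(j)=O(d_{\max}(j))$ (Claim~\ref{lem:d-double-prime-bound}). You assert this costs ``$O(\varepsilon)\cdot 3^p$ times a cost comparable to $\cost_{y'}(j)$''. For type-3 clients this is false. Such a $j$ can have a $1-\eta$ fraction of its LP mass at distance essentially $0$ and only an $\eta$ fraction at distance $d_{\max}(j)$. Then $\cost_y(j)\approx\eta\, d^p_{\max}(j)$ with $\eta$ arbitrarily small.

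So the failure probability must be $O(\varepsilon)\cdot\cost_y(j)/d^p_{\max}(j)$, and an argument that only looks at $i_2$ cannot give this. The chance that the copies of $i_2$ are removed through a badly placed non-copy in-neighbor before any copy is selected can be of order $1-y''_{i_2}$. That is a fixed power of $\varepsilon$ (up to about $4\varepsilon^{p+1}$), unrelated to $\eta$. Charging instead via $d''^p_{\max}(j)\le\Delta\cost_{y''}(j)$ also fails, since $\varepsilon^{p+1}\gg 1/\Delta$.

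The missing idea is the two-ball argument of Lemma~\ref{lemma:two-small-balls}, which also uses the near-integral core facility $i_1$ of the representative.
\begin{itemize}
\item While $i_2$ is alive, all in-neighbors of $i_1$ lie within $D=d(i_1,i_2)$ of $i_1$. So any removal of $i_1$ opens a facility within $(1+O(\varepsilon))d(j,i_2)$ of $j$.
\item Hence the bad event needs $i_2$ to be removed via a non-copy first. This has probability $\lesssim 1-y''_{i_2}\le 4\varepsilon^{p+1}+\varepsilon^{12p^2}$ by Lemma~\ref{lemma:max-j-2-big}.
\item Afterwards $i_1$ must also be removed via a non-copy, with probability $\lesssim 1-y''_{i_1}$.
\item $B'_{j_1}$ contains every facility within $(1-O(\varepsilon))\varepsilon d_{\max}(j)$ of $j$. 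So Markov's inequality on $j$'s own connection gives $\E[1-y''_{i_1}]\lesssim\cost_y(j)/(\varepsilon d_{\max}(j))^p$.
\item The product, times $2^{O(p)}d^p_{\max}(j)$, is then $2^{O(p)}\varepsilon\,\cost_y(j)$.
\end{itemize}
This is exactly why the preprocessing collapses each core $B'_{j'}$ into a single facility; your proof never uses $i_1$. Your other sub-case, $d_{\max}(j_2)<\varepsilon d_{\max}(j_1)$, is essentially the paper's, and is even deterministic via Lemma~\ref{lem:backup_distance} applied to $j_2$.

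Smaller points also need repair.

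First, declaring $j$ happily connected at cost $d_i^p$ when some $i\in S$ is forced open is incompatible with the potential. Forced openings are deterministic, while $f$ reserves only $\frac{\alpha}{\Delta}d_i^p$ for $i$. For example, if the farthest copy in $F^*_j$ is forced open in the first iteration, $d''^p_{\max}(j)$ can far exceed $f(F^*_j,\infty)=\alpha\cost_{y''}(j)$. Use instead your later remark: a forced opening lowers $b$ to at most $d_i$, and then dropping $i$ from $S$ does not increase $f$.

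Second, a per-iteration factor $1+O(\varepsilon^{c_5})$ cannot be compounded over $T\gg\varepsilon^{-c_5}$ iterations. Restricting to iterations that touch $S$ can work, but you must show two things:
\begin{itemize}
\item the excess is $O(\varepsilon^{c_5})\Pr[\mathrm{touch}]\,f(S,b)$;
\item at most $|S|+1\le\Delta+1$ iterations touch $S$, since each removes a copy from $S$ or connects $j$.
\end{itemize}
Then $(1+O(\varepsilon^{c_5}))^{\Delta+1}=1+O(\varepsilon)$ because $c_5=c_1+1$. The paper avoids this bookkeeping by building the slack into $f^{\mathrm{new}}$. It uses an overall factor $1+\varepsilon$, cashed in whenever $j$ is happily connected, and an additive $2\varepsilon^{c_5}$ in the backup coefficient, so its one-step inequality holds with no loss.

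Finally, the leftover event after $T$ iterations is most simply charged via $d''^p_{\max}(j)\le\Delta\cost_{y''}(j)$ with failure probability $O(\varepsilon^{2c_1})$. Your comparison of $d_{\max}(j)$ with $d_{\av}(j)$ only covers type-1 clients.
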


Till the beginning of Section~\ref{subsec:type-3}, we fix a type-1 or 2 client $j \in C$ and prove the theorem for such a $j$. We shall show how to handle type-3 clients in Section~\ref{subsec:type-3}.  Also, until the very end, we fix the solution $(x'', y'')$ obtained from the preprocessing procedure. The expectations and probabilities are conditioned on this $(x'', y'')$. 

We define $F^*_j \subseteq F^*$ to be the set of $\Delta$ closest facilities in $F^*$. So, $\sum_{i \in F}x''_{ij} d^p(i, j) = \frac1\Delta\sum_{i \in F^*_j}d^p(i, j)$. Abusing notations slightly, for every $i \in F^*_j$, we simply use $d_i$ for $d(j, i)$ and let $d''_{\max}(j) = \max_{i \in F^*_j} d_i$. (The notation suggests that $d''_{\max}(j)$ is defined w.r.t solution $(x'', y'')$, to avoid confusion with the $d_{\max}(j)$ defined in Section~\ref{sec:preprocessing}). So, $\sum_{i \in F}x''_{ij} d^p(i, j) = \frac1\Delta\sum_{i \in F^*_j}d_i^p$, and our goal is to bound the expected connection cost of $j$ by $(\alpha + 2^{O(p)}\varepsilon) \cdot  \frac1\Delta\sum_{i \in F^*_j}d_i^p$. The following simple claim will be useful:
\begin{claim} \label{lem:d-double-prime-bound}
    If $j$ is a type-3 client, or a representative, we always have $d''_{\max}(j) \leq O(1) \cdot d_{\max}(j)$. 
\end{claim}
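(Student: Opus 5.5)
Since $d''_{\max}(j)$ is the smallest radius $r$ with $y''(\ball_F(j,r)) \geq 1$, it suffices to exhibit, for the client $j$ in question, a set of facilities within distance $O(1)\cdot d_{\max}(j)$ of $j$ whose total $y''$ value is at least $1$. The only information about $y''$ we use is Property~\ref{approx_sum_preservation} of Lemma~\ref{lem:dep_rounding}. The sets $B_{j'}$ for $j' \in C^*$ belong to the laminar family $\mathcal{S}$, and $y'(B_{j'}) = y(B_{j'})$, because the rounding from $y$ to $y'$ only moves mass inside $B'_{j'} \subseteq B_{j'}$ (Lemma~\ref{lem:disjoint_balls}). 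Hence $y''(B_{j'}) \geq \lfloor y(B_{j'})/\varepsilon^{12p^2}\rfloor \varepsilon^{12p^2}$. In particular $y''(B_{j'}) = 1$ whenever $B_{j'} = F_{j'}$, and otherwise $y''(B_{j'}) > \frac{1}{2-\varepsilon} - \varepsilon^{12p^2} > \frac12$ by Claim~\ref{claim:Bj-volume}.

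\textbf{Representatives.} Let $j \in C^*$. If $B_j = F_j$, then $y''(B_j) = 1$ and $B_j \subseteq \ball_F(j, d_{\max}(j))$, so $d''_{\max}(j) \leq d_{\max}(j)$. Otherwise, let $j_2$ be the nearest neighbor of $j$ in $C^*$; then $d(j,j_2) < 2d_{\max}(j)$. Both $B_j$ and $B_{j_2}$ lie inside balls of radius $d(j,j_2)/2$ around their centers, so every facility in $B_j \cup B_{j_2}$ is within $\frac32 d(j,j_2) < 3 d_{\max}(j)$ of $j$. The two balls are disjoint and each carries more than $\frac12$ of $y''$ mass, so together they carry at least $1$. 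This gives $d''_{\max}(j) \leq 3 d_{\max}(j)$; it is essentially the argument already used in Lemma~\ref{lemma:type-1-cost}.

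\textbf{Type-3 clients.} Let $j$ be of type-3, with representative $j_1$ and $j_2$ the nearest neighbor of $j_1$ in $C^*$. Since $j$ is not type-1, Lemma~\ref{lem:not-type1-distance} gives $d(j,j_1) \leq 4d_\av(j) \leq d_{\max}(j)$. Because $B_{j_1} \neq F_{j_1}$, we have $d(j_1,j_2) < 2 d_{\max}(j_1)$. The same reasoning as for representatives, now centered at $j$, bounds the distance from $j$ to any facility in $B_{j_1} \cup B_{j_2}$ by $d(j,j_1) + \frac32 d(j_1,j_2) \leq d(j,j_1) + 3d_{\max}(j_1)$.

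\textbf{Main obstacle.} The remaining step is to bound $d_{\max}(j_1)$ by $O(1)\cdot d_{\max}(j)$. The ball $\ball_F(j_1, d(j,j_1) + d_{\max}(j))$ contains $F_j$, which has $y$ mass $1$, so $d_{\max}(j_1) \leq d(j,j_1) + d_{\max}(j) \leq 2d_{\max}(j)$. Plugging this in, every facility of $B_{j_1} \cup B_{j_2}$ is within $7 d_{\max}(j)$ of $j$. These two disjoint balls carry total $y''$ mass at least $1$, so $d''_{\max}(j) \leq 7 d_{\max}(j)$, which proves the claim.
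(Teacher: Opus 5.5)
Your proof is correct and follows the same route as the paper. In each case you exhibit a set within $O(1)\cdot d_{\max}(j)$ of $j$ that carries at least one unit of $y''$ mass: $B_j=F_j$, or $B_j\cup B_{j'}$ for a representative, or $B_{j_1}\cup B_{j_2}$ for a type-3 client. Your version is more careful than the paper's sketch, which only notes $y'(B_{j_1})+y'(B_{j_2})\ge 1$; you pass correctly to $y''$ via Property~\ref{approx_sum_preservation} of Lemma~\ref{lem:dep_rounding}, showing each ball keeps more than $\frac12$, and you make the bound $d_{\max}(j_1)\le 2d_{\max}(j)$ explicit.
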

\begin{proof}
    Consider the case $j$ is a type-3 client. Let $j_1$ be its representative and $j_2$ be the nearest neighbor of $j_1$ in $C^*$. Then, $j_1$ and $j_2$ are $O(1) \cdot d_{\max}(j)$ away from $j$. Both balls $B_{j_1}$ and $B_{j_2}$ have radius $O(1)\cdot d_{\max}(j)$. Moreover, the $y'(B_{j_1})  + y'(B_{j_2}) \geq 1$. Therefore, $y''$ contains at least 1 fractional open facility in $B_{j_1} \cup B_{j_2}$. Therefore, $d''_{\max}(j) \leq O(1) \cdot d_{\max}(j)$.
    
    Consider the case $j$ is a representative. If $B_j = F_j$, then the claim clearly holds. Otherwise, let $j'$ be its nearest neighbor in $C^*$. Again, the balls $B_j$ and $B_{j'}$ will show that $d''_{\max}(j) \leq O(1) \cdot d_{\max}(j)$. 
\end{proof}

\subsection{Setup for the inductive proof}
  As in Section~\ref{sec:LMP}, we define a potential function $f^\mathrm{new}$, that is slightly different from $f$:
\begin{align}
    \label{equ:approx_bound}
    f^{\mathrm{new}}(S,b)\coloneqq(1+{\varepsilon})\cdot\bk*{\frac{\alpha}\Delta\sum_{i\in S}d^p_i+\left({1+2\varepsilon^{c_5}}-{\frac{|S|}{\Delta}}\right)b^p}, \qquad \forall S \subseteq F^*_j, b \geq 0.
\end{align}
Other than using $1/\Delta$ to replace $y_i$'s, the main difference is that we have the two factors $1+\varepsilon$ and ${1+2\varepsilon^{c_5}}$. \medskip

As in Section~\ref{sec:LMP}, we define $\Phi_j$ to be the connection cost of $j$ at the end of the algorithm.
Throughout the algorithm, we let $S = F^*_j \cap F'$, and $b$ be the minimum of $3d''_{\max}(j)$ and the distance between $j$ and its closest integrally open facility so far, where we say an original facility $i^\circ \in F$ is integrally open if either $i^\circ \in F_\force$ or there are $\Delta$ copies of $i^\circ$ in $F'$.   We define $\Phi'_j$ to be the value of $f^{\mathrm{new}}(S,b)$ at the end the for loop of Algorithm~\ref{alg:iter-k}. Similarly, we shall upper bound $\E[\Phi_j]$ by upper bounding $\E[\Phi'_j]$ using the $f^{\mathrm{new}}$ function, and the upper bound $\E[\Phi_j] - \E[\Phi'_j]$.

\begin{lemma}
\label{lemma:potential-1}
    Let $t \in [0, T]$. Suppose at the end of the $t$-th iteration, the state of $j$ is $(S, b)$. Conditioned on this event, we have $\E[\Phi'_j] \leq f^\mathrm{new}(S, b)$.
\end{lemma}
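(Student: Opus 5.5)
We argue by backward induction on $t$, exactly as in Lemma~\ref{lemma:potential}. For $t=T$ the statement is immediate from the definition of $\Phi'_j$. For the inductive step we fix the state $(S,b)$ at the beginning of iteration $t$, together with $F'$ and the neighborhood graph $G$. It then suffices to show that the expected value of $f^{\mathrm{new}}(S',b')$ after one iteration is at most $f^{\mathrm{new}}(S,b)$. Here $S'=S\cap F'_{\text{new}}$, and $b'$ is the updated backup distance. Facilities of $S$ that belong to $I$ become integrally open: they get $\Delta$ copies in $F'$ or enter $F_\force$. So if some $i\in S\cap I$, we may bound $b'\le d_i$. Also, once a copy of $i\in S$ is removed through an in-neighbor $i'$, we have $b'\le d_i+d(i,i')$ whenever $i'$ is opened. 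We will use this bound whenever $i'\notin S$ is the remover; $i'$ is then integrally open or forced. The forced facilities in $R$, and the facilities forced in Line~\ref{line:update-unbalanced-force}, only help, since they decrease $b$.

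The heart of the argument is to redo the computation \eqref{inequ:1}--\eqref{inequ:4} with approximate probabilities. In either update, each $i'\in F'$ enters $I$ with probability $q_{i'}=\frac{2L}{\Delta}(1+\theta_{i'})$, where $\theta_{i'}\in[-o(\varepsilon^{c_5}),\,\varepsilon^{c_5}]$ (balanced-update uses the lemma above). Moreover, the inclusion events are independent in unbalanced-update. In balanced-update they are approximately independent: conditioning on one facility shifts the probability of another by a factor $1\pm o(\varepsilon^{c_5})$. The steps are as follows.

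1. Write $\E[f^{\mathrm{new}}(S',b')]-f^{\mathrm{new}}(S,b)$ to first order in $L$. Events in which two or more relevant facilities of $S\cup\delta^-(S)$ are chosen have probability $O(L^2)$ per pair. There are at most $\Delta^2$ such pairs, so their total contribution is $O(L^2\Delta^2)=O(\varepsilon^{2c_2-2c_1})L$ times a bounded potential. Since $b\le 3d''_{\max}(j)$ and $\frac1\Delta\sum_{S}d_i^p\ge \ldots$, this is negligible relative to the gain obtained in step 3.

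2. The first-order term is $\frac{2L}{\Delta}$ times a sum over single choices $i'$. This sum mirrors the LMP computation with $y_i=1/\Delta$ and all $w=1/\Delta$. For $i'\in S$ the term is $d_{i'}^p$ against the loss of $\frac{\alpha}{\Delta}\sum$ terms. For $i'\notin S$ it is the removal of the copies of $i'$'s out-neighbors in $S$, with backup $b_i\le d_i+d(i,i')$. Applying \eqref{equ:distance_sum_goal} with $T=S$ then gives, exactly as in \eqref{inequ:4}, a first-order change of at most $0$. This holds up to error terms proportional to $\theta$ times the potential.

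3. The extra slack factors absorb these errors. The term $(1+2\varepsilon^{c_5}-|S|/\Delta)b^p$ yields a strictly negative drift of order $\varepsilon^{c_5}L\,b^p$. The $(1+\varepsilon)$ factor together with $\alpha\ge 2^p\ge 1$ yields slack of order $(\alpha-1)L\cdot\frac1\Delta\sum d_i^p$, in the same way that \eqref{inequ:4} used $(1-\alpha)(1-y)\le 0$. Because the probabilities of $F^-$ and of fictitious facilities are boosted by $(1+\varepsilon^{c_5})$, a removed facility is removed with slightly higher probability than the rate $2L/\Delta$ at which facilities of $S$ open. This is exactly why the constant $2\varepsilon^{c_5}$ appears in $f^{\mathrm{new}}$.

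The main obstacle is step 3: controlling the removals caused by fictitious facilities, which are never opened and hence give no backup bound $b_i$. For these we would charge the removed $i\in S$ to the forced facility in $R$ whose edge it replaced. That facility lies in $F_\force$, so $b'\le d_i+d(i,i')$ still holds. We also expect some care to be needed in balanced-update, where the conflict-freeness makes choices dependent. There we would use the union-bound estimate from the lemma on $\Pr[i\in I]$ in both directions, conditioned on the chosen facility.
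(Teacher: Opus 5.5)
\textbf{Overall.} Your architecture matches the paper's: backward induction, a one-step bound on the expected potential using approximate inclusion probabilities, the LMP computation with $y_i=1/\Delta$ plus \eqref{equ:distance_sum_goal} for the main term, and the factors $1+\varepsilon$ and $2\varepsilon^{c_5}$ of $f^{\mathrm{new}}$ to absorb errors. The paper packages your first-order expansion as per-facility Bonferroni bounds: $i$ survives with $j$ unconnected iff $I\cap(S\cup U_i)=\emptyset$, and removal of $i$ is union-bounded over $U_i\setminus T_i$. Your charging of fictitious removals to the forced facility of $R$ whose edge was replaced is correct, and is more explicit than the paper. However, several claims your sketch leans on are false as stated.

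\textbf{Inclusion probabilities.} It is not true that in either update every $i'\in F'$ enters $I$ with probability $\frac{2L}{\Delta}(1+\theta_{i'})$. Unbalanced-update never samples $F^0$, and balanced-update never samples $F^+\cup F^-$. The uniform rates your Step 2 needs hold only for the iteration as a whole, after averaging over the fair coin; that is the content of Claim~\ref{clm:rounding_alg_props}, $\frac{L}{\Delta}\le\Pr[i'\in I]\le\frac{(1+\varepsilon^{c_5})L}{\Delta}$. A per-update drift bound is false: balanced-update can remove $i\in S\cap F^+$ through an $F^0$ in-neighbour, yet never happily connects $j$ through $i$.

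For the same reason, ``forcing only helps'' does not dispose of the early-return branch at Line~\ref{line:update-unbalanced-check-A}. There nothing is sampled from $F^+\cup F^-$. So an $i\in S\cap F^0$ whose other in-neighbours lie in $F^+\cup F^-$ keeps its happy-connection rate but loses its removals. With $S=\{i\}$ and $d_i\gg b$, the one-step drift is then positive. That branch needs a separate argument; Claim~\ref{clm:rounding_alg_props}, which the paper relies on, does not hold there either.

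Also, in balanced-update, conditioning on $i\in I$ sets the probability of every conflicting facility to $0$, not to $1\pm o(\varepsilon^{c_5})$ times its marginal. Only the pairwise upper bound $\Pr[i,i'\in I]\le 2((1+\varepsilon^{c_5})L/\Delta)^2$ is true, and it is all you need.

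\textbf{Slack accounting.} This is wrong in your sketch, and it is where the actual work is.
\begin{itemize}
\item \emph{Coefficient of $\frac{|S|}{\Delta}V$.} The slack $(1-\alpha)(1-y)\le 0$ of \eqref{inequ:4} becomes $(\alpha-1)(1-\frac{|S|}{\Delta})L$ here, which vanishes at $|S|=\Delta$. That is exactly the case the paper reduces \eqref{ineq:happily_conn} to. There the only slack is $\varepsilon L$. It arises because a realized happy connection costs $d_{i_{\min}}^p$ rather than $(1+\varepsilon)d_{i_{\min}}^p$, and it beats the $O(\alpha\varepsilon^{c_5})L$ errors because $\varepsilon^{c_5}\ll\varepsilon$.
\item \emph{Coefficient of $b^p$.} The slack is about $2\varepsilon^{c_5}L\frac{|S|}{\Delta}$, not $\varepsilon^{c_5}L$. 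The $(1+\varepsilon^{c_5})$ boost of the removal probabilities costs up to $\varepsilon^{c_5}L\frac{|S|}{\Delta}(1-\frac{|S|}{\Delta})$. This is the same order, so the constant $2$ in $f^{\mathrm{new}}$ genuinely matters; this is not a negligible-error absorption.
\item \emph{The needed split.} You must write $\min\{b^p,b_i^p\}(1-\frac{|T_i|}{\Delta})\le(1-\frac{|S|}{\Delta})b^p+(\frac{|S|}{\Delta}-\frac{|T_i|}{\Delta})b_i^p$, as the paper does. The first part goes to the $b^p$ coefficient. The second goes through \eqref{equ:distance_sum_goal} into the $\varepsilon L$ slack.
\item \emph{Termwise charging.} Every error must be charged termwise: $d_i^p$-type errors to the $\varepsilon L$ slack, $b^p$-type errors to the $2\varepsilon^{c_5}$ slack. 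There is no lower bound on $\frac1\Delta\sum_{i\in S}d_i^p$ in terms of $b$, so the comparison you begin in Step 1 cannot be completed as written.
\end{itemize}

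\textbf{Second-order count.} Your count in Step 1 also slips. A pair has probability $O(L^2/\Delta^2)$, and $|S\cup\delta^-(S)|$ can be $\Delta^2$, giving $O(L^2\Delta^2)=O(\varepsilon^{c_2-2c_1})L$. This is still small against the $|S|$-dependent slacks when charged termwise. The paper's per-$i$ sets $S\cup U_i$, of size at most $2\Delta$, give a cleaner $O(L)$ relative error.
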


The rest of the section is devoted to the proof of Lemma~\ref{lemma:potential-1}.  Clearly, when $t = T$, the lemma holds by our definition of $\Phi'_j$.  We assume the lemma holds for $t \leq T$ and we show that it holds for $t - 1$.

\subsection{Inductive proof: bounding the cost for one iteration}
So, now we focus on the iteration $t$ of the for loop of Algorithm~\ref{alg:iter-k}, we run either unbalanced-update or balanced-update.  First we show that, if we run unbalanced-update in iteration $t$, then handling $R$ in Line \ref{line:update-unbalanced-force-1} and \ref{line:update-unbalanced-remove} of Algorithm~\ref{alg:update-unbalanced} can only decrease the value of $f^\textrm{new}(S, b)$.  Focus on some $i \in R \cap S$. Adding the original facility of $i$ to $F_\force$ only decreases $b$. After this operation, $b$ becomes at most $d_i$. Then removing $i$ from $S$ increases $f^{\mathrm{new}}(S, b)$ by $(1+\varepsilon)\left(-\frac{\alpha d_i^p}{\Delta} + \frac{b^p}{\Delta}\right) \leq 0$. Therefore, it suffices to prove the lemma by assuming $(S, b)$ is the state after we run Line~\ref{line:update-unbalanced-remove} of Algorithm~\ref{alg:update-unbalanced}.

If we run balanced-update, we simply define $F_\mathrm{fict} = \emptyset$, $E_\mathrm{fict} = \emptyset$  and $G' = G$.  After unifying the notations, we do not need to distinguish between the two procedures any more. Both procedures choose a set $I$ of facilities from $F' \cup F_\mathrm{fict}$. If $S \cap I \neq \emptyset$, then we define
\begin{align*}
    i_{\min}:=\text{argmin}_{i\in S\cap I}d_i
\end{align*} to be the facility in $S\cap I$ that is closest to $j$, and let $z_i=\Pr[i_{\text{min}}=i]$. In this case, we simply connect $j$ to $i_{\min}$ (which is integrally open) as in Section~\ref{sec:LMP}, and we say $j$ is happily connected. Otherwise, we let $i_{\min} = \bot$, and we let $(S', b')$ be the new state at the end of iteration $t$.

Let $b'=\min\{b,\min_{i\in I}d(i,j)\}$ denote the new backup connection cost, where we assume $d(i, j) = \infty$ if $i \in F_\mathrm{fict}$.  Define $b_i,b_P$ in the same way as in \cref{sec:connection_cost}, that is,
\begin{align*}
        b_i:= d_i + \min_{I \in S \setminus T_i}d(i, I), \forall i \in S, \quad \text{and}\quad  b_P:=\min_{i \in P}b_i,  \forall P \subseteq S.
\end{align*}
Recall from \cref{sec:connection_cost} that $b'\le \min\{b,b_{S\setminus S'}\}$.

For any $S_0\subseteq S$, let $x_{S_0}=\Pr[i_{\text{min}}=\bot\land S' = S_0]$. Therefore, we have $\sum_{i \in S}z_i + \sum_{S' \subseteq S} x_{S'} = 1$.
\begin{align*}
    V := \frac{1}{|S|}\sum_{i \in S}d^p_i \quad \land \quad V':=\sum_{i\in S}z_i\cdot d^p_i.
\end{align*}

Given the above definitions, we can upper bound the expected connection cost of $j$ as
\begin{align}
    & \sum_{i\in S}z_i\cdot d^p_i + \sum_{S'}x_{S'} f^{\mathrm{new}}(S', \min\{b,b_{S\setminus S'}\}) \nonumber\\
      ={}& V'+ (1+{\varepsilon})\sum_{S'}x_{S'}\bk*{\frac\alpha\Delta\sum_{i\in S'}d_i^p+\left({1+2\varepsilon^{c_5}}-{\frac{|S'|}{\Delta}}\right)\min\{b^p,b^p_{S\setminus S'}\}} \tag{definition of $f^{\mathrm{new}}$} \nonumber \\ 
    \le{}&V'+(1+{\varepsilon})(1-z(S))\left({1+2\varepsilon^{c_5}}-{\frac{|S|}{\Delta}}\right)b^p \hfill \nonumber \\
    &+ (1+{\varepsilon})\sum_{S'}x_{S'}\bk*{\frac\alpha\Delta\sum_{i\in S'}d_i^p+\left({\frac{|S|}{\Delta}}-{\frac{|S'|}{\Delta}}\right)\min\{b^p,b^p_{S\setminus S'}\}}
    \tag{$\sum_{S'}x_{S'} = 1-z(S)$} \nonumber\\
    \leq{}&V'+(1+{\varepsilon})(1-z(S))\left({1+2\varepsilon^{c_5}}-{\frac{|S|}{\Delta}}\right)b^p \nonumber\\
    \label{equ:ind_bound_2}
    &+(1+\varepsilon)\sum_{i\in S}\bk*{\frac{\alpha d_i^p}\Delta \sum_{S'\ni i}x_{S'}+\frac{\min\{b^p,b^p_i\}}\Delta\sum_{S'\not\ni i}x_{S'}}
\end{align}

\paragraph{Bounding the first term $V'$ in  \eqref{equ:ind_bound_2}.} 
To bound \eqref{equ:ind_bound_2}, in addition to \cref{equ:single_i_removed}, we bound the values of $V'$:

By \cref{clm:rounding_alg_props}, we have
        \begin{align*}
            z_i = \Pr[i_{\text{min}}=i]&\le \Pr[i\in I]\le \frac{(1+\varepsilon^{c_5})L}{\Delta}.
        \end{align*}
        So, 
        \begin{align*}
            V'&=\sum_{i\in S}z_i\cdot d^p_i \leq  \frac{(1+\varepsilon^{c_5})L}{\Delta} \sum_{i \in S} d^p_i =  {(1+\varepsilon^{c_5})L\cdot \frac{|S|}{\Delta}}\cdot {V}.
        \end{align*}

\paragraph{Bounding the second term in  \eqref{equ:ind_bound_2}.} 
Since $(1-z(S))=\Pr[i_{\text{min}}=\bot]$, we have
    \begin{align*}
        \Pr[i_{\min}=\bot]&=\Pr[S\cap I=\emptyset] 
        \le 1-\sum_{i\in S}\Pr[i \in I]+\sum_{\{i,i'\}\subseteq S,i \neq i'}\Pr[i, i' \in I] \\
        &\le 1-\frac{L|S|}{\Delta}+\left(\frac{(1+\varepsilon^{c_5})L|S|}\Delta\right)^2 
        \le 1-(1-\varepsilon^{2c_5})L\cdot {\frac{|S|}{\Delta}}.
    \end{align*}
    The last inequality used that $L=\varepsilon^{c_2}<\varepsilon^{2c_1+2c_5}$ is sufficiently small.
    
Therefore,
\begin{align*}
    \text{(second term in \eqref{equ:ind_bound_2})}&\le {(1+\varepsilon)\cdot}\left(1-L(1-\varepsilon^{2c_5})\cdot {\frac{|S|}{\Delta}}\right)\cdot\left({{1+2\varepsilon^{c_5}}}-{\frac{|S|}{\Delta}}\right)b^p.
\end{align*}

\paragraph{Bounding the third term in \eqref{equ:ind_bound_2}} For every $i\in S$, we bound the term
\begin{align}
    \label{equ:single_i_removed}
    \frac{\alpha d_i^p}\Delta \sum_{S'\ni i}x_{S'} + \frac{\min\{b^p,b^p_i\}}\Delta\sum_{S'\not\ni i}x_{S'}.
\end{align}

For this, we need the following properties of the rounding algorithm of \cref{sec:rounding_alg}.

\begin{claim}
    \label{clm:rounding_alg_props}
    The following holds over the randomness of $I$:
    \begin{itemize}
        \item For every $i \in F' \cup F_\mathrm{fict}$, 
            ${\frac{L}{\Delta}}\leq \Pr[i\in I] \leq {\frac{(1+\varepsilon^{c_5})L}{\Delta}}$.
        \item For every two facilities $i,i' \in F' \cup F_\mathrm{fict}$,
            $\Pr[i,i'\in I]\le 2((1+\varepsilon^{c_5})L)^2\cdot {\frac{1}{\Delta^2}}$.
    \end{itemize}
\end{claim}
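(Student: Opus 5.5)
The plan is to read all probabilities in the claim as being over the full randomness of iteration $t$, which includes the fair coin in Line~\ref{line:iter-k-update} of Algorithm~\ref{alg:iter-k} choosing the procedure. I will use the convention already fixed above: the state $(S,b)$ is taken after Lines~\ref{line:update-unbalanced-force-1}--\ref{line:update-unbalanced-remove} are executed. Likewise, in the branch of Line~\ref{line:update-unbalanced-check-A}, the facilities of $F^+\cup F^-$ are forced open and removed before $I$ is formed, and are treated exactly like $R$. So the facilities that remain in $F'\cup F_\mathrm{fict}$ are those of $F^0$, $F^+$, $F^-\setminus R$, and the fictitious ones. The key structural observation is that each such facility is eligible for $I$ in exactly one of the two procedures. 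Facilities in $F^0$ are eligible only in balanced-update. Facilities in $F^+$, in $F^-\setminus R$, and in $F_\mathrm{fict}$ are eligible only in unbalanced-update, and fictitious facilities exist only in that procedure.

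For the first bullet, I will go through the classes one at a time. For $i\in F^+$, Line~\ref{line:update-unbalanced-choose-F+} gives $\Pr[i\in I]=\frac12\cdot\frac{2L}{\Delta}=\frac{L}{\Delta}$. For $i\in (F^-\setminus R)\cup F_\mathrm{fict}$, Line~\ref{line:update-unbalanced-choose-F-} gives $\frac12\cdot\frac{2L(1+\varepsilon^{c_5})}{\Delta}=\frac{(1+\varepsilon^{c_5})L}{\Delta}$. For $i\in F^0$, the lemma on balanced-update gives
\[
\Pr[i\in I]=\tfrac12\cdot 2(1-o(\varepsilon^{c_5}))(1+\varepsilon^{c_5})\tfrac{L}{\Delta}.
\]
This is at most $(1+\varepsilon^{c_5})L/\Delta$. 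It is at least $L/\Delta$ because $(1-o(\varepsilon^{c_5}))(1+\varepsilon^{c_5})\ge 1$ for $\varepsilon$ small enough. That lower-bound step is the only one needing care: the $o(\varepsilon^{c_5})$ term is in fact $O(\varepsilon^{c_2-2c_1})$, and $c_2-2c_1>c_5$ follows from $c_2>2(c_1+c_5)$.

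For the second bullet, I will split on which procedures make $i$ and $i'$ eligible. If they are eligible in different procedures, then $\Pr[i,i'\in I]=0$. If both are eligible in unbalanced-update, the inclusions are independent coin flips, so $\Pr[i,i'\in I]\le \frac12\bigl(\frac{2(1+\varepsilon^{c_5})L}{\Delta}\bigr)^2=2((1+\varepsilon^{c_5})L)^2/\Delta^2$. If both are in $F^0$, then $i$ can enter $I$ only if its own independent biased coin succeeds, and the same holds for $i'$. The greedy conflict test only removes possibilities, so $\Pr[i,i'\in I]$ is at most the probability that both coins succeed. Multiplying by the $\frac12$ for choosing balanced-update gives the same bound.

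I expect no serious obstacle. The only subtle point is making sure the lower bound $L/\Delta$ is applied only to facilities that are actually eligible after forced removals. This is exactly why the analysis fixes the state after Line~\ref{line:update-unbalanced-remove} and handles the small-$A$ branch like $R$.
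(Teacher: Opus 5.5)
Your proposal is correct. It matches the argument the paper leaves implicit, since the claim is stated there without proof: the fair coin choosing the procedure supplies the factor $\tfrac12$ that cancels the $2$ in the selection probabilities of both procedures (including the bound from the balanced-update lemma), and the pairwise bound follows from independence in unbalanced-update, from domination by pre-sampled coins in balanced-update, and from the fact that no facility is eligible in both procedures. Your restriction of the first bullet to $F^0\cup F^+\cup(F^-\setminus R)\cup F_\mathrm{fict}$ (and to $F^0$ alone in the small-$A$ branch) is necessary, because facilities of $R$ are never selected in either procedure and would violate the lower bound $L/\Delta$.
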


For $i\in F^*_j$, let $T_i\subseteq F^*_j$ denote the set of facilities in $F^*_j$ that can remove $i$. In particular, $i\in T_i$. Let $U_i$ be the set of \emph{all} facilities that can remove $i$. Note that $|U_i|=\Delta$, and $|S\cup U_i|=\Delta+|S|-|T_i|$.

\begin{itemize}
    \item We first bound $\sum_{S'\ni i}x_{S'}$, which by definition is equal to $\Pr[i\in S'\land i_{\text{min}}=\bot]$. This is the probability that no facility in $S$ is selected, and $i$ is not removed. So we have
    \begin{align*}
        \Pr[i\in S'\land i_{\text{min}}=\bot]&= \Pr[I \cap (S \cup U_i) = \emptyset]\\
        &\le 1-\sum_{i'\in S\cup U_i}\Pr[i'\in I]+\sum_{\{i',i''\}\subseteq S\cup U_i,i' \neq i''}\Pr[i',i''\in I] \\
        &\le 1 - \frac{L|S \cup U_i|}{\Delta}+  \left(\frac{(1+\varepsilon^{c_5})L|S \cup U_i|}{\Delta}\right)^2\tag{by \cref{clm:rounding_alg_props}}\\
        &\le 1-(1-\varepsilon^{2c_5})L\left(1+{\frac{|S|}{\Delta}}-{\frac{|T_i|}{\Delta}}\right). \tag{$L=\varepsilon^{c_2}<\varepsilon^{2c_1+2c_5}$ is sufficiently small}
    \end{align*}
    \item Next, we bound $\sum_{S'\not\ni i}x_{S'}$. Note that $\sum_{S'\not\ni i}x_{S'}=\Pr[i\notin S'\land i_{\text{min}}=\bot]$. This is the probability that no facility in $S$ is selected, and $i$ is removed.  By the first bullet of \cref{clm:rounding_alg_props}, we have
    \begin{align*}
        \Pr[i\notin S'\land i_{\text{min}}=\bot]&\leq \Pr[I \cap (U_i \setminus T_i) \neq \emptyset]\le \frac{(1+\varepsilon^{c_5})L|U_i \setminus T_i|}{\Delta} = (1+\varepsilon^{c_5})L\left(1-{\frac{|T_i|}{\Delta}}\right).
    \end{align*}
    \item Then, we can bound the second term of \eqref{equ:single_i_removed} as follows:
    \begin{align*}
        \frac{\min\{b^p,b^p_i\}}\Delta\sum_{S'\not\ni i}x_{S'} &\le  \frac{\min\{b^p,b^p_i\}}{\Delta}\cdot(1+\varepsilon^{c_5})L\left(1-{\frac{|T_i|}{\Delta}}\right) \\
        &\le  (1+\varepsilon^{c_5})L\cdot\left(1-{\frac{|S|}{\Delta}}\right){\frac{b^p}{\Delta}} + (1+\varepsilon^{c_5})L\cdot\left({\frac{|S|}{\Delta}}-{\frac{|T_i|}{\Delta}}\right)\cdot {\frac{b_i^p}{\Delta}}  \\
        &\le  (1+\varepsilon^{c_5})L\cdot\left(1-{\frac{|S|}{\Delta}}\right){\frac{b^p}{\Delta}} +{\frac{(1+\varepsilon^{c_5})L}{\Delta^2}}\sum_{I\in S-T_i} (d_i+d(i,I))^p 
    \end{align*}
    \item We can then bound \eqref{equ:single_i_removed} as
    \begin{align*}
        &\quad \left(1-(1-\varepsilon^{2c_5})L\left(1+{\frac{|S|}{\Delta}}-{\frac{|T_i|}{\Delta}}\right)\right)\frac{\alpha d_i^p}\Delta 
        + (1+\varepsilon^{c_5})L\cdot\left(1-{\frac{|S|}{\Delta}}\right){\frac{b^p}{\Delta}}\\
        &+{\frac{(1+\varepsilon^{c_5})L}{\Delta^2}}\sum_{I\in S-T_i} (d_i+d(i,I))^p \\
        &\le \left(1-(1-\varepsilon^{2c_5})L\left(1+{\frac{|S|}{\Delta}}\right)\right)\frac{\alpha d_i^p}\Delta + (1+\varepsilon^{c_5})L\cdot\left(1-{\frac{|S|}{\Delta}}\right){\frac{b^p}{\Delta}} \\
     &\quad\quad+ \frac{(1+\varepsilon^{c_5})L}{\Delta^2}\sum_{I\in S}\max\{\alpha d^p_i,(d_i+d(i,I))^p\}.
    \end{align*}
    The inequality is obtained by moving an amount of $ \frac{(1 - \varepsilon^{2c_5})L|T_i|\alpha d_i^p}{\Delta^2}$ from the first term to the third term, and relax $1 - \varepsilon^{2c_5}$ to $ 1 + \varepsilon^{c_5}$.
\end{itemize}

So, the third term of \eqref{equ:ind_bound_2} can be bounded by taking sum of the bound for \eqref{equ:single_i_removed} over all $i \in S$:
\begin{align*}
	\text{(third term in \eqref{equ:ind_bound_2})} &\leq (1+\varepsilon)\sum_{i \in S}\Bigg[\left(1-(1-\varepsilon^{2c_5})L\left(1+{\frac{|S|}{\Delta}}\right)\right)\frac{\alpha d_i^p}\Delta + (1+\varepsilon^{c_5})L\cdot\left(1-{\frac{|S|}{\Delta}}\right){\frac{b^p}{\Delta}} \\
	     &\hspace*{70pt}+ \frac{(1+\varepsilon^{c_5})L}{\Delta^2}\sum_{I\in S}\max\{\alpha d^p_i,(d_i+d(i,I))^p\}\Bigg]\\
    &\leq(1+\varepsilon)\biggr[\left(1-(1-\varepsilon^{2c_5})L\left(1+{\frac{|S|}{\Delta}}\right)\right)\frac{\alpha|S|{V}}{\Delta}+(1+\varepsilon^{c_5})L\cdot\left(1-{\frac{|S|}{\Delta}}\right)\cdot {\frac{|S| b^p}{\Delta}} \\
    &\hspace*{70pt}+{\frac{(1+\varepsilon^{c_5})L|S|^2}{\Delta^2}}\cdot (2\alpha-1){V}\biggr].
\end{align*}
The inequality used the property of $\alpha$ in Theorem~\ref{thm:cost}, which is stated in Theorem~\ref{thm:LMP}.

\paragraph{Combining the bounds for all three terms in \eqref{equ:ind_bound_2}}

We need to show $\eqref{equ:ind_bound_2} \leq f^{\mathrm{new}}(S,b)$, which is equal to $(1+\varepsilon)\alpha \cdot {\frac{|S|}{\Delta}}{V}+(1+\varepsilon)\cdot ({1+2\varepsilon^{c_5}}-{\frac{|S|}{\Delta}})b^p$. It suffices to compare the coefficients for $\frac{|S|V}{\Delta}$ and $b^p$ separately, using the bounds for the three terms in \eqref{equ:ind_bound_2}. That is, we need to prove:
\begin{align}
    \label{ineq:happily_conn}
    (1+\varepsilon^{c_5})L + (1+\varepsilon)\alpha\left[\left(1 -(1-\varepsilon^{2c_5})L\left(1+{\frac{|S|}{\Delta}}\right)\right)+ (2\alpha-1)(1+\varepsilon^{c_5})L\cdot {\frac{|S|}{\Delta}} \right]\le (1+\varepsilon)\alpha
\end{align}
and
\begin{align}
    \label{ineq:backup_conn}
\left(1-L(1-\varepsilon^{2c_5})\cdot {\frac{|S|}{\Delta}}\right)\cdot\left({{1+2\varepsilon^{c_5}}}-{\frac{|S|}{\Delta}}\right) + (1+\varepsilon^{c_5})L \cdot\left(1-{\frac{|S|}{\Delta}}\right){\frac{|S|}{\Delta}}\le {1+2\varepsilon^{c_5}}-{\frac{|S|}{\Delta}}.
\end{align}

In the first inequality \eqref{ineq:happily_conn}, since $\varepsilon$ is sufficiently small compared to $\alpha$ (as assumed in the theorem statement) and $\alpha>1$, the LHS is increasing in ${\frac{|S|}{\Delta}}$, so we can assume that ${\frac{|S|}{\Delta}}=1$. In this case, the LHS is equal to
\begin{align*}
&\quad(1+\varepsilon^{c_5})L + (1+\varepsilon)\alpha\left[\left(1 -2(1-\varepsilon^{2c_5})L\right)+ (2\alpha-1)(1+\varepsilon^{c_5})L \right]\\
    =& \alpha(1+\varepsilon)+L\cdot \bk*{(1+\varepsilon^{c_5})-(1+\varepsilon)2\alpha(1-\varepsilon^{2c_5})+(1+\varepsilon)(2\alpha-1)(1+\varepsilon^{c_5})} \\
    =& \alpha(1+\varepsilon)+L\cdot \bk*{(1+\varepsilon)2\alpha(\varepsilon^{c_5}+\varepsilon^{2c_5})-\varepsilon(1+\varepsilon^{c_5})},
\end{align*}
where the latter term is at most $0$ when $\varepsilon$ is sufficiently small depending on $\alpha$.

The second inequality \eqref{ineq:backup_conn} trivially holds when $|S|=0$. When $|S|\ne 0$, \eqref{ineq:backup_conn} can be rewritten as

$$
\left(1+\varepsilon^{c_5}\right)\left(1-\frac{|S|}{\Delta}\right)\le\left(1-\varepsilon^{2c_5}\right)\left(1+2\varepsilon^{c_5}-\frac{|S|}{\Delta}\right),
$$

which holds since $\varepsilon^{c_5}$ is sufficiently small compared to $|S|/\Delta$. So, we completed the proof of Lemma~\ref{lemma:potential-1}.

\subsection{Wrapping up the analysis for type-1 and type-2 clients}

\begin{lemma} \label{lem:backup_distance}
The algorithm always opens a facility $i^*$ such that $d(j,i^*) \le 3d''_{\max}(j)$.
\end{lemma}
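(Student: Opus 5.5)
The plan is to show that $\bar d_{\max}(j) \le d''_{\max}(j)$ holds deterministically at the end of the for loop of Algorithm~\ref{alg:iter-k}. Once this is established, the weighted $k$-center rounding in Line~\ref{line:iter-k-O(1)-approx} connects $j$ within distance $3\bar d_{\max}(j) \le 3d''_{\max}(j)$, which is exactly the claim. By the definition of $\bar d_{\max}(j)$, it suffices to show that $\bar y(\ball_F(j, d''_{\max}(j))) \ge 1$.

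The core is an invariant maintained over all iterations. Consider the quantity $\Psi := \frac1\Delta|F' \cap \ball(j, d''_{\max}(j))| + |F_\force \cap \ball(j, d''_{\max}(j))|$, where the first term counts copies. I will show that after every iteration, either $\Psi \ge 1$ or some original facility within $d''_{\max}(j)$ is integrally open.

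Initially $F' = F^*$ contains $F^*_j$, so $\Psi \ge 1$. I would phrase the invariant through the neighborhood graph as follows: \emph{at the start of each iteration, $\ball(j, d''_{\max}(j))$ contains at least $\Delta$ copies in $F'$, or an integrally open facility.} Now fix an iteration and check each way a copy can leave $F'$.

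\begin{itemize}
\item \textbf{Removal as an out-neighbor of $I$.} Suppose a copy $i$ in the ball is removed because some $i'$ is chosen. Then $i'$ is among the $\Delta$ nearest copies to $i$. Either $\Delta$ copies of the original facility of $i'$ are added to $F'$ (Lines~\ref{line:update-unbalanced-add-new-copies} and~\ref{line:update-balanced-add-new-copies}), or $i' \in R$ and it is forced open.
\item \textbf{Removal of $R$ or of $F^+\cup F^-$.} In Lines~\ref{line:update-unbalanced-force}, \ref{line:update-unbalanced-force-1} and~\ref{line:update-unbalanced-remove}, the original facilities of the removed copies are put in $F_\force$. If any removed copy lies in the ball, then a forced-open facility lies in the ball.
\end{itemize}

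The difficulty is that the new opening caused by a neighbor removal may lie outside $\ball(j, d''_{\max}(j))$. For a copy $i$ at distance $\le r := d''_{\max}(j)$, its in-neighbors are within $d(i, \text{its }\Delta\text{-th nearest})$ of $i$. If the ball still holds $\Delta$ copies, this is at most $2r$, so $i'$ lies within $3r$. That would yield only a $3r$ bound on an integrally open facility, rather than keeping $\Psi \ge 1$.

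So I will instead prove the statement directly in the form of the lemma. I maintain the invariant: \emph{either $|F' \cap \ball(j,r)| \ge \Delta$, or some integrally open facility has distance at most $3r$ from $j$.} Forced and integrally open facilities remain open forever, since $\Delta$ copies of an opened facility are each their own nearest copies and so can only be removed by selecting that same facility. Hence once the second alternative holds, it persists. While the first alternative holds, any removal of a copy in the ball is triggered by a facility $i'$ within $3r$, which becomes integrally open (added copies or forced). The same holds for removals via $R$ or $F^+\cup F^-$, where the facility itself is in the ball. Thus the first alternative can fail only at a moment when the second becomes true.

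At the end of the loop, the first case gives $\bar y(\ball(j,r)) \ge 1$, so the $k$-center step opens a facility within $3\bar d_{\max}(j) \le 3r$. In the second case the facility has $\bar y_i = 1$, and the rounding keeps it open by the second property of $\tilde y$.

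The main obstacle is making the persistence of integrally open facilities rigorous. In particular, one must check that $\Delta$ copies of an open original facility are never partially removed by a different selected facility, given how ties are broken among equidistant copies in the neighborhood graph. I would handle this by fixing a tie-breaking rule that prefers copies of the same original facility.
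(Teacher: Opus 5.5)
Your proof is correct and follows essentially the same route as the paper. As long as the ball (the paper tracks $F^*_j$ itself) still holds $\Delta$ live copies, every in-neighbor of a copy near $j$ lies within $2d''_{\max}(j)$ of it, so the first removal of such a copy opens a facility within $3d''_{\max}(j)$; if no such removal happens, $\bar d_{\max}(j)\le d''_{\max}(j)$ and the $3$-approximate weighted $k$-center rounding finishes the argument. You also handle removals by fictitious facilities, tracing them back to a forced-open member of $R$, and you argue that integrally open facilities persist under a same-original tie-breaking rule; both are details the paper leaves implicit.
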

\begin{proof}
By the triangle inequality, the distance between any two facilities in $F^*_j$ is at most $2d''_{\max}(j)$. Since $|F^*_j|=\Delta$, any facility $i \in F^*_j$ must receive incoming edges from facilities within a distance of at most $2d''_{\max}(j)$ to accumulate a total fractional weight of 1.

Consider the first time if any facility $i \in F^*_j$ was removed from $F'$. This happens because one of its in-neighbors $i^*$ is integrally open, implying $d(i,i^*) \le 2d''_{\max}(j)$. By the triangle inequality, $d(j,i^*) \le d(j,i) + d(i,i^*) \le 3d''_{\max}(j)$.

When no facilities in $F^*_j$ was removed from $F'$ during the $T$ iterations, then maximum connection distance in the fractional solution $\bar y$ for the weighted $k$-center problem (defined in Line~\ref{line:iter-k-O(1)-approx} of Algorithm~\ref{alg:iter-k}) is at most $d''_{\max}(j)$. Then as we obtain a $3$-approximation in Line~\ref{line:iter-k-O(1)-approx}, some facility within distance at most $3d''_{\max}(j)$ to $j$ will be open.
\end{proof}

We now finish the proof of Theorem~\ref{thm:cost} for type-1 and type-2 clients $j$.  By Lemma \ref{lemma:potential-1}, since $|F^*_j| = \Delta$, we have 
\begin{align*}
\E[\Phi'_j] \leq f^\mathrm{new}(F^*_j, b) = (1+\varepsilon)\left(\alpha \sum_{i \in F^*_j} {\frac{1}{\Delta}} d^p_i + ({1+2\varepsilon^{c_5}} - 1)b^p\right) \leq (1+\varepsilon)(\alpha + O(\varepsilon)) \cdot \frac{1}{\Delta}\sum_{i \in F^*_j}d_i^p.
\end{align*}
The second inequality holds as we upper bounded $b$ by $3d''_{\max}(j)$ in its definition, and ${\frac{1}{\Delta}} \ge \varepsilon^{c_1}$ and $c_5 \ge c_1 + 1$. So, $O(\varepsilon^{c_5})d''^p_{\max}(j) \le O(\varepsilon) \varepsilon^{c_1}d''^p_{\max}(j)\le O(\varepsilon){\frac{1}{\Delta}} \sum_{i \in F^*_j}  d^p_i$.

Finally, we need to bound $\E[\Phi_j] - \E[\Phi'_j]$. First, upper bounding $b$ by $3d''_{\max}(j)$ is not an issue by Lemma~\ref{lem:backup_distance}. %
The actual cost $\Phi_j$ is bigger than $\Phi'_j$ only when after $T$ iterations, we have $F^*_j \cap F' \neq \emptyset$. (We assume the facilities added to $F'$ in Line~\ref{line:update-unbalanced-add-new-copies} of unbalanced-update and Line~\ref{line:update-balanced-add-new-copies} of balanced update are new copies, which are disjoint from $F^* \supseteq F^*_j$.) The probability that each facility $i \in F^*_j$ is removed from $F'$ is at least $\Delta\cdot \Omega(\frac L\Delta) = \Omega(\varepsilon^{c_2})$. Applying union bound, after $T = \Theta\left(\frac{\log(1/\varepsilon)}{\varepsilon^{c_1+c_2}}\right)$ iterations, the probability that $F^*_j \cap F' \neq \emptyset$ is bounded by $\Delta \left(1 - \Omega(\varepsilon^{c_2})\right)^{T} \le O(\varepsilon^{2c_1})$ if the hidden constant in $T$ is large enough. By Lemma~\ref{lem:backup_distance},  
\begin{align*}
  \E[\Phi_j] \leq \E[\Phi'_j] + O(\varepsilon^{2c_1} \cdot d''^p_{\max}(j)) = \E[\Phi'_j]+O(\varepsilon)\cdot \frac{1}{\Delta}\sum_{i \in F^*_j}d_i^p.  
\end{align*}
Applying Lemmas~\ref{lemma:type-1-cost} and \ref{lem:type-2-cost}, and deconditioning on $(x'', y'')$ proves Theorem~\ref{thm:cost} for type-1 and type-2 clients $j$.

\subsection{Handling type-3 clients}
\label{subsec:type-3}

Now we prove Theorem~\ref{thm:cost} for a type-3 client $j$. Recall that $C^*$ is the set of representatives we chose in the preprocessing step. Let $j_1$ be the representative of $j$, and $j_2$ be the nearest neighbor of $j_1$ in $C^* \setminus \{j_1\}$.  Let $i_1$ and $i_2$ be the unique facility in $B'_{j_1}$ and $B'_{j_2}$ with positive $y'$ ($y''$) value respectively. 

A main difference in the analysis is in the definition of the state $(S, b)$ for $j$. First, we let the backup distance $b$ be upper bounded by $\min\{3d''_{\max}(j), d(j, i_2)\}$, not just $3d''_{\max}(j)$. That is, at any time, $b$ is the minimum of $3d''_{\max}(j)$, $d(j, i_2)$ and the distance between $j$ and the closest integrally open facility. Because we have a backup distance $d(j, i_2)$, and $\alpha \geq 2^p$, we only include in $S$ the set of alive facilities $i \in F_j$ with $d(j, i) \leq d(j, i_2)/2$, since excluding them from $S$ can only decrease the potential function $f^\mathrm{new}(S, b)$.  So the initial $S$ may have $|S| < \Delta$; but this will not affect the analysis. Throughout the analysis, we shall explicitly state the conditions on the conditional expectations and probabilities. 

After running the iterative algorithm for $T$ iterations, we define $\Phi'_j$ to be the $f^\mathrm{new}(S, b)$ value at that moment. We have that $\E[\Phi'_j] \leq f^{\mathrm{new}}(S, b)$, where $(S, b)$ is the initial state for $j$. So,
\begin{align*}
    \E[\Phi'_j|(x'', y'')] &\leq f^{\mathrm{new}}(S, b) \\
    &\leq (1 + \varepsilon)\left(\alpha\sum_{i \in F: d(i, j) \leq d(j, i_2)/2} x''_{ij} d^p(i, j) + \sum_{i \in F: d(i, j) > d(j, i_2)/2} x''_{ij} d^p(j, i_2) + 2\varepsilon^{c_5} \cdot (3d''_{\max}(j))^p \right) \\
    &\leq (\alpha + O(\varepsilon))\sum_{i \in F} x''_{ij} \min\left\{d^p(i, j), \left(\frac{d(j, i_2)}{2}\right)^p\right\} + O(\varepsilon^{c_5}) \cdot 2^{O(p)} \cdot d''^p_{\max}(j)).
\end{align*}
The second inequality used that $\alpha \geq 2^p$.

So, decondition on $(x'', y'')$ and $(x', y')$, we obtain 
\begin{align*}
    \E[\Phi'_j] \leq (1 + O(p\varepsilon))\alpha \sum_{i \in F} x_{ij} d^p(i, j) + O(\varepsilon^{c_5}) \cdot 2^{O(p)} \cdot \E[d''^p_{\max}(j)].
\end{align*}
The inequality used Lemmas~\ref{lemma:y-to-y'} and \ref{lemma:type-3-weaker-guarantee}. 

The pipage rounding procedure can guarantee that $\E[d''^p_{\max}(j) | (x', y')] \leq \Delta \cdot 2^{O(p)} \sum_{i \in F} x'_{ij} d^p(i, j)$. Deconditioning gives us $\E[d''^p_{\max}(j)] \leq \Delta \cdot 2^{O(p)} \sum_{i \in F} x_{ij} d^p(i, j)$. As $\Delta \cdot \varepsilon^{c_5} < \varepsilon$, we have
\begin{align}
    \label{inequ:Phi'-j-type-3}
    \E[\Phi'_j] \leq \left(1 + 2^{O(p)}\cdot \varepsilon\right)\cdot \alpha \sum_{i \in F} x_{ij} d^p(i, j).
\end{align}

It remains to bound $\E[\Phi_j - \Phi'_j]$. A crucial lemma we need is the following: %
\begin{lemma}
    \label{lemma:two-small-balls}
    Let $i_1, i_2 \in F$ be two different facilities with $y''_{i_1} > 0.9, y''_{i_2} > 0.9$ and $D: = d(i_1, i_2)$. Then, with probability at least $1 - (1+\varepsilon^{c_5})^2(1 - y''_{i_1})(1 - y''_{i_2})$, some facility in $\ball_F(i_1, D)$ is integrally open in Algorithm~\ref{alg:iter-k}.
\end{lemma}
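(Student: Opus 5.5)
The plan is to argue by \emph{first success} rather than through an LMP-style potential. Write $s_1,s_2$ for the numbers of copies of $i_1,i_2$ currently in $F'$; initially $s_\ell=\Delta y''_{i_\ell}$, and $s_1+s_2>1.8\Delta\ge\Delta$. Call the run \emph{successful} once some facility of $\ball_F(i_1,D)$ is integrally open. Since $i_2\in\ball_F(i_1,D)$, the following events are all successes:
\begin{itemize}
\item a copy of $i_1$ or of $i_2$ enters $I$, which opens it via Line~\ref{line:update-unbalanced-add-new-copies} or Line~\ref{line:update-balanced-add-new-copies};
\item a copy of $i_1$ or $i_2$ lies in $F^+\cup F^-$ in an iteration where Line~\ref{line:update-unbalanced-force} fires;
\item a copy lies in $R$ at Line~\ref{line:update-unbalanced-force-1};
\item $s_\ell=\Delta$ for some $\ell$, by the definition of ``integrally open''.
\end{itemize}
In particular the lemma is trivial if $y''_{i_1}=1$ or $y''_{i_2}=1$. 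Otherwise $y''$ is $\varepsilon^{c_1}$-integral, so $(1-y''_{i_1})(1-y''_{i_2})\ge\varepsilon^{2c_1}$. It then suffices to show that the failure probability is at most $\varepsilon^{2c_1}$, and the factor $(1+\varepsilon^{c_5})^2$ is slack.

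\textbf{Step 1 (structural invariant).} As long as at least $\Delta$ copies of $i_1,i_2$ together survive, every in-neighbour in $G$ of a copy of $i_1$ is among the $\Delta$ nearest points of $F'$ to $i_1$. Hence it lies within distance $D$ of $i_1$. So if such a copy is removed because an in-neighbour entered $I$, that in-neighbour is opened inside $\ball_F(i_1,D)$, which is a success. Fictitious in-neighbours only arise after $R$ is forced, and forcing $R$ is already a success if $R$ hits a copy of $i_1$ or $i_2$. If $R$ misses those copies, the fictitious edges replace removed in-neighbours that were themselves within $D$ of $i_1$. I would check this carefully, since it is the subtle case. The same reasoning applies to $i_2$ with $\ball_F(i_2,D)$, but that ball is not useful, so removals of $i_2$-copies can happen silently. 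The invariant therefore says: before success, copies of $i_1$ are never removed while $s_1+s_2\ge\Delta$.

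\textbf{Step 2 (per-iteration progress).} By the first bullet of \cref{clm:rounding_alg_props}, in every iteration and conditioned on the history, each live copy of $i_1$ enters $I$ with probability at least $L/\Delta$. So as long as some copy of $i_1$ is alive, success happens in each iteration with probability at least $L/\Delta=\Theta(\varepsilon^{c_1+c_2})$. Over $T=\Theta(\log(1/\varepsilon)/\varepsilon^{c_1+c_2})$ iterations, the probability of never succeeding while $i_1$-copies remain is at most $(1-L/\Delta)^T\le\varepsilon^{3c_1}$, provided the hidden constant in $T$ is large.

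\textbf{Step 3 (the obstacle).} The remaining failure mode is that silent removals of $i_2$-copies push $s_1+s_2$ below $\Delta$. After that, $i_1$-copies could be destroyed by far-away in-neighbours. I would handle this by noting two things. First, an $i_2$-copy is removed only when an in-neighbour of it enters $I$. Second, each iteration has a comparable probability, at least $L\,s_1/\Delta$, of instead opening $i_1$ outright. Then I would argue that the first of these competing events to occur is the opening of $i_1$ with high probability. Losing $\Delta-s_1\ge$ many $i_2$-copies before ever selecting an $i_1$-copy has probability at most $(1+\varepsilon^{c_5})^2(1-y''_{i_1})(1-y''_{i_2})$. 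This is computed by comparing the selection rate of the $s_1$ copies of $i_1$ against the removal rate, which uses the upper bound $(1+\varepsilon^{c_5})L/\Delta$ from \cref{clm:rounding_alg_props}. This is the step where the exact constant in the statement arises, and I expect it to be the main difficulty. If the direct race argument is too lossy, I would instead run a supermartingale argument on a potential of the form $(1+\varepsilon^{c_5})^2(1-s_1/\Delta)(1-s_2/\Delta)$, mirroring Lemma~\ref{lemma:potential}, together with the second bullet of \cref{clm:rounding_alg_props} for pairwise correlations.
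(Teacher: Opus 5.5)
Your Step~3 is the whole content of the lemma, and the mechanism you sketch there cannot give the stated bound. There is a genuine gap here.

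A single race between the silent removal of $i_2$ and the selection of a copy of $i_1$ is lost with probability of order $1-y''_{i_2}$. About $\Delta(1-y''_{i_2})$ bad in-neighbours of $i_2$ compete against at least $\Delta$ good facilities, so the race alone never yields $(1-y''_{i_1})(1-y''_{i_2})$. Also, $i_2$-copies are not lost gradually: all copies of a facility have the same in-neighbour set in $G$, so they are removed together.

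The missing idea is that failure needs two successive silent removals, each decided by its own race.
\begin{enumerate}
\item First, $i_2$ must be removed by one of its $\Delta(1-y''_{i_2})$ non-copy in-neighbours outside $\ball_F(i_1,D)$. This must happen before any copy of $i_2$, or any in-neighbour of $i_1$, is picked; all in-neighbours of $i_1$ lie in the ball while $i_2$ is alive. By the two-sided bounds of \cref{clm:rounding_alg_props}, this has probability at most $(1+\varepsilon^{c_5})(1-y''_{i_2})$.
\item Second, conditioned on that, the non-copy in-neighbours of $i_1$ may now be far away. Failure requires one of these $\Delta(1-y''_{i_1})$ facilities to be picked before any of the $\Delta y''_{i_1}$ copies of $i_1$. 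This has probability at most $(1+\varepsilon^{c_5})(1-y''_{i_1})$.
\end{enumerate}
The lemma is the product of these two bounds.

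Your fallback potential works only if it encodes this two-stage structure. As written, $(1+\varepsilon^{c_5})^2(1-s_1/\Delta)(1-s_2/\Delta)$ is not a supermartingale. Once $s_2=0$, far in-neighbours of $i_1$ can be picked at rate $(1+\varepsilon^{c_5})L/\Delta$, while copies of $i_1$ are picked only at rate $L/\Delta$. The drift can then be positive, of order $\varepsilon^{c_5}L\,y''_{i_1}(1-y''_{i_1})$. Each factor $1+\varepsilon^{c_5}$ must be attached to its own unresolved factor, so that it disappears once that factor resolves.

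Two smaller points. First, your reduction to showing that the failure probability is at most $\varepsilon^{2c_1}$ aims at a bound that is false in general. Suppose the non-copy in-neighbours of $i_2$ are copies of a facility $a$ just beyond $i_2$ with $d(i_1,a)>D$. Then with probability $\Theta(1-y''_{i_2})$ an $a$-copy is picked first, removing $i_2$ and opening $a$. Afterwards the copies of $a$ fill the non-copy slots of $i_1$'s in-neighbourhood and win the next race with probability $\Theta(1-y''_{i_1})$. So the factor $(1+\varepsilon^{c_5})^2$ is not slack in the way you claim. Consequently, Step~2's additive $\varepsilon^{3c_1}$ cannot simply be added on top of Step~3's full target. (That term covers $i_1$ surviving all $T$ iterations, a case the paper's argument does not treat explicitly.) The needed room comes instead from the first race, which is lost with probability at most about $(1+\varepsilon^{c_5})\frac{1-y''_{i_2}}{2-y''_{i_2}}$; this is the bound the paper actually states.

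Second, in Step~1 the fictitious edges do not protect $i_1$ by themselves. Fictitious facilities are never opened, so picking one would remove $i_1$ silently. What saves the invariant is that any member of $R$ that was an in-neighbour of a copy of $i_1$ lies within distance $D$ of $i_1$. It is forced open at Line~\ref{line:update-unbalanced-force-1}, which is already a success.
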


\begin{proof}%
    If some facility in $\ball_F(i_1, D)$ is integrally open, we say the good event happens; otherwise we say the bad event happens. Notice that copies of $i_1$ ($i_2$ resp.) will behave the same during the course of Algorithm~\ref{alg:iter-k}, as they always have the same set of in-neighbors. Therefore, for convenience we assume $i_1$ ($i_2$ resp.) is one copy of itself in $F^*$, and then we can use $i_1$ ($i_2$ resp.) to represent all its copies.

    For the bad event to happen, we can not forcibly open $i_1$ or $i_2$. Moreover, we should remove $i_2$ from $F'$ in some iteration, and then remove $i_1$ from $F'$ in a later iteration. This holds as if both $i_1$ and $i_2$ are alive in $F'$, then the in-neighbors of $i_1$ will have distance at most $D$ to $i_1$ at the beginning of an iteration of Algorithm~\ref{alg:iter-k}. If $i_1$ is removed, then the good event happens (this also covers the case where some in-neighbor of $i_1$ was forcibly open during Algorithm~\ref{alg:update-unbalanced}).

    We break the bad event into two sub-events. Event 1 happens when we remove $i_2$ in an iteration, but the good event does not happen. This implies that we did forcibly open or choose any in-neighbors of $i_2$, but we have choose some in-neighbor of $i_2$ that is not a copy of $i_2$.  So, event 1 happens with probability at most $(1+\varepsilon^{c_5})\cdot \frac{1 - y''_{i_2}}{2 - y''_{i_2}} \leq (1+\varepsilon^{c_5})\cdot (1-y''_{i_2})$.

    Now we condition on that event 1 happens. Event 2 happens when we remove $i_1$ without choosing its copies. This happens with probability at most $(1 + \varepsilon^{c_5})\cdot (1 - y''_{i_1})$. Then the lemma follows.
\end{proof}

\begin{lemma}
    $\E[\Phi_j | (x'', y'')] \leq (1+O(p\varepsilon))\E[\Phi'_j|(x'', y'')] + 2^{O(p)} \cdot \varepsilon \sum_{i \in F}x''_{ij}d^p(i, j)$. 
\end{lemma}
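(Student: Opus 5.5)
We compare $\Phi_j$ with $\Phi'_j$ pathwise and bound the expected excess. By construction $\Phi'_j=f^{\mathrm{new}}(S,b)$ at the end of the for loop, with $b\le\min\{3d''_{\max}(j),d(j,i_2)\}$ capped artificially. The actual cost $\Phi_j$ can exceed $\Phi'_j$ only in two ways. First, after $T$ iterations, $S=F^*_j\cap F'\cap\ball_F(j,d(j,i_2)/2)$ may still be nonempty. Second, the cap $b\le d(j,i_2)$ may be fictitious, because no facility within distance about $d(j,i_2)$ of $j$ is actually opened. We handle the first exactly as for type-1/2 clients. Each copy in $F^*_j$ is removed in an iteration with probability $\Omega(\varepsilon^{c_2})$ by \cref{clm:rounding_alg_props}, so after $T$ iterations some copy survives with probability $O(\varepsilon^{2c_1})$. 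By \cref{lem:backup_distance} the cost in that case is at most $(3d''_{\max}(j))^p$. This contributes $O(\varepsilon^{2c_1})2^{O(p)}d''^p_{\max}(j)\le 2^{O(p)}\varepsilon\sum_i x''_{ij}d^p(i,j)$, since $d''^p_{\max}(j)\le\Delta\sum_i x''_{ij}d^p(i,j)$.

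The main work is the cap $d(j,i_2)$. Set $D=d(i_1,i_2)$. If some facility in $\ball_F(i_1,D)$ is integrally open, then $j$ pays at most $(d(j,i_1)+D)^p$. We have $d(j,i_1)\le d(j,j_1)+\varepsilon d_{\max}(j_1)$, and the estimates in the proof of \cref{lemma:type-3-weaker-guarantee} give $d(j,j_1)=O((\varepsilon/p)^{4p})r$ and $d(j,i_2)\ge(2-O(\varepsilon))r$. Hence $d(j,i_1)+D\le(1+O(\varepsilon))d(j,i_2)$, and replacing the cap $d(j,i_2)$ by the true distance costs only a factor $(1+O(\varepsilon))^p=1+O(p\varepsilon)$. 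This yields the multiplicative term $(1+O(p\varepsilon))\E[\Phi'_j]$. To make this rigorous, rerun the potential argument of \cref{lemma:potential-1} with $b$ capped by $d(j,i_1)+D$ instead of $d(j,i_2)$. The analysis is monotone in the cap, so this only scales the backup contribution by $1+O(p\varepsilon)$.

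It remains to bound the failure event that nothing in $\ball_F(i_1,D)$ opens. There are two cases.

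\emph{Case 1: $d_{\max}(j_2)\ge\varepsilon d_{\max}(j_1)$, and symmetrically $y''_{i_1}$ is large.} Use \cref{lemma:max-j-2-big} to get $y''_{i_2}\ge1-4\varepsilon^{p+1}$. Then \cref{lemma:two-small-balls} bounds the failure probability by $O(1)(1-y''_{i_1})(1-y''_{i_2})$. On failure, $j$ still pays at most $(3d''_{\max}(j))^p=2^{O(p)}d^p_{\max}(j)$ by \cref{lem:d-double-prime-bound}. The mass $1-y''_{i_1}$ of $F_j$ outside the core $B'_{j_1}$ lies at distance $\Omega(\varepsilon)d_{\max}(j_1)$ or more. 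I expect the charge $(1-y''_{i_1})\cdot\varepsilon^{p}d^p_{\max}(j)\lesssim\sum_i x''_{ij}d^p(i,j)$, together with the extra factor $1-y''_{i_2}\le4\varepsilon^{p+1}$, to give the $2^{O(p)}\varepsilon\sum_i x''_{ij}d^p(i,j)$ term.

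\emph{Case 2: $d_{\max}(j_2)<\varepsilon d_{\max}(j_1)$, or $y''_{i_1}\le0.9$.} Here the ball $B_{j_2}$ is tiny relative to $d(j,i_2)$. Alternatively, $j$'s own $F_j$ carries constant mass far from $j_1$, which makes $\sum_i x''_{ij}d^p(i,j)=\Omega(\varepsilon^{O(p)})d^p_{\max}(j)$. In that situation the worst-case cost $2^{O(p)}d^p_{\max}(j)$ times the failure probability, itself at most $O(\varepsilon^{c_2})$-type small via the removal argument, is absorbed directly.

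The main obstacle is this case analysis: it must show that whenever the failure probability from \cref{lemma:two-small-balls} is not tiny, $\sum_i x''_{ij}d^p(i,j)$ is comparable to $d^p_{\max}(j)$ up to $\varepsilon^{O(p)}$. I would first try to use the $\varepsilon^{12p^2}$-granularity of $y''$ and the $\varepsilon^{4p}$ filtering slack to make the exponents work out.
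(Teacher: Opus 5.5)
\textbf{Verdict.} There is a genuine gap, in the subcase $d_{\max}(j_2)<\varepsilon d_{\max}(j_1)$ of your Case 2.

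\textbf{What is sound.} Your treatment of surviving copies of $F^*_j$ works, and so does the estimate $d(j,i_1)+D\le(1+O(\varepsilon))d(j,i_2)$. Your Case 1 is exactly the paper's case $d_{\max}(j_2)\ge\varepsilon d_{\max}(j_1)$: \cref{lemma:max-j-2-big} plus \cref{lemma:two-small-balls}, charged against $(3d''_{\max}(j))^p=2^{O(p)}d^p_{\max}(j)$. Your Markov step works directly with $j$'s $y''$-connection, using that all of $j$'s $y''$-mass off $i_1$ lies at distance $\ge(1-o(1))\varepsilon d_{\max}(j_1)$ from $j$. This is cleaner than the paper's detour through $j_1$'s cost. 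Rerunning the potential argument with a new cap is unnecessary: once $S=\emptyset$, on the good event you have pathwise $\Phi_j\le(1+O(\varepsilon))^p\Phi'_j$.

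\textbf{Why your Case 2 argument fails when $d_{\max}(j_2)<\varepsilon d_{\max}(j_1)$.} There \cref{lemma:max-j-2-big} is unavailable, and $y''_{i_2}$ can be far from $1$ (so \cref{lemma:two-small-balls} may not even apply). Charging the failure to $j$'s cost cannot close this. With the failure probability of order $(1-y''_{i_1})(1-y''_{i_2})$, the failure contributes at most $2^{O(p)}(1-y''_{i_1})(1-y''_{i_2})\,d^p_{\max}(j)$. Against this, $\sum_i x''_{ij}d^p(i,j)$ is only guaranteed to be about $(1-y''_{i_1})(\varepsilon d_{\max}(j_1))^p$. The ratio is therefore $2^{O(p)}(1-y''_{i_2})\varepsilon^{-p}$, which is small only if $1-y''_{i_2}=O(\varepsilon^{p+1})$, and that is exactly what fails here. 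Granularity of $y''$ does not change this ratio. The ``removal argument'' you invoke bounds only the survival of copies of $F^*_j$ after $T$ iterations. It says nothing about whether a facility within $(1+O(\varepsilon))d(j,i_2)$ of $j$ opens, so the claimed $O(\varepsilon^{c_2})$ failure probability is unfounded. Your final paragraph's plan cannot work for this subcase.

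\textbf{The missing idea.} $j_2\in C^*\subseteq C$ is itself a client and a representative. Apply \cref{lem:backup_distance} to $j_2$ together with \cref{lem:d-double-prime-bound}. This gives, deterministically, an open facility within $3d''_{\max}(j_2)=O(d_{\max}(j_2))=O(\varepsilon)d_{\max}(j_1)$ of $j_2$. Since $d(j_2,i_2)\le\varepsilon d_{\max}(j_2)$ and $d_{\max}(j_1)=O(r)=O(d(j,i_2))$, this facility is within $d(j,j_2)+O(\varepsilon)d_{\max}(j_1)\le(1+O(\varepsilon))d(j,i_2)$ of $j$. So the cap $d(j,i_2)$ is always honoured up to a $1+O(\varepsilon)$ factor, and $\E[\Phi_j]\le(1+O(p\varepsilon))\E[\Phi'_j]$ with no additive term and no failure event to bound. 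This is how the paper handles the subcase.

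\textbf{The other branch is vacuous.} Your remaining Case 2 branch, $y''_{i_1}\le 0.9$, never occurs. From $d_{\av}(j_1)\le d_{\av}(j)<(\varepsilon/p)^{4p}d_{\max}(j)$ and $d_{\max}(j_1)\ge d_{\max}(j)-4d_{\av}(j)$, Markov gives $y(B'_{j_1})\ge1-O(\varepsilon^{4p-1})$. Also $y''_{i_1}\ge y'_{i_1}-\varepsilon^{c_1}$.
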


\begin{proof}

First, if $d_{\max}(j_2) < \varepsilon d_{\max}(j_1)$, then we always guarantee that there is an integrally open facility within $3d''_{\max}(j_2) \leq O(1) \cdot d_{\max}(j_2)$ distance away from $j_2$. The distance between this facility and $j$ is at most
\begin{align*}
    d(j, j_2) + O(1)\cdot d_{\max}(j_2) \leq (1 + O(\varepsilon))d(j, i_2) + O(\varepsilon) d_{\max}(j_1) \leq (1 + O(\varepsilon)) d(j, i_2).
\end{align*}
Therefore, we have $\E[\Phi_j | (x'', y'')] \leq (1+O(p\varepsilon))\E[\Phi'_j|(x'', y'')]$ in this case.

Now, assume $d_{\max}(j_2) \geq \varepsilon d_{\max}(j_1)$. By Lemma~\ref{lemma:max-j-2-big}, we have $y(B'_{j_2}) \geq 1 - 4\varepsilon^{p+1}$.  So, the unique facility $i_2$ in $B'_{j_2}$ with positive $y''$ value has $y''_{i_2} \geq 1 - 4\varepsilon^{p+1}$. Let $i_1$ be the unique facility in $B'_{j_1}$ with positive $y''$ value.  Notice that we have $y''_{i_1} \geq 1 - \frac{\sum_{i \in F}x''_{ij_1}d^p(i, j_1)}{(\varepsilon d_{\max}(j_1))^p}$. 

Since $j_1$ is the representative of $j$, we have $\sum_{i \in F}x''_{ij_1}d^p(i, j_1) \leq \sum_{i \in F}x''_{ij}d^p(i, j)$. By Lemma~\ref{lem:not-type1-distance}, we have $d_{\max}(j_1) \geq d_{\max}(j) - d(j, j_1) \geq d_{\max}(j) - 4d_{\av}(j)$. Moreover, client $j$ is a type-3 client, which implies $4d_{\av}(j) < 4(\varepsilon/p)^{4p} d_{\max}(j) \leq \varepsilon d_{\max}(j)$. Thus, we obtain 

$$y''_{i_1} \geq 1 - \frac{\sum_{i \in F}x''_{ij}d^p(i, j)}{(\varepsilon(1-\varepsilon) d_{\max}(j))^p} = 1 - \frac{(1+O(p\varepsilon))\sum_{i \in F}x''_{ij}d^p(i, j)}{(\varepsilon d_{\max}(j))^p}.$$

We apply Lemma~\ref{lemma:two-small-balls} with our $i_1$ and $i_2$. The probability that there is integrally open facility with at most $d(i_1, i_2)$ distance away from $i_1$ is at least $1 - (1+\varepsilon^{c_5})^2\cdot \frac{(1+O(p\varepsilon))\sum_{i \in F}x''_{ij}d^p(i, j)}{(\varepsilon d_{\max}(j))^p}\cdot 4\varepsilon^{p+1} = 1 - \frac{O(\varepsilon^{}) \sum_{i \in F}x''_{ij}d^p(i, j)}{(d_{\max}(j))^p}$. As Lemma~\ref{lem:backup_distance} says that there is always an open facility with distance $3d''_{\max}(j)$ away from $j$, which is bounded by $O(1)d_{\max}(j)$ according to Claim~\ref{lem:d-double-prime-bound} , we have 
\begin{align*}
    \E[\Phi_j|(x'', y'')] &\leq \E[\Phi'_j|(x'', y'')] + \frac{O(\varepsilon) \sum_{i \in F}x''_{ij}d^p(i, j)}{(d_{\max}(j))^p} \cdot 2^{O(p)} d^p_{\max}(j) \\
    &= \E[\Phi'_j|(x'', y'')]+ 2^{O(p)} \cdot \varepsilon \sum_{i \in F}x''_{ij}d^p(i, j). \qedhere
\end{align*}
\end{proof}

Finally, deconditioning on $(x', y')$ gives us 
\begin{align*}
\E[\Phi_j] \leq (1+O(p\varepsilon))\E[\Phi'_j] + 2^{O(p)} \cdot \varepsilon \sum_{i \in F}x_{ij}d^p(i, j).    
\end{align*}
Notice that even though for type-3 clients $j$, $\E[\sum_{i \in F}x''_{ij}d^p(i ,j)]$ may not be bounded by $(1+O(p\varepsilon))\sum_{i \in F}x_{ij}d^p(i ,j)$, it is bounded by $2^{O(p)} \sum_{i \in F}x_{ij}d^p(i ,j)$. Combine the above inequality with \eqref{inequ:Phi'-j-type-3} gives us 
\begin{align*}
    \E[\Phi_j]  \leq (1+2^{O(p)}\varepsilon)\cdot \alpha \sum_{i \in F}x_{ij}d^p(i, j).
\end{align*}

This finishes the proof of Theorem~\ref{thm:cost} for type-3 clients.

%% file: pseudo-solution-to-solution.tex
\appendix

\section{Obtaining solutions from additive pseudo-solutions} \label{sec:pseudo_approximation_reduction}

\begin{theorem} \label{thm:pseudo-to-sol}
    For any constant $p\ge 1$, suppose that $A$ is a $c$-additive $\alpha$-approximation algorithm for $k$-clustering with cost function being the $p$-th power of the distance. Then for any $\varepsilon>0$, there exists a $(\alpha+\varepsilon)$-approximation algorithm $A'$ for $k$-clustering with the same cost function, whose running time is $n^{O((\gamma p)^p\cdot \alpha c/\varepsilon)}$ times that of $A$, where $\gamma$ is a global constant.
\end{theorem}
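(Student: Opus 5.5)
We follow the Li--Svensson reduction~\cite{li2013approximating} and adapt each step to the $L_p^p$ objective. The only places where $p$ enters are the triangle inequality (replaced by the relaxed inequality $(a+b)^p \le (1+\delta)^{p-1}a^p + (1+1/\delta)^{p-1}b^p$) and the distance-scaling arguments. These are what produce the $(\gamma p)^p$ factor in the exponent of the running time.

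\textbf{Step 1 (sparse instances).} Fix $\delta=\Theta(\varepsilon/(\alpha c))$. Call an instance $\delta$-sparse with respect to an optimal solution $\mathrm{OPT}$ (facilities $i^*$, clusters $C_{i^*}$, average cost $\mathrm{cost}(i^*)$) if the following holds: for every facility $i$, the ball $\ball_C(i,\eta\, d(i,i^*))$ around $i$ contributes at most $\delta\cdot\mathrm{OPT}$ in $\mathrm{OPT}$-cost, where $i^*$ is the nearest optimal facility to $i$ and $\eta$ is a suitable small constant depending on $p$. We show that by guessing $O((\gamma p)^p/\delta)$ ``dense'' facility--client balls and their optimal centers (an $n^{O((\gamma p)^p/\delta)}$ enumeration), we can remove the violating parts. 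Each such dense ball carries cost at least $\delta\,\mathrm{OPT}$, so only $O(1/\delta)$ balls are disjoint in cost. We then get a sparse residual instance at a $(1+\varepsilon)$ loss, with the guessed facilities forced open.

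\textbf{Step 2 (pseudo-solution to solution).} Run $A$ on the sparse instance with $k' = k - c'$ for a suitable guess $c'$, obtaining at most $k'+c$ facilities of cost at most $\alpha\,\mathrm{OPT}_{k'}$. First we must bound $\mathrm{OPT}_{k-c}$ against $\mathrm{OPT}_k$ on sparse instances. Each removed optimal facility's cluster is reassigned to the nearest remaining optimal facility, and sparsity bounds the increase by $O(2^{O(p)}\delta)\mathrm{OPT}$ per facility. Choosing the $c$ cheapest facilities to close, by averaging, gives $\mathrm{OPT}_{k-c}\le (1+O(2^{O(p)}c\,\delta))\mathrm{OPT}_k$ up to the usual bookkeeping. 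Next we convert the $(k'+c)$-solution into a $k$-solution. Following Li--Svensson, we enumerate $O(\alpha c/\varepsilon)$ candidate facilities to delete or swap. Sparsity ensures that closing a facility whose cluster is ``well-separated'' costs only a $(1+O(\varepsilon))$ factor on the affected clients, using the relaxed triangle inequality with $\delta'=\varepsilon/p$.

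\textbf{Main obstacle.} The delicate part is the sparsity-based charging in Step 2. In Li--Svensson, a client reassigned from a closed facility $i$ is charged a distance $d(i,i^*)+d(i^*,\cdot)$ via a plain triangle inequality. For $p>1$ this must be controlled with $(a+b)^p$ bounds that lose only $1+O(\varepsilon)$ on the dominant term. We handle this by choosing the sparsity radius parameter $\eta$ polynomially small in $\varepsilon/p$, so the secondary term is negligible. This choice inflates the number of guesses to $(\gamma p)^p\alpha c/\varepsilon$, which matches the claimed running time $n^{O((\gamma p)^p\alpha c/\varepsilon)}$ times that of $A$. Finally, as noted in the preliminaries, it suffices that $A$ outputs $k+c$ facilities with high probability: repeating $A$ and taking the best feasible output preserves the expectation bound up to $1+O(\varepsilon)$.
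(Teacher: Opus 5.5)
\textbf{There is a genuine gap in your Step~2.} Running $A$ with $k'=k-c$ only works if $\mathrm{OPT}_{k-c}\le(1+O(2^{O(p)}c\delta))\,\mathrm{OPT}_k$ on sparse instances, and that is false. Sparsity restricts only balls whose radius is proportional to $d(i,\mathrm{OPT})$. This holds in your formulation and in the paper's, $((1-\xi)d(i,\mathrm{OPT}))^p\,|\mathrm{CBall}(i,\xi d(i,\mathrm{OPT}))|\le A$. For $i\in\mathrm{OPT}$ the radius is $0$, so sparsity says nothing about the cost of closing an optimal facility. For example, take $F=\{f_1,f_2\}$, $k=2$, $N$ clients at distance $r$ from each $f_i$, and $d(f_1,f_2)=D\gg r$. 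Every facility is in $\mathrm{OPT}$, so the instance is sparse, yet $\mathrm{OPT}_1/\mathrm{OPT}_2\ge (D-r)^p/(2r^p)$ is unbounded. Averaging over the $c$ cheapest optimal facilities does not help either. The reassignment cost $|C_{i^*}|\cdot d(i^*,\mathrm{OPT}\setminus\{i^*\})^p$ is not charged to anything in $\mathrm{OPT}_k$. This sensitivity to the number of open facilities is exactly what the Li--Svensson reduction is designed to avoid: it never compares against $\mathrm{OPT}_{k-c}$. Your later step, ``enumerate $O(\alpha c/\varepsilon)$ candidates to delete or swap'', does not fix this. It is inconsistent with the first half, since after running with $k-c$ nothing is left to delete. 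It is also too vague to replace the real argument, and sparsity is not what makes closing a ``well-separated'' facility cheap.

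\textbf{What is missing is the mechanism in the paper's appendix.} Run $A$ with the true $k$ on the sparse instance. The sparsification only deletes the facility balls $\mathrm{FBall}(i_x,d(i_x,i'_x))$ around guessed dense non-optimal facilities, none of which meets $\mathrm{OPT}$. So the optimum is unchanged: nothing is forced open and there is no $(1+\varepsilon)$ loss. What must be checked there is that the client balls $\mathrm{CBall}(i_x,\xi d(i_x,i'_x))$ are disjoint, which uses $\xi=1/3$.

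Next, greedily delete facilities of the pseudo-solution $T$ while some deletion raises the cost by at most $B$. If this reaches $k$, the cost is at most $\mathrm{cost}(T)+cB$. If it gets stuck, every deletion costs more than $B$. Call $i\in T'$ determined if $d(i,\mathrm{OPT})<\delta L_i$, where $L_i=d(i,T'\setminus\{i\})$. Sparsity is applied to the \emph{undetermined} pseudo-solution facilities, which are far from $\mathrm{OPT}$. If more than $t$ facilities were undetermined, moving the clients of the cheapest one to its nearest neighbour in $T'$ would cost at most $(A+\mathrm{cost}(T')/t)(2/(\delta\xi))^p\le B$, a contradiction; so at most $t$ are undetermined. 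Hence $\mathrm{OPT}$ consists of distinct optimal facilities near the determined ones plus a set $V_0$ of fewer than $t$ facilities, which can be guessed.

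Finally, replace each determined $i$ by the best facility in $\mathrm{FBall}(i,\delta L_i)$ for the clients of $\mathrm{CBall}(i,\xi L_i)$. This costs at most $((\xi+\delta)/(\xi-\delta))^p\,\mathrm{opt}$, and the paper chooses $\delta$ so this factor is at most $\alpha$, which it states is $\delta=\Theta(1/p)$. The $(\gamma p)^p$ in the running time is the factor $(2/(\delta\xi))^p$ in $t$. It does not come from a relaxed triangle inequality. In this analysis, a radius parameter polynomially small in $\varepsilon/p$ would inflate $t$ to $(p/\varepsilon)^{\Theta(p)}\alpha c/\varepsilon$, not the stated bound.
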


Theorem~\ref{thm:pseudo-to-sol} is a generalization of Theorem 4 in~\cite{li2013approximating}. The proofs are very similar.

\begin{definition}
    For a $k$-clustering instance $\cI$, denote $\mathrm{opt}_{\cI}$ as the minimal cost and $\mathrm{OPT}_{\cI}$ as the set of selected facilities that minimizes the cost. Denote $\mathrm{CBall}_{\cI}(i,r)$ ($\mathrm{FBall}_{\cI}(i,r)$ resp.) as the set of clients (facilities, resp.) with distance \textbf{strictly} less than $r$ from (facility or client) $i$.

    For $A>0$, a facility $i\in F$ is said to be $A$-dense if

    $$\Big((1-\xi)d(i,\mathrm{OPT}_{\cI})\Big)^p\cdot |\mathrm{CBall}_{\cI}(i,\xi d(i,\mathrm{OPT}_{\cI})|>A,$$

    where $\xi =1/3$. $i$ is $A$-sparse otherwise.

    The instance $\cI$ is said to be $A$-sparse if every facility is $A$-sparse.
\end{definition}

To obtain an $(\alpha+\varepsilon)$-approximation solution, we first process the original instance $\cI$ into a $\mathrm{opt}_{\cI}/t$-sparse instance $\cI'$, such that an optimal solution $\mathrm{OPT}_{\cI}$ is also an optimal solution of $\cI'$. Then we only need to find an $(\alpha+\varepsilon)$-approximation solution in the sparse instance.

\subsection{Reduction to a sparse instance}

\begin{lemma} \label{lem:reduce-to-sparse}
    For a $k$-clustering instance $\cI$, there exists an algorithm that runs in time $n^{O(t)}$ and outputs $n^{O(t)}$ instances obtained by removing several facilities from $\cI$, such that at least one of such instances $\cI'$ satisfies:
    \begin{itemize}
        \item $\mathrm{OPT}_{\cI}$ is also an optimal solution to $\cI'$;
        \item $\cI'$ is $\mathrm{opt}_{\cI}/t$ sparse.
    \end{itemize}
\end{lemma}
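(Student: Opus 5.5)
We will follow the Li--Svensson enumeration argument, adapted to $L_p^p$ costs. Fix an (unknown) optimal solution $\mathrm{OPT}_{\cI}$. The algorithm guesses a set of at most $t$ ``dense witnesses'': pairs $(i, i^*)$ where $i$ is a facility and $i^*\in\mathrm{OPT}_{\cI}$ is its closest optimal facility (so $d(i,\mathrm{OPT}_{\cI})=d(i,i^*)$ is also guessed). There are at most $n^{O(t)}$ such guesses. For each guessed pair, we remove from the instance all facilities in $\mathrm{FBall}_{\cI}(i, d(i,i^*))$, i.e.\ facilities strictly closer to $i$ than $i^*$. Every removed facility is strictly closer to $i$ than any optimal facility, so it is not in $\mathrm{OPT}_{\cI}$. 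Hence $\mathrm{OPT}_{\cI}$ survives in every instance produced from a correct guess, and it stays optimal, because removing facilities can only increase the optimum. This gives the first bullet for any correct guess.

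For the second bullet, we construct the correct guess greedily, which the enumeration simulates. Start with $\cI'=\cI$. While some facility $i$ is $\mathrm{opt}_{\cI}/t$-dense in $\cI'$, pick the dense facility $i$ with smallest $d(i,\mathrm{OPT}_{\cI})$ (or any dense one), add the pair $(i,i^*)$, and remove $\mathrm{FBall}(i,d(i,i^*))$. Note that $d(\cdot,\mathrm{OPT}_{\cI})$ does not change when we remove facilities, since $\mathrm{OPT}_{\cI}$ is never removed. So density is a property defined relative to the fixed $\mathrm{OPT}_{\cI}$, and it only changes because dense facilities themselves get deleted; deletion never makes a sparse facility dense. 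The process therefore ends with an instance in which every remaining facility is sparse.

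The key step is to bound the number of rounds by $t$. Write $r_i=d(i,\mathrm{OPT}_{\cI})$ and $B_i=\mathrm{CBall}(i,\xi r_i)$ with $\xi=1/3$. Each client in $B_i$ is at distance at least $(1-\xi)r_i$ from $\mathrm{OPT}_{\cI}$, so it contributes at least $((1-\xi)r_i)^p$ to $\mathrm{opt}_{\cI}$. Density then says that the clients of $B_i$ together contribute more than $\mathrm{opt}_{\cI}/t$. It therefore suffices to show that the client balls $B_i$ of the selected witnesses are pairwise disjoint; then there are fewer than $t$ witnesses.

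To prove disjointness, take two witnesses $i$ (chosen earlier) and $i'$ (chosen later). Since $i'$ survived, $d(i,i')\ge r_i$. If $r_{i'}\le r_i$, suppose a client $c$ lies in both balls. Then $d(i,i')<\xi(r_i+r_{i'})\le 2\xi r_i<r_i$, a contradiction. If $r_{i'}>r_i$, we use the Lipschitz property $r_{i'}\le r_i+d(i,i')$. A common client gives $d(i,i')<\xi(r_i+r_{i'})\le \xi(2r_i+d(i,i'))$, so $d(i,i')<\frac{2\xi}{1-\xi}r_i=r_i$ when $\xi=1/3$, again a contradiction. Selecting the dense facility of minimal $r_i$ first makes the first case automatic if needed. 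This disjointness argument is the main point to check; the choice $\xi=1/3$ is exactly what makes it close.

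Finally, the algorithm enumerates all subsets of at most $t$ facilities together with their guessed distances (equivalently, guessed nearest optimal facilities). That is $n^{O(t)}$ candidate instances, each built in polynomial time. One of them coincides with the greedy construction above and satisfies both bullets.
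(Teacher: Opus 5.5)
Your proposal is correct and follows essentially the same argument as the paper: enumerate at most $t$ pairs $(i,i^*)$, delete the facility balls $\mathrm{FBall}_{\cI}(i,d(i,i^*))$, and bound the number of dense witnesses by showing that their client balls $\mathrm{CBall}_{\cI}(i,\xi r_i)$ are pairwise disjoint, each carrying more than $\mathrm{opt}_{\cI}/t$ of the optimal cost. Your second case, which uses $r_{i'}\le r_i+d(i,i')$, is exactly the paper's inequality chain and already covers the first case, so the case split is unnecessary; also, your side remark is backwards, since processing in order of \emph{decreasing} $r_i$ (not increasing) is what would make the first case automatic, though this does not affect correctness.
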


\begin{algorithm}[h]
    \caption{Reduction to A Sparse Instance}
    \label{alg:reduce-to-sparse}
        \For{each $t'\leq t$ facility pairs $(i_1,i'_1),\ldots,(i_{t'},i'_{t'})$}
        {
            Let $F'=F\setminus \bigcup_{x=1}^{t'}\mathrm{FBall}_{\cI}(i_x,d(i_x,i'_x))$
            
            Output the instance obtained from $\cI$ by reducing the set of facilities to $F'$
        }
\end{algorithm}

\begin{proof}
    We will prove that Algorithm~\ref{alg:reduce-to-sparse} satisfies the description of Lemma~\ref{lem:reduce-to-sparse}. Clearly, Algorithm~\ref{alg:reduce-to-sparse} runs in time $n^{O(t)}$ and outputs $n^{O(t)}$ instances.

    Let $(i_1,i'_1),\ldots,(i_q,i'_q)$ be the longest sequence of facility pairs such that:
    \begin{itemize}
        \item $i_x\notin\mathrm{OPT}_{\cI}$, and $i'_x\in\mathrm{OPT}_{\cI}$ is the closest facility to $i_x$ in $\mathrm{OPT}_{\cI}$.
        \item $i_x$ is $\mathrm{opt}_{\cI}/t$-dense.
        \item $i_x\notin F\setminus \bigcup_{y=1}^{x-1}\mathrm{FBall}_{\cI}(i_y,d(i_y,i'_y))$.
    \end{itemize}

    Let $F'=F\setminus \bigcup_{x=1}^{q}\mathrm{FBall}_{\cI}(i_x,d(i_x,i'_x))$, it is easy to see that $\mathrm{OPT}_{\cI}\subseteq F'$ and that any facility in $F'$ is $\mathrm{opt}_{\cI}/t$-sparse. Thus it suffices to show that Algorithm~\ref{alg:reduce-to-sparse} enumerates $(i_1,i'_1),\ldots,(i_q,i'_q)$, i.e. $q\leq t$.

    Denote $\mathcal B_x$ as $\mathrm{CBall}_{\cI}(i_x,\xi d(i_x,i'_x))$. Since $i_x$ is $\mathrm{opt}_{\cI}/t$-dense, we know that

    $$\sum_{c\in\mathcal B_x}\text{connection cost of }c\ge |\mathcal B_x|\cdot \Big((1-\xi)d(i_x,i'_x)\Big)^p\ge\mathrm{opt}_{\cI}/t.$$

    Moreover, for any $x\neq y$, we know that

    $$\begin{aligned}
        \xi(d(i_x,i'_x)+d(i_y,i'_y))&\leq \xi(d(i_x,i'_x)+d(i_y,i'_x))\\
        &\leq \xi(2d(i_x,i'_x)+d(i_y,i_x))\\
        &\leq 3\xi d(i_y,i_x)\\
        &=d(i_y,i_x).
    \end{aligned}$$

    Thus $\mathcal B_x\cap\mathcal B_y=\varnothing$. This implies

    $$\mathrm{opt}_{\cI}\ge \sum_{x=1}^q\sum_{c\in\mathcal B_x}\text{connection cost of }c\ge q\cdot \mathrm{opt_{\cI}}/t.$$

    Hence $q\leq t$.
\end{proof}

\subsection{Solving the sparse instance}

\begin{lemma} \label{lem:solve-sparse}
    For an $A$-sparse instance $\cI$, given a $c$-additive pseudo solution $T$, $\delta\in(0,1/6)$ and $t\ge 2c\cdot\left(\frac{2}{\delta\xi}\right)^p$, there exists an algorithm that runs in time $n^{O(t)}$ and returns a solution $S$ such that

    $$\mathrm{cost}(S)\leq\max\left(\mathrm{cost}(T)+cB,\left(\frac{\xi+\delta}{\xi-\delta}\right)^p\cdot\mathrm{opt}_{\cI}\right),$$

    where $B=2\cdot\left(A+\frac{\mathrm{cost}(T)}{t}\right)\cdot\left(\frac{2}{\delta\xi}\right)^p$.
\end{lemma}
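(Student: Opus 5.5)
This is the Li--Svensson argument for sparse instances, adapted to $L_p^p$ costs. We compare the pseudo-solution $T$ (with $k+c$ facilities) to $\mathrm{OPT}_{\cI}$ and enumerate a bounded amount of information about $\mathrm{OPT}_{\cI}$; $n^{O(t)}$ guesses suffice.

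\textbf{Step 1: matching $T$ against $\mathrm{OPT}$.} For every $i\in T$ let $\eta(i)$ be its closest facility in $\mathrm{OPT}_{\cI}$, and for every $o\in\mathrm{OPT}_{\cI}$ let $\eta'(o)$ be its closest facility in $T$. Call $i\in T$ \emph{determined} if $d(i,\eta(i))\le \delta\, d(i,T\setminus\{i\})$, i.e.\ it is very close to some optimal center compared to its distance to the rest of $T$.

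\textbf{Step 2: the case with few undetermined facilities of $T$.} If fewer than about $t$ facilities of $T$ are undetermined, we guess their optimal partners. This takes $n^{O(t)}$ guesses. For each determined $i$ we also only need $\eta(i)$ inside $\ball(i,\delta d(i,T\setminus\{i\}))$, which can be guessed cheaply. From these guesses we build a solution $S$ of size $k$. Each client $j$ of $\mathrm{OPT}$ served by $o$ is then served at distance at most $\frac{\xi+\delta}{\xi-\delta}d(j,o)$, by the triangle inequality:
\[
d(j,S)\le d(j,o)+\delta\cdot(\text{scale}),
\]
where the scale is at most $d(j,o)/(\xi-\delta)$. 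Raising to the $p$-th power gives the factor $\left(\frac{\xi+\delta}{\xi-\delta}\right)^p\mathrm{opt}_{\cI}$.

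\textbf{Step 3: the case with many undetermined facilities.} Otherwise there are many undetermined facilities $i\in T$. Removing such an $i$ moves its clients to the nearest other facility of $T$, at distance $D_i:=d(i,T\setminus\{i\})$. Clients of $i$ at distance $\le \xi\delta D_i/2$ from $i$ are exactly where sparsity helps. Since $i$ is far from $\mathrm{OPT}$ (at distance $\ge\delta D_i$), $A$-sparsity applied to $i$ (or to a nearby facility) bounds the number of such clients by $A/((1-\xi)\delta D_i)^p$. Each of them pays an extra cost of at most $O(D_i)^p$, so together they contribute $O(A(2/(\delta\xi))^p)$. Clients farther from $i$ pay at most a $(2/(\delta\xi))^p$ factor over their current cost. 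Averaging over $t$ candidates, the cheapest removal costs at most $\mathrm{cost}(T)/t\cdot(2/(\delta\xi))^p$ for this part. We remove the $c$ cheapest undetermined facilities one at a time. Keeping them pairwise far apart lets the charges add up disjointly, which gives $\mathrm{cost}(T)+cB$. The hypothesis $t\ge 2c(2/(\delta\xi))^p$ guarantees that enough disjoint candidates remain after each removal.

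\textbf{Main obstacle.} The hard part is Step 3: bounding the cost of removing an undetermined facility using sparsity. Sparsity is stated in terms of $d(i,\mathrm{OPT})$ and balls of radius $\xi d(i,\mathrm{OPT})$, while the removal cost is governed by $D_i$. Being undetermined gives $d(i,\mathrm{OPT})>\delta D_i$, so the ball of radius $\xi\delta D_i$ lies inside the sparsity ball. Checking that the $p$-th powers produce exactly the constant $(2/(\delta\xi))^p$, with the factor $2$ coming from the relaxed triangle inequality, will be the delicate calculation.
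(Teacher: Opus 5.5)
Your Step~3 sparsity and averaging computation is correct and matches the paper, but there are genuine gaps in Step~2 and in how the two cases are combined.

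\textbf{Step 2: determined facilities.} Write $L_i:=d(i,T'\setminus\{i\})$ and $f^*_i$ for your $\eta(i)$. There can be $\Theta(k)$ determined facilities, so guessing $f^*_i$ for each of them costs $n^{|D_0|}$, not $n^{O(t)}$. Taking an arbitrary facility of $\mathrm{FBall}_{\cI}(i,\delta L_i)$ instead also fails. Your ratio $\frac{\xi+\delta}{\xi-\delta}$ only holds for clients at distance at least $\xi L_i$ from $i$: a client located at $f^*_i$ has cost $0$ in $\mathrm{OPT}_{\cI}$ but positive cost in $S$ whenever $f_i\neq f^*_i$.

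The missing idea is a decoupled local choice:
\begin{itemize}
\item Enumerate only the set $D\subseteq T'$ and $V_0=\mathrm{OPT}_{\cI}\setminus\{f^*_i: i\in D_0\}$. These are not ``optimal partners of the undetermined facilities''. The bound $|V_0|=k-|D_0|<|T'|-|D_0|=|U_0|\le t$ uses $|T'|>k$ and that $i\mapsto f^*_i$ is injective on $D_0$.
\item For each $i\in D$, choose $f_i\in\mathrm{FBall}_{\cI}(i,\delta L_i)$ minimizing $\sum_{j\in\mathrm{CBall}_{\cI}(i,\xi L_i)}\min\{d^p(j,f_i),d^p(j,V)\}$.
\end{itemize}
This works because, for $\xi=1/3$ and $\delta<1/6$, the balls $\mathrm{CBall}_{\cI}(i,\xi L_i)$ are pairwise disjoint. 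Moreover, each client in such a ball is strictly closer to $f^*_i$ than to any $f^*_{i'}$ with $i'\neq i$, so in $\mathrm{OPT}_{\cI}$ it is served by $f^*_i$ or by $V_0$. Hence the local minimization loses nothing on these balls, and the triangle-inequality ratio is needed only for clients outside all of them.

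\textbf{Combining the cases.} Your case distinction (few versus many undetermined facilities of $T$) depends on $\mathrm{OPT}_{\cI}$, so it is not an algorithm. The deletion branch is also not sound as stated. Deleting a facility changes the values $d(i',T\setminus\{i'\})$ and re-routes clients, so undetermined facilities can become determined, and the per-deletion charge you computed for $T$ no longer applies. ``Keeping them pairwise far apart so the charges add up disjointly'' is not justified.

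The paper makes the deletion phase purely computational. While $|T'|>k$, it deletes any $i$ with $\mathrm{cost}(T'\setminus\{i\})\le\mathrm{cost}(T')+B$. If this reaches size $k$, the bound $\mathrm{cost}(T)+cB$ follows immediately. Your Step~3 computation is then used only in the analysis, by contradiction, with determinedness defined relative to the final $T'$. If more than $t$ facilities of $T'$ were undetermined, the one serving the least cost could be deleted at extra cost at most $(A+\mathrm{cost}(T')/t)(2/(\delta\xi))^p\le B$, contradicting that the deletion phase stopped. This is where $t\ge 2c(2/(\delta\xi))^p$ is used: to absorb $\mathrm{cost}(T')\le\mathrm{cost}(T)+cB$ into that bound, not to guarantee that disjoint candidates remain.
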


\begin{algorithm}[h]
    \caption{Solving the Sparse Instance}
    \label{alg:solve-sparse}
        $T'\leftarrow T$

        \While{$|T'|>k$ and $\exists i\in T'$ s.t. $\mathrm{cost}(T'\setminus\{i\})\leq \mathrm{cost}(T')+B$}
        {
            Remove $i$ from $T'$
        }

        \If{$|T'|=k$}
        {
            \Return{$T'$}
        }
    
        \For{each $(D,V)$ such that $D\subseteq T',V\subseteq F$, $|D|+|V|=k$ and $|V|<t$}
        {
            $\forall i\in D$, let $L_i=d(i,T'\setminus\{i\})$. Let $f_i$ be the facility in $\mathrm{FBall}_{\cI}(i,\delta L_i)$ that minimizes

            $$\sum_{j\in\mathrm{CBall}_{\cI}(i,\xi L_i)}\min(d^p(j,f_i),d^p(j,V))$$
            
            Let $S_{D,V}=V\cup\{f_i\mid i\in D\}$
        }

        \Return{$S_{D,V}$ with the minimal cost}
\end{algorithm}

\begin{proof}
    We will prove that Algorithm~\ref{alg:solve-sparse} satisfies the description of Lemma~\ref{lem:solve-sparse}. Clearly, Algorithm~\ref{alg:solve-sparse} runs in time $n^{O(t)}$.

    If the algorithm directly returns $T'$, it removes at most $c$ facilities and each removal contributes at most $B$ to the total cost, thus

    $$\mathrm{cost}(T')\leq \mathrm{cost}(T)+cB.$$

    If the algorithm returns $S_{D,V}$, it means that $|T'|>k$ and $\mathrm{cost}(T'\setminus\{i\})>\mathrm{cost}(T')+B$ for any $i\in T'$. It suffices to construct $D_0,V_0$ such that $S_{D_0,V_0}\leq \left(\frac{\xi+\delta}{\xi-\delta}\right)^p\cdot\mathrm{opt}_{\cI}$.

    \begin{definition}
        For $i\in T'$, let $L_i=d(i,T'\setminus\{i\})$ and $\ell_i=d(i,\mathrm{OPT}_{\cI})$. $i$ is called determined if $\ell_i<\delta L_i$, otherwise $i$ is undetermined.
    \end{definition}

    Let $D_0$ be the set of determined facilities. Denote $f^*_i$ as the closest facility in $\mathrm{OPT}_{\cI}$ to $i$. Let $V_0=\mathrm{OPT}_{\cI}\setminus\{f^*_i\mid i\in D_0\}$.

    \begin{claim}
        $|D_0|+|V_0|=k$, and $|V_0|<t$.
    \end{claim}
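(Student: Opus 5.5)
We need to prove the claim that $|D_0|+|V_0|=k$ and $|V_0|<t$.

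For the first part, I would show that the map $i\mapsto f^*_i$ is injective on $D_0$; then $|\{f^*_i: i\in D_0\}|=|D_0|$, so $|V_0|=|\mathrm{OPT}_\cI|-|D_0|=k-|D_0|$ (assuming $|\mathrm{OPT}_\cI|=k$, which we may since opening more facilities never hurts). Injectivity is a triangle-inequality argument. Suppose $i\neq i'$ are both determined with $f^*_i=f^*_{i'}$. Then
\[
d(i,i')\le \ell_i+\ell_{i'}<\delta(L_i+L_{i'}).
\]
Since $L_i\le d(i,i')$ and $L_{i'}\le d(i,i')$, this gives $d(i,i')<2\delta d(i,i')$, which is impossible because $\delta<1/6$ (and $d(i,i')>0$ for distinct facilities; if they coincide as points, we treat them by tie-breaking).

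For $|V_0|<t$, the plan is to show that every $f\in V_0$ is ``far'' from $T'$ and must carry a large share of $\mathrm{opt}_\cI$, using sparsity together with the stopping condition of the while loop. Since no single removal from $T'$ costs at most $B$, every $i\in T'$ satisfies $\mathrm{cost}(T'\setminus\{i\})-\mathrm{cost}(T')>B$. The step I expect to be the main obstacle is turning this into a contradiction when $|V_0|\ge t$, because it requires a charging argument. Concretely, I would consider the undetermined facilities $i\in T'$, i.e.\ those with $\ell_i\ge\delta L_i$. Each $f\in V_0$ is not of the form $f^*_i$ for any determined $i$. For such $f$, I would take the nearest $i\in T'$ and compare the cost of removing $i$ from $T'$ and rerouting its clients. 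Clients in $\mathrm{CBall}(i,\xi L_i)$ are few by $A$-sparsity of $i$ (their number times $((1-\xi)\ell_i)^p$ is at most $A$, and $\ell_i\ge\delta L_i$). Clients outside this ball can be rerouted at a cost bounded by a factor $(2/(\delta\xi))^p$ times their current or optimal cost. This is exactly the origin of the form of $B=2(A+\mathrm{cost}(T)/t)(2/(\delta\xi))^p$.

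Summing these bounds, if $|V_0|\ge t$ we can choose an undetermined $i$ whose attributed cost share is at most $\mathrm{cost}(T)/t+\mathrm{opt}/t$ on average. Its removal would then cost at most $B$, contradicting the stopping condition. Hence $|V_0|<t$. I expect the delicate part to be the averaging: making sure the charged client sets are disjoint across different $i$, which should again follow from the separation $\xi L_i$ and the triangle inequality.
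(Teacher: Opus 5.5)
Your injectivity argument for $|D_0|+|V_0|=k$ is correct and is the paper's argument. The second part has a genuine gap: you never connect $|V_0|$ to the number of \emph{undetermined} facilities. Your proposed link, sending each $f\in V_0$ to its nearest facility in $T'$, does not work, because nothing makes that facility undetermined and nothing makes the map injective. The missing step is a one-line count. The loop stopped with $|T'|>k$, so $U_0=T'\setminus D_0$ satisfies $|U_0|=|T'|-|D_0|>k-|D_0|=|V_0|$, and it suffices to show $|U_0|\le t$.

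The viewpoint that each $f\in V_0$ is far and carries a large share of $\mathrm{opt}_{\cI}$ is not needed; the contradiction comes entirely from costs in $T'$. Assuming $|U_0|>t$, charge to each $i\in U_0$ the cost $Con_i$ of its own cluster $C_i$ in $T'$. These clusters are disjoint by definition, so the separation argument you flagged as delicate is unnecessary. The minimizer over $U_0$ then has $Con_i<\mathrm{cost}(T')/t$.

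Two further steps fail as written.

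First, the sparsity bound does not apply to $\mathrm{CBall}(i,\xi L_i)$. $A$-sparsity only controls $|\mathrm{CBall}(i,\xi\ell_i)|$, and being undetermined gives only $\ell_i\ge\delta L_i$, hence only $\mathrm{CBall}(i,\delta\xi L_i)\subseteq\mathrm{CBall}(i,\xi\ell_i)$. You must split at radius $\delta\xi L_i$. Then a client outside satisfies $d(c,i')\le d(c,i)+L_i\le(1+\frac{1}{\delta\xi})\,d(c,i)$. That is where the $\delta$ in your factor $(2/(\delta\xi))^p$ actually comes from. With your radius $\xi L_i$, the inner clients in the annulus between $\xi\ell_i$ and $\xi L_i$ are not controlled by sparsity.

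Second, the averaging gives $\mathrm{cost}(T')/t$, not $\mathrm{cost}(T)/t$. Also, $A$ is a per-facility sparsity term, not an average share of $\mathrm{opt}_{\cI}$. To close the argument you need $\mathrm{cost}(T')\le\mathrm{cost}(T)+cB$, since there are at most $c$ removals and each costs at most $B$. Combine this with the hypothesis $t\ge 2c(2/(\delta\xi))^p$ to get $(A+\mathrm{cost}(T')/t)(2/(\delta\xi))^p\le B/2+(cB/t)(2/(\delta\xi))^p\le B$. This is the role of the factor $2$ in $B$, and it yields the contradiction with the stopping condition of the while loop.
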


    \begin{proof}[Proof of Claim]
    We first prove that $|D_0|+|V_0|=k$. Since $|V_0|=|\mathrm{OPT}_{\cI}|-|\{f^*_i\mid i\in D_0\}|$, it suffices to prove that $f^*_i$ are pairwise distinct. Actually, suppose that $f^*_i=f^*_{i'}$ for $i,i'\in D_0$, then

    $$\max(L_i,L_{i'})\le d(i,i')\le d(i,f^*_i)+d(i',f^*_i)=l_i+l_{i'}<2\delta\max(L_i,L_{i'}),$$

    Leading to a contradiction since $\delta<1/2$ always holds.
    
    We then prove that $|V_0|<t$. Denote $U_0$ as the set of undetermined facilities in $T'$. We know that $|U_0|=|T'|-|D_0|>k-|D_0|=|V_0|$, thus it suffices to prove that $|U_0|\le t$.

    Suppose that $|U_0|>t$. Denote $C_i$ as the set of all clients that connects to facility $i$, and $Con_i=\sum_{c\in C_i}d^p(c,i)$ as the connection cost of $C_i$. Select $i$ as the facility in $U_0$ with minimal $Con_i$, then $Con_i<\mathrm{cost}(T')/t$. Let $i'$ be the closest facility in $T'\setminus\{i\}$ to $i$, we consider the connection cost of $C_i$ if they are connected to $i'$ instead, i.e. $\sum_{c\in C_i}d^p(c,i')$. We divide $C_i$ into two groups:
    
    \begin{itemize}
        \item $C_i\cap\mathrm{CBall}(i,\delta\xi L_i)$, denoted as $C_i^0$.
    
        Since $i$ is undetermined, we know that $\delta L_i\leq \ell_i$, so $\mathrm{CBall}(i,\delta\xi L_i)\subseteq \mathrm{CBall}(i,\xi \ell_i)$.
        
        Since $i$ is $A$-sparse, we know that
    
        $$\begin{aligned}
            \sum_{c\in C_i^0}d^p(c,i')&\leq \Big((1+\delta\xi)L_i\Big)^p\cdot |C_i^0|\\
            &\le \left(\frac{(1+\delta\xi)L_i}{(1-\xi)\ell_i}\right)^p\cdot \Big((1-\xi)\ell_i\Big)^p\cdot |\mathrm{CBall}(i,\xi \ell_i)|\\
            &\le \left(\frac{(1+\delta\xi)}{\delta(1-\xi)}\right)^p\cdot A\\
            &\le \left(\frac{2}{\delta\xi}\right)^p\cdot A.
        \end{aligned}$$
    
        \item $C_i\setminus \mathrm{CBall}(i,\delta\xi L_i)$, denoted as $C_i^1$.
    
        For any $c\in C_i^1$, we have that $d(c,i)\ge\delta\xi L_i$ and $d(c,i')\leq d(c,i)+L_i$, thus,
    
        $$\frac{d(c,i')}{d(c,i)}\leq 1+\frac{1}{\delta\xi}<\frac{2}{\delta\xi}.$$
    
        This implies
    
        $$\begin{aligned}
            \sum_{c\in C_i^1}d^p(c,i')&\leq \left(\frac{2}{\delta\xi}\right)^p\cdot \sum_{c\in C_i^1}d^p(c,i)\\
            &\le \left(\frac{2}{\delta\xi}\right)^p\cdot\frac{\text{cost}(T')}{t}.
        \end{aligned}$$
    \end{itemize}

    Thus the increase of connection cost is upper bounded by:

    $$\begin{aligned}
        \left(A+\frac{\mathrm{cost}(T')}{t}\right)\cdot\left(\frac{2}{\delta\xi}\right)^p&\le \left(A+\frac{\mathrm{cost}(T)}{t}+\frac{cB}{t}\right)\cdot\left(\frac{2}{\delta\xi}\right)^p\\
        &\le \frac{B}{2}+\left(\frac{cB}{t}\right)\cdot\left(\frac{2}{\delta\xi}\right)^p\\
        &\le B.
    \end{aligned}$$

    This contradicts with $\mathrm{cost}(T'\setminus\{i\})>\mathrm{cost}(T')+B$.
    \end{proof}

    \begin{claim} \label{claim:bound-dv-cost}
        $\mathrm{cost}(S_{D_0,V_0})\le \left(\frac{\xi+\delta}{\xi-\delta}\right)^p\cdot \mathrm{opt}_{\cI}$.
    \end{claim}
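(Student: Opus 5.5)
The plan is to compare, client by client, the cost of $S_{D_0,V_0}$ with the cost in $\mathrm{OPT}_{\cI}$. For every client $j$ we show that its connection cost in $S_{D_0,V_0}$ is at most $\left(\frac{\xi+\delta}{\xi-\delta}\right)^p$ times its cost in $\mathrm{OPT}_{\cI}$. Since $V_0\subseteq S_{D_0,V_0}$, any client whose nearest optimal facility lies in $V_0$ pays no more than in $\mathrm{OPT}_{\cI}$. So we only need to handle clients $j$ whose nearest optimal facility is $f^*_i$ for some determined $i\in D_0$. The point of choosing $f_i$ to minimize a local sum is that the choice $f_i=f^*_i$ is feasible: $\ell_i<\delta L_i$ gives $f^*_i\in\mathrm{FBall}_{\cI}(i,\delta L_i)$. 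Hence
\[\sum_{j\in \mathrm{CBall}_{\cI}(i,\xi L_i)}\min(d^p(j,f_i),d^p(j,V_0))\le \sum_{j\in \mathrm{CBall}_{\cI}(i,\xi L_i)}\min(d^p(j,f^*_i),d^p(j,V_0)),\]
and the right-hand side is at most the optimal cost of those clients, provided each of them connects in $\mathrm{OPT}_{\cI}$ to $f^*_i$ or to something in $V_0$.

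\emph{Step 1 (clients inside the small balls).} Take $j\in\mathrm{CBall}_{\cI}(i,\xi L_i)$. Its distance to $f^*_i$ is less than $(\xi+\delta)L_i$. For any other determined $i'$, we have $d(j,f^*_{i'})\ge d(i,i')-\xi L_i-\delta L_{i'}$, and $d(i,i')\ge\max(L_i,L_{i'})$. This gives $d(j,f^*_{i'})\ge(1-\xi-\delta)\max(L_i,L_{i'})>(\xi+\delta)L_i$, since $\xi=1/3$ and $\delta<1/6$. So such a $j$ is served in $\mathrm{OPT}_{\cI}$ either by $f^*_i$ or by $V_0$, and the local-sum comparison shows these clients pay in total at most their optimal cost. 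We also check that the balls $\mathrm{CBall}_{\cI}(i,\xi L_i)$ for distinct $i\in D_0$ are disjoint (because $2\xi<1$), so the local sums can be added.

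\emph{Step 2 (clients outside every small ball).} Take a client $j$ not in $\mathrm{CBall}_{\cI}(i,\xi L_i)$, whose optimal facility is $f^*_i$ with $i\in D_0$. Then $d(j,i)\ge\xi L_i$, and both $f_i$ and $f^*_i$ lie within $\delta L_i$ of $i$. Therefore
\[d(j,f_i)\le d(j,i)+\delta L_i\le \tfrac{\xi+\delta}{\xi}d(j,i),\qquad d(j,f^*_i)\ge d(j,i)-\delta L_i\ge\tfrac{\xi-\delta}{\xi}d(j,i).\]
Dividing, $d(j,f_i)\le\frac{\xi+\delta}{\xi-\delta}d(j,f^*_i)$. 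Raising to the $p$-th power gives the factor $\left(\frac{\xi+\delta}{\xi-\delta}\right)^p$. Summing Steps 1 and 2 over all clients proves the claim.

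The main obstacle is Step 1: we must make sure that the clients in a small ball cannot be served optimally by the facility $f^*_{i'}$ of some other determined $i'$. This is exactly what the inequality $(1-\xi-\delta)>\xi+\delta$ guarantees, and it is where the constraints $\xi=1/3$ and $\delta<1/6$ are used.
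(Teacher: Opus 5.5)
Your proposal is correct and is essentially the paper's argument. Both split the clients into the balls $\mathrm{CBall}_{\cI}(i,\xi L_i)$ for $i\in D_0$, where the minimizing choice of $f_i$ does at least as well as $f^*_i$ because those clients cannot be served by any other $f^*_{i'}$, and the remaining clients, where the triangle inequality gives the ratio $\frac{\xi+\delta}{\xi-\delta}$. You are slightly more careful than the paper: you explicitly verify that $f^*_i\in\mathrm{FBall}_{\cI}(i,\delta L_i)$, that the balls for distinct $i\in D_0$ are disjoint, and you bound $d(i',f^*_{i'})$ by $\delta L_{i'}$ rather than $\delta L_i$.
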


    \begin{proof}[Proof of Claim]
        Note that the only difference between $\mathrm{OPT}_{\cI}$ and $S_{D_0,V_0}$ is that each $f^*_i$ in $\mathrm{OPT}_{\cI}$ is replaced by $f_i$ (defined in Algorithm~\ref{alg:solve-sparse}) in $S_{D_0,V_0}$.

        We divide all clients into $|D_0|+1$ groups:

        \begin{itemize}
            \item For each $i\in D_0$, consider all clients in $\mathrm{CBall}_{\cI}(i,\xi L_i)$.

            For such a client $j$, $d(j,f^*_i)$ is upper bounded by $(\xi+\delta)L_i$ while $d(j,f^*_{i'})$ is lower bounded by:

            $$d(j,f^*_{i'})\ge d(i,i')-d(i,j)-d(i',f^*_{i'})\ge (1-\xi-\delta)L_i.$$

            Note that $\xi=1/3$ and $\delta<1/6$, this means $j$ must not connect to $f^*_{i'}$ in the optimal solution, i.e. $j$ connects either to $f^*_i$ or some facility in $V_0$.

            By the definition of $f_i$ in Algorithm~\ref{alg:solve-sparse}, we know that

            $$\sum_{j\in\mathrm{CBall}_{\cI}(i,\xi L_i)}\min(d^p(j,f_i),d^p(j,V_0))\le \sum_{j\in\mathrm{CBall}_{\cI}(i,\xi L_i)}\min(d^p(j,f^*_i),d^p(j,V_0)).$$

            Thus the sum of the connection cost in $\mathrm{CBall}_{\cI}(i,\xi L_i)$ is not larger in $S_{D_0,V_0}$ than in $\mathrm{OPT}_{\cI}$.

            \item Consider all clients in $C\setminus\bigcup_{i\in D_0}\mathrm{CBall}_{\cI}(i,\xi L_i)$.

            For such a client $j$, if $j$ is connected to a facility in $V_0$ in the optimal solution, its connection cost does not increase in $S_{D_0,V_0}$.

            If $j$ is connected to $f^*_i$ in the optimal solution, we know that

            $$\frac{d(j,f_i)}{d(j,f^*_i)}\le\frac{d(j,i)+d(i,f_i)}{d(j,i)-d(i,f^*_i)}\le \frac{\xi+\delta}{\xi-\delta}.$$
        \end{itemize}

        Sum over all clients, we obtain Claim~\ref{claim:bound-dv-cost}.
    \end{proof}

    Combining the above two claims, we obtain Lemma~\ref{lem:solve-sparse}.
\end{proof}

\subsection{Putting everything together}

Now we restate Theorem~\ref{thm:pseudo-to-sol} as follows.

\begin{corollary} \label{cor:pseudo-to-sol}
    For a $k$-clustering instance $\cI$, given a $c$-additive pseudo solution $T$ satisfying
    
    $$\mathrm{cost}(T)\le\alpha\mathrm{opt}_{\cI},$$
    
    there exists an algorithm that runs in time $n^{O((\gamma p)^p\cdot \alpha c/\varepsilon)}$ and returns a solution $S$ such that

    $$\mathrm{cost}(S)\leq(\alpha+\varepsilon)\mathrm{opt}_{\cI},$$

    where $\gamma$ is a global constant.
\end{corollary}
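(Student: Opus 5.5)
The plan is to combine Lemma~\ref{lem:reduce-to-sparse} with Lemma~\ref{lem:solve-sparse} and choose the parameters $\delta$, $t$ and the sparsity threshold $A$ so that both branches of the bound in Lemma~\ref{lem:solve-sparse} are at most $(\alpha+\varepsilon)\mathrm{opt}_{\cI}$. The key point for the running time is that $\delta$ must depend only on $p$, not on $\varepsilon$. In our applications $\alpha\ge 2^p\ge 2$, and I will assume $\alpha\ge 2$ here. (For general $\alpha$ one would take $\delta$ small enough that $((\xi+\delta)/(\xi-\delta))^p\le\alpha+\varepsilon$, which may make $\delta$ depend on $\alpha,\varepsilon$; I will not pursue that case.)

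\emph{Choice of $\delta$.} With $\xi=1/3$, choose $\delta=\Theta(1/p)$, say $\delta=\frac{1}{20p}<\frac16$. Then
\[
\Bigl(\tfrac{\xi+\delta}{\xi-\delta}\Bigr)^p=\Bigl(\tfrac{1+3\delta}{1-3\delta}\Bigr)^p\le e^{O(p\delta)}\le 2\le\alpha .
\]
Hence the second branch of Lemma~\ref{lem:solve-sparse} is at most $\alpha\,\mathrm{opt}_{\cI}$. Moreover $\bigl(\tfrac{2}{\delta\xi}\bigr)^p=(120p)^p=(\gamma p)^p$ for a global constant $\gamma$.

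\emph{Choice of $t$ and the first branch.} Set $t:=\lceil 4c\alpha(\gamma p)^p/\varepsilon\rceil$. This satisfies $t\ge 2c(\gamma p)^p$, as Lemma~\ref{lem:solve-sparse} requires, since $\alpha\ge1$ and $\varepsilon\le1$.
\begin{enumerate}
\item Run the enumeration of Lemma~\ref{lem:reduce-to-sparse} with this $t$. This produces $n^{O(t)}$ instances, and one of them, $\cI'$, is $\mathrm{opt}_{\cI}/t$-sparse and has $\mathrm{OPT}_{\cI}$ as an optimal solution, so $\mathrm{opt}_{\cI'}=\mathrm{opt}_{\cI}$.
\item On each instance, run the pseudo-approximation $A$ to get $T$. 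On $\cI'$ we have $\mathrm{cost}(T)\le\alpha\,\mathrm{opt}_{\cI}$.
\item Run Algorithm~\ref{alg:solve-sparse} with sparsity parameter $A=\mathrm{opt}_{\cI}/t$. Then
\[
cB=2c\Bigl(\tfrac{\mathrm{opt}_{\cI}}{t}+\tfrac{\alpha\,\mathrm{opt}_{\cI}}{t}\Bigr)(\gamma p)^p\le\tfrac{4c\alpha(\gamma p)^p}{t}\,\mathrm{opt}_{\cI}\le\varepsilon\,\mathrm{opt}_{\cI},
\]
so the first branch gives at most $(\alpha+\varepsilon)\mathrm{opt}_{\cI}$.
\end{enumerate}
Every output is a feasible solution with $k$ facilities, so returning the cheapest output over all instances and guesses gives the claimed guarantee.

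\emph{Main obstacle: the unknown value $\mathrm{opt}_{\cI}$.} Lemma~\ref{lem:reduce-to-sparse} uses $\mathrm{opt}_{\cI}$ only in its analysis, so it causes no problem there. Algorithm~\ref{alg:solve-sparse}, however, needs $B$, and $B$ depends on $A=\mathrm{opt}_{\cI}/t$. I would guess $\mathrm{opt}_{\cI}$ up to a factor $(1+\varepsilon)$ from a geometric grid between the smallest positive value $d^p(i,j)$ and $n\cdot\max d^p$. This is polynomially many guesses, with the trivial case $\mathrm{opt}=0$ handled separately. I then check that Lemma~\ref{lem:solve-sparse} tolerates an overestimate $\widehat{\mathrm{opt}}\in[\mathrm{opt},(1+\varepsilon)\mathrm{opt}]$: an $\mathrm{opt}/t$-sparse instance is also $\widehat{\mathrm{opt}}/t$-sparse, and the bound on $cB$ only grows by a factor $1+\varepsilon$, which is absorbed by rescaling $\varepsilon$.

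\emph{Running time.} We have $n^{O(t)}$ instances, each solved in $n^{O(t)}$ time plus one call to $A$, times polynomially many guesses. This totals $n^{O(t)}=n^{O((\gamma p)^p\alpha c/\varepsilon)}$ times the running time of $A$, as claimed.
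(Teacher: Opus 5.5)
Your proposal is correct and follows the paper's proof. Both use the same $t=\lceil(2/(\delta\xi))^p\cdot 4\alpha c/\varepsilon\rceil$ and $A=\mathrm{opt}_{\cI}/t$, and combine Lemma~\ref{lem:reduce-to-sparse} with Lemma~\ref{lem:solve-sparse} so that the first branch is at most $(\alpha+\varepsilon)\mathrm{opt}_{\cI}$ and the second at most $\alpha\,\mathrm{opt}_{\cI}$. Your explicit $\delta=\frac{1}{20p}$ under $\alpha\ge 2$ (the paper's claim $\delta=\Theta(1/p)$ also tacitly needs $\alpha$ bounded away from $1$), running the pseudo-approximation on the sparse instance itself, and guessing $\mathrm{opt}_{\cI}$ to compute $B$ fill in details the paper leaves implicit rather than constituting a different route.
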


\begin{proof}
    Select the largest $\delta$ such that $\left(\frac{\xi+\delta}{\xi-\delta}\right)^p\leq\alpha$.

    Let $t=\left\lceil\left(\frac{2}{\delta\xi}\right)^p\cdot\frac{4\alpha c}{\varepsilon}\right\rceil$, $A=\frac{\mathrm{opt}_{\cI}}{t}$ and $B=2\cdot\left(A+\frac{\mathrm{cost}(T)}{t}\right)\cdot\left(\frac{2}{\delta\xi}\right)^p$.

    We first invoke Algorithm~\ref{alg:reduce-to-sparse} to obtain an $A$-sparse instance $\cI'$, then invoke Algorithm~\ref{alg:solve-sparse} to obtain the solution $S$. We know that

    $$\begin{aligned}
        \mathrm{cost}(T)+cB&\le \alpha\mathrm{opt}_{\cI}+2c\cdot\left(\frac{\mathrm{opt}_{\cI}}{t}+\frac{\mathrm{cost}(T)}{t}\right)\cdot\left(\frac{2}{\delta\xi}\right)^p\\
        &\le \alpha\mathrm{opt}_{\cI}+4\alpha c\cdot\left(\frac{\mathrm{opt}_{\cI}}{t}\right)\cdot\left(\frac{2}{\delta\xi}\right)^p\\
        &\le (\alpha+\varepsilon)\mathrm{opt}_{\cI}.
    \end{aligned}$$

    Thus from Lemma~\ref{lem:solve-sparse}, we know that $\mathrm{cost}(S)\leq(\alpha+\varepsilon)\mathrm{opt}_{\cI}$.

    From Lemma~\ref{lem:reduce-to-sparse} and Lemma~\ref{lem:solve-sparse}, the running time of the algorithm is $n^{O(t)}$ times that of the pseudo-solution solver. Note that $\xi=1/3$ is a constant, $\delta=\Theta(1/p)$ by definition, then

    $$n^{O(t)}=n^{O((\gamma p)^p\cdot \alpha c/\varepsilon)},$$

    where $\gamma$ is a global constant.
\end{proof}